\newcommand{\C}{\mathbb{C}}	                
\newcommand{\Q}{\mathbb{Q}}                     
\newcommand{\R}{\mathbb{R}}                     
\newcommand{\Proj}{\texttt{Proj}}
\def\expec#1#2{{\bf \mathbb{E}}_{#1}[ #2 ]}
\def\expecf#1#2{{\bf \mathbb{E}}_{#1}\left[ #2 \right]}
\def\sgn#1{\mathrm{sgn} (#1)}
\def\AA{\mathbf{A}}
\def\BB{\mathbf{B}}
\def\MM{\mathbf{M}}
\def\NN{\mathbf{N}}
\def\D{\mathbf{D}}
\def\II{\mathbb{I}}
\def\Y{\mathbf{Y}}
\def\X{\mathbf{X}}
\def\P{\mathbf{P}}
\def\U{\mathbf{U}}
\def\V{\mathbf{V}}
\def\E{\mathbf{E}}
\def\S{\mathbf{S}}
\def\T{\mathbf{T}}
\def\W{\mathbf{W}}
\def\Q{\mathbf{Q}}
\def\G{\mathbf{G}}
\def\M{\mathbf{M}}
\def\ZZ{\mathbf{Z}}
\def\RR{\mathbf{R}}
\newcommand{\poly}{\text{poly}}
\newcommand{\pr}[1]{\text{\normalfont Pr}\normalfont\lbrack #1 \rbrack} 
\newcommand{\ex}[1]{\mathbb{E}\normalfont\lbrack #1 \rbrack}
\newcommand{\bpr}[1]{\text{\normalfont Pr}\normalfont \Big[#1 \Big]} 
\newcommand{\bex}[1]{\mathbb{E}\normalfont \Big[#1 \Big]}
\newcommand{\eps}{\epsilon}
\newcommand{\ttx}[1]{\texttt{#1}}
\newcommand{\sg}[1]{\ttx{sign}(#1)}
\newcommand{\req}{\overset{\text{row}}{\simeq}}
\newtheorem{theorem}{Theorem}
\newtheorem{lemma}{Lemma}
\newtheorem{corollary}{Corollary}
\newtheorem{proposition}{Proposition}
\newtheorem{definition}{Definition}
\newtheorem{remark}{Remark}
\newcounter{Frame}
\newenvironment{Frame}[1][htb]{%
\refstepcounter{Frame}
    \begin{mdframed}[%
        frametitle={#1},
        skipabove=\baselineskip plus 2pt minus 1pt,
        skipbelow=\baselineskip plus 2pt minus 1pt,
        linewidth=1.0pt,
        frametitlerule=true,
    ]%
}{%
    \end{mdframed}
}
\title{Learning Two Layer Rectified Neural Networks in Polynomial Time}
\newcommand*\samethanks[1][\value{footnote}]{\footnotemark[#1]}
\author{
	Ainesh Bakshi\thanks{The authors thank the partial support by the National Science Foundation under Grant No. CCF-1815840. Part of this work was done while the authors were visiting the Simons Institute for the Theory of Computing.}\\
	Carnegie Mellon University \\
	\texttt{abakshi@cs.cmu.edu}
	\and
	Rajesh Jayaram\samethanks\\
	Carnegie Mellon University\\
	\texttt{rkjayara@cs.cmu.edu}
	\and
	David P. Woodruff\samethanks\\
	Carnegie Mellon University \\
	\texttt{dwoodruf@cs.cmu.edu}
}
\date{}
\begin{document}
\clearpage\maketitle
\thispagestyle{empty}
\setcounter{page}{0}
	
	\begin{abstract}
	    We consider the following fundamental problem in the study of neural networks: given input examples $x \in \mathbb{R}^d$ and their vector-valued labels, as defined by an underlying generative neural network, recover the weight matrices of this network. We consider two-layer networks, mapping $\mathbb{R}^d$ to $\mathbb{R}^m$, with a single hidden layer and $k$ non-linear activation units $f(\cdot)$, where $f(x) = \max \{x , 0\}$ is the ReLU activation function. Such a network is specified by two weight matrices, $\mathbf{U}^* \in \mathbb{R}^{m \times k}, \mathbf{V}^* \in \mathbb{R}^{k \times d}$, such that the label of an example $x \in \mathbb{R}^{d}$ is given by $\mathbf{U}^* f(\mathbf{V}^* x)$, where $f(\cdot)$ is applied coordinate-wise. Given $n$ samples $x^1,\dots,x^n \in \mathbb{R}^d$ as a matrix $\mathbf{X} \in \mathbb{R}^{d \times n}$ and the label $\mathbf{U}^* f(\mathbf{V}^* \mathbf{X})$ of the network on these samples, our goal is to recover the weight matrices $\mathbf{U}^*$ and $\mathbf{V}^*$. More generally, our labels $\mathbf{U}^* f(\mathbf{V}^* \mathbf{X})$  may be corrupted by noise, and instead we observe $\mathbf{U}^* f(\mathbf{V}^* \mathbf{X}) + \mathbf{E}$ where $\mathbf{E}$ is some noise matrix. Even in this case, we may still be interested in recovering good approximations to the weight matrices $\mathbf{U}^*$ and $\mathbf{V}^*$.

	    In this work, we develop algorithms and hardness results under varying assumptions on the input and noise. Although the problem is NP-hard even for $k=2$, by assuming Gaussian marginals over the input $\mathbf{X}$ we are able to develop polynomial time algorithms for the approximate recovery of $\mathbf{U}^*$ and $\mathbf{V}^*$. Perhaps surprisingly, in the noiseless case our algorithms recover $\mathbf{U}^*,\mathbf{V}^*$ \textit{exactly}, i.e., with no error. To the best of the our knowledge, this is the first algorithm to accomplish exact recovery for the ReLU activation function. For the noisy case, we give the first polynomial time algorithm that approximately recovers the weights in the presence of mean-zero noise $\mathbf{E}$. Our algorithms generalize to a larger class of \textit{rectified} activation functions, $f(x) = 0$ when $x\leq 0$, and $f(x) > 0$ otherwise. Although our polynomial time results require $\mathbf{U}^*$ to have full column rank, we also give a fixed-parameter tractable algorithm (in $k$) when $\mathbf{U}^*$ does not have this property. Lastly, we give a  fixed-parameter tractable algorithm for more arbitrary noise matrices $\mathbf{E}$, so long as they are independent of $\mathbf{X}$.
	\end{abstract}

	\newpage
	
	\section{Introduction}
	Neural networks have achieved remarkable success in solving many modern machine learning problems which were previously considered to be intractable. With the use of neural networks now being wide-spread in numerous communities,  the optimization of neural networks is an object of intensive study.
	
	Common usage of neural networks involves running stochastic gradient descent (SGD) with simple non-linear activation functions, such as the extremely popular ReLU function, to learn an incredibly large set of weights. This technique has enjoyed immense success in solving complicated classification tasks with record-breaking accuracy. However, theoretically the behavior and convergence properties of SGD are very poorly understood, and few techniques are known which achieve provable bounds for the training of large neural networks. This is partially due to the hardness of the problem -- there are numerous formulations where the problem is known to be NP-hard \cite{br92, judd1988neural, boob2018complexity, Manurangsi2018complexity}. Nevertheless, given the importance and success in solving this problem in practice, it is important to understand the source of this hardness. 
	
	Typically a neural network can be written in the following form: $\AA = \U^i(\cdots \U^3f(\U^2f(\U^1\X))$, where $i$ is the depth of the network, $\X \in \mathbb{R}^{d \times n}$ is a matrix with columns corresponding to individual $d$-dimensional input samples, and $\AA$ is the output labeling of $\X$. The functions $f$ are applied entry-wise to a matrix, and are typically non-linear. Perhaps the most popular activation used in practice is the ReLU, given by $f(x) = \max\{0,x\}$. Here each $\U^i$ is an unknown linear map, representing the ``weights", which maps inputs from one layer to the next layer. In the reconstruction problem, when it is known that $\AA$ and $\X$ are generated via the above model, the goal is to recover the matrices $\U^1, \ldots, \U^i$.

    In this work, we consider the problem of learning the weights of two layer networks with a single non-linear layer. Such a network can be specified by two weight matrices $\U^* \in \R^{m \times k}$ and $\V^* \in \R ^{k \times d}$, such that, on a $d$-dimensional input vector $x \in \R^d$, the classification of the network is given by $\U^*f(\V^* x) \in \R^m$.  
    Given a training set $\X \in \R^{d \times n} $ of $n$ examples, along with their labeling $\AA = \U^* f(\V^* \X) + \E$, where $\E$ is a (possibly zero) noise matrix, the learning problem is to find $\U$ and $\V$ for which 
    \[ \| \U - \U^*\|_F  + \|\V - \V^*\|_F \leq \eps\]
    
We consider two versions of this problem. First, in the noiseless (or realizable) case, we observe $\AA = \U^* f(\V^*\X)$ precisely. In this setting, we demonstrate that exact recovery of the matrices $\U^*,\V^*$ is possible in polynomial time. 
Our algorithms, rather than exploiting smoothness of activation functions, exploit combinatorial properties of rectified activation functions. 
Additionally, we consider the more general noisy case, where we instead observe $\AA = \U^* f(\V^* \X) + \E$, where $\E$ is a noise matrix which can satisfy various conditions. Perhaps the most common assumption in the literature \cite{ge2018learning, ge2017learning, janzamin2015beating} is that $\E$ has mean $0$ and is sub-Gaussian. Observe that the first condition is equivalent to the statement that $\ex{\AA \; | \; \X } = \U^*f(\V^*\X)$.
 While we primarily focus on designing polynomial time algorithms for this model of noise, in Section \ref{sec:1bit} we demonstrate fixed-parameter tractable (in the number $k$ of ReLUs) algorithms to learn the underlying neural network for a much wider class of noise matrices $\E$. 
We predominantly consider the \textit{identifiable} case where $\U^* \in \R^{m \times k}$ has full column rank, however we also provide supplementary algorithms for the exact case when $m < k$. Our algorithms are robust to the behavior of $f(x)$ for positive $x$, and therefore generalize beyond the ReLU to a wider class of rectified functions $f$ such that $f(x) = 0$ for $x \leq 0$ and $f(x) > 0$ otherwise.

It is known that stochastic gradient descent cannot converge to the ground truth parameters when $f$ is ReLU and $\V^*$ is orthonormal, even if we have access to an infinite number of samples \cite{livni2014computational}. This is consistent with empirical observations and theory, which states that over-parameterization is crucial to train neural networks successfully \cite{hardt2014understanding, soudry2016no}. In contrast, in this work we demonstrate that we can approximate the optimal parameters in the noisy case, and obtain the optimal parameters exactly in the realizable case, in polynomial time, without over-parameterization. In other words, we provide algorithms that do not succumb to spurious local minima, and can converge to the global optimum efficiently, without over-parametrization.
   
\subsection{Our Contributions}
We now state our results more formally. We consider $2$-layer neural networks with ReLU-activation functions 
$f$.  
Such a neural network is specified by matrices $\U^* \in \R^{m \times k}$ and $\V^* \in \R^{k \times d}$. 
We are given $d$-dimensional input examples $x^i \in \R^d$, which form the columns of our input matrix $\X$, and also give the network's $m$-dimensional
classification of $\X$, which is $\AA = \U^* f(\V^*\X)$, where $f$ is applied entry-wise. We note that our formulation corresponds to having one non-linear layer. 

\paragraph{Worst Case Upper Bounds.}
In the worst case setting, no properties are assumed on the inputs $\X,\AA$. While this problem is generally assumed to be intractable, we show, perhaps surprisingly, that when rank$(\AA) = k$ and $k=O(1)$, polynomial time exact algorithms do exist. 
One of our primary techniques throughout this work is the leveraging of combinatorial aspects of the ReLU function. For a row $f(\V^*\X)_{i,*}$, we define a \textit{sign pattern} of this row to simply be the subset of positive entries of the row. Thus, a sign pattern of a vector in $\R^n$ is simply given by the orthant of $\R^n$ in which it lies. 
 We first prove an upper bound of $O(n^k)$ on the number of orthants which intersect with an arbitrary $k$-dimensional subspace of $\R^n$. Next, we show how to enumerate these sign patterns in time $n^{k + O(1)}$. 
 
 We use this result to give an $n^{O(k)}$ time algorithm for the neural network learning problem in the \textit{realizable case}, where $\AA = \U^* f(\V^*\X)$ for some fixed \textit{rank-k} matrices $\U^*,\V^*$. After fixing a sign pattern of $f(\V^*\X)$, we can effectively ``remove" the non-linearity of $f$. Even so, the learning problem is still non-convex, and cannot be solved in polynomial time in the general case (even for fixed $k$). We show, however, that if the \emph{rank} of $\AA$ is $k$, then it is possible to use a sequence of linear programs to recover $\U^*,\V^*$ in polynomial time given the sign pattern, which allows for an $n^{O(k)}$ overall running time. Our theorem is stated below.

\vspace{0.2in} 
\noindent \textbf{Theorem \ref{thm:exact_lp}.} \textit{
Given $\AA \in \R^{m \times n}, \X \in \R^{d \times n}$, such that $\AA = \U^*f(\V^*\X)$ and $\AA$ is rank $k$, there is an algorithm that finds $\U^* \in \R^{m \times k} ,\V^* \in \R^{k \times d}$ such that $\AA = \U^* f(\V^*\X)$ and runs in time $\poly(n,m,d)\min\{n^{O(k)},\allowbreak 2^n\}$.}

\paragraph{Worst Case Lower Bounds.}
 Our upper bound relies crucially on the fact that $\AA$ is rank $k$, which is full rank when $k \leq d,m$. We demonstrate that an $O(n^k)$ time algorithm is no longer possible without this assumption by proving the NP-hardness of the realizable learning problem when rank$(\AA) < k$, which holds even for $k$ as small as $2$. 
 Our hardness result is as follows.
 
\vspace{0.2in} 
 \noindent \textbf{Theorem \ref{prop:truehard}.} \textit{
     For a fixed $\alpha \in \R^{m \times k},\X \in \R^{d \times n},\AA \in \R^{m \times n}$, the problem of deciding whether there exists a solution $\V \in \R^{k \times d}$ to $\alpha f(\V\X) = \AA$ is NP-hard even for $k=2$. Furthermore, for the case for $k=2$, the problem is still NP-hard when $\alpha \in \R^{m \times 2}$ is allowed to be a variable}.

\paragraph{Gaussian Inputs.}
Since non-convex optimization problems are known to be NP-hard in general, it is, perhaps, unsatisfying to settle for worst-case results. Typically, in the learning community, to make problems tractable it is assumed that the input data is drawn from some underlying distribution that may be unknown to the algorithm. So, in the spirit of learning problems, we make the common step of assuming that the samples in $\X$ have a standard Gaussian distribution. More generally, our algorithms work for arbitrary multi-variate Gaussian distributions over the columns of $\X$, as long as the covariance matrix is non-degenerate, i.e., full rank (see Remark \ref{remark:multivariate}). In this case, our running time and sample complexity will blow up by the condition number of the covariance matrix, which we can estimate first using standard techniques. For simplicity, we state our results here for $\mathbf{\Sigma} = \mathbb{I}$, though, for the above reasons, all of our results for Gaussian inputs $\X$ extend to all full rank $\mathbf{\Sigma}$

Furthermore, because many of our primary results utilize the combinatorial sparsity patterns of $f(\V\X)$, where $\X$ is a Gaussian matrix, we do not rely on the fact that $f(x)$ is linear for $x>0$. For this reason, our results generalize easily to other \textit{non-linear} rectified functions $f$. In other words, any function $f$ given by 
\[ f(x) = \begin{cases} 0 & \text{ if } x \leq 0 \\
\phi(x) & \text{ otherwise} \end{cases}\]
where $\phi(x): [0,\infty] \to [0,\infty]$ is a continuous, injective function. In particular, our bounds do not change for polynomial valued $\phi(x) = x^c$ for $c \in \mathbb{N}$. For more details of this generalization, see Appendix \ref{sec:generalf}. Note, however, that our worst-case, non-distributional algorithms (stated earlier), where $\X$ is a fixed matrix, do not generalize to non-linear $\phi(x)$.

We first consider the noiseless setting, also referred to as the exact or realizable setting. Here $\AA = \U^*f(\V^*\X)$ is given for rank $k$ matrices $\U^*$ and $\V^*$, where $\X$ has non-degenerate Gaussian marginals. The goal is then to recover the weights $(\U^*)^T, \V^*$ exactly up to a permutation of their rows (since one can always permute both sets of rows without effecting the output of the network). Note that for any positive diagonal matrix $\D$, $\U^* f(\D\V^*\X) = \U^*\D f(\V^*\X)$ when $f$ is the ReLU. Thus recovery of $(\U^*)^T,\V^*$ is always only possible up to a permutation and positive scaling. We now state our main theorem for the exact recovery of the weights in the realizable (noiseless) setting.
 
\vspace{0.2in} 
 \noindent \textbf{Theorem \ref{thm:exactfinal}. }
\textit{Suppose $\AA = \U^* f(\V^* \X)$ where $\U^* \in \R^{m \times k}, \V^* \in \R^{k \times d}$ are both rank-$k$, and such that the columns of $\X \in \R^{d \times n}$ are mean $0$ i.i.d. Gaussian. Then if $n = \Omega(\poly(d,m,\kappa(\U^*),\kappa(\V^*)) )$, then there is a $\poly(n)$-time algorithm which recovers $(\U^*)^T,\V^*$ exactly up to a permutation of the rows with high probability.} \\

To the best of our knowledge, this is the first algorithm which learns the weights matrices of a two-layer neural network with ReLU activation \textit{exactly} in the noiseless case and with Gaussian inputs $\X$. Our algorithm first obtains good approximations to the weights $\U^*,\V^*$, and concludes by solving a system of judiciously chosen linear equations, which we solve using Gaussian elimination. Therefore, we obtain exact solutions in polynomial time, without needing to deal with convergence guarantees of continuous optimization primitives. 
 Furthermore, to demonstrate the robustness of our techniques, we show that using results introduced in the concurrent and independent work of Ge et. al. \cite{ge2018learning}, we can extend Theorem \ref{thm:exactfinal} to hold for inputs sampled from symmetric distributions (we refer the reader to Corollary \ref{cor:symmetric_input}). We note that \cite{ge2018learning} recovers the weight matrices up to additive error $\eps$ and runs in $\poly\left(\frac{1}{\eps}\right)$-time, whereas our algorithm has no $\eps$ dependency.

The runtime of our algorithm depends on the condition number $\kappa(\V^*)$ of $\V^*$, which is a fairly ubiquitous requirement in the literature for learning neural networks, and optimization in general \cite{ge2018learning, janzamin2015beating, lee2015faster, cohen2017matrix, arora2017provable, zhong2017recovery, sedghi2016provable}. To address this dependency, in Lemma \ref{lem:kappadependency} we give a lower bound which shows at least a linear dependence on $\kappa(\V^*)$ is necessary in the sample and time complexity.

Next, we introduce an algorithm for approximate recovery of the weight matrices $\U^*,\V^*$ when $\AA = \U^* f(\V^*\X) + \E$ for Gaussian marginals $\X$ and an i.i.d. sub-Gaussian mean-zero noise matrix $\E$ with variance $\sigma^2$. 
 
\vspace{0.2in} 
\noindent \textbf{Theorem \ref{thm:noisyfinal}.}
\textit{Let $\AA = \U^*f(\V^*\X) + \E$ be given, where $\U^* \in \R^{m \times k},\V^*\in \R^{k \times d}$ are rank-$k$, $\E$ is a matrix of i.i.d. mean-zero sub-Gaussian random variables with variance $\sigma^2$, and such that the columns of $\X \in \R^{d \times n}$ are i.i.d. Gaussian. Then given $n = \Omega\Big(  \poly\big( d,m,\kappa(\U^*),\kappa(\V^*) , \sigma, \frac{1}{\eps}\big) \Big)$, there is an algorithm that runs in $\poly(n)$ time and w.h.p. outputs $\V,\U$ such that }
\[ \|\U - \U^*\|_F \leq \eps \; \; \; \; \; \|\V - \V^* \|_F \leq \eps\]

Again, to the best of our knowledge, this work is the first which learns the weights of a $2$-layer network in this noisy setting without additional constraints, such as the restriction that $\U$ be positive. Recent independent and concurrent work, using different techniques, achieves similar approximate recovery results in the noisy setting \cite{ge2018learning}. We note that the algorithm of Goel et. al. \cite{goel2017learning} that \cite{ge2018learning} uses, crucially requires the linearity of the ReLU for $x>0$, and thus the work of \cite{ge2018learning} does not generalize to the larger class of rectified functions which we handle. We also note that the algorithm of \cite{ge2017learning} requires $\U^*$ to be non-negative. Finally, the algorithms presented in \cite{janzamin2015beating} work for activation functions that are thrice differentiable and can only recover rows of $\V^*$ up to $\pm 1$ scaling. Note, for the ReLU activation function, we need to resolve the signs of each row.

\paragraph{Fixed-Parameter Tractable Algorithms.}
For several harder cases of the above problems, we are able to provide  Fixed-Parameter Tractable algorithms. First, in the setting where the ``labels'' are vector valued, i.e., $m > 1$, we note prior results, not restricted to ReLU activation, require the rank of $\U^*$ to be $k$ \cite{ge2018learning, janzamin2015beating, ge2017learning}. This implies that $m \geq k$, namely, that the output dimension of the neural net is at least as large as the number $k$ of hidden neurons. Perhaps surprisingly, however, we show that even when $\U^*$ does not have full column rank, we can still recover $\U^*$ exactly in the realizable case, as long as no two columns are non-negative scalar multiples of each other. Note that this allows for columns of the form $[u,-u]$ for $u \in \R^m$ as long as $u$ is non-zero. Our algorithm for doing so is fixed paramater tractable in the condition number of $\V^*$ and the number of hidden neurons $k$. Our results rely on proving bounds on the sample complexity in order to obtain all $2^k$ possible sparsity patterns of the $k$-dimensional columns of $f(\V^*\X)$. 
 
\vspace{0.2in} 
\noindent \textbf{Theorem \ref{thm:fptfinal}.} \textit{
Suppose $\AA = \U^*f(\V^*\X)$ for $\U^* \in \R^{m \times k}$ for any $m \geq 1$ such that no two columns of $\U^*$ are non-negative scalar multiples of each other, and $\V^* \in \R^{k \times n}$ has rank$(\V^* ) = k$, and $n > \kappa^{O(k)} \poly(dkm)$. Then there is an algorithm which recovers $\U^*,\V^*$ exactly with high probability in time $\kappa^{O(k)} \poly(d,k,m)$.}\\

 Furthermore, we generalize our results in the noisy setting to \textit{arbitrary} error matrices $\|\E\|$, so long as they are independent of the Gaussians $\X$. In this setting, we consider a slightly different objective function, which is to find $\U,\V$ such that $\U f(\V\X)$ approximates $\AA$ well, where the measure is to compete against the optimal generative solution $\|\U^*f(\V^* \X) - \AA\|_F = \|\E\|_F$. Our results are stated below. 
 
\vspace{0.2in} 
\noindent \textbf{Theorem \ref{thm:vershyninfinal}.} \textit{
Let $\AA = \U^*f(\V^*\X) + \E$ be given, where $\U^* \in \R^{m \times k},\V^*\in \R^{k \times d}$ are rank-$k$, and $\E \in \R^{m \times n}$ is any matrix independent of $\X$. Then there is an algorithm which outputs $\U \in \R^{m \times k} ,\V \in \R^{k \times d}$ in time $(\kappa/\eps)^{O(k^2)}\poly(n,d,m)$ such that with probability $1 - \exp(-\sqrt{n})$ we have 
\[  \|\AA - \U f(\V\X)\|_F \leq \|\E \|_F + O\Big( \Big[\sigma_{\max} \eps \sqrt{n m}  \|\E\|_2\Big]^{1/2} \Big),\]
where $\|\E\|_2$ is the spectral norm of $\E$.}

Note that the above error bounds depend on the flatness of the spectrum of $\E$.  In particular, our bounds give a $(1+\eps)$ approximation whenever the spectral norm of $\E$ is a $\sqrt{m}$ factor smaller than the Frobenius norm, as is in the case for a wide class of random matrices \cite{vershynin2010introduction}. When this is not the case, we can scale $\eps$ by $1/\sqrt{m}$, to get an $(m \kappa/\eps)^{O(k^2)}$-time algorithm which gives a $(1+\eps)$ approximation for any error matrix $\E$ independent of $\X$ such that $\|\E\|_F = \Omega(\eps \|\U^* f(\V^* \X)\|_F)$.

\paragraph{Sparse Noise.}
Finally, we show that for \textit{sparse noise}, when the network is \textit{low-rank} we can reduce the problem to the problem of exact recovery in the noiseless case. Here, by low-rank we mean that $m > k$. It has frequently been observed in practice that many pre-trained neural-networks exhibit correlation and a low-rank structure \cite{denil2013predicting, denton2014exploiting}. Thus, in practice it is likely that $k$ need not be as large as $m$ to well-approximate the data. For such networks, 
we give a polynomial time algorithm for Gaussian $\X$ for exact recovery of $\U^*,\V^*$. Our algorithm assumes that $\U^*$ has orthonormal columns, and satisfies an \textit{incoherence} property, which is fairly standard in the numerical linear algebra community  \cite{candes2007sparsity, candes2009exact,keshavan2010matrix,candes2011robust, jain2013low, hardt2014understanding}. 
Formally, assume $\AA = \U^*f(\V^*\X) + \E$ where $\X$ is i.i.d. Gaussian, and $\E$ is obtained from the following sparsity procedure. First, fix any matrix $\overline{\E}$, and randomly choose a subset of $nm - s$ entries for some $s< nm$, and set them equal to $0$. The following result states that we can exactly recover $\U^*,\V^*$ in polynomial time even when $s = \Omega(mn)$. 

\vspace{0.2in} 
\noindent \textbf{Theorem \ref{thm:sparse} \& Corollary \ref{cor:sparse}.}\textit{ Let $\U^* \in \R^{m \times k}, \V^* \in \R^{k \times d}$ be rank $k$ matrices, where $\U^*$ has orthonormal columns, $\max_i \| (\U^*)^T e_i\|_2^2 \leq \frac{\mu k}{m}$ for some $\mu$, and $k \leq \frac{m}{\overline{\mu} \log^2(n)}$, where $\overline{\mu} =  O\big( (\kappa(\V^*))^{2} \sqrt{k \log(n) \mu} + \mu +  (\kappa(\V^*))^4 \log(n) \big)$. Here $\kappa(\V^*)$ is the condition number of $\V^*$. Let $\E$ be generated from the $s$-sparsity procedure with $s =  \gamma nm$ for some constant $\gamma > 0$ and let $\AA = \U^* f(\V\X) + \E$. Suppose the sample complexity satisfies $n = \poly(d,m,k,\kappa(\V^*))$ 
Then on i.i.d. Gaussian input $\X$ there is a $\poly(n)$ time algorithm that recovers $\U^*,\V^*$ exactly up to a permutation and positive scaling with high probability. }

\subsection{Related Work}
Recently, there has been a flurry of work developing provable algorithms for learning the weights of a neural network under varying assumptions on the activation functions, input distributions, and noise models \cite{sedghi2016provable, arora2016understanding, goel2016reliably, Manurangsi2018complexity, zhong2017recovery, ge2018learning, ge2017learning, zhong2017recovery, tian2017analytical, li2017convergence, brutzkus2017globally, soltanolkotabi2017learning, goel2018learning, du2018improved}. In addition, there have been a number of works which consider lower bounds for these problems under a similar number of varying assumptions
\cite{goel2016reliably, livni2014computational, zhang2016l1, sedghi2016provable, arora2016understanding, boob2018complexity, Manurangsi2018complexity}.  We describe the main approaches here, and how they relate to our problem.

\paragraph{Learning ReLU Networks without noise.}
In the noiseless setting with Gaussian input, the results of Zhong et al. \cite{zhong2017recovery} utilize a similar strategy as ours. Namely, they first apply techniques from tensor decomposition to find a good initialization of the weights, whereafter they can be learned to a higher degree of accuracy using other methods. At this point our techniques diverge, as they utilize gradient descent on the initialized weights, and demonstrate good convergence properties for \textit{smooth} activation functions. However, their results do not give convergence guarantees for non-smooth activation functions, including the ReLU and the more general class of rectified functions considered in this work. In this work, once we are given a good initialization, we utilize combinatorial aspects of the sparsity patterns of ReLU's, as well as solving carefully chosen linear systems, to obtain exact solutions.  

Li and Yuan \cite{NIPS2017_6662} also analyize stochastic gradient descent, and demonstrate good convergence properties when the weight matrix $\V^*$ is known to be close to the identity, and $\U^* \in \R^{1 \times k}$ is the all $1$'s vector. In \cite{tian2017symmetry}, stochastic gradient descent convergence is also analyzed when $\U^* \in \R^{1 \times k}$ is the all $1$'s vector, and when $\V^*$ is orthonormal. Moreover, \cite{tian2017symmetry} does not give bounds on sample complexity, and requires that a good initialization point is already given. 

For uniformly random and sparse weights in $[-1,1]$, Arora et al. \cite{arora2014provable} provide polynomial time learning algorithms. In \cite{brutzkus2017globally}, the learning of \textit{convolutions neural networks} is considered, where they demonstrate global convergence of gradient descent, but do not provide sample complexity bounds.

\paragraph{Learning ReLU Networks with noise.}
Ge et al. \cite{ge2017learning} considers learning a ReLU network with a single output dimension $\AA = u^T f(\V \X) + \E$ where $u \in \R^k$ is restricted to be entry-wise positive and $\E$ is a zero-mean sub-Gaussian noise vector. In this setting, it is shown that the weights $u,\V$ can be approximately learned in polynomial time when the input $\X$ is i.i.d. Gaussian. However, in contrast to the algorithms in this work, the algorithm of \cite{ge2017learning} relies heavily on the non-negativity of $u$ \cite{rongpersonal}, and thus cannot generalize to arbitrary $u$. 
Janzamin, Sedghi, and Anandkumar \cite{janzamin2015beating} utilize tensor decompositions to approximately learn the weights in the presence of mean zero sub-Gaussian noise, when the activation functions are smooth and satisfy the property that $f(x) = 1- f(-x)$.  Using similar techniques, Sedghi and Anandkumar \cite{sedghi2016provable} provide a polynomial time algorithm to approximate the weights, if the weights are sparse.  

A more recent result of Ge et al. demonstrates polynomial time algorithms for learning weights of two-layer ReLU networks in the presence of mean zero sub-gaussian noise, when the input is drawn from a mixture of a symmetric and Gaussian distribution \cite{ge2018learning}. We remark that the results of \cite{ge2018learning} were independently and concurrently developed, and utilize substantially different techniques than ours that rely crucially on the linearity of the ReLU for $x>0$ \cite{rongpersonal}. For these reasons, their algorithms do not generalize to the larger class of rectified functions which are handled in this work. To the best of the our knowledge, for the case of Gaussian inputs, this work and \cite{ge2018learning} are the first to obtain polynomial time learning algorithms for this noisy setting.

\paragraph{Agnostic Learning.}
A variety of works study learning ReLU's in the more general \textit{agnostic} learning setting, based off Valiant's original PAC learning model \cite{valiant1984theory}. The agnostic PAC model allows for arbitrary noisy and distributions over observations, and the goal is to output a hypothesis function which approximates the output of the neural network. Note that this does not necessarily entail learning the weights of an underlying network.
For instance, Arora et al. \cite{arora2016understanding} gives an  algorithm with $O(n^{d})$ running time to minimize the empirical risk of a two-layer neural network. A closer analysis of the generalization bounds required in this algorithm for PAC learning is given in \cite{Manurangsi2018complexity}, which gives  a $2^{\poly(k/\eps)} \poly(n,m,d,k)$ time algorithm under the constraints that $\U^* \in \{1,-1\}^k$ is given a fixed input, and both the input examples $\X$ and the weights $\V^*$ are restricted to being in the unit ball. In contrast, our $(\kappa/\eps)^{O(k^2)}$ time algorithm for general error matrices $\E$ improves on their complexity whenever $\kappa = O(2^{\poly(k)})$, and moreover can handle arbitrarily large $\V^*$ and unknown $\U^* \in \R^{m \times k}$. We remark, however, that our loss function is different from that of the PAC model, and is in fact roughly equivalent to the empirical loss considered in  \cite{arora2016understanding}.

Note that the above algorithms \textit{properly} learn the networks. That is, they actually output weight matrices $\U,\V$ such that $\U f(\V\X)$ approximates the data well under some measure. A relaxation if this setting is \textit{improper} learning, where the output of the learning algorithm can be any efficiently computable function, and not necessarily the weights  of neural network. Several works have been studied that achieve polynomial running times under varying assumptions about the network parameters, such as \cite{ goel2016reliably, goel2017learning}. The algorithm of \cite{goel2017learning}, returns a ``clipped'' polynomial.  In addition, \cite{zhang2016l1} gives polynomial time improper learning algorithms for multi-layer neural networks under several assumptions on the weights and activation functions.

\paragraph{Hardness.}
Hardness results for learning networks have an extensive history in the literature \cite{judd1988neural,br92}. Originally, hardness was considered for threshold activation functions $f(x) \in \{1,-1\}$, where it is known that even for two ReLU's the problem is NP-hard \cite{br92}. Very recently, there have been several concurrent and independent lower bounds developed for learning ReLU networks. The work of \cite{boob2018complexity} has demonstrated the hardness of a neural network with the same number of nodes as the hard network in this paper, albeit with two applications of ReLU's (i.e., two non-linear layers) instead of one. Note that the hardness results of this work hold for even a single non-linear layer. Also concurrently and independently, a recent result of  \cite{Manurangsi2018complexity} appears to demonstrate the same NP-hardness as that in this paper, albiet using a slightly different reduction. The results of \cite{Manurangsi2018complexity} also demonstrate that \textit{approximately} learning even a single ReLU is NP-hard. In addition, there are also NP-hardness results with respects to improper learning of ReLU networks \cite{goel2016reliably, livni2014computational, zhang2016l1} under certain complexity theoretic assumptions.

\paragraph{Sparsity.} One of the main techniques of our work involves analyzing the sparsity patterns of the vectors in the rowspan of $\AA$. Somewhat related reasoning has been applied by Spielman, Wang, and Wright to the dictionary learning problem  \cite{spielman2012exact}. Here, given a matrix $\AA$, the problem is to recover matrices $\BB,\X$ such that $\AA = \BB \X$, where $\X$ is sparse. They argue the uniqueness of such a factorization by proving that, under certain conditions, the sparsest vectors in the row span of $\AA$ are the precisely rows of $\X$. This informs their later algorithm for the exact recovery of these sparse vectors using linear programming. 
\subsection{Our Techniques}

One of the primary technical contributions of this work is the utilization of the combinatorial structure of sparsity patterns of the rows of $f(\V\X)$, where $f$ is a rectified function, to solve learning problems. Here, a sparsity pattern refers to the subset of coordinates of $f(\V\X)$ which are non-zero, and a rectified function $f$ is one which satisfies $f(x) = 0$ for $x \leq 0$, and $f(x)>0$ otherwise.

\paragraph{Arbitrary Input.}
For instance, given $\AA = \U^* f(\V^*\X)$ where $\U^*,\V^*$ are full rank and $f$ is the ReLU, one approach to recovering the weights is to find $k$-linearly vectors $v_i$ such that $f(v_i \X)$ span precisely the rows of $\AA$. Without the function $f(\cdot)$, one could accomplish this by solving a linear system. Of course, the non-linearity of the activation function complicates matters significantly. Observe, however, that if the sparsity pattern of $f(\V^*\X)$ was known before hand, one could simple *remove* $f$ on the coordinates where $f(\V^*\X)$ is non-zero, and solve the linear system here. On all other coordinates, one knows that $f(\V^*\X)$ is $0$, and thus finding a linearly independent vector in the right row span can be solved with a linear system. Of course, naively one would need to iterate over $2^n$ possible sparsity patterns before finding the correct one. However, one can show that any $k$-dimensional subspace of $\R^n$ can intersect at most $n^k$ orthants of $\R^n$, and moreover these orthants can be enumerated in $n^k \poly(n)$ time given the subspace. Thus the rowspan of $\AA$, being $k$-dimensional, can contain vectors with at most $n^k$ patterns. This is the primary obervation behind our $n^k \poly(n)$-time algorithm for exact recovery of $\U^*,\V^*$ in the noiseless case (for arbitrary $\X$). 

As mentioned before, the prior result requires $\AA$ to be rank-$k$, otherwise the row span of $f(\V\X)$ cannot be recovered from the row span of $\AA$. We show that this difficulty is not merely a product of our specific algorithm, by demonstrating that even for $k$ as small as $2$, if $\U^*$ is given as input then it is NP-hard to find $\V^*$ such that $\U^* f(\V^*\X) =\AA$, thus ruling out any general $n^k$ time algorithm for the problem. For the case of $k=2$, the problem is still NP-hard even when $\U^*$ is not given as input, and is a variable.

\paragraph{Gaussian Input.}
In response to the aformentioned hardness results, we relax to the case where the input $\X$ has Gaussian marginals. In the noiseless case, we \textit{exactly} learn the weights $\U^*,\V^*$ given $\AA = \U^*f(\V^*\X)$ (up to a positive scaling and permutation). As mentioned, our results utilize analysis of the sparsity patterns in the row-span of $\AA$. One benefit of these techniques is that they are largely insensitive to the behavior of $f(x)$ for positive $x$, and instead rely on the rectified property $f(\cdot)$. Hence, this can include even exponential functions, and not solely the ReLU.

Our exact recovery algorithms proceed in two steps. First, we obtain an approximate version of the matrix $f(\V^*\X)$. For a good enough approximation, we can exactly recover the sparsity pattern of $f(\V^*\X)$. Our main insight is, roughly, that the only sparse vectors in the row span of $\AA$ are precisely the rows of $f(\V^*\X)$. Specifically, we show that the only vectors in the row span which have the same sparsity pattern as a row of $f(\V^*\X)$ are scalar multiples of that row. Moreover, we show that no vector in the row span of $\AA$ is supported on a strict subset of the support of a given row of $f(\V^*\X)$. Using these facts, we can then set up a judiciously designed linear system to find these vectors, which allows us to recover $f(\V^*\X)$ and then $\V^*$ exactly. By solving linear systems, we avoid using iterative continuous optimization methods, which recover a solution up to additive error $\eps$ and would only provide rates of convergence in terms of $\eps$. In contrast, Gaussian elimination yields exact solutions in a polynomial number of arithmetic operations. 


The first step, finding a good approximation of $f(\V^* \X)$, can be approached from multiple angles. In this work, we demonstrate two different techniques to obtain these approximations, the first being Independent Component Analysis (ICA), and the second being tensor decomposition. To illustrate the robustness of our exact recovery procedure once a good estimate of $f(\V^*\X)$ is known, we show in Section \ref{sec:symmetric} how we can bootstrap the estimators of recent, concurrent and independent work \cite{ge2018learning}, to improve them from approximate recovery to exact recovery.

\paragraph{Independent Component Analysis.}
In the restricted case when $\V^*$ is orthonormal, we show that our problem can be modeled as a special case of \textit{Independent Component Analysis} (ICA). 
The ICA problem approximately recovers a subspace $\BB$, given that the algorithm observes samples of the form $y =\BB x + \zeta$, where $x$ is i.i.d. and drawn from a distribution that has moments bounded away from Gaussians, and $\zeta$ is a Gaussian noise vector. Intuitively, the goal of ICA is to find a linear transformation of the data such that each of the coordinates or features are as independent as possible. 
By rotational invariance of Gaussians, in this case $\V^*\X$ is also i.i.d. Gaussian, and we know that the columns of $f(\V^*\X)$ have independent components and moments bounded away from a Gaussian. Thus, in the orthonormal case, our problem is well suited for the ICA framework.  


\paragraph{Tensor Decomposition.}

A second, more general approach to approximating $f(\V^*\X)$ is to utilize techniques from \textit{tensor decomposition}. Our starting point is the generative model considered by Janzamin et. al.  \cite{janzamin2015beating}, which matches our setting, i.e., $\AA= \U^* f(\V^* \X)$. The main idea behind this algorithm is to construct a tensor that is a function of both $\AA, \X$ and captures non-linear correlations between them. A key step is to show that the resulting tensor has low CP-rank and the low-rank components actually capture the rows of the weight matrix $\V^*$. Intuitively, working with higher order tensors is necessary since matrix decompositions are only identifiable up to orthogonal components, whereas tensors have identifiable non-orthogonal components, and we are specifically interested in recovering approximations for non-orthonormal $\V^*$.

Next, we run a tensor decomposition algorithm to recover the low-rank components of the resulting tensor. While computing a tensor decomposition is NP-hard in general \cite{hillar2013most}, there is a plethora of work on special cases, where computing such decompositions is tractable \cite{bhaskara2014smoothed,song2016sublinear,wang2016online,goyal2014fourier, ge2015decomposing, barak2016noisy}. Tensor decomposition algorithms have recently become an invaluable algorithmic primitive and with applications in statistical and machine learning \cite{janzamin2015beating,janzamin2014score,ge2017learning,anandkumar2014tensor, barak2015dictionary}.

However, there are several technical hurdles involved in utilizing tensor decompositions to obtain estimates of $\V^*$. The first is that standard analysis of these methods utilizes a generalized version of \textit{Stein's Lemma} to compute the expected value of the tensor, which relies on the smoothness of the activation function. Thus, we first approximate $f(\cdot)$ closely using a Chebyshev polynomial $p(\cdot)$ on a sufficiently large domain. However, we cannot algorithmically manipulate the input to demand that $\AA$ instead be generated as $\U^* p(\V^*\X)$. Instead, we add a small mean-zero Gaussian perturbation to our samples and analyze the variation distance between $\AA = \U^*f(\V^*\X) + \G$ and $\U^*p(\V^*\X) + \G$. For a good enough approximation $p$, this variation distance will be too small for any algorithm to distinguish between them, thus standard arguments imply the success of tensor decomposition algorithms when given the inputs $\AA + \G$ and $\X$. 

Next, a key step is to construct a non-linear transformation of the input by utilizing knowledge about the underlying density function for the distribution of $\X$, which we denote by $p(x)$. The non-linear function considered is the so-called Score Function, defined in \cite{janzamin2014score}, which is the normalized $m$-th order derivative of the input probability distribution function $p(x)$.
Computing the score function for an arbitrary distribution can be computationally challenging. However, as mentioned in \cite{janzamin2014score}, 
we can use Hermite polynomials that help us compute a closed form for the score function, in the special case when $x \sim \mathcal{N}(0,\II)$. 


\paragraph{Sign Ambiguity.}
A further complication arises due to the fact that this form of tensor decomposition is agnostic to the signs of $\V$. Namely, we are guaranteed vectors $v_i$ from tensor decomposition such that $\|v_i -  \xi_i \V_{i,*}^*\|_F < \eps$, where $\xi_i \in \{1,-1\}$ is some unknown sign. Prior works have dealt with this issue by considering restricted classes of smooth activation functions which satisfy $f(x) = 1-f(-x)$ \cite{janzamin2015beating}. For such functions, one can compensate for not knowing the signs by allowing for an additional affine transformation in the neural network. Since we consider non-affine networks and rectified functions $f(\cdot)$ which do not satisfy this restriction, we must develop new methods to recover the signs $\xi_i$ to avoid the exponential blow-up needed to simply guess them. 

For the noiseless case, if $v_i$ is close enough to $\xi_i \V^*_{i,*}$, we can employ our previous results on the uniqueness of sparsity patterns in the row-span of $\AA$. Namely, we can show that the sparsity pattern of $f(\xi v_i)$ will in fact be feasible in the row-span of $\AA$, whereas the sparsity pattern of $f(-\xi v_i)$ will not, from which we recover the signs $\xi_i$ via a linear system. 

In the presence of noise, however, the problem becomes substantially more complicated. Because we do not have the true row-span of $f(\V^*\X)$, but instead a noisy row-span given by $\U^*f(\V^*\X) + \E$, we cannot recover the $\xi_i$'s by feasibility arguments involving sparsity patterns. 
Our solution to the sign ambiguity in the noisy case is a projection-based scheme. Our scheme for determining $\xi_i$ involves constructing a $2k-2$ dimensional subspace $S$, spanned by vectors of the form $f( \pm v_j \X)$ for all $j \neq i$. We augment this subspace as $S^{1} = S \cup \{f(v_i \X)\}$ and $S^{-1} = S \cup \{f(-v_i \X)\}$. We then claim that the length of the projections of the rows of $\AA$ onto the $S^{\xi}$ will be \textit{smaller} for $\xi = \xi_i$ than for $\xi = -\xi_i$. Thus by averaging the projections of the rows of $\AA$ onto these subspaces and finding the subspace which has the smaller projection length on average, we can recover the $\xi_i$'s with high probability. Our analysis involves bounds on projections onto perturbed subspaces, and a spectral analysis of the matrices $f(\W\X)$, where $\W$ is composed of up to $2k$ rows of the form $\V^*_{i,*}$ and $-\V^*_{i,*}$.

\paragraph{FPT Algorithms.}
In addition to our polynomial time algorithms, we also demonstrate how various seemingly intractable relaxations to our model, within the Gaussian input setting, can be solved in fixed-parameter tractable time in the number $k$ of hidden units, and the condition numbers $\kappa$ of $\U^*$ and $\V^*$. Our first result demonstrates that, in the noiseless case, exact recovery of $\U^*,\V^*$ is still possible even when $\U^*$ is not rank $k$. Note that the assumption that $\U^*$ is rank $k$ is required in many other works on learning neural networks \cite{ge2017learning, ge2018learning,janzamin2015beating, sedghi2016provable} 

We demonstrate that taking $\poly(d)\kappa^{O(k)}$ columns of $\X$, where $\kappa$ is the condition number of $\V^*$, is sufficient to obtain $1$-sparse vectors in the columns of $f(\V^*\X)$. As a result, we can look for column of $\AA$ which are positive scalar multiples of each other, and conclude that any such pair will indeed be a positive scaling of a column of $\U^*$ with probability $1$. This allows for exact recovery of $\U^*$ for any $\U^* \in \R^{m \times k}$ and $m \geq 1$, as long as no two columns of $\U^*$ are positive scalar multiples of each other. Thereafter, we can recover $\V^*$ by solving a linear system on the subset of $1$-sparse columns of $f(\V\X)$, and argue that the resulting constraint matrix is full rank. The result is a $\poly(d,k,m)\kappa^O(k)$ time algorithm for exact recovery of $\U^*,\V^*$.

Our second FPT result involves a substantial generalization of the class of error matrices $\E$ which we can handle. In fact, we allow arbitrary $\E$, so long as they are independent of the input $\X$. Our primary technical observation is as follows. Suppose that we were given $f(v \X) + \E$, where $\E$ is an arbitrary, possibly very large, error vector, and $v \in \R^d$.  Then one can look at the sign of each entry $i$, and consider it to be a noisy observation of which side of a halfspace the vector $\X_{*,i}$ lies within. In other words, we couch the problem as a noisy half-space learning problem, where the half-space is given by the hyperplane normal to $v$, and the labeling of $\X_{*,i}$ is the sign of $(f(v \X) + \E)_{i}$.

 Now while the error on each entry will be large, resulting in nearly half of the labelings being flipped incorrectly, because $\E$ is \textit{independent} of $\X$, we are able to adapt recent techniques in noisy-halfspace learning to recover $v$ in polynomial time. In order to utilize these techniques without knowing anything about $\E$, we must first {\it smooth out} the error $\E$ by adding a large Gaussian matrix. The comparatively small value of $f(v\X)$ is then able to shift the observed distribution of signs sufficiently to have non-trivial correlation with the true signs. Taking polynomially many samples, our algorithms detect this correlation, which will allow for accurate recovery of $v$.
 
 To even obtain a matrix of the form $f(v \X) + \E$, where $v$ is a row of $\V^*$, we can guess the pseudo-inverse of $\U^*$. To reduce the dependency on $m$, we first sketch $\U^*$ by a \textit{subspace-embedding} $\S \in \R^{O(k) \times d}$, which will be a random Gaussian matrix and approximately preserve the column span of $\U^*$. In particular, this approximately preserves the spectrum of $\U^*$. The resulting matrix $\S\U^*$ has $O(k^2)$ entries, and, given the maximum singular value of the inverse (which can be guessed to a factor of $2$), can be guessed accurately enough for our purposes in time $(\kappa/\eps)^{O(k^2)}$, which dominates the overall runtime of the algorithm.

\subsection{Roadmap}
In Section \ref{sec:exact} we introduce our $n^{O(k)}$ time exact algorithm when rank$(A) = k$ and arbitrary $\X$, for recovery of rank-$k$ matrices $\U^*,\V^*$ such that $\U^*f(\V^*\X) = \AA$. In this section, we also demonstrate that for a very wide class of distributions for \textit{random} matices $\X$, the matrix $\U^*f(\V^*\X)$ is in fact full rank with high probability, and therefore can be solved with our exact algorithm. Then, in Section \ref{section:hard}, we prove NP-hardness of the learning problem when rank$(A) < k$. Next, in Section \ref{sec:polyexact}, we give a polynomial time algorithm for exact recovery of $\U^*,\V^*$ in the case when $\X$ has Gaussian marginals in the realizable setting. Section \ref{sec:ICA} develops our Independenct Component Analysis Based algorithm, whereas Section \ref{subsec:generalnoiseless} develops our more general exact recovery algorithm. In Section \ref{sec:symmetric}, we show how recent concurrent results can be bootstrapped via our technqiues to obtain exact recovery for a wider class of distributions. 

In Section \ref{sec:noisycase}, we demonstrate how to extend our algorithm to the case where $\AA = \U^*f(\V^*\X) + \E$ where $\E$ is mean $0$ i.i.d. sub-Gaussian noise. Then in Section \ref{sec:FPT}, we give a fixed-paramater tractable (FPT) (in $k$ and $\kappa(\V^*)$) for the exact recovery of $\U^*,\V^*$ in the case where $\U^*$ does not have full column rank. We give our second FPT algorithm in Section \ref{sec:1bit}, which finds weights which approximate the optimal network for arbitrary error matrices $\E$ that are independent of $\X$. 
In Section \ref{sec:sparsenoise}, we demonstrate how the weights of certain \textit{low-rank} networks, where $k<d,m$, can be recovered exactly in the presence of a class of arbitrary sparse noise in polynomial time. Finally, in Appendix \ref{sec:generalf}, we give further details on generalizing the ReLU to the class of rectified activation functions. 

\subsection{Preliminaries}

For a positive integer $k$, we write $[k]$ to denote the set $\{1,2,\dots,k\}$. We use the term \textit{with high probability} (w.h.p.) in a parameter $r > 1$ to describe an event that occurs with probability $1- \frac{1}{\poly(r)}$. For a real $r$, we will often use the shorthand $\poly(r)$ to denote a sufficiently large constant degree polynomial in $r$. Since for simplicity we do not seek to analyze or optimize the polynomial running time of our algorithms, we will state many of our error bounds within technical lemmas as $\frac{1}{\poly(r)}$ where $r$ constitutes some set of relevant parameters, with the understanding that this polynomial can be made arbitrarily large by increasing the sample complexity $n$ of our algorithms by a polynomial factor.

In this work we use boldface font $\AA,\V,\U,\W$ to denote matrices, and non-boldface font $x,y,u,v$ to denote vectors.
For a vector $x$, we use $\|x\|_2$ to denote the $\ell_2$ norm of $x$. 
For any  
matrix $\W$ with $p$ rows and $q$ columns, for all $i\in [p]$, let $\W_{i,*}$  denote 
the $i$-th row of $\W$,  for all $j \in [q]$ let $\W_{*,j}$ denote the $j$-th column
and let $\W_{i,j}$ denote the $i,j$-th entry of $\W$. 
Further, the 
singular value decomposition of $\W$, denoted by $\texttt{SVD}(\W) = \U
\mathbf{\Sigma} \V^T$, is such that $\U$ is a $p \times r$ matrix with 
orthonormal columns, $\V^T$ is a $r \times q$ matrix with orthonormal 
rows and $\mathbf{\Sigma}$ is an $r \times r$ diagonal matrix, where $r$ is the rank of $\W$. The entries 
along the diagonal are the singular values of $\W$, denoted by 
$\sigma_{\max} = \sigma_1(\W) \geq  \sigma_2(\W) \geq \ldots\geq  \sigma_r(\W) = \sigma_{\min}(\W)$. We write $\|\W\|_F = (\sum_{p,q} \W_{p,q}^2\big)^{1/2}$ to denote the Frobenius norm of $\W$, and $$\|\W\|_2 = \textrm{sup}_x \frac{\|\AA x\|_2}{\| x\|_2} = \sigma_{\textrm{max}}(\W)$$ 
to denote 
the spectral norm. We will write $\mathbb{I}_k$ to denote the $k \times k$ square identity matrix. We use the notation $\Proj_{\W}(w)$ to denote the projection of the vector $w$ onto the \textit{row-span} of $\W$. In other words, if  $x^* = \arg \min_x \|x\W - w\|_2$, then $\Proj_{\W}(w) = x^* \W$. We now recall the condition number of a matrix $\W$.

\begin{definition}
For a rank $k$ matrix $\W \in \R^{p \times q}$, let $\sigma_{\max}(\W) = \sigma_1(\W) \geq \sigma_2(\W) \geq \dots \geq \sigma_k(\W) = \sigma_{\min}(\W)$ be the non-zero singular values of $\W$. Then the \textit{condition number} $\kappa(\W)$ of $\W$ is given by
\[ \kappa(\W) = \frac{\sigma_{\max}(\W)}{\sigma_{\min}(\W)} \]
Note that if $\W$ has full column rank (i.e., $k=q$), then if $\W^\dagger$ is the pseudo-inverse of $\W$ we have $\W^\dagger \W = \mathbb{I}_q$ and 
\[ \kappa(\W) = \|\W^\dagger\|_2 \|\W\|_2 \]
where $\|\W\|_2 = \sigma_1(\W)$ is the spectral norm of $\W$. Similarly if $\W$ has full row rank (i.e. $k=p$), then $\W \W^\dagger = \mathbb{I}_p$ and 
\[ \kappa(\W) = \|\W^\dagger\|_2 \|\W\|_2 \]
\end{definition}

A real $m$-th order tensor is $\T \in \otimes^m \R^{d}$ is the outer product of $m$ $d$-dimensional Euclidean spaces. A third order tensor $\T \in \otimes \R^{d}$ is defined to be rank-$1$ if $\T = w\cdot a \otimes b \otimes c$ where $a, b,c \in \R^d$. Further, $\T$ has Candecomp/Parafac (CP) rank-$k$ if it can be written as the sum of $k$ rank-$1$ tensors, i.e., $$\T = \sum^k_{i=1} w_i a_i \otimes b_i \otimes c_i$$ is such that $w_i \in \R, a_i, b_i, c_i \in \R^d$. Next, given a function $f(x):\R^d \to \R$, we use the notation $\nabla^{m}_x f(x) \in \otimes^m \R^d$ to denote the $m$-th order derivative operator w.r.t. the variable $x$, such that $$\left[\nabla^{m}_x f(x)\right]_{i_1, i_2, \ldots i_m} = \frac{\partial f(x)}{\partial x_{i_1}\partial x_{i_2} \ldots \partial x_{i_m} }$$.


In the context of the ReLU activation function, a useful notion to consider is that of a sign pattern, which will be used frequently in our analysis. 
\begin{definition}
 For any matrix dimensions $p, q$, a\textit{ sign pattern} is simply a subset of $[p] \times [q]$. For a matrix $\W \in \R^{p \times q}$, we let $\sg{\W}$ be the sign pattern defined by 
\[\sg{\W} = \{ (i,j) \in [p] \times [q] \; | \; \W_{i,j} > 0 \} \]
\end{definition}

Intuitively, in the context of rectified activation functions, the sign pattern is an important notion since $\sg{\W}$ is invariant under application of $f$, in other words $\sg{\W} = f(\sg{\W})$. We similarly define a \textit{sparsity-pattern} of a matrix  $\W \in \R^{p \times q}$ as a subset of $[p]\times[q]$ where $\W$ is non-zero. Note that a sign and sparsity pattern of $\W$, taken together, specify precisely where the strictly positive, negative, and zero-valued entries are in $\W$.

We use the notation $\mathcal{N}(\mu,\sigma^2)$ to denote the Gaussian distribution with mean $\mu$ and variance $\sigma^2$. More generally, we write $\mathcal{N}(\mu,\mathbf{\Sigma})$ to denote a $k$-dimensional multi-variate Gaussian distribution with mean $\mu \in \R^k$ and variance $\mathbf{\Sigma} \in \R^{k \times k}$. We make use of the $2$-stability of the Gaussian distribution several times in this work, so we now recall the following definition of stable random variables. We refer the reader to \cite{indyk2006stable} for a further discussion of such distributions.
\begin{definition}\label{def:stable}
	A distribution $\mathcal{D}_p$ is said to be $p$-stable if whenever $\X_1,\dots,\X_n \sim \mathcal{D}_p$ are drawn independently, we have\[	\sum_{i=1}^n a_i \X_i \sim \|a\|_p \X	\]
	for any fixed vector $a \in \R^n$, where $\X \sim \mathcal{D}_p$ is again distributed as a $p$-stable random variable. In particular, the Gaussian random variables $\mathcal{N}(0,\sigma^2)$ are $p$-stable for $p=2$ (i.e., $\sum_i a_i g_i = \|a\|_2$, where $g,g_1,\dots,g_n \sim \mathcal{N}(0,\sigma^2)$). 
\end{definition}

Finally, we  remark that in this paper, we will work in the common real RAM model of computation, where arithmetic operations on real numbers can be performed in constant time.

\section{Exact solution when rank$(\AA) = k$}\label{sec:exact}
In this section, we consider the exact case of the neural network recovery problem. Given an input matrix $\X \in \R^{d \times n}$ of examples, and a matrix $\AA \in \R^{m \times n}$ of classifications, the exact version of the recovery problem is to obtain rank-$k$ matrices $\U^*,\V^*$ such that $\AA = \U^*f(\V^*\X)$, if such matrices exist. In this section we demonstrate the existence of an $n^{O(k)} \poly(md)$-time algorithm for exact recovery when $\text{rank}(\AA) =  k$. We demonstrate that this assumption is likely necessary in Section \ref{section:hard}, where we show that if $\text{rank}(\AA) <  k$ then the problem is NP-hard even for any $k\geq 2$ when the matrix $\U$ is given as input, and NP-hard for $k=2$ when $\U^*$ is allowed to be a variable. This rules out the existence of a general $n^{O(k)}$ time algorithm for this problem.

The main theorem we prove in this section is that there is an algorithm with running time dominated by $\min\{n^{O(k)},\allowbreak 2^n\}$ such that it recovers the underlying matrices $\U^*$ and $\V^*$ exactly. Intuitively, we begin by showing a structural result that there are at most $n^{O(k)}$ sign patterns that lie in the row space of $f(\V^*\X)$ and we can efficiently enumerate over them using a linear program. For a fixed sign pattern in this set, we construct a sequence of $k$ linear programs (LP) such that the $i$-th LP finds a vector $y^i$, $f(y^i)$ is in the row span of $f(\V^*\X)$, subject to the fixed sign pattern, and the constraint that $f(y^i)$ is not a linear combination of $f(y^1), f(y^2), \ldots f(y^{i-1})$. We note that $f(y^i)$ being linearly independent is not a linear constraint, but we demonstrate how it can be linearized in a straightforward manner. 

Crucially, our algorithm relies on the fact that we have the row-span of $f(\V^*\X)$. Note that this is implied by the assumption that $\AA$ is rank $k$. Knowing the rowspan allows us to design the constraints in the prior paragraph, and thus solve the LP to recover the rows of $f(\V^*\X).$ On the other hand, if the rank of $\AA$ is less than $k$, then it no longer seems possible to efficiently determine the row span of $f(\V^*\X)$. In fact, our NP-Hardness result of Section \ref{section:hard} demonstrates that, given $\U^*$ as input, if the rank of $\AA$ is strictly less than $k$, the problem of determining the exact row-span of $f(\V^*\X)$
is NP-Hard. The main result of this section is then as follows.

\begin{theorem}
\label{thm:exact_lp}
Given $\AA \in \R^{m \times n}, \X \in \R^{d \times n}$, there is an algorithm that finds $\U^* \in \R^{m \times k} ,\V^* \in \R^{k \times d}$ such that $\AA = \U^* f(\V^*\X)$ and runs in time $\poly(nmd)\min\{n^{O(k)},\allowbreak 2^n\}$, if $\textrm{rank}(\AA) =k$.
\end{theorem}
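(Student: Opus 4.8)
The plan is to exploit the fact that $\mathrm{rank}(\AA)=k$ forces the rowspan of $\AA$ to coincide with the rowspan of $f(\V^*\X)$ — since $\AA = \U^* f(\V^*\X)$ with $\U^*$ of rank $k$, the row space of $\AA$ equals the row space of $f(\V^*\X)$. Thus we can work entirely inside the known $k$-dimensional subspace $R = \mathrm{rowspan}(\AA) \subseteq \R^n$, and our task reduces to finding $k$ linearly independent vectors $y^1,\dots,y^k \in \R^d$ such that each $f(y^i\X)$ lies in $R$ and the $f(y^i\X)$ are linearly independent; these will span the same space as the rows of $f(\V^*\X)$, after which $\V^*$ and $\U^*$ are recovered by solving linear systems (one for $\U^*$ once $f(\V^*\X)$ is known up to change of basis, and one for $\V^*$ from $f(\V^*\X)$ by reading off the sign pattern and inverting $f$ on the positive coordinates).

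First I would establish the structural claim — which the section intro says is proven separately — that a $k$-dimensional subspace of $\R^n$ meets at most $n^{O(k)}$ orthants, and that these orthants (equivalently, sign patterns) can be enumerated in $n^{k+O(1)}$ time; I would just cite this. Then, for each candidate sign pattern $S$ (a subset of $[n]$ designating which coordinates of a row are strictly positive), I would set up the sequence of $k$ linear programs. The $i$-th LP has variables $y \in \R^d$ and a coefficient vector $\lambda \in \R^k$ expressing $f(y\X)$ in a fixed basis $B$ of $R$: the constraints are (a) for $j \in S$, $(y\X)_j = (\lambda B)_j > 0$ (strict inequality handled by requiring $\geq \delta$ and noting scale invariance, or via a standard strict-feasibility LP trick); (b) for $j \notin S$, $(y\X)_j \leq 0$ and $(\lambda B)_j = 0$; and (c) a linear-independence constraint ensuring $f(y\X) = \lambda B$ is not in $\mathrm{span}(f(y^1\X),\dots,f(y^{i-1}\X))$. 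For (c) I would use the standard linearization: having found $f(y^1\X),\dots,f(y^{i-1}\X)$, pick a coordinate (or a linear functional) on which a new independent vector must differ, fix its value to a nonzero constant, and iterate over the polynomially many choices of which functional to pin — this is exactly the "linearized in a straightforward manner" remark. If for the true sign pattern $S^* = \sg{f(\V^*\X)_{i,*}}$ restricted appropriately the LP is feasible for all $i=1,\dots,k$ (which it is, witnessed by $y = \V^*_{i,*}$), we recover a full-rank set.

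The correctness argument then has two halves. For soundness: any $y^1,\dots,y^k$ produced satisfy $f(y^i\X) \in R = \mathrm{rowspan}(f(\V^*\X))$ and are linearly independent, so the matrix $\M$ with rows $f(y^i\X)$ satisfies $\mathrm{rowspan}(\M) = \mathrm{rowspan}(f(\V^*\X))$; hence there is an invertible $\Q \in \R^{k\times k}$ with $\M = \Q f(\V^*\X)$, and then $\AA = \U^* \Q^{-1} \M$, so setting $\U = \U^*\Q^{-1}$ gives $\AA = \U \M$ — but this is not literally of the form $\U f(\V\X)$ unless $\M = f(\V\X)$ for some $\V$. This is the main obstacle: I must argue that the recovered $\M$ genuinely has rows of the form $f(y^i\X)$ with the correct sign pattern, so that solving $(y^i\X)_j$ on the support of $\M_{i,*}$ and inverting $f$ yields a consistent $\V$ with $f(\V\X) = \M$; and I must handle the permutation/scaling ambiguity. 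The key point making this work is that the LP constraints force $\M_{i,*} = f(y^i\X)$ exactly (coordinates in $S$ are positive and equal to $(y^i\X)_j$, coordinates outside $S$ are zero and $(y^i\X)_j \leq 0$), so reading off $y^i$ is immediate and $\V = [y^1;\dots;y^k]$ works directly; then $\U$ is the unique solution to the (full-rank, since $\M$ is rank $k$) linear system $\U\M = \AA$. For completeness (that the algorithm finds \emph{some} valid solution whenever one exists): the true $(\U^*,\V^*)$ induces sign patterns that are among the $n^{O(k)}$ enumerated ones and makes every LP in the sequence feasible, so the algorithm succeeds on that branch. The $2^n$ alternative in the running time bound comes from simply enumerating all sign patterns by brute force when $2^n < n^{O(k)}$. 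I would close by noting the total running time is $n^{O(k)}$ branches times $\mathrm{poly}(n,m,d)$ per branch for solving the $O(n^{O(1)})$ LPs and the final linear systems, giving $\mathrm{poly}(nmd)\min\{n^{O(k)}, 2^n\}$.
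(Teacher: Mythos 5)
Your proposal follows essentially the same route as the paper: reduce to the known row space of $\AA$, enumerate the $n^{O(k)}$ candidate sign patterns (citing the structural bound), run a sequence of $k$ feasibility LPs with the sign-pattern constraints tying $y\X$ to a vector in that row space, linearize the independence requirement by pinning a coordinate, and finally read off $\V$ from the LP solutions and solve a linear system for $\U$. The only small inaccuracy is in the completeness step: the witness for the $i$-th LP is not necessarily $\V^*_{i,*}$ but rather \emph{some} row $j$ of $\V^*$ whose image $f(\V^*_{j,*}\X)$ is independent of the previously found vectors — this is what the paper's induction argument actually establishes — but this does not change the structure of the argument.
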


Let $\V' \in \R^{k \times n}$ be a basis for the row-span of $\AA$. 
	For two matrices $\Y,\ZZ$ of the same dimension, we will write $\Y \req \ZZ$ if the row spans of $\Y$ and $\ZZ$ are the same. The first step in our algorithm is to obtain a \textit{feasible set} $\mathcal{S}$ of sign patterns, within which the true sign pattern of $f(\V^* \X)$ must lie. 
	
\begin{lemma}\label{prop:signs}
Given $\AA \in \R^{m \times n}, \X \in \R^{d \times n}$, such that $\textrm{rank}(\AA) = k$, there is an algorithm which runs in time $\min\{n^{O(k)},2^n\}$ and returns a set of sign patterns $\mathcal{S} \subset 2^{[m] \times [n]}$ with  $|\mathcal{S}| = \min\{n^{O(k)},2^n\}$ such that for any rank-$k$ matrices $\U^* \in \R^{m \times k},\V^* \in \R^{k \times d}$ such that $\AA = \U^* f(\V^*\X)$ and any row $i \in [k]$, $\sg{(\V^*\X)_i} = \sg{S}$ for some $S \in \mathcal{S}$. 
\end{lemma}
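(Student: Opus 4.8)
The plan is to reduce the statement to a purely geometric fact about the row span of $\AA$ and then to invoke standard bounds on hyperplane arrangements. First I would observe that since $\AA = \U^* f(\V^*\X)$ with $\U^* \in \R^{m\times k}$ of rank $k$, every row of $\AA$ is a linear combination of the $k$ rows of $f(\V^*\X)$, so the row span of $\AA$ is contained in the row span of $f(\V^*\X)$; as the former is $k$-dimensional by hypothesis and the latter has dimension at most $k$ (being the span of $k$ rows), the two coincide. Call this common $k$-dimensional subspace $W \subseteq \R^n$. The key point is that $W$ is computable from $\AA$ alone (e.g.\ by Gaussian elimination), yielding the matrix $\V' \in \R^{k\times n}$ introduced above, a basis for the row span of $\AA$ $(=W)$, with no knowledge of $\U^*$ or $\V^*$. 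Now each row $f(\V^*\X)_{i,*}$ lies in $W$, and since $f \ge 0$ the set $\sg{(\V^*\X)_{i,*}}$ equals the set of strictly positive coordinates (equivalently, the support) of the vector $f(\V^*\X)_{i,*} \in W$. So it suffices to output a set $\mathcal{S}$ of size $n^{O(k)}$ that contains every subset of $[n]$ of the form $\sg{v}$ for some $v \in W$.

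Next I would bound and enumerate $\{\,\sg{v} : v \in W\,\}$. Writing $v = z\V'$ for $z \in \R^k$, we have $\sg{v} = \{\, j \in [n] : \langle z, \V'_{*,j}\rangle > 0 \,\}$, so $z \mapsto \sg{z\V'}$ is constant on each relatively open face of the central arrangement of the $n$ hyperplanes $H_j = \{\, z \in \R^k : \langle z, \V'_{*,j}\rangle = 0 \,\}$ in $\R^k$; standard bounds give that such an arrangement has only $O(n^k)$ faces in total (and $O(n^{k-1})$ full-dimensional cells), so $|\{\sg{v} : v \in W\}| = O(n^k)$. To enumerate the full-dimensional cells I would run a breadth-first search over the cell-adjacency graph on sign vectors $\sigma \in \{\pm1\}^n$: the cell of $\sigma$ is nonempty iff the linear program ``find $z \in \R^k$ with $\sigma_j \langle z, \V'_{*,j}\rangle \ge 1$ for all $j \in [n]$'' is feasible, two cells whose sign vectors differ in exactly one coordinate and which are both nonempty are facet-adjacent, and this graph is connected (walk along a generic segment joining interior points of two cells). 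Since each LP has $k$ variables and is solvable in $\poly(n,k)$ time, and there are $O(n^{k-1})$ cells each with at most $n$ neighbors, the search runs in $n^{O(k)}$ time. The lower-dimensional faces --- which matter in the worst case, because for an arbitrary (non-random) $\X$ a row $(\V^*\X)_{i,*}$ may have zero entries --- are obtained by recursing the same procedure inside each flat $\bigcap_{j \in Z}H_j$ of positive dimension, over subsets $Z \subseteq [n]$ with $|Z| \le k$; the total time remains $n^{O(k)}$. Letting $\mathcal{S}$ be the collection of sets of strictly positive coordinates of the covectors of all enumerated faces gives a set of size $n^{O(k)}$ with the required covering property. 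Finally, whenever $2^n \le n^{O(k)}$ I would instead simply take $\mathcal{S} = 2^{[n]}$, which trivially works, yielding the claimed $\min\{n^{O(k)},2^n\}$ bound on both the running time and $|\mathcal{S}|$.

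I expect the main obstacle to be the second paragraph: carefully establishing the $O(n^k)$ bound on the number of realized sign patterns --- in particular the degenerate ones arising from vectors of $W$ with zero coordinates --- and verifying that the face-enumeration procedure is complete (misses no pattern) while running in $n^{O(k)}$ time. By contrast the reduction in the first paragraph is essentially immediate once one notices that $\text{rank}(\AA) = k$ forces the row spans of $\AA$ and $f(\V^*\X)$ to be equal; this is exactly the step that uses the hypothesis $\text{rank}(\AA) = k$, and it is precisely what breaks down (and makes the problem NP-hard, per Section~\ref{section:hard}) when $\text{rank}(\AA) < k$.
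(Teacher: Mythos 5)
Your proof is correct and follows essentially the same approach as the paper: both use the rank-$k$ hypothesis to identify the row span of $\AA$ with that of $f(\V^*\X)$, reducing the task to enumerating sign patterns of vectors in a $k$-dimensional subspace $W$ of $\R^n$, and both rest on the $O(n^k)$ bound on the number of cells/faces that a central arrangement of $n$ hyperplanes in $\R^k$ (equivalently, orthants met by $W$) can have. The only difference is in the enumeration routine: the paper directly solves the $\binom{2n}{k}$ linear systems obtained by making $k$ of the $2n$ possible LP constraints tight (using that the feasible region is pointed since $\V'$ has full row rank) and records the resulting sign patterns, while you run BFS over the cell-adjacency graph with LP feasibility oracles and recurse on flats for degenerate faces --- both are valid $n^{O(k)}$-time enumerations of the same combinatorial object.
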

\begin{proof}

	Recall, $\AA$ is rank $k$. Thus there is a subset $\V' \in \R^{k \times n}$ of $k$ rows of $\AA$ which span all the rows of $\AA$. Critically, here we require that the rank of $\AA$ is $k$ and thus the row space of $\AA$ is the same as that of $f(\V^*\X)$. Since $\AA = \U^* f(\V^*\X)$ and $\V',f(\V^*\X)$ have the same dimensional row space, the row spaces of $\V'$ and $f(\V^*\X)$ are precisely the same, and so there must be an invertible change of basis matrix $\W$ such that $\W\V' = f(\V^*\X)$. Now note that $\sg{\V^*\X} = \sg{f(\V^*\X)} = \sg{\W\V'}$, and thus it suffices to return a set of sign patterns $\mathcal{S}$ which contains $\sg{\W\V'}$. Therefore, consider any fixed sign pattern $S  \subset [n]$, and fix a row $j \in [k]$, and consider the following feasibility linear program in the variables $w_j$
	\[  (w_j\V')_i \geq 1, \; \; \; \; \text{ for all } i \in  \sg{S} \] 
	\[  (w_j\V')_i \leq 0 , \; \; \; \; \text{ for all } i  \notin  \sg{S} \] 
	Note that if the sign pattern $S$ is feasible by some $w_j\V'$, then the above LP will be feasible with a suitably large positive scaling to $w_j$. Now the LP has $k$ variables and $n$ constraints, and thus a solution is obtained by choosing the $w_j$ that makes a subset of $k$ linearly independent constraints tight. Observe in any such LP of the above form, there are at most $2n$ possible constraints that can ever occur. Thus if $S$ is realizable as the sign pattern of some $w_j\V'$, then it is obtained by the unique solution to a system which chooses to make $k$ of these constraints tight. Formally, if $S,b$ are the constraints for which $w_j S \geq b$ in the LP, then a solution is given by $w_j S' = b'$ where $S',b'$ are a subset of $k$ of the constraints.  Since there are at most $\binom{2n}{k} = O(n^k)$ such possible choices, it follows that there are at most $O(\min\{n^{O(k)},2^n\})$ realizable sign patterns, and these can be enumerated in  $O(\min\{n^{O(k)},2^n\})$ time by simply checking the sign pattern which results from the solution (if one exists) to $w_j S' = b'$ taken over all subsets $S',b'$ of constraints of size $k$.

\end{proof}

Given access to the set of candidate sign patterns, $S \in \mathcal{S}$, and vectors $y^1,y^2,...,y^{i-1} \in \R^n$, we can define the following iterative feasibility linear program, that at each iteration $i$ finds a vector $y^i$ which is equal to some vector in the row span of $\X$, and such that $f(y^1),f(y^2),\dots, f(y^i)$ are all linearly independent and in the row span of $\AA$. 

\begin{Frame}[\textbf{Algorithm \ref{alg:iter_lp} : Iterative LP}$\left( \X, S,y^1, y^2, \ldots y^{i-1}\right)$.]
\label{alg:iter_lp}
\ttx{Input}: Matrix $\X$, a sign pattern $S$,  vectors $y^1, y^2, \ldots y^{i-1}$ such that $f(y^1), f(y^2), \ldots f(y^{i-1})$ are linearly independent.
\begin{enumerate}
	   \item Let $y^i,z^i,w^i$ be variables in $\R^n$. 
	   \item Let $\Q \in \R^{(i-1) \times n}$ be a matrix such that for all $j \in [i-1]$, $\Q_{j,*} = f(y^j)$. Construct the projection matrix $\P^{i-1}$ onto  $\textrm{span}\left\{f(y^1),f(y^2),...,f(y^{i-1})\right\}$. Note, the projection matrix is given by $\P^{i-1} = \Q^T(\Q^T\Q)^{-1} \Q$. 
	   \item Define $f_S(y^i)$ w.r.t. the sign pattern $S$ such that 
	    \[ f_S(y^i_j) = \begin{cases} (y^i_j) & \text{ if } j \in S\\
	    0 & \text { otherwise }
	    \end{cases} \] 
\end{enumerate}
 \ttx{Output}: A feasible solution to the following LP:
    \begin{equation*}
        \begin{array}{ll@{}ll}
        \forall j \in [n] & \displaystyle    & y^{i}_j \geq 1,  & \textrm{ if } j \in S \\
        \forall j \in [n] & \displaystyle    & y^{i}_j \leq 0,  & \textrm{ if } j \notin S \\
        & &  y^i = w^i \X &\\
        & &  f_S(y^i) = z^i \V' &\\
        & &   f_S(y^i)(I - \P^{i-1})\neq 0 &\\
        \end{array}
    \end{equation*}
\end{Frame}

\begin{remark}\label{remark:linear}
Observe, while the last constraint is not a linear constraint, it can be made linear by running $2n$ consecutive LP's, such that, for $t \in [n]$, in the $2t$-th LP we replace the constraint $f_S(y^i)(\mathbb{I} - \P^{i-1})\neq 0$ above with 
$$\left[f_S(y^i)\left(\mathbb{I} - \P^{i-1}\right)\right]_t \geq 1$$
and in the $(2t-1)$-th LP we replace  constraint $f_S(y^i)\left(\mathbb{I} - \P^{i-1}\right)\neq 0$ with 
$$\left[f_S(y^i)\left(\mathbb{I}  - \P^{i-1}\right)\right]_t \leq -1$$
Note, the modified constraints are linear in the variables $y^i$.
If there is a vector $y^i$ which satisfies the above constraints such that $f_S(y^i)(\mathbb{I}  - \P^{i-1})\neq 0$, then by scaling $y^i,w^i,z^i$ all by a sufficiently large positive constant, then $y^i$ will also satisfy one of the $2n$ LPs described above, thus giving a solution to the original feasibility problem by returning the first feasible solution returned among the $2n$ new LPs. 
\end{remark}

Using Algorithm \ref{alg:iter_lp} as a sub-routine, we iterate over all sign patterns 
$S\in \mathcal{S}$, such that we recover a linearly independent set of $k$ vectors 
$f(y^1),f(y^2), \ldots f(y^k)$. Let $\Y$ be a matrix such that the $j$-th row corresponds
to $y^j$. We then set up and solve two linear systems in $\U$ and $\V$, given by 
$\AA = \U f(\Y)$  and $\Y = \V \X$. We show that the solutions to the linear system 
correspond to $\U^*$ and $\V^*$. Here, we note that since the optimal $\U^*$ and $\V^*$ 
are solutions to a linear system, we can recover them exactly.

\begin{Frame}[\textbf{Algorithm \ref{alg:overall_exact} : ExactNeuralNet}$(\AA,  \X, \mathcal{S})$.]
\label{alg:overall_exact}
\ttx{Input}: Matrices $\AA, \X$, a set of sign patterns $\mathcal{S}$.
\begin{enumerate}
    \item For $i=1,2,\dots,k$
	 \begin{enumerate}[topsep=0pt,itemsep=-1ex,partopsep=1ex,parsep=1ex] 
			\item $t=1$.
					\item While$(t \leq |\mathcal{S}|)$
					\begin{enumerate}[topsep=0pt,itemsep=-1ex,partopsep=1ex,parsep=1ex] 
					\item If \texttt{Iterative LP}$\left(\X, S_t,y^1,y^2,...,y^{i-1}\right)$ is feasible, let $y^i$ be the output, and set $t =  |\mathcal{S}| + 1$.
    					\item Else $t \leftarrow t+1$. 
				\end{enumerate}
				
	\end{enumerate}
    \item Let $\Y \in \R^{k \times n}$ be the matrix with $j$-th row equal to $y^j$ and let $S$ be the corresponding sign pattern. 
    \item Let $\U^*$ be the solution to the linear system in $\U$ given by $\AA = \U f_S(\Y)$. 
    \item Let $\V^*$ be the solution to the linear system in $\V$ given by $\Y = \V\X$. 
\end{enumerate}
\ttx{Output:}  $\U^* ,\V^*$. 
\end{Frame}
		
		\begin{lemma}\label{prop:following}
		 For any $i \in [k]$ vectors $y^1,y^2,...,y^{i-1} \in \R^n$ and $S \in \mathcal{S}$, let $y^i$ be a feasible solution to \texttt{Iterative LP}$\left(\X, S,y^1,y^2,...,y^{i-1}\right)$. Then all of the following hold:
		 \begin{enumerate}
		  \item $y^i$ is in the row span of $\X$.
		    \item $f(y^i)$ is in the row span of $\AA$.
		     \item $f(y^i)$ is independent of $f(y^1),f(y^2),...,f(y^{i-1})$.
		 \end{enumerate}
		\end{lemma}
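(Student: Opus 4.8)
The plan is to verify the three assertions one at a time, reading them directly off the constraints of \texttt{Iterative LP}$\left(\X, S, y^1,\dots,y^{i-1}\right)$. Claim 1 is immediate: the LP contains the constraint $y^i = w^i\X$ for a vector variable $w^i \in \R^n$, so any feasible $y^i$ lies in the row span of $\X$ by definition.

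The key preliminary observation, which drives the remaining two claims, is that on any feasible point the linearized map $f_S$ agrees with the true ReLU, i.e. $f_S(y^i) = f(y^i)$. This is where the sign constraints (as opposed to mere feasibility of the sign pattern) get used: for each $j \in S$ the LP forces $y^i_j \geq 1 > 0$, so $f(y^i_j) = y^i_j = f_S(y^i_j)$; and for each $j \notin S$ the LP forces $y^i_j \leq 0$, so $f(y^i_j) = 0 = f_S(y^i_j)$. Hence the two vectors coincide (and in particular $\sg{y^i} = \sg{S}$). I would state and prove this equality first as a one-line sub-claim, since both of the remaining parts invoke it.

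Given $f_S(y^i) = f(y^i)$, Claim 2 follows from the constraint $f_S(y^i) = z^i\V'$: since $\V'$ is, by construction, a basis for the row span of $\AA$ (using $\mathrm{rank}(\AA) = k$), we get $f(y^i) = z^i\V'$, which lies in the row span of $\AA$. For Claim 3, recall that $\P^{i-1}$ is the orthogonal projection onto $\mathrm{span}\{f(y^1),\dots,f(y^{i-1})\}$, so for any row vector $v$ we have $v(\mathbb{I} - \P^{i-1}) = 0$ if and only if $v$ lies in that span. The LP enforces $f_S(y^i)(\mathbb{I} - \P^{i-1}) \neq 0$, which, combined with $f_S(y^i) = f(y^i)$, says precisely that $f(y^i) \notin \mathrm{span}\{f(y^1),\dots,f(y^{i-1})\}$. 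Together with the input hypothesis that $f(y^1),\dots,f(y^{i-1})$ are already linearly independent, this yields that $f(y^1),\dots,f(y^i)$ are linearly independent.

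There is no deep obstacle here; the proof is essentially unwinding definitions. The only points needing a little care are (i) the equality $f_S(y^i) = f(y^i)$, which is the crux and must be argued from the sign constraints, and (ii) the reading of Remark~\ref{remark:linear}, so that a feasible solution to one of the $2n$ linearized programs genuinely certifies $f_S(y^i)(\mathbb{I} - \P^{i-1}) \neq 0$ and hence the non-degeneracy used in Claim 3.
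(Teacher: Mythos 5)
Your proof is correct and follows essentially the same route as the paper's: read Claim~1 off the constraint $y^i = w^i\X$, establish $f_S(y^i) = f(y^i)$ from the sign constraints, deduce Claim~2 from $f_S(y^i) = z^i\V'$ together with $\V'$ spanning the rows of $\AA$, and deduce Claim~3 from the constraint $f_S(y^i)(\mathbb{I} - \P^{i-1}) \neq 0$. Your writeup is slightly more explicit about the crux equality $f_S(y^i)=f(y^i)$ and about invoking the input hypothesis that $f(y^1),\dots,f(y^{i-1})$ are already independent, but the substance is the same.
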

	
	\begin{proof}
	    The first condition follows due to the third constraint $y^i = w^i \X$. The first and second constraint ensure that $f_S(y^i) = f(y^i),$ thus along with the fourth constraint and the fact that $\V'$ spans the rows of $\AA$, the second condition follows. For the last condition, it suffices to show that if $\|f(y^i)(\mathbb{I} - \P^{i-1})\| \geq 1$ then $f(y^i)$ is not in the span of $\{f(y^1),\dots,f(y^{i-1})\}$. Now if $f(y^i)(\mathbb{I} - \P^{i-1}) = z \neq 0$, then $f(y^i) = z + \Proj_{i-1}(f(y^i))$, where $\Proj_{i-1}(f(y^i))$ is the projection of $f(y^i)$ onto the subspace spanned by $\{f(y^1),\dots,f(y^{i-1})\}$. If $f(y^i)$ was in this subspace, then we would have $\Proj_{i-1}(f(y^i)) = y^i$, but this is impossible since $z \neq 0$, which completes the proof.
	\end{proof}

	\begin{lemma}
	 Suppose that there exist matrices $\U^* \in \R^{m \times k} ,\V^* \in \R^{k \times d}$ with $\AA = \U^* f(\V^*\X)$. Then in the above algorithm, for each $i \in [k]$  \texttt{Iterative LP}$\left(\X, S_t,y^1,y^2,...,y^{i-1}\right)$ will be feasible for at least one $S_t \in \mathcal{S}$. 
	\end{lemma}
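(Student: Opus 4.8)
The plan is to prove this by induction on $i$, combining Lemma~\ref{prop:following} (which tells us what the algorithm has produced in earlier rounds) with an explicit construction of a feasible point for the $i$-th Iterative LP. First I would record the structural fact that, since $\AA$ has rank $k$ and $\AA = \U^* f(\V^*\X)$, the matrix $f(\V^*\X) \in \R^{k \times n}$ has rank exactly $k$ — it has $k$ rows and rank at least $\mathrm{rank}(\AA) = k$ — so its rows $f(\V^*\X)_{1,*},\dots,f(\V^*\X)_{k,*}$ form a basis of a $k$-dimensional space that is precisely the row span of $\AA$ (equivalently, of the basis $\V'$).

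For the base case $i=1$ there are no previous vectors and $\P^0$ is the zero projection, so one simply takes $y^1$ to be a large positive multiple of any nonzero row $(\V^*\X)_{j,*}$ together with the sign pattern $S_t \in \mathcal{S}$ matching it (which exists by Lemma~\ref{prop:signs}), and checks the constraints as below. For the inductive step, suppose the algorithm has successfully produced $y^1,\dots,y^{i-1}$ — this is exactly what the lemma for smaller indices guarantees. By Lemma~\ref{prop:following}, each $f(y^j)$ lies in the row span of $\AA$ and $f(y^1),\dots,f(y^{i-1})$ are linearly independent, so they span an $(i-1)$-dimensional subspace $T$ inside the $k$-dimensional row span of $f(\V^*\X)$. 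Since $i-1 < k$, at least one basis row $f(\V^*\X)_{j,*}$ is not in $T$; in particular it is nonzero, so $(\V^*\X)_{j,*}$ has a positive entry and $\sg{(\V^*\X)_{j,*}}$ is a nonempty sign pattern, which by Lemma~\ref{prop:signs} equals $\sg{S_t}$ for some $S_t \in \mathcal{S}$.

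I would then verify that $y^i = c\,(\V^*\X)_{j,*}$ for a sufficiently large $c>0$ is a feasible solution of \texttt{Iterative LP}$(\X, S_t, y^1,\dots,y^{i-1})$: it lies in the row span of $\X$ (witnessed by $w^i = c\,\V^*_{j,*}$); it has sign pattern $S_t$, so $f_{S_t}(y^i) = f(y^i) = c\, f(\V^*\X)_{j,*}$, which lies in the row span of $\V'$, giving $f_{S_t}(y^i) = z^i \V'$ for some $z^i$; taking $c$ large makes every positive coordinate of $y^i$ at least $1$, satisfying the inequality constraints; and since $f(\V^*\X)_{j,*} \notin T$, its component orthogonal to $T$ is nonzero, so $f_{S_t}(y^i)(\mathbb{I} - \P^{i-1}) \neq 0$. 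The one point needing care is that this last condition is not a linear constraint: by Remark~\ref{remark:linear} it is replaced by $2n$ linear programs, and since $f_{S_t}(y^i)(\mathbb{I} - \P^{i-1})$ has some nonzero coordinate whose magnitude grows with $c$, for $c$ large enough one of those $2n$ programs is feasible, which is all we need.

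The argument is essentially a dimension count, so there is no deep obstacle; the two places that require genuine (if light) care are: (i) justifying that $f(\V^*\X)$ is full rank $k$ with row span equal to that of $\AA$, which is what guarantees an unspanned basis row exists at every round $i \le k$ and that its sign pattern actually appears in $\mathcal{S}$ (this is precisely where the hypothesis $\mathrm{rank}(\AA) = k$ is used); and (ii) converting the strict/non-degeneracy constraints into feasibility of one of the linearized LPs via a positive rescaling, invoking Remark~\ref{remark:linear}. Everything else is a routine constraint check.
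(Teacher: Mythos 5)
Your proof is correct and follows essentially the same route as the paper's: induction on $i$, a dimension count to find a row $f(\V^*\X)_{j,*}$ outside the span of $f(y^1),\dots,f(y^{i-1})$, the guarantee from Lemma~\ref{prop:signs} that its sign pattern appears in $\mathcal{S}$, and a sufficiently large positive rescaling to satisfy the linear(ized) constraints. The only difference is that you spell out the feasibility check and the invocation of Remark~\ref{remark:linear} more explicitly than the paper does.
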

	\begin{proof}
	    The proof is by induction. For $i=1$,  since $f(\V^*\X)$ has rank $k$ and spans the rows of $\AA$, it follows that there must be some $j \in [k]$ such that the $j$-th row $f(\V^*\X)_j$ of $f(\V\X)$ is in in the row span of $\V'$, and clearly $(\V^*\X)_j$ is in the row span of $\X$. The last constraint is of the LP non-existent since $i=1$. Furthermore, $(\V^*\X)_j$ has some sign pattern $S^*$, and it must be that $S^* \in \mathcal{S}$ by construction of $\mathcal{S}$. Then there exists a positive constant $c > 0$ such that $(c\V^*\X)_j$ satisfies the last constraints of \texttt{Iterative LP}$\left(\X, S^*,y^1,y^2,...,y^{i-1}\right)$ (made linear as described in Remark \ref{remark:linear}), and multiplying $(\V^*\X)_j$ by a positive constant does not affect the fact that $(c\V^*\X)_j$ is in the row space of $\X$ and $f(c\V^*\X)_j$ is in the row space of $\AA$ by closure of subspaces under scalar multiplication. Thus the \texttt{Iterative LP}$\left(\X, S^*,y^1,y^2,...,y^{i-1}\right)$ has a feasible point. 
	    
	    Now suppose we have feasible points $y^1,\dots,y^{i-1}$, with $i \leq k$. Note that this guarantees that $f(y^1),\dots,f(y^{i-1})$ are linearly independent. Since $f(\V^*\X)$ spans the $k$-dimensional row-space of $\AA$, there must be some $j$ with $f(\V^*\X)_j$ that is linearly independent of $f(y^1),\dots,f(y^{i-1})$ such that $f(\V^*\X)_j$ is in the row span of $\AA$. Then $(\V^*\X)_j$ is in the row span of $\X$, and similarly $(\V\X)_j$ has some sign pattern $S^*$, and after multiplication by a suitably large constant it follows that the \texttt{Iterative LP}$\left(\X, S^*,y^1,y^2,...,y^{i-1}\right)$ will be feasible. The proposition follows by induction. 
	\end{proof}
	
\noindent \textbf{Proof of Theorem \ref{thm:exact_lp}.}	
By Proposition \ref{prop:following}, $f(y^1),\dots,f(y^{k})$ are independent, and give a solution to $f(\V\X) \req \AA$. Thus we can find a $\U  \in \R^{d \times k}$ in polynomial time via $d$ independent linear regression problems that solves $\U f(\V\X) = \AA$. By Proposition \ref{prop:signs}, there are at most $\min\{n^{O(k)},2^n\}$ sign patterns in the set $\mathcal{S}$, and solving for each iteration of \ttx{Iterative LP} takes $\poly(nm)$-time. Thus the total time is $\poly(nmd)\min\{n^{O(k)},2^n\} $ as stated.

\subsection{Rank$(\AA) = k$ for random matrices $\X$.}
	We conclude this section with the observation that if the input $\X$ is drawn from a large class of independent distributions, then the resulting matrix $\U^* f(\V^*\X)$ will in fact be rank $k$ with high probability if $\U^* $ and $\V^*$ are rank $k$. Therefore, Algorithm \ref{alg:overall_exact} recovers $\U^*, \V^*$ in $\poly(nmd)\min\{n^{O(k)},2^n\} $ for all such input matrices $\X$.

\begin{lemma}\label{prop:random}
	Suppose $\AA = \U^* f(\V^*\X)$ for rank $k$ matrices $\U^*  \in \R^{m \times k}$ and $\V^* \in \R^{k \times d}$, where $\X \in \R^{d \times n}$ is a matrix of random variables such that each column $\X_{*,i}$ is drawn i.i.d. from a distribution $\mathcal{D}$ with continuous p.d.f. $p(x): \R^d \to \R$ such that $p(x) >0$ almost everywhere in $\R^d$, and such that 
	\[  \inf_{v \in \R^d} \text{Pr}_{x \sim \mathcal{D}}\big[ \langle v , x\rangle > 0 \big] > 10 k \log(k/\delta)/n \]
	Then rank$(A) = k$ with probability $1 - O(\delta)$.
\end{lemma}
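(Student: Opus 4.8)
The plan is to reduce the claim to a statement about the $k\times n$ matrix $f(\V^*\X)$ and then control its rank by a column-exposure argument. Since $\U^*\in\R^{m\times k}$ has full column rank, left-multiplication by $\U^*$ is injective on $\R^k$, so $\mathrm{rank}(\AA)=\mathrm{rank}(\U^* f(\V^*\X))=\mathrm{rank}(f(\V^*\X))$; thus it suffices to show that the columns $f(\V^* x^1),\dots,f(\V^* x^n)\in\R^k$ span $\R^k$ with probability $1-O(\delta)$. Write $v_1,\dots,v_k$ for the rows of $\V^*$; these are linearly independent, and in particular nonzero, because $\V^*$ has rank $k$.

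The heart of the argument is the following observation: for \emph{every} proper subspace $W\subsetneq\R^k$, a fresh sample escapes $W$ with good probability, namely $\Pr_{x\sim\mathcal{D}}[\,f(\V^* x)\notin W\,]>p_0$ where $p_0:=10k\log(k/\delta)/n$. To prove this, fix any nonzero $c\in\R^k$ orthogonal to $W$ (possible since $\dim W<k$) and an index $i^*$ with $c_{i^*}\neq 0$. On the event $\langle v_{i^*},x\rangle>0$ --- which, taking $v=v_{i^*}$ in the hypothesis, occurs with probability $>p_0$ --- the support $S=\{i:\langle v_i,x\rangle>0\}$ contains $i^*$, and $\langle c,f(\V^* x)\rangle=\sum_{i\in S}c_i\langle v_i,x\rangle=\big\langle \sum_{i\in S}c_iv_i,\,x\big\rangle$. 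Because the $v_i$ are linearly independent and $c_{i^*}\neq0$, the vector $\sum_{i\in S}c_iv_i$ is nonzero, so (partitioning the event by the finitely many possible supports $S\ni i^*$) this inner product vanishes only on a finite union of hyperplanes, a Lebesgue-null set, which has probability $0$ since $\mathcal{D}$ has a continuous density. Hence $f(\V^* x)\notin W$ with probability $>p_0$. The key point is that escaping an arbitrary proper subspace is forced by a \emph{single} halfspace carrying the right sign, so only the one-dimensional marginal sign bound in the hypothesis is needed --- we never take a conjunction of halfspace events.

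Finally, expose $x^1,\dots,x^n$ in order, let $W_j=\mathrm{span}\{f(\V^* x^1),\dots,f(\V^* x^j)\}$ with $W_0=\{0\}$, and set $Y_j=\mathbf{1}[\dim W_j>\dim W_{j-1}]$, so $\mathrm{rank}(f(\V^*\X))=\sum_{j=1}^n Y_j$ and we want this to reach $k$. Conditioned on $x^1,\dots,x^{j-1}$, if $\dim W_{j-1}<k$ then the observation above gives $\Pr[Y_j=1\mid x^1,\dots,x^{j-1}]\ge p_0$. A standard coupling --- replace $Y_j$ by an independent $\mathrm{Bernoulli}(p_0)$ draw once $k$ successes have already occurred --- then yields $\Pr\big[\sum_jY_j<k\big]\le\Pr[\mathrm{Bin}(n,p_0)<k]$, and since $np_0=10k\log(k/\delta)$ exceeds $k$ by a factor $10\log(k/\delta)$, a Chernoff bound for the lower tail gives $\Pr[\mathrm{Bin}(n,p_0)<k]\le e^{-\Omega(k\log(k/\delta))}\le\delta$. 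Therefore $\mathrm{rank}(\AA)=k$ with probability $1-O(\delta)$. The step I expect to require the most care is the middle one: verifying that the image $f(\V^* x)$ escapes any proper subspace with probability at least a single-halfspace probability, and correctly discarding the measure-zero boundary sets via continuity of the density; the reduction through $\U^*$ and the binomial tail bound are routine.
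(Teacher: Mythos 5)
Your proof is correct, and it takes a genuinely different route from the paper's. The paper's argument is ``static'': it first shows, via a per-row Chernoff bound and a union bound, that every row of $f(\V^*\X)$ has at least $k$ strictly positive entries; from this it extracts a $k\times k$ submatrix whose columns can be permuted to give a strictly positive diagonal (a system-of-distinct-representatives step, implicitly Hall's theorem, which the paper asserts without comment); it then conditions on the full sign pattern of that submatrix and invokes the fact that a non-zero polynomial --- here $\det$, which has the strictly-positive diagonal monomial --- cannot vanish on any non-empty open set, so the conditioned density (which remains positive on the relevant orthant) puts zero mass on the singular locus. Your argument is instead ``dynamic'': you expose columns one at a time, prove that a fresh column $f(\V^*x)$ escapes \emph{any} fixed proper subspace $W\subsetneq\R^k$ with probability at least $p_0$ by noticing the escape is forced by a single correctly-signed half-space together with a.s.\ avoidance of one hyperplane, and then couple the number of rank increments to a $\mathrm{Binomial}(n,p_0)$ lower tail. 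Both proofs lean on the same two hypotheses (the one-dimensional marginal sign probability and absolute continuity of $\mathcal{D}$), but your route sidesteps the combinatorial diagonal-extraction step the paper glosses over, and replaces the determinant-polynomial argument with the more elementary fact that $\langle c,f(\V^*x)\rangle$ is, conditionally on the support, a non-degenerate linear functional of $x$. The paper's approach has the advantage of exhibiting an explicit $k\times k$ full-rank witness; yours is arguably cleaner and more modular, since the subspace-escape lemma and the coupling-to-binomial step are independent of each other.
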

\begin{proof}
	By Sylverster's rank inequality, it suffices to show $f(\V^*\X)$ is rank $k$. By symmetry and i.i.d. of the $\X_{ij}$'s in a fixed row $i$, each entry $f(\V^*\X)_{ij}$ is non-zero with probability at least $10k\log(k/\delta)/n$ independently (within the row $i$). Then by Chernoff bounds, a fixed row $(\V^*\X)_{i,*}$ will have at least $k$ positive entries with probability at least $1- 2^{-k \log(k/\delta)}$, and we can then union bound over all $k$ rows to hold with probability at least $1-O(\delta)$. Thus one can pick a $k \times k$ submatrix $\W$ of $f(\V^*\X)$ such that, under some permutation $\W'$ of the columns of $\W$, the diagonal of $\W'$ is non-zero.
	
	Since $\V^*$ is rank $k$, $\V^*$ is a surjective linear mapping of the columns of $\X$ from $\R^d$ to $\R^k$. Since $p(x) > 0$ almost everywhere, it follows that $p_{\V^*}(x) > 0$ almost everywhere, where $p_{\V^*}(x)$ is the continuous pdf of a column of $\V^*\X$. Then if $\X'$ is any matrix of $k$ columns of $\X$, by independence of the columns, if $p_{k \times k}:\R^{k^2} \to \R$ is the joint pdf of all $k^2$ variables in $\V^*\X'$, it follows that $p_{k \times k}(x) > 0$ for all $x \in \R^{k^2}$. Thus, by conditioning on any sign pattern $S$ of $\V^*\X'$, this results in a new pdf $p^S_{k \times k}$, which is simply $p_{k \times k}$ where the domain is restricted to an orthant $\Omega$ of $R^{k^2}$. Since $p_{k \times k}$ is continuous and non-zero almost everywhere, it follows that the support of the pdf $p^S_{k \times k}: \Omega \to \R$ is all of $\Omega$. In particular, the Lesbegue measure of the support $\Omega$ inside of $\R^{k^2}$ is non-zero (note that this would not be true if $\V^*$ has rank $k' < k$, as the support on each column would then be confined to a subspace of $\R^k$, which would have Lesbegue measure zero in $\R^k$). 
	
Now after conditioning on a sign pattern, $\det(\W')$ is a non-zero polynomial in $s$ random variables, for $k \leq s \leq k^2$, and it is well known that such a function cannot vanish on any non-empty open set in $\R^s$ (see e.g. Theorem 2.6 of \cite{ConradNotes}, and note the subsequent remark on replacing $\C^s$ with $\R^s)$.  It follows that the set of zeros of $\det(\W')$ contain no open set of $\R^s$, and thus has Lesbegue measure $0$ in $\R^s$. By the remarks in the prior paragraph, we know that the Lesbegue measure (taken over $\R^s$) of the support of the joint distribution on the $s$ variables is non-zero (after restricting to the orthant given by the sign pattern). In particular, the set of zeros of $\det(\W')$ has Lesbegue measure $0$ inside of the support of the joint pdf of the non-zero variables in $\W'$. We conclude that the joint density of the variables of $\W'$, after conditioning on a sign pattern, integrated over the set of zeros of $\det(\W')$ will be zero, meaning that $\W'$ will have full rank almost surely, conditioned on the sign pattern event in the first paragraph when held with probability $1-O(\delta)$.
	\end{proof}

\begin{remark}
Note that nearly all non-degenerate distributions $\mathcal{D}$ on $d$-dimensional vectors will satisfy $\inf_{v \in \R^d} \text{Pr}_{x \sim \mathcal{D}}\big[ \langle v , x\rangle > 0 \big] = c= \Omega(1)$. For instance any multi-variate Gaussian distribution with non-degenreate (full-rank) covariance matrix $\mathbf{\Sigma}$ will satisfy this bound with $c = 1/2$, and this will also hold for any symmetric i.i.d. distribution over the entries of $x \sim \mathcal{D}$. Thus it will suffice to take $n = \Omega( k \log(k/\delta))$ for the result to hold. 

\end{remark}

\begin{corollary}\label{cor:random_input}
Let $\AA = \U^* f(\V^*\X)$ for rank $k$ matrices $\U^*  \in \R^{m \times k}$ and $\V^* \in \R^{k \times d}$, where $\X \in \R^{d \times n}$ is a matrix of random variables such that each column $\X_{*,i}$ is drawn i.i.d. from a distribution $\mathcal{D}$ with continuous p.d.f. $p(x): \R^d \to \R$ such that $p(x) >0$ almost everywhere in $\R^d$, and such that 
\[  \inf_{v \in \R^d} \text{Pr}_{x \sim \mathcal{D}}\big[ \langle v , x\rangle > 0 \big] = \Omega(k \log(1/\delta)/n) \]
Then, there exists an algorithm such that, with probability $1 - O(\delta)$, recovers $\U^*$, $\V^*$ exactly and runs in time $\poly(n,m,d,k)\min\{n^{O(k)},2^n\}$.
\end{corollary}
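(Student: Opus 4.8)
The plan is to derive this statement as an essentially immediate consequence of Lemma~\ref{prop:random} and Theorem~\ref{thm:exact_lp}. First I would check that the hypotheses placed on $\mathcal{D}$ in the corollary are, up to the constant hidden in $\Omega(\cdot)$ and the difference between $\log(1/\delta)$ and $\log(k/\delta)$, exactly those required by Lemma~\ref{prop:random}: the p.d.f.\ $p(x)$ is continuous and strictly positive almost everywhere on $\R^d$, and the anti-concentration bound $\inf_{v} \text{Pr}_{x\sim\mathcal D}[\langle v,x\rangle>0] = \Omega(k\log(1/\delta)/n)$ holds. Since $\log(k/\delta) = \log k + \log(1/\delta)$, by taking the constant in the $\Omega(\cdot)$ sufficiently large (equivalently, by invoking Lemma~\ref{prop:random} with failure parameter $\delta/k$ in place of $\delta$, which only changes the $O(\delta)$ bound by a constant factor after rescaling), the quantity $\Omega(k\log(1/\delta)/n)$ dominates $10k\log(k/\delta)/n$. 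Hence Lemma~\ref{prop:random} applies and yields that $\textrm{rank}(\AA)=k$ with probability $1-O(\delta)$.

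Second, I would condition on the event $\{\textrm{rank}(\AA)=k\}$, which the previous step guarantees holds with probability $1-O(\delta)$. On this event the hypothesis of Theorem~\ref{thm:exact_lp} is met, so running Algorithm~\ref{alg:overall_exact} (using the feasible set of sign patterns $\mathcal{S}$ produced by Lemma~\ref{prop:signs}) recovers matrices $\U^*\in\R^{m\times k},\V^*\in\R^{k\times d}$ with $\AA=\U^*f(\V^*\X)$ in time $\poly(n,m,d)\min\{n^{O(k)},2^n\}$. It is worth noting that the algorithm itself is oblivious to $\mathcal D$ --- the distributional assumption enters only in the analysis, to certify the rank condition --- and that it need not test the rank of $\AA$ in advance; it simply runs, and is correct whenever $\textrm{rank}(\AA)=k$. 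Combining the running time of Theorem~\ref{thm:exact_lp} with the polynomial overhead of forming $\mathcal{S}$ gives the claimed $\poly(n,m,d,k)\min\{n^{O(k)},2^n\}$ bound.

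The substance of this result lives entirely in the two statements being invoked, so there is no genuinely new obstacle; the only care needed is bookkeeping in the first paragraph --- matching the anti-concentration parameter of the corollary to the hypothesis of Lemma~\ref{prop:random}, and tracking the single $O(\delta)$ failure probability (the rank event from Lemma~\ref{prop:random}, together with the deterministic correctness of Theorem~\ref{thm:exact_lp} conditioned on that event). One might also remark, following the Remark after Lemma~\ref{prop:random}, that for any fixed non-degenerate $\mathcal D$ (for instance a Gaussian with full-rank covariance, or any symmetric product distribution) one has $\inf_v \text{Pr}[\langle v,x\rangle>0]=\Omega(1)$, so the hypothesis is satisfied already for $n=\Omega(k\log(1/\delta))$, which is the regime in which the corollary is typically applied.
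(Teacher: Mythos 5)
Your proposal is correct and is exactly the argument the paper intends: the corollary is stated in the paper without a separate proof precisely because it follows by combining Lemma~\ref{prop:random} (rank$(\AA)=k$ with probability $1-O(\delta)$) with Theorem~\ref{thm:exact_lp} (exact recovery whenever rank$(\AA)=k$), and your first paragraph correctly handles the only bookkeeping point, the slight mismatch between the corollary's $\log(1/\delta)$ and the lemma's $\log(k/\delta)$.
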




\section{NP-Hardness}\label{section:hard}
The goal of this section is to prove that the problem of deciding whether there exists $\V \in \R^{k \times d}$ that solves the equation $\alpha f(\V\X) = w$ for fixed input $\alpha \in \R^{m \times k},X \in \R^{d \times n},A \in \R^{m \times n}$, is NP-hard.
We will first prove the NP-hardness of a geometric separability problem, which will then be used to prove NP-hardness for the problem of deciding the feasibility of $\alpha f(\V\X) = w$.
Our hardness reduction is from a variant of Boolean SAT, used in \cite{megiddo1988complexity} to prove NP-hardness of a similar geometric seperability problem, 
called \textit{reversible 6-SAT}, which we will now define. For a Boolean formula $\psi$ on variables $\{u_1,\dots,u_n,\overline{u_1},\dots,u_{n}\}$ (where $\overline{u_i}$ is the negation of $u_i$), let $\overline{\psi}$ be the formula where every variable $u_i$ and $\overline{u_i}$ appearing in $\psi$ is replaced  with $\overline{u_i}$ and $u_i$ respectively. For instance, if $\psi = (u_1 \vee u_2 \vee \overline{u_3} ) \wedge (\overline{u_2} \vee u_3)$ then $\overline{\psi} = (\overline{u_1} \vee \overline{u}_2 \vee u_3) \wedge (u_2 \vee \overline{u_3})$. 
	\begin{definition}
	A Boolean formula $\psi$ is said to be reversible if $\psi$ and $\overline{\psi}$ are both either satisfiable or not satisfiable. 
	\end{definition}
	The reverse $6$-SAT problem is then to, given a reversible Boolean formula $\psi$ where each conjunct has exactly six literals per clause, determine whether or not $\psi$ is satisfiable.  Observe, if $\xi$ is a satisfying assignment to the variables of a reversible formula $\psi$, then $\overline{\xi}$, obtained by negating each assignment of $\xi$, is a satisfying assignment to $\overline{\psi}$. The following can be found in \cite{megiddo1988complexity}.
	
	\begin{proposition}[NP-Hardness of Reversible 6-SAT]\cite{megiddo1988complexity}]\label{prop:hard}
	     Given a reversible formula $\psi$ in conjunctive normal form where each clause has exactly six literals, it is NP-hard to decide whether $\psi$ is satisfiable. 
	     \end{proposition}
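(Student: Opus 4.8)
Since Proposition~\ref{prop:hard} is quoted verbatim from \cite{megiddo1988complexity}, I would not reprove it from first principles, but the natural route—and the one I would sketch if forced to—is a polynomial-time reduction from ordinary $3$-SAT that outputs a formula which is \emph{syntactically self-dual} (invariant under flipping every literal), so that reversibility comes essentially for free. The plan: given a $3$-SAT instance $\phi$ with clauses $C_1,\dots,C_m$ over variables $x_1,\dots,x_n$, introduce a single fresh ``control'' variable $y$, and for each clause $C_i=(\ell_{i1}\vee\ell_{i2}\vee\ell_{i3})$ form the pair of $6$-literal clauses
\[ D_i^{+}=\big(\ell_{i1}\vee\ell_{i2}\vee\ell_{i3}\vee \overline{y}\vee\overline{y}\vee\overline{y}\big),\qquad D_i^{-}=\big(\overline{\ell_{i1}}\vee\overline{\ell_{i2}}\vee\overline{\ell_{i3}}\vee y\vee y\vee y\big), \]
where $\overline{\ell}$ denotes the polarity-flipped literal, and set $\psi=\bigwedge_{i=1}^{m}\big(D_i^{+}\wedge D_i^{-}\big)$. (If the problem's definition forbids repeated literals within a clause, the three copies of $\overline{y}$ are replaced by $\overline{y}\vee\overline{z}_{i,1}\vee\overline{z}_{i,2}$ together with small gadget clauses forcing $z_{i,1}=z_{i,2}=\overline{y}$; this is precisely the slack that width $6$ buys, and is why the variant is phrased for six literals rather than four.)

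The first key observation is that $\overline{\psi}$, obtained by negating every literal of $\psi$, maps $D_i^{+}$ to $D_i^{-}$ and vice versa, hence $\overline{\psi}=\psi$; so $\psi$ is trivially reversible, and it remains only to check that $\psi$ is satisfiable iff $\phi$ is. For the easy direction, if $\alpha$ satisfies $\phi$ then the assignment $(x,y)=(\alpha,\text{true})$ satisfies every $D_i^{-}$ through the literal $y$, and every $D_i^{+}$ because $\overline{y}$ is false and hence the $C_i$-part must be, and is, satisfied by $\alpha$. Conversely, let $\beta$ satisfy $\psi$ and split on $\beta(y)$: if $\beta(y)=\text{true}$, then every $D_i^{+}$ loses its $\overline{y}$ literals, forcing $\beta|_{x}$ to satisfy each $C_i$, i.e.\ $\phi$; if $\beta(y)=\text{false}$, then every $D_i^{-}$ loses its $y$ literals, forcing $\beta|_{x}$ to satisfy each $\overline{C_i}$, i.e.\ the formula $\overline{\phi}$. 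Finally $\overline{\phi}$ is satisfiable iff $\phi$ is, via the bijection $\gamma\mapsto\overline{\gamma}$ on truth assignments, so in either case $\phi$ is satisfiable. Since reversible $6$-SAT is clearly in $\mathrm{NP}$, this yields $\mathrm{NP}$-completeness, and in particular $\mathrm{NP}$-hardness.

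The only delicate point, and the step I would expect to cost the most care, is the clause-width bookkeeping: one must pad every clause (including any gadget clauses) to exactly six literals while \emph{simultaneously} preserving the literal-flip symmetry—so that reversibility stays automatic—and preventing the padding variables from trivially satisfying the clauses they occur in. Everything else (polynomial running time of the reduction, the $\phi \leftrightarrow \overline{\phi}$ satisfiability correspondence, membership in $\mathrm{NP}$) is routine.
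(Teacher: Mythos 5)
The paper does not prove this proposition; it simply cites \cite{megiddo1988complexity}, so there is no internal argument to compare against, only the cited source. Your sketch is, at the top level, a plausible self-contained reduction: pairing each $3$-clause $C_i$ with its literal-flipped twin $\overline{C_i}$, guarding the two copies with opposite polarities of a single fresh control variable $y$, and padding to width six makes $\psi$ syntactically self-dual (so reversibility is automatic), and your case split on $\beta(y)$ correctly shows that $\psi$ is satisfiable iff $\phi$ or $\overline{\phi}$ is, hence iff $\phi$ is. If repeated literals are permitted within a clause, this is a complete and correct argument.

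In the no-repeats fallback, however, there is a sign error that breaks the construction: you propose forcing $z_{i,1}=z_{i,2}=\overline{y}$, which gives $\overline{z}_{i,j}\equiv y$, so the padded fragment $\overline{y}\vee\overline{z}_{i,1}\vee\overline{z}_{i,2}$ becomes $\overline{y}\vee y\vee y$, a tautology, and $D_i^{+}$ then no longer enforces $C_i$. You need $z_{i,j}=y$, so that $\overline{z}_{i,j}\equiv\overline{y}$. The remaining bookkeeping you flag as delicate is in fact the substantive part of the no-repeats case: the equivalence clauses $(z_{i,j}\vee\overline{y})$ and $(\overline{z}_{i,j}\vee y)$ have width two and must themselves be blown up to width six while keeping the clause set closed under global literal negation and avoiding accidental tautologies. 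One workable device is to pad each of these two clauses with the \emph{same} four fresh variables $p_1,\dots,p_4$ in all $2^{4}$ sign patterns; the resulting $32$ clauses are permuted among themselves by global literal negation (preserving self-duality), and for any fixed assignment to the $p$'s exactly one padded copy forces the original width-two clause (preserving semantics). Without spelling out some such device, the fallback branch is a genuine gap rather than routine bookkeeping.
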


We now introduce the following \textit{ReLU-seperability} problem, and demonstrate NP-hardness via a reduction from reversible $6$-SAT.

\begin{definition}[ReLU-separability.]\label{def:relusep}
     Given two sets $P=\{p_1,...,p_r\}, Q = \{q_1,\dots,q_s\}$ of vectors in $R^d$, the \textit{ReLU-seperability} is to find vectors $x,y \in \mathbb{R}^d$ such that
     \begin{itemize}
    \item For all $p_i \in P$, both $p_i^T x \leq 0$ and $p_i^T y \leq 0$.
    \item For all $q_i \in Q$, we have $f(q_i^Tx) + f(q_i^T y) = 1$ where $f(\cdot) = \max (\cdot,0)$ is the ReLU function.
\end{itemize}
We say that an instance of ReLU-seperability is satisfiable if there exists such an $x,y \in \R^d$ that satisfy the above conditions. 
\end{definition}
\begin{proposition}\label{prop:reluhard}
 It is NP-Hard to decide whether an instance of ReLU-seperability is satisfiable. 
\end{proposition}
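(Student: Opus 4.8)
The plan is to reduce from reversible $6$-SAT (Proposition \ref{prop:hard}). Given a reversible formula $\psi$ on variables $u_1,\dots,u_n$ with clauses $C_1,\dots,C_m$, each containing exactly six literals, I will construct two sets $P,Q \subseteq \R^d$ for a suitable dimension $d$ (I expect $d = n+1$ or $d = n+O(1)$) such that the ReLU-separability instance $(P,Q)$ is satisfiable if and only if $\psi$ is satisfiable. The intuition is that the sought vector $x \in \R^d$ should encode a truth assignment: identifying the first $n$ coordinates of $x$ with the variables, we want $x_i > 0$ to mean ``$u_i$ true'' and $x_i < 0$ to mean ``$u_i$ false'' (with a final homogenizing coordinate fixed to a constant, handled via a point in $P$ or a pair of points forcing $x_{n+1}$ to a narrow range). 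The second vector $y$ will play the role of the ``reverse'' assignment, so that the reversibility hypothesis on $\psi$ is exactly what makes $y$ simultaneously realizable; concretely I expect the construction to force $y \approx -x$ on the variable coordinates, which is why the constraint $f(q_i^T x) + f(q_i^T y) = 1$ naturally captures ``exactly one of a literal and its negation is satisfied.''

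The key steps, in order: (1) Set up the coordinate encoding and, using a handful of points in $P$ (the ``$\leq 0$'' constraints for both $x$ and $y$), pin down the homogenizing coordinate and force the relationship between $x$ and $y$ (roughly $y = -x$ on variable coordinates, up to the affine shift). (2) For each clause $C_j$, introduce one or more points $q$ into $Q$: encode $q$ so that $q^T x$ is a positive quantity measuring ``how satisfied'' the clause is under the assignment given by $\mathrm{sign}(x)$, arranged so that $q^T x > 0$ for every literal pattern that satisfies $C_j$ and $q^Tx \le 0$ for the unique falsifying pattern; simultaneously $q^T y$ should behave analogously for $\overline{C_j}$ under the reversed assignment. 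Then $f(q^Tx) + f(q^T y) = 1$ can be made to hold exactly when the clause is satisfied and to be unsatisfiable otherwise — this likely requires also adding copies of $q$ and $-q$ (or $q$ in $P$) to clamp the value of $q^T x$ to a precise constant rather than merely a sign, and scaling so the two $f(\cdot)$ terms sum to $1$. (3) Prove both directions of the equivalence: a satisfying assignment $\xi$ of $\psi$ yields $x$ (from $\xi$) and $y$ (from $\overline{\xi}$, which satisfies $\overline{\psi}$ by reversibility) meeting all constraints; conversely, any feasible $(x,y)$ has $\mathrm{sign}(x)$ well-defined on variable coordinates (the clamping points rule out zero coordinates) and must satisfy every clause, hence gives a satisfying assignment.

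The main obstacle I anticipate is engineering the clause gadget so that the single nonlinear equality $f(q^Tx) + f(q^Ty) = 1$ simultaneously (a) is satisfiable for \emph{every} satisfying literal pattern of the 6-literal clause and (b) is \emph{infeasible} for the unique falsifying pattern, while (c) remaining consistent across the $y$-side via reversibility. A clean equality constraint of the form $f(a) + f(b) = 1$ is quite rigid: it forces $a + b$-type relations, so I will need the $P$-constraints and auxiliary points to first force $q^T x$ and $q^T y$ to take values in a controlled set (e.g. $q^Tx \in \{ \text{const}\}$ on feasible patterns and $q^Tx \le 0$ on the bad pattern, with $q^T y = 1 - q^Tx$ forced symmetrically). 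Getting the arithmetic to line up — in particular choosing the literal weights and the homogenizing offset so that the six-literal ``majority/threshold'' behaves as a clean indicator, and so that the degenerate patterns (coordinates of $x$ equal to $0$, or $q^Tx$ strictly between $0$ and the target) are excluded — is the delicate part; here I expect to lean on the specific structure used in \cite{megiddo1988complexity} for the analogous separability reduction, adapting their point sets to the ReLU equality rather than a pure separation.
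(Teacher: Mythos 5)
Your choice of starting point (reduction from reversible $6$-SAT) and the sign encoding of assignments are correct and match the paper. However, you have the roles of $P$ and $Q$ essentially reversed, and this is not a detail that can be engineered around — it is the crux of the construction. You propose to use $P$-constraints to pin down the homogenizing coordinate and to force $y \approx -x$, and to use $Q$-constraints to encode clauses. Neither half of this can work. Each $p \in P$ imposes \emph{decoupled} constraints $p^T x \le 0$ and $p^T y \le 0$, so $P$-constraints can never relate $x$ to $y$; and they are homogeneous, so they can never pin a coordinate of $x$ to a nonzero constant. By contrast, the $Q$-constraint $f(q^T x) + f(q^T y) = 1$ is the only inhomogeneous constraint and the only one that couples $x$ and $y$ — so the pinning must live in $Q$. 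Concretely, the paper puts $e_i, -e_i \in Q$ for each $i \in [n]$, which forces $\{x_i, y_i\} = \{1, -1\}$, and adds a few more $Q$-points to force two auxiliary offset coordinates to sum to a fixed constant.

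The deeper obstruction to your step (2) is the rigidity you yourself flagged: a clause is a one-sided threshold (it is satisfied by $63$ of the $64$ literal patterns and falsified by exactly one), so on the satisfying patterns the quantity $q^T x$ ranges over several values. No single tight equality $f(q^T x) + f(q^T y) = 1$ can simultaneously hold on all of those values and fail on the falsifying one; no amount of clamping fixes a many-to-one inequality being forced into an equality. The resolution is that clauses should go in $P$: once the coordinates are pinned to $\pm 1$ and the offset coordinates are fixed, the clause $\ell_1 \vee \dots \vee \ell_6$ becomes the affine inequality $\sum_t \sigma_t x_{j_t} \ge -5$, which after homogenizing via the offset coordinates is exactly a ``$p^T x \le 0$'' constraint. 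The same $p$ also imposes $p^T y \le 0$, which encodes the \emph{reversed} clause on the reversed assignment; that is precisely where the reversibility of $\psi$ is needed, not in establishing $y = -x$. Your plan identifies the right ingredients but combines them in a way that cannot be made to close.
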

\begin{proof}
    
 Let $u_1,\dots,u_n$ be the variables of the reversible 6-SAT instance $\psi$, and set $d = n+2$, and let $x,y$ be the solutions to the instance of ReLU separability which we will now describe. The vector $x$ will be such that $x_i$ represents the truth value of $u_i$, and $y_i$ represents the truth value of $\overline{x_i} = \overline{u_i}$. For $j \in [n+2]$, let $e_j \in \R^{n+2}$ be the standard basis vector with a $1$ in the $j$-th coordinate and $0$ elsewhere. For each $i \in [n]$, we insert $e_i$ and $-e_i$ into $Q$. This ensures that $f(x_i) + f(y_i) = 1$ and $f(-x_i) + f(-y_i) = 1$. This occurs iff either $x_i = 1$ and $y_i = -1$ or $x_i = -1$ and $y_i = 1$, so $y_i$ is the negation of $x_i$. In other words, the case $x_i = 1$ and $y_i = -1$ means $u_i$ is true and $\overline{u_i}$ is false, and the case $x_i = -1$ and $y_i = 1$ means $u_i$ is false and $\overline{u_i}$ is true. Now suppose we have a clause of the form $u_1 \vee \overline{u_2} \vee u_3 \vee u_4 \vee \overline{u_5} \vee u_6$ in $\psi$. Then this clause can be represented equivalently by the inequality $x_1 - x_2 + x_3 + x_4 - x_5 + x_6 \geq -5$. 

To represent this affine constraint, we add additional constraints that force $x_{n+1} + x_{n+2} = 1/2$ and $y_{n+1} + y_{n+2} = 1/2$ (note that the $n+1$, and $n+2$ coordinates do not correspond to any of the $n$ variables $u_i$). We force this as follows. Add $e_{n+1}$ and $e_{n+2}$ to $Q$, and add $-2e_{n+1}$, $-2e_{n+2}$  to $Q$.  This forces $f(x_i) + f(y_i) = 1$ and $f(-2x_i) + f(-2y_i) = 1$ for each $i \in \{n+1,n+2\}$. For each $i \in \{n+1,n+2\}$ there are only two solutions, either $x_i = 1$ and $y_i = -1/2$ or $x_i = -1/2$ and $y_i=1$. Finally, we add the vector $e_{n+1} + e_{n+2}$ to $Q$, which forces $f(x_{n+1} + x_{n+2}) + f(y_{n+1} + y_{n+2}) = 1$. Now if $x_{n+1} = 1$, then $x_{n+2}$ must be $-1/2$ since otherwise there is no solution to $2 + f(\cdot) = 1$, and we know $x_{n+2} \in \{1,-1/2\}$. This forces $y_{n+2} = 1$, which forces $x_{n+1} + x_{n+2} = 1/2 = y_{n+1} + y_{n+2}$, and a symmetric argument goes through when one assumes $y_{n+1} = 1$. This lets us write affine inequalities as follows. For the clause $u_1 \vee \overline{u_2} \vee u_3 \vee u_4 \vee \overline{u_5} \vee u_6$, we can write the corresponding equation  $x_1 - x_2 + x_3 + x_4 - x_5 + x_6 \geq -5$ precisely as a point constraint, which for us is $(-1,1,-1,-1,1,-1,0,0,\dots,0,-10,-10) \in P$ (the two $-10$'s are in coordinate positions $n+1$ and $n+2$). Now this also forces the constraint $y_1 - y_2 + y_3 + y_4 - y_5 + y_6 \geq -5$, but since the formula is reversible so we can assume WLOG that $\overline{u_1 }\vee u_2 \vee \overline{u_3} \vee \overline{u_4 } \vee u_5 \vee \overline{u_6}$ is also a conjunct and so the feasible set is not affected, and the first $n$ coordinates of any solution $x$ will indeed correspond to a satisfying assignment to $\psi$ if one exists. Since reversible $6$-SAT is NP-hard by Proposition \ref{prop:hard}, the stated result holds.
\end{proof}

\begin{theorem}\label{prop:truehard}
     For a fixed $\alpha \in \R^{m \times k},\X \in \R^{d \times n},\AA \in \R^{m \times n}$, the problem of deciding whether there exists a solution $\V \in \R^{k \times d}$ to $\alpha f(\V\X) = \AA$ is NP-hard even for $k=2$. Furthermore, for the case for $k=2$, the problem is still NP-hard when $\alpha \in \R^{m \times 2}$ is allowed to be a variable. 
\end{theorem}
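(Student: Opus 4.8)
The plan is to reduce from the \emph{ReLU-separability} problem, whose NP-hardness is established in Proposition~\ref{prop:reluhard}. Given a ReLU-separability instance with point sets $P=\{p_1,\dots,p_r\}$ and $Q=\{q_1,\dots,q_s\}$ in $\R^d$, I would construct a feasibility instance with $k=2$, $m=1$, and $n=r+s$ columns: let the columns of $\X\in\R^{d\times n}$ be exactly $p_1,\dots,p_r,q_1,\dots,q_s$, set $\alpha=(1,1)\in\R^{1\times 2}$, and let $\AA\in\R^{1\times n}$ be $0$ in every column indexed by a point of $P$ and $1$ in every column indexed by a point of $Q$. Writing the two rows of a candidate $\V$ as $x^{T}$ and $y^{T}$, the equation $\alpha f(\V\X)=\AA$ is precisely the system $f(x^{T}p_i)+f(y^{T}p_i)=0$ for all $i$ and $f(x^{T}q_i)+f(y^{T}q_i)=1$ for all $i$; since $f\geq 0$, the first family of equations is equivalent to $x^{T}p_i\leq 0$ and $y^{T}p_i\leq 0$. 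Hence $\alpha f(\V\X)=\AA$ is feasible if and only if the ReLU-separability instance is satisfiable, and by Proposition~\ref{prop:reluhard} this is NP-hard. For any fixed $k>2$ I would use the same $\X,\AA$ with $\alpha=(1,1,0,\dots,0)\in\R^{1\times k}$, which makes rows $3,\dots,k$ of $\V$ irrelevant and reduces to the $k=2$ case.

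For the second claim, where $\alpha\in\R^{m\times 2}$ is itself a variable, I would keep the \emph{same} $\X$ and $\AA$ (with $m=1$, so $\alpha=(a,b)$ is an unknown row vector) and argue that feasibility of $\alpha f(\V\X)=\AA$ over all choices of $(\alpha,\V)$ is still equivalent to satisfiability of $\psi$. One direction is immediate: a satisfying assignment yields a ReLU-separability solution $(x,y)$, and then $\alpha=(1,1)$, $\V=(x^{T};y^{T})$ is feasible. For the converse, suppose $(\alpha,\V)$ is feasible, with $\alpha=(a,b)$ and rows $x^{T},y^{T}$. Each column of $\X$ indexed by $Q$ forces $a f(x^{T}q_i)+b f(y^{T}q_i)=1>0$, and since $f\geq 0$ this is impossible unless at least one of $a,b$ is positive; swapping the two rows of $\V$ and the two entries of $\alpha$ (which leaves the computed function unchanged), we may assume $a>0$.

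The crucial point is then that the gadget vectors $e_i$ and $-e_i$ that the reduction of Proposition~\ref{prop:reluhard} places in $Q$ force $b>0$ as well. For each such $i$, the two constraints $a f(x_i)+b f(y_i)=1$ and $a f(-x_i)+b f(-y_i)=1$, combined using $f(t)+f(-t)=|t|$ and $f(t)-f(-t)=t$, yield
\[ a\,x_i+b\,y_i=0 \qquad\text{and}\qquad a\,|x_i|+b\,|y_i|=2 .\]
If $b=0$ these become $a x_i=0$ and $a|x_i|=2$, a contradiction; if $b<0$, writing $b=-\beta$ with $\beta>0$ gives $x_i=(\beta/a)y_i$, hence $a|x_i|=\beta|y_i|$, contradicting $a|x_i|-\beta|y_i|=2$. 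So $b>0$. Setting $x'=ax$ and $y'=by$ and using $f(ct)=cf(t)$ for $c>0$, the $Q$-constraints become $f((x')^{T}q_i)+f((y')^{T}q_i)=1$, while for $p_i\in P$ the constraint $a f(x^{T}p_i)+b f(y^{T}p_i)=0$ with both summands nonnegative forces $x^{T}p_i\leq 0$ and $y^{T}p_i\leq 0$, hence $(x')^{T}p_i\leq 0$ and $(y')^{T}p_i\leq 0$. Thus $(x',y')$ solves the ReLU-separability instance, so $\psi$ is satisfiable, completing the reduction.

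The only genuinely non-routine part is this last step: ruling out that a variable first-layer matrix creates spurious feasible solutions. Everything hinges on the observation that the antipodal pairs $\pm e_i$ already built into the ReLU-separability construction pin down the signs of the two entries of $\alpha$, after which a positive rescaling absorbs $\alpha$ into $\V$. The remaining ingredients---the equivalence for fixed $\alpha$, the padding for $k>2$, and the rescaling $x'=ax$, $y'=by$---are routine.
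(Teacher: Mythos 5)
Your proof of the first claim (fixed $\alpha$) is essentially identical to the paper's: the same reduction from ReLU-separability with $\alpha=(1,1)$, the same $\X$ and $\AA$, and the same zero-padding trick for $k>2$. That part checks out.

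For the second claim (variable $\alpha$), you take a genuinely different route from the paper. The paper augments the instance: it adds one new row and two new columns to $\X$ (and two new entries to $\AA$) so that the new columns force $\alpha_1, \alpha_2 > 0$ via an antipodal pair of constraints. You instead keep $\X$ and $\AA$ unchanged and observe that the $\pm e_i$ gadgets already baked into the hard ReLU-separability instances of Proposition~\ref{prop:reluhard} serve the same purpose. Your sign argument is correct: from $a f(x_i)+b f(y_i)=1$ and $a f(-x_i)+b f(-y_i)=1$, adding and subtracting with the identities $f(t)+f(-t)=|t|$ and $f(t)-f(-t)=t$ gives $a x_i + b y_i = 0$ and $a|x_i|+b|y_i|=2$; after a row/column swap to ensure $a>0$, both $b=0$ and $b<0$ produce contradictions, and then the positive rescaling $x'=ax$, $y'=by$ absorbs $\alpha$ into $\V$ exactly as in the paper. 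The trade-off is that the paper's reduction is black-box from ReLU-separability (it works for any hard instance $(P,Q)$), while yours is white-box: you need to peek inside the proof of Proposition~\ref{prop:reluhard} to know that the hard instances contain the antipodal pairs $\pm e_i$. Both are perfectly valid for establishing NP-hardness; yours is a bit more economical since it avoids modifying the instance, while the paper's is more modular.
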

\begin{proof}

Now we show the reduction from ReLU-separability to our problem. Given an instance $(P,Q)$ of ReLU separability as in Definition \ref{def:relusep}, set $\alpha = [1,1]$, and $w = [0,0,\dots,0,1,1,\dots,1]$ so $w_i = 0$ for $i \leq r$ and $w_i = 1$ for $r< i \leq r+s$. Let $\X = [ p_1, \; p_2, \; \dots, p_r,\; q_1 ,\dots,\; q_s] \in \mathbb{R}^{d \times (r+s)}$. Now suppose we have a solution $\V = [x,y]^T \in \mathbb{R}^{(r+s) \times 2}$ to $\alpha f(\V\X) = w$. This means $f(p_i^T x) + f(p_i^t y) = 0$ for all $p_i \in P$, so it must be that both $p_i^T x \leq 0 $and $p_i^T y \leq 0$. Also, we have $f(q_i^T x) + f(q_i^t y) = 1$  for all $q_i \in Q$. These two facts together mean that $x,y$ are a solution to ReLU-separability. Conversely, if solutions $x,y$ to ReLU separability exist, then for all $p_i \in P$, both $p_i^T x \leq 0$ and $p_i^T y \leq 0$ implies $f(p_i^T x) + f(p_i^t y) = 0$, and for all $q_i \in Q$ we get $f(q_i^Tx) + f(q_i^T y) = 1$, so $\V=[x,y]^T$ is a solution to our factoring problem. Using the NP-hardness of ReLU-separability by Proposition \ref{prop:reluhard}, the result follows. Note here that $k=2$ is a constant, but for larger $\alpha \in \R^{m \times k}$ with $m$ rows and $k$ columns, we can pad the new entries with zeros to reduce the problem to the aforementioned one, which completes the proof for a fixed $\alpha$.

Now for $k =2 $ and $\alpha$ a variable, we add the following constraints to reduce to the case of $\alpha = [1,1]$, after which the result follows. First, we add $2$ new columns and $1$ new row to $\X$, giving $\X' \in \R^{(d+1) \times (r+s+2)}$. We set
\[ \X' = \begin{bmatrix} \X & \vec{0} & \vec{0}  \\ \vec{0}^T & 1&-1   \\   \end{bmatrix} \]
Where $\X$ is as in the last paragraph, where $\vec{0}$ is a column vector of the appropriate dimensions above. Also, we set $\AA' = [\AA , 1 , 1] \in \R^{r+s+4}$. Let $\V = [x,y]^T$ as before. This ensures that $\alpha_1 f(x_{d+1} ) + \alpha_2 f(y_{d+1} ) = 1$ and $\alpha_1 f(-x_{d+1} ) + \alpha_2 f(- y_{d+1} ) = 1$. As before, we cannot have that both $(x_{d+1} )$ and $(y_{d+1})$ are negative, or that both are positive, as then one of the two constraints would be impossible. WLOG, $(y_{d+1} ) < 0$. Then we have $\alpha_1f(x_{d+1} ) = 1$, which ensures $\alpha_1 >0$, and $\alpha_2 f(- y_{d+1}) = 1$, which ensures $\alpha_2 > 0$.

Now suppose we have a solution to $\V = [x,y]^T$ and $\alpha \in \R^2$ to this new problem with $\X',\AA'$. Then we can set $x' = x/ \alpha_1$ and $y' = y/\alpha_2$, and $\alpha' = [1,1]$, and we argue that we have recovered a solution $[x',y']$ to ReLU separability. Note that $[1,1] f([x',y']^T \X') = \AA'$, since we can always pull a positive diagonal matrix in and out of $f$. Then restricting to the first $r+s$ columns of $\X',\AA'$, we see that $[1,1]f([x',y']^T \X) = \AA$, thus $[x',y']$ are a solution to the neural-net learning problem as in the first paragraph, so as already seen we have that $x',y'$ is a solution to ReLU-separability. Similarly, any solution $x,y$ to ReLU separability can easily be extended to our learning problem by simply using $\V = \begin{bmatrix} x & 1 \\ y & -1 \end{bmatrix}$ and $\alpha = [1,1]$, which completes the proof.
\end{proof}

\section{A  Polynomial Time Exact Algorithm for Gaussian Input}\label{sec:polyexact}

In this section, we study an exact algorithm for recovering the weights of a neural network in the realizable setting, i.e., the labels are generated by a neural network when the input is sampled from a Gaussian distribution. We also show that we can use independent and concurrent work of Ge et. al. \cite{ge2018learning} to extend our algorithms to the input being sampled from a symmetric distribution. Our model is similar to non-linear generative models such as those for neural networks and generalized linear models already well-studied in the literature \cite{sedghi2016provable, sedghi2014provable, kakade2011efficient,mondelli2018connection}, but with the addition of the ReLU activation function $f$ and the second layer of weights $\U^*$. In other words, we receive as input i.i.d. Gaussian\footnote{See Remark \ref{remark:multivariate}} input $\X \in \R^{d \times n}$ and the generated output is $\AA=\U^* f(\V^*\X)$, where $\U^* \in \R^{m \times k}$ and $\V^* \in \R^{k \times d}$. For the remainder of the section, we assume that both $\V^*$ and $\U^*$ are rank $k$. Note that this implies that $d \geq k$ and $k \leq m$. In Section \ref{sec:FPT}, however, we show that if we allow for a larger ($(\kappa(\V^*))^{O(k)}$) sample complexity, we can recover $\U^*$ even when it is not full rank.

We note that the generative model considered in \cite{sedghi2014provable} matches our setting, however, it requires the function $f$ to be differentiable and $\V^*$ to be sparse. In contrast, we focus on $f$ being ReLU. The ReLU activation function has gained a lot of popularity recently and is ubiquitous in applications \cite{comon1994independent, hyvarinen1999fast,frieze1996learning, hyvarinen2000independent, arora2012provable, liu2012two,hsu2013learning}. As mentioned in Sedghi et. al. \cite{sedghi2014provable}, if we make no assumptions on $\V^*$, the resulting optimal weight matrix is not identifiable. Here, we make no assumptions on $\U^*$ and $\V^*$ apart from them being full rank and show an algorithm that runs in polynomial time. The main technical contribution is then to recover the optimal $\U^*$ and $\V^*$ exactly, and not just up to $\eps$-error. By solving linear systems at the final step of our algorithms, as opposed to iterative continuous optimization methods, our algorithms terminate after a polynomial number of arithmetic operations.

Formally, suppose there exist fixed rank-$k$ matrices $\U^* \in \R^{m \times k} ,\V^* \in \R^{k \times d}$ such that $\AA = \U^* f(\V^*\X)$, and $\X$ is drawn from an i.i.d. Gaussian distribution. Note that we can assume that each row $\V^*_i$ of $\V^*$ satisfies $\|\V^*_i\|_2 = 1$ by pulling out a diagonal scaling matrix $\D$ with positive entries from $f$, and noting $\U^* f(\D\V^*\X) = (\U^* \D)f(\V^*\X)$. Our algorithm is given as input both $\AA$ and $\X$, and tasked with recovering the underlying generative neural network $\U^* ,\V^*$. In the  context of training neural networks, we consider $\X$ to be the feature vectors and $\AA$ to be the corresponding labels.  Note $\U^*,\V^*$ are oblivious to $\X$, and are fixed prior to the generation of the random matrix $\X$.  In this section we present an algorithm that is polynomial in all parameters, i.e., in the rank $k$, the condition number of $\U^*$ and $\V^*$, denoted by  $\kappa(\U^*), \kappa(\V^*)$ and $n, m , d$. 

Given an approximate solution to $\U^*$, we show that there exists an algorithm that outputs $\U^*, \V^*$ exactly and runs in time polynomial in all parameters.
We begin by giving an altenative algorithm for orthonormal $\V^*$ based on \textit{Independent Component Analysis}. We believe that this perspective on learning neural networks may be useful beyond our results. Next, we will give a general algorithm for exact recovery of $\U^*,\V^*$ which does not require $\V^*$ to be orthonormal. This algorithm is based on the completely different approach of tensor decomposition, yet yields the same polynomial running time for exact recovery in the noiseless case. We now pause for a brief aside on the generalization of our results to the non-identity covariance case.

\begin{remark}\label{remark:multivariate} 
While our results are stated for when the columns of $\X$ are Gaussian with identity covariance, they can naturally be extended to $\X$ with arbitrary non-degenerate (full-rank) covariance $\mathbf{\Sigma}$, by noting that $\X = \mathbf{\Sigma}^{1/2}\X'$ where $\X'$ is i.i.d. Gaussian, and then implicitly replacing $\V^*$ with $\V^* \mathbf{\Sigma}^{1/2}$ so that $f(\V^*\X) = f( (\V^* \Sigma^{1/2}) \X')$, and noting that $\kappa(\V^* \Sigma^{1/2})$ blows up by a $\sqrt{\kappa(\mathbf{\Sigma})}$ factor from $\kappa(\V^*)$. All our remaining results, which do not require $\V^*$ to be orthonormal, hold with the addition of polynomial dependency on $\sqrt{\kappa(\mathbf{\Sigma})}$, by just thinking of $\V^*$ as $\V^* \Sigma^{1/2}$ instead.
\end{remark}
We use the sample covariance as our estimator for the true covariance $\mathbf{\Sigma}$ and have the following guarantee:
\begin{lemma}(Estimating Covariance of $\X$ \cite{vershynin2018high}.)
Let $\X \in \R^{d \times N}$ such that for all $i \in [N]$, $\X_{*,i} \sim \mathcal{N}(0, \mathbf{\Sigma})$. Let $\mathbf{\Sigma}_N = \frac{1}{N} \sum_{i\in[N]} \X_{*,i} \X_{*,i}^T$. With probability at least $1 - 2e^{-\delta}$,
\begin{equation*}
    \|\mathbf{\Sigma} - \mathbf{\Sigma}_n \|_2 \leq c\frac{d +\delta}{N} \|\mathbf{\Sigma}\|_2
\end{equation*}
for a fixed constant $c$.
\end{lemma}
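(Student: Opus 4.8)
The plan is to reduce to the isotropic (identity-covariance) case and then apply a standard $\eps$-net bound on the spectral norm of an average of independent rank-one matrices; this is essentially Theorem~4.7.1 / Exercise~4.7.3 of \cite{vershynin2018high}, so I would only need to reproduce the short argument. First I would write each sample as $\X_{*,i} = \mathbf{\Sigma}^{1/2} g_i$ with $g_i \sim \mathcal{N}(0,\II_d)$ i.i.d., which is legitimate since $\mathbf{\Sigma}$ is positive semidefinite. Then $\mathbf{\Sigma}_N = \mathbf{\Sigma}^{1/2}\G\,\mathbf{\Sigma}^{1/2}$ where $\G \defeq \frac1N\sum_{i=1}^N g_i g_i^T$, so that $\|\mathbf{\Sigma} - \mathbf{\Sigma}_N\|_2 \le \|\mathbf{\Sigma}^{1/2}\|_2^2\,\|\II_d - \G\|_2 = \|\mathbf{\Sigma}\|_2\,\|\II_d - \G\|_2$, and it suffices to show $\|\II_d - \G\|_2 = O\big((d+\delta)/N\big)$ with probability at least $1 - 2e^{-\delta}$.

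For the isotropic bound I would fix a $\tfrac14$-net $\mathcal{N}$ of the unit sphere $S^{d-1}$ with $|\mathcal{N}| \le 9^d$, and use the standard fact that for a symmetric matrix $\M$ one has $\|\M\|_2 \le 2\max_{x\in\mathcal{N}}|x^T\M x|$, applied to $\M = \II_d - \G$. For a fixed unit vector $x$, the quantity $x^T \G x = \frac1N\sum_{i=1}^N \langle g_i, x\rangle^2$ is an average of $N$ i.i.d.\ $\chi^2_1$ random variables of mean $1$, which are sub-exponential, so Bernstein's inequality yields $\Pr\big[\,|x^T\G x - 1| \ge t\,\big] \le 2\exp\big(-cN\min\{t^2, t\}\big)$ for an absolute constant $c$. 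Taking a union bound over $\mathcal{N}$ gives $\Pr\big[\,\|\II_d - \G\|_2 \ge 2t\,\big] \le 2\cdot 9^d \exp\big(-cN\min\{t^2,t\}\big)$, and choosing $t$ so that $cN\min\{t^2,t\} \ge d\log 9 + \delta$ makes the right-hand side at most $2e^{-\delta}$, which finishes the argument.

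The only place that needs care — and the closest thing to an obstacle in an otherwise routine proof — is the two-regime shape of Bernstein's inequality: solving $N\min\{t^2,t\} \gtrsim d+\delta$ gives $t = O\big(\sqrt{(d+\delta)/N} + (d+\delta)/N\big)$ rather than a purely linear bound, so the sharp statement (the one actually proved in \cite{vershynin2018high}) is $\|\mathbf{\Sigma}-\mathbf{\Sigma}_N\|_2 \le c\big(\sqrt{(d+\delta)/N} + (d+\delta)/N\big)\|\mathbf{\Sigma}\|_2$. In the regime used throughout the paper, where $N$ is a sufficiently large polynomial in the relevant parameters and in particular $N \gg d+\delta$, the square-root term dominates and one obtains $\|\mathbf{\Sigma}-\mathbf{\Sigma}_N\|_2 = O\big(\sqrt{(d+\delta)/N}\big)\|\mathbf{\Sigma}\|_2$. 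Either form is far more than enough for the only downstream use, namely whitening by $\mathbf{\Sigma}_N^{-1/2}$: we just need $\mathbf{\Sigma}_N$ to approximate $\mathbf{\Sigma}$ to inverse-polynomial relative error so that $\kappa(\mathbf{\Sigma}_N) = O(\kappa(\mathbf{\Sigma}))$ and the blow-up in Remark~\ref{remark:multivariate} is controlled. I would therefore record the two-term bound and note that the displayed single-term inequality is the form needed and used here.
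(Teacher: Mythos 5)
Your proof is correct and follows the canonical argument (whitening to the isotropic case, a $1/4$-net on the sphere, Bernstein's inequality for the sub-exponential variables $\langle g_i,x\rangle^2 - 1$, and a union bound). The paper does not actually supply a proof of this lemma — it is stated with a citation to \cite{vershynin2018high} — so your write-up is reproducing Theorem 4.7.1 of that reference rather than an argument of the authors, and you do so faithfully.

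More importantly, you have caught a real imprecision in the paper's statement. As you note, the Bernstein tail forces the two-regime bound $t \gtrsim \sqrt{(d+\delta)/N} + (d+\delta)/N$, so the correct conclusion is
\[
\|\mathbf{\Sigma} - \mathbf{\Sigma}_N\|_2 \;\le\; c\left(\sqrt{\frac{d+\delta}{N}} + \frac{d+\delta}{N}\right)\|\mathbf{\Sigma}\|_2 ,
\]
which in the regime $N \gg d + \delta$ is dominated by the \emph{square-root} term. The paper's display keeps only the linear term $\frac{d+\delta}{N}$, which is strictly smaller than what can be proved when $N \ge d+\delta$ and therefore does not follow from the cited result; the bound as written is not attainable. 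This is a transcription slip rather than anything that propagates: the only downstream use is to ensure $\mathbf{\Sigma}_N$ approximates $\mathbf{\Sigma}$ to inverse-polynomial relative error for whitening, and the square-root bound with $N = \poly(n)$ holdout samples is more than sufficient. Your proposal is right to record the two-term bound and flag the discrepancy.
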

We can then estimate $\Sigma$ using a holdout set of $N  = \Omega(  n^2 \delta^2)$ samples, which suffices to get an accurate estimate of the covariance matrix. We point out that, other than the tensor decomposition algorithm of Section \ref{subsec:generalnoiseless} and the noisy half-space learning routine in Section \ref{sec:1bit}, our algorithms do not even need to estimate the covariance matrix $\mathbf{\Sigma}$ in the multivariate case in order to approximately (or exactly) recover $\U^*,\V^*$. With regards to our tensor decomposition algorithms, while our estimator for the covariance introduces small error in the computation of the Score Function and the resulting tensor decomposition, this can be handled easily in the perturbation analysis of Theorem \ref{thm:anandkumar} (refer to Remark 4 in \cite{janzamin2015beating}).
For our half-space learning algorithm in Section \ref{sec:1bit}, the error caused by estimating $\Sigma$ is negligible, and can be added to the ``advesarial'' error $\BB$ of Theorem \ref{thm:modular} which is already handled. 

In the following warm-up Section \ref{sec:ICA}, where it is assumed that $\V^*$ is orthonormal, we cannot allow $\X$ to have arbitrary covariance, since then $\V^* \mathbf{\Sigma}^{1/2}$ would not be orthonormal. However, for in the more general algorithm which follows in Section \ref{subsec:generalnoiseless}, arbitrary non-degenerate covariance $\mathbf{\Sigma}$ is allowed.

\subsection{An Independent Component Analysis Algorithm for Orthonormal $\V^*$}\label{sec:ICA}

We begin with making the simplifying assumption that the optimal $\V^*$ has orthonormal rows, as a warm-up to our more general algorithm. Note, if $\V^*$ is orthonormal and $\X$ is standard normal, then by $2$-stability of Gaussian random variables,  $\V^*\X$ is a matrix of i.i.d. Gaussian random variables. Since Gaussian random variables are symmetric around the origin, each column of $f(\V^*\X)$ is sparse, has i.i.d entries, and has moments bounded away from Gaussians. Using these facts, we form a connection to the Independent Component Analysis (ICA) problem, and use standard algorithms for ICA to recover an approximation to $\U^*$. 

The ICA problem approximately recovers a subspace $\BB$, given that the algorithm observes samples of the form $y =\BB x + \E$, where $x$ is i.i.d. and drawn from a distribution that has moments bounded away from Gaussians and $\E$ is Gaussian noise. The ICA problem has a rich history of theoretical and applied work \cite{comon1994independent, frieze1996learning, hyvarinen1999fast,hyvarinen2000independent, frieze2004fast, liu2012two, arora2012provable, hsu2013learning}. Intuitively, the goal of ICA is to find a linear transformation of the data such that each of the coordinates or features are as independent as possible. For instance, if the dataset is generated as $y = \BB x$, where $\BB$ is an unknown affine transformation and $x$ has i.i.d. components, with no noise added, then applying $\BB^{-1}$ to $y$ recovers the independent components exactly, as long as $x$ is non-Gaussian. Note, if $x \sim \mathcal{N}(0,\II_m)$, then by rotational invariance of Gaussians, we can only hope to recover $\BB$ up to a rotation and the identity matrix suffices as a solution.  

\begin{definition}(Independent Component Analysis.)
Given $\epsilon >0$ and samples of the form $y_i = \BB x_i + \E_i$, for all $i \in [n]$, such that $\BB \in \R^{m \times m}$ is unknown and full rank, $x_i \in \R^m$ is a vector random variable with independent components and has fourth moments strictly less than that of a Gaussian, the ICA problem is to recover an additive error approximation to $\BB$, i.e., recover a matrix $\widehat{\BB}$ such that $\| \widehat{\BB} - \BB\|_F  \leq  \epsilon$.
\end{definition}

We use the algorithm provided in Arora et. al. \cite{arora2012provable} as a black box for ICA. We note that our input distribution is rectified Gaussian, which differs from the one presented in \cite{arora2012provable}. Observe, our distribution is invariant to permutations and \textit{positive} scaling, is sub-Gaussian, and has moments that are bounded away from Gaussian. The argument in \cite{arora2012provable} extends to our setting, as conveyed to us via personal communication \cite{rongpersonal}. We have the following formal guarantee :

\begin{theorem}(Provable ICA, \cite{arora2012provable} and \cite{rongpersonal}.) \label{theorem:ICA}
Suppose we are given samples of the form $y_i = \BB x_i + \E_i$ for $i=1,2,\dots,n$, where $\BB \in \R^{m \times m}$, the vector $x_i \in \R^m$ has i.i.d. components and has fourth moments strictly bounded away from Gaussian, and $\E_i \in \R^{m}$ is distributed as $\mathcal{N}(0,\mathbb{I}_m)$, there exists an algorithm that with high probability recovers $\widehat{\BB}$ such that $\| \widehat{\BB} - \BB \mathbf{\Pi} \D \|_F \leq \epsilon$, where $\mathbf{\Pi}$ is a permutation matrix and $\D$ is a diagonal matrix such that it is entry-wise positive. Further, the sample complexity is $n = \poly\left( \kappa(\BB), \frac{1}{\epsilon}\right)$ and the running time is $\poly(n,m)$. 
\end{theorem}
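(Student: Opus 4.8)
The plan is to reduce to the ICA algorithm and analysis of Arora et al. \cite{arora2012provable}. Two things must be checked beyond what is stated there: (i) that the extra additive Gaussian noise $\E_i$ does not interfere with recovery, and (ii) that the source distribution is sub-Gaussian with fourth moments bounded away from Gaussian — which is the hypothesis — so that the black-box analysis applies. Throughout, I would first normalize by assuming without loss of generality that $\expec{}{x_i x_i^T} = \II_m$; any diagonal rescaling of the coordinates of $x_i$ is absorbed into $\BB$ and reappears as the diagonal factor $\D$ in the conclusion.

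First I would whiten. Since $\expec{}{y_i y_i^T} = \BB\BB^T + \II_m$, estimate this matrix from the samples to small spectral error using $n = \poly(m, \|\BB\|_2, 1/\eta)$ samples, subtract $\II_m$, and take an inverse symmetric square root to obtain a whitening matrix $\WW$ with $\WW\,\BB\BB^T\,\WW^T \approx \II_m$; then $\RR \defeq \WW\BB$ is within spectral distance $\poly(\kappa(\BB))\cdot\eta$ of an orthogonal matrix. Applying $\WW$ to the data gives $z_i = \WW y_i = \RR x_i + \WW \E_i$. The key structural point is that cumulant tensors of order $r \ge 3$ are additive over independent summands and vanish identically for Gaussians; hence the fourth cumulant tensor of $z_i$ equals that of $\RR x_i$, namely $T = \sum_{j=1}^m \gamma_j (\RR e_j)^{\otimes 4}$ with $\gamma_j$ the excess kurtosis of the $j$-th source, each $|\gamma_j| = \Omega(1)$ by hypothesis. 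Thus $T$ is an (approximately) orthogonally decomposable fourth-order tensor whose components are exactly the whitened columns of $\BB$, with well-separated weights; the Gaussian-noise correction terms needed to estimate $T$ from empirical fourth moments of the $z_i$ are computable from the already-estimated covariance.

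I would then run the tensor power iteration / contrast-function maximization of \cite{arora2012provable} to recover the components $\RR e_j$, and hence $\BB e_j = \WW^{-1}(\RR e_j)$, up to a permutation. Standard concentration — using sub-Gaussianity of the coordinates of $x_i$ and of $\E_i$ — shows that $n = \poly(m, 1/\epsilon, \kappa(\BB))$ samples suffice to estimate both the covariance and the fourth cumulant tensor accurately enough for the whitening perturbation bound and for a robustness theorem for orthogonal tensor decomposition to yield $\|\widehat\BB - \BB\,\mathbf{\Pi}\,\D\|_F \le \epsilon$ with high probability; the $\kappa(\BB)$ dependence enters through the conditioning of $\WW$ and $\WW^{-1}$, and the separation $\min_j|\gamma_j| = \Omega(1)$ keeps the decomposition well-conditioned. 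All steps run in $\poly(n,m)$ time. Fourth-order orthogonal tensor decomposition fixes each component only up to sign; to make $\D$ entrywise positive one appends a cheap third-order pass, since the third cumulant of $z_i$ is $\sum_j \kappa_3(x_{ij})(\RR e_j)^{\otimes 3}$ and for the non-symmetric (indeed non-negative) sources relevant to our application each $\kappa_3(x_{ij})$ is nonzero with known sign, disambiguating the sign of each recovered column.

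The main obstacle is the quantitative perturbation analysis: one must track how spectral error in the estimated covariance propagates through the matrix-square-root whitening step to a perturbation of $\RR$ away from orthogonality, combine this with the empirical error in the estimated fourth cumulant tensor, invoke a robustness bound for orthogonal tensor decomposition, and finally push the resulting error on the components back through $\WW^{-1}$ to bound $\|\widehat\BB - \BB\mathbf{\Pi}\D\|_F$, verifying that every blow-up factor is at most $\poly(\kappa(\BB))$. This is precisely the content of the analysis of \cite{arora2012provable}; the only modification is to confirm that the rectified-Gaussian source used in our setting is sub-Gaussian with moments bounded away from Gaussian and has nonzero odd cumulants, all of which are elementary and were verified in \cite{rongpersonal}.
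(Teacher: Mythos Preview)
The paper does not prove this theorem at all: it is quoted as a black-box result from \cite{arora2012provable}, with the extension to the rectified-Gaussian source distribution attributed to personal communication \cite{rongpersonal}. There is nothing to compare against on the paper's side.

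Your sketch is a faithful outline of the standard provable-ICA pipeline (empirical covariance $\to$ whitening $\to$ fourth-cumulant tensor $\to$ orthogonal tensor decomposition), and the key observation that cumulants of order $\geq 3$ vanish for Gaussians, so the additive $\E_i$ drops out of the fourth-cumulant tensor, is exactly right. The sign-disambiguation step via third cumulants is also the natural fix for the $\pm 1$ ambiguity inherent in even-order tensor decomposition, and works here because the rectified Gaussian is asymmetric with nonzero skewness. One small point: your whitening step forms $\WW$ from $(\widehat{\BB\BB^T})^{-1/2}$, but note that $\BB\BB^T$ need not be well-conditioned unless $\kappa(\BB)$ is; you correctly absorb this into the $\poly(\kappa(\BB))$ dependence, but it is worth being explicit that the smallest eigenvalue of $\BB\BB^T + \II_m$ is at least $1$, so the covariance itself is always invertible and the subtraction of $\II_m$ is where the $\kappa(\BB)$ dependence enters. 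Overall the proposal is sound and goes beyond what the paper itself provides.
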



We remark that ICA analyses typically require $\BB$ to be a square matrix, and recall  that$\U^*$ is $m \times k$ for $m \geq k$. 
To handle this, we sketch our samples using a dense Gaussian matrix with exactly $k$ columns, and show this sketch is rank preserving. We will denote the resulting matrix by $\T \U^*$.

\begin{Frame}[\textbf{Algorithm \ref{alg:overall_exact_fpt} : ExactNeuralNet}$(\AA, \X)$]
\label{alg:overall_exact_fpt}
\ttx{Input:} Matrices  $\AA \in \R^{d \times n}$ and $\X \in \R^{r \times n}$ such that each entry in $\X \sim \mathcal{N}(0, 1)$. \\
\begin{enumerate}
    \item Let $\T \in \R^{k \times m}$ be a matrix such that for all $i\in[k]$, $j \in [m]$, $\T_{i,j} \sim \mathcal{N}(0,1)$. Let $\T\AA$ be the matrix obtained by applying the sketch to $\AA$.  
    \item Consider the ICA problem where we receive samples of the form $\T\AA = \T \U^* f(\V^*\X)$. 
    \item Run the ICA algorithm, setting $\epsilon = \frac{1}{\poly\left(m,d,k,  \kappa(\U^*)\right)}$, to recover $\widehat{\T\U}$ such that $\|\widehat{\T\U} - \T\U^*\mathbf{\Pi}\D \|_F \leq \frac{1}{\poly\left(m,d,k,  \kappa(\U^*)\right)}$.
    \item Let $\overline{\X}$ be the first $\ell = \poly(d,m,k, \kappa(\U^*),\kappa(\V^*) )$ columns of $\X$, and let $\overline{\AA} = \U^*f(\V^*\overline{\X})$.
  Let $\tau = \frac{1}{\poly(\ell)}$ be a threshold. Then for all $i \in [k], j \in [\ell]$, set
    \begin{equation*}
        \widehat{f(\V\overline{\X})}_{i,j} = \begin{cases}
            0 &\text{if $ \big( ( \widehat{\T\U})^{-1} \T\overline{\AA}\big)_{i,j} \leq \tau$}\\
        \big( ( \widehat{\T\U})^{-1} \T\overline{\AA}\big)_{i,j} &\text{otherwise}\\
        \end{cases}
    \end{equation*}
    \item Let $S_j$ be the sparsity pattern of the vector $\widehat{f(\V \overline{\X})}_{j,*}$. For all $j \in [k]$, and $r \in [k]$, solve the following linear system of equations in the unknowns $x_j^r \in \R^k$. 
    \begin{equation*}
        \begin{array}{ll@{}ll}
        \forall i \in [\ell] \setminus  S_j & \displaystyle    &  (x_j^r \overline{\AA})_i = 0, \\
         & \displaystyle    &  (x_j^r)_r = 1 \\
        \end{array}
    \end{equation*}
    Where $(x_j^r)_r$ is the $r$-th coordinate of $x_j^r$.
    \item Set $w_j$ to be the first vector $x_j^r$ such that a solution exists to the above linear system. 
    \item Let $\W \in \R^{k \times \ell}$ be the matrix where the $i$-th row is given by $w_i  \overline{\AA}$. Flip the signs of the rows of $\W$ so that $\W$ has no strictly negative entries. 
    
    \item For each $i \in [k]$, solve the linear system $(\W_{i,*})_{S_i} = \V_{i,*} \overline{\X}_{S_i}$ for $\V \in \R^{k \times d}$, where the subscript $S_i$ means restricting to the columns of $S_i$. Normalize $\V$ to have unit norm rows. Finally, solve the linear system $\AA = \U f(\V \X)$ for $\U$, using Gaussian Elimination. 

\end{enumerate}
\ttx{Output:}  $\U ,\V$. 
\end{Frame}

\begin{lemma}(Rank Preserving Sketch.)
\label{lem:rank_preserving}
Let $\T \in \R^{k \times m}$ be a matrix such that for all $i\in[k]$, $j \in [m]$, $\T_{i,j} \sim \mathcal{N}(0,1)$.  Let $\U^* \in \R^{m \times k}$ such that rank$(\U^*) = k$ and $m > k$. Then, $\T \U^* \in \R^{k \times k}$ has rank $k$.   Further, with probability at least $1-\delta$,  $\kappa(\T \U^*) \leq ( k^2 m /\delta) \kappa(\U^*)$.
\end{lemma}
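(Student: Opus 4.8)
The plan is to reduce everything to a statement about a \emph{square} $k\times k$ standard Gaussian matrix using the singular value decomposition of $\U^*$. Write $\U^* = \mathbf{P}\mathbf{D}\mathbf{Q}^\top$ where $\mathbf{P}\in\R^{m\times k}$ has orthonormal columns, $\mathbf{D}=\mathrm{diag}(\sigma_1(\U^*),\dots,\sigma_k(\U^*))$ is invertible (since $\mathrm{rank}(\U^*)=k$), and $\mathbf{Q}\in\R^{k\times k}$ is orthogonal. The key observation is that $\mathbf{G}:=\T\mathbf{P}\in\R^{k\times k}$ again has i.i.d.\ $\mathcal{N}(0,1)$ entries: its $i$-th row is $\T_{i,*}\mathbf{P}\sim\mathcal{N}(0,\mathbf{P}^\top\mathbf{P})=\mathcal{N}(0,\mathbb{I}_k)$ by rotational invariance of the Gaussian, and distinct rows of $\mathbf{G}$ are independent because distinct rows of $\T$ are. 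Since $\T\U^* = \mathbf{G}\mathbf{D}\mathbf{Q}^\top$ with $\mathbf{D},\mathbf{Q}$ invertible, we get $\mathrm{rank}(\T\U^*)=\mathrm{rank}(\mathbf{G})$; and $\det(\mathbf{G})$ is a not-identically-zero polynomial in the entries of $\mathbf{G}$, hence vanishes only on a Lebesgue-null set, so $\mathbf{G}$ — and therefore $\T\U^*$ — has rank $k$ with probability $1$. This settles the first claim.

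For the condition number, I would use submultiplicativity of the spectral norm together with $\|\mathbf{Q}^\top\|_2=\|\mathbf{Q}\|_2=1$. From $\T\U^* = \mathbf{G}\mathbf{D}\mathbf{Q}^\top$ we get $\|\T\U^*\|_2 \le \|\mathbf{G}\|_2\,\sigma_{\max}(\U^*)$, and from $(\T\U^*)^{-1}=\mathbf{Q}\mathbf{D}^{-1}\mathbf{G}^{-1}$ we get $\|(\T\U^*)^{-1}\|_2 \le \|\mathbf{G}^{-1}\|_2/\sigma_{\min}(\U^*)$. Multiplying, $\kappa(\T\U^*)\le \kappa(\mathbf{G})\,\kappa(\U^*)$, so it remains only to show $\kappa(\mathbf{G}) = \poly(k,1/\delta)$ with probability at least $1-\delta$ for a $k\times k$ standard Gaussian matrix; since $k\le m$ this is then comfortably within the claimed bound $k^2m/\delta$, and I will not bother to optimize the exact polynomial.

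To control $\kappa(\mathbf{G})$ I would bound the two extreme singular values separately, each on an event of probability $1-\delta/2$. For the top one, $\|\mathbf{G}\|_2\le\|\mathbf{G}\|_F$ and $\mathbb{E}\|\mathbf{G}\|_F^2=k^2$, so Markov gives $\|\mathbf{G}\|_2\le k\sqrt{2/\delta}$. For the bottom one, I would use the elementary bound $\sigma_{\min}(\mathbf{G})\ge k^{-1/2}\min_{j\in[k]}\mathrm{dist}\big(\mathbf{G}_{*,j},\mathrm{span}\{\mathbf{G}_{*,\ell}:\ell\ne j\}\big)$, which holds because for a unit vector $x$ and a coordinate $j$ with $|x_j|\ge k^{-1/2}$, projecting $\mathbf{G}x$ onto the unit normal $u_j$ of the span of the other columns yields $\|\mathbf{G}x\|\ge|x_j|\,\mathrm{dist}_j$. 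Conditioned on the remaining columns, $u_j$ is a fixed unit vector and $\mathrm{dist}_j=|\langle\mathbf{G}_{*,j},u_j\rangle|\sim|\mathcal{N}(0,1)|$, whose density at $0$ is below $1$, so $\Pr[\mathrm{dist}_j\le s]\le s$; a union bound over $j$ gives $\Pr[\sigma_{\min}(\mathbf{G})\le t]\le k^{3/2}t$, i.e.\ $\sigma_{\min}(\mathbf{G})\ge \delta/(2k^{3/2})$ with probability $\ge1-\delta/2$. Intersecting the two events, with probability $\ge1-\delta$ we obtain $\kappa(\mathbf{G})=\|\mathbf{G}\|_2/\sigma_{\min}(\mathbf{G})=\poly(k,1/\delta)$, which plugged into $\kappa(\T\U^*)\le\kappa(\mathbf{G})\kappa(\U^*)$ finishes the proof. (If one wants the tighter $\delta$-dependence, the leave-one-column-out step can be replaced by the sharper smallest-singular-value estimates of Sankar--Spielman--Teng or Rudelson--Vershynin.)

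The main obstacle is exactly this last ingredient — a lower tail (anti-concentration) bound for the smallest singular value of a square Gaussian matrix; the SVD reduction, the rotational-invariance observation, and the factorization $\kappa(\T\U^*)\le\kappa(\mathbf{G})\kappa(\U^*)$ are all routine. One small point I would be careful about is that the reduction does not secretly require $\T$ itself to be square: here $\T$ is $k\times m$, and it is only $\mathbf{G}=\T\mathbf{P}$, a genuinely $k\times k$ matrix, to which the Gaussian-matrix facts are applied.
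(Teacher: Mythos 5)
Your SVD reduction and the rotational-invariance observation that $\mathbf{G} = \T\mathbf{P}$ is a $k\times k$ i.i.d.\ standard Gaussian are exactly the paper's opening step (the paper writes $\U^* = \M\mathbf{\Sigma}\NN^T$ and works with $\T\M$), and the rank claim is handled the same way in both. The condition-number half diverges. You factor through $\kappa(\T\U^*) \le \kappa(\mathbf{G})\,\kappa(\U^*)$, where $\mathbf{G}$ is the square $k\times k$ Gaussian, and then bound $\kappa(\mathbf{G})$ from first principles: Markov on $\|\mathbf{G}\|_F$ for the top singular value and the leave-one-column-out anti-concentration trick for the bottom, giving $\kappa(\mathbf{G}) = O(k^{5/2}/\delta^{3/2})$. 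The paper instead writes $\kappa(\T\U^*) \le \kappa(\T)\,\kappa(\U^*)$ with the rectangular $\T$ itself and cites Rudelson--Vershynin \cite{rudelson2010non} for $\sigma_{\min}(\T)$ and $\sigma_{\max}(\T)$. Your factorization is actually the more careful one: for rectangular $\T\in\R^{k\times m}$ with $m>k$, the inequality $\sigma_{\min}(\T\U^*) \ge \sigma_{\min}(\T)\,\sigma_{\min}(\U^*)$ need not hold in general (the image of $\U^*$ can lie near the kernel of $\T$), whereas routing through the square product $\mathbf{G}\mathbf{D}\mathbf{Q}^\top$ is airtight; the paper's intended argument is almost certainly your version, with $\T\M$ in place of $\T$. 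The only cost in your self-contained derivation is in the exponents: it gives $\delta^{-3/2}$ rather than the $\delta^{-1}$ appearing in the lemma statement, so it does not literally reproduce the constant $k^2 m/\delta$; as you note, any $\poly(k,m,1/\delta)$ bound suffices downstream, and the $\delta$-dependence can be tightened by quoting a sharper square-Gaussian smallest-singular-value tail. Your bound also carries no $m$-dependence at all, a small bonus over the paper's.
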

\begin{proof}
Let $\M \mathbf{\Sigma} \NN^T $ be the SVD of $\U^*$, such that $\M \in \R^{m \times k}$ and $\mathbf{\Sigma} \NN^T  \in \R^{k \times n}$. Since columns of $\M$ are orthonormal and Gaussians are 
rotationally invariant, $\T\M \in \R^{k \times k}$ is i.i.d. standard normal. Further, 
$\mathbf{\Sigma} \NN^T$ has full row rank and thus has a right inverse, i.e., $\NN \mathbf{\Sigma}^{-1}$.
Then, $\textrm{rank}(\T\U) = \textrm{rank}(\T \M \mathbf{\Sigma} \NN^T) \leq \textrm{rank}(\T \M)$. Further $\T \M = \T \U \mathbf{\Sigma}^{-1}$, and therefore $\textrm{rank}(\T \M) = \textrm{rank}(\T \U \mathbf{\Sigma}^{-1}) \leq \textrm{rank}(\T \U)$. Recall, $\T\M$ is a $m \times k$ matrix of standard Gaussian random variables and has a non-zero determinant with probability $1$.    

Next, $\kappa(\T \U^*) \leq \kappa(\T) \kappa(\U^*)$. Note $\T$ is at least $k+1 \times k$ and by Theorem 3.1 in \cite{rudelson2010non}, with probability $1-\delta$, $\sigma_{\min}(\T) \geq k \delta$. Similarly, by Proposition 2.4 \cite{rudelson2010non}, with probability $1-1/e^{\Omega(1/\delta)}$, $\sigma_{\max}(\T) \leq km/\delta$. Union bounding over the two events, with probability at least $1-1/\textrm{poly}(k)$, $\kappa(\T) \leq \poly(k)$ and thus $\kappa(\T \U^*) \leq \kappa(\U)  k^2 m /\delta$. 
\end{proof}

Algorithmically, we sketch the samples $\T \AA$ such that they are of the form $\T \U^* f(\V^* \X)$. By Lemma \ref{lem:rank_preserving}, $\T \U^*$ is a square matrix and has rank $k$. Since $\V$ is orthonormal, each column of $f(\V^*\X)$ has entries that are i.i.d. $\max\{\mathcal{N}(0,1), 0\}$. Note, the samples $\T \AA$ now fit the ICA framework, the noise $\E =0$, and thus we can approximately recover $\U^*$, without even looking at the matrix $\X$. Here, we set $\epsilon = \frac{1}{\poly\left(m,d,k,  \kappa(\U^*)\right)}$ to get the desired running time. Recall, given the polynomial depedence on $1/\eps$, we cannot recover $\U^*$ exactly. 

\begin{corollary}(Approximate Recovery using ICA.) \label{cor:ICA}
Given $\AA \in \R^{m \times n}, \X \in \R^{d \times n}$, and a sketching matrix $\T \in \R^{k \times m}$ such that $\AA = \U^* f(\V^*\X)$ and for all $i\in[k]$, $j \in [m]$, $\T_{i,j} \sim \mathcal{N}(0,1)$, there exists an algorithm that outputs an estimator to $\widehat{\T\U^*}$ such that $\|\widehat{\T\U} - \T\U^*\mathbf{\Pi}\D \|_F \leq \frac{1}{\poly\left(m,d,k,  \kappa(\U^*)\right)}$, where $\mathbf{\Pi}$ is a permutation matrix and $\D$ is strictly positive diagonal matrix. Further, the running time is $\poly\left(m,d,k,  \kappa(\U^*)\right)$.
\end{corollary}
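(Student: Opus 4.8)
The plan is to reduce the statement directly to the ICA guarantee of Theorem~\ref{theorem:ICA}, after using the rank-preserving sketch of Lemma~\ref{lem:rank_preserving} to turn the mixing matrix into a square full-rank matrix. First I would observe that since $\V^*$ has orthonormal rows and the columns of $\X$ are i.i.d.\ $\mathcal{N}(0,\II_d)$, the rotational invariance of the Gaussian (equivalently, $2$-stability) implies that $\V^*\X$ is a $k \times n$ matrix with i.i.d.\ $\mathcal{N}(0,1)$ entries. Consequently each column of $f(\V^*\X)$ is a vector in $\R^k$ whose coordinates are i.i.d., each distributed as $\max\{g,0\}$ with $g \sim \mathcal{N}(0,1)$. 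Writing $y_i = (\T\AA)_{*,i} = (\T\U^*)\,f(\V^*\X)_{*,i}$, the sketched samples are therefore exactly of the ICA form $y_i = \BB x_i$ with mixing matrix $\BB = \T\U^* \in \R^{k \times k}$, independent-component vector $x_i \in \R^k$, and zero additive noise.

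Next I would check that the rectified Gaussian distribution is an admissible source for the ICA algorithm: it is sub-Gaussian, invariant under coordinate permutation and positive scaling, and its moments are bounded away from those of a Gaussian (equivalently, its fourth cumulant is bounded away from $0$). This is precisely the extension of \cite{arora2012provable} recorded above via \cite{rongpersonal}, so Theorem~\ref{theorem:ICA} applies with the ambient dimension $k$ in place of $m$. By Lemma~\ref{lem:rank_preserving}, $\BB = \T\U^*$ is full rank almost surely, and with probability $1-\delta$ we have $\kappa(\BB) \le (k^{2} m/\delta)\,\kappa(\U^*)$; choosing $\delta = 1/\poly(m,d,k,\kappa(\U^*))$ keeps $\kappa(\BB) = \poly(m,d,k,\kappa(\U^*))$.

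Then I would invoke Theorem~\ref{theorem:ICA} with target accuracy $\epsilon = 1/\poly(m,d,k,\kappa(\U^*))$. Its sample complexity $n = \poly(\kappa(\BB),1/\epsilon)$ and running time $\poly(n,k)$ both become $\poly(m,d,k,\kappa(\U^*))$ once the bound on $\kappa(\BB)$ is substituted, which matches the claimed complexity. The output is a matrix $\widehat{\T\U}$ with $\|\widehat{\T\U} - \T\U^*\mathbf{\Pi}\D\|_F \le \epsilon$ for a permutation matrix $\mathbf{\Pi}$ and a strictly positive diagonal matrix $\D$; and since the entire argument only reads $\T\AA$, the matrix $\X$ is never accessed. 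A union bound over the failure events of Lemma~\ref{lem:rank_preserving} and Theorem~\ref{theorem:ICA} yields the high-probability conclusion.

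I expect the only non-routine point to be justifying that the one-sided rectified Gaussian $\max\{\mathcal{N}(0,1),0\}$ satisfies the hypotheses of the ICA algorithm of \cite{arora2012provable}, which was originally stated for other non-Gaussian sources — concretely, that its fourth cumulant is bounded away from zero and that the perturbation analysis there is insensitive to its support being a half-line. A secondary bookkeeping issue is that Theorem~\ref{theorem:ICA} is stated with a nonzero Gaussian noise term $\E_i$ whereas here $\E = 0$; this is the benign degenerate case, handled for instance by perturbing with a Gaussian whose variance tends to $0$, or simply because the estimator of \cite{arora2012provable} remains consistent with no noise.
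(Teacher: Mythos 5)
Your proposal is correct and follows the same route as the paper: sketch by a Gaussian $\T$ so the mixing matrix $\T\U^*$ becomes square and full rank (Lemma~\ref{lem:rank_preserving}), use orthonormality of $\V^*$ and rotational invariance of Gaussians to exhibit $f(\V^*\X)$ as a matrix whose columns have i.i.d.\ rectified-Gaussian coordinates, and then invoke Theorem~\ref{theorem:ICA} (with $\E=0$) at accuracy $\epsilon = 1/\poly(m,d,k,\kappa(\U^*))$. The only thing you make more explicit than the paper is the bookkeeping of $\kappa(\T\U^*)$ and the degenerate noiseless case, both of which the paper handles implicitly in the surrounding text.
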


\paragraph{Exact Recovery:}
By Corollary \ref{cor:ICA}, running ICA on $\T\AA = \T \U^*f(\V^*\X)$,
we recover $\T\U^*$ approximately up to a permutation and positive scaling of the column. Note that we can disregard the permutation by simply assuming $\V$ has been permuted to agree with the $\mathbf{\Pi}$.  Let $\widehat{\T\U}$ be our estimate of $\T\U^*$.  We then restrict our attention to the first $\ell = \poly\left(d,m,k,\kappa(\U^*),\kappa(\V^*)\right)$ columns of $\X$, and call this submatrix $\overline{\X}$, and $\overline{\AA} = \U^* f(\V^* \overline{\X})$. 
We then multiply $\T \overline{\AA}$ by the inverse $(\widehat{\T\U})^{-1}$,  which we show allows us to recover $\D^{-1}f(\V^*\overline{X})$ up to additive $\eps$ error where $\eps$ is at most $O\left(\frac{1}{\poly(d,m,k,\kappa(\U^*),\kappa(\V^*)}\right)$. Since the sketch $\T$ will preserve rank, $\T\U$ will have an inverse, and thus  $(\widehat{\T\U})$ will be invertible (we can always perturbe the entries of our estimate by $1/\poly(n)$ to ensure this). The inverse can then be computed in a polynomial number of arithmetic operations via Gaussian elimination. By a simple thresholding argument, we show that after rounding off the entries below $\tau =1/\poly(\ell)$ in $(\widehat{\T\U})^{-1}\T\overline{\AA}$, we in fact recover the \textit{exact} sign pattern of $f(\V^*\overline{\X})$. 

 Our main insight is now that the only sparse vectors in the row space of $\overline{\AA}$ are precisely the rows (up to positive a scaling) of $f(\V^* \overline{\X})$. Specifically, we show that the only vectors in the row span of $\U^*f(\V^*\overline{\X})$ which have the same sign and sparsity pattern as a row of $f(\V^*\overline{\X})$ are positives scalings of the rows of $f(\V^*\overline{\X})$. Here, by \textit{sparsity pattern}, we mean the subset of entries of a row that are non-zero. Since each row of $f(\V^*\X)$ is non-negative, the sign and sparsity patterns of $f(\V^*\overline{\X})$ together specify where the non-zero entries are (which are therefore strictly positive). 
 
 Now after exact recovery of the sign pattern of $f(\V^*\overline{\X})$, we can set up a linear system to find a vector in the row span of $\AA$ with this sign pattern, thus recovering each row of $f(\V^*\overline{\X})$ exactly. Critically,  we exploit the combinatorial structure of ReLUs together with the fact that linear systems can be solved in a polynomial number of arithmetic operations. This allows for exact recovery of $\U^*$ thereafter. Recall that we assume the rows of $\V^*$ have unit length, which removes ambiguity in the positive scalings used for the rows of $\V*$ (and similarly the columns of $\U^*$).

  We begin by showing that the condition number of $\V^*$ is inversely proportional to the minimum angle between the rows of $\V^*$, if they are interpreted as vectors in $\R^{d}$. This will allow us to put a lower bound on the number of disagreeing sign patterns between rows of $f(\V^*\overline{\X})$ in Lemma \ref{prop:uniquesign}. We will then use these results to prove the uniqueness of the sign and sparsity patterns of the rows of $f(\V^*\overline{\X})$ in Lemma \ref{lem:uniquesign}.

\begin{lemma}\label{prop:condition}
Let $\theta_{\min}\in [0,\pi]$ be the smallest angle between the lines spanned by two rows of the rank $k$ matrix $\V \in \R^{k \times d}$ which unit norm rows, in other words $\theta_{\min} = \min_{i,j} \arccos ( \langle \V_{i,*} , \V_{j,*} \rangle )$ where $\arccos$ takes values in the principle range $[0,\pi]$. Then $\kappa(\V) > \frac{c}{\theta_{\min}}$ for some constant $c$.
\end{lemma}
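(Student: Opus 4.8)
The plan is to certify that $\sigma_{\min}(\V)$ is small by exhibiting an explicit unit-norm combination of the rows of $\V$ that is short, and to combine this with the trivial lower bound $\sigma_{\max}(\V)\ge 1$ coming from the rows being unit vectors. Concretely, since $\V$ has rank $k$ it has full row rank, so $\V\V^T$ is positive definite and the extreme singular values admit the variational descriptions
\[ \sigma_{\min}(\V)=\min_{x\in\R^k\setminus\{0\}}\frac{\|x^T\V\|_2}{\|x\|_2},\qquad \sigma_{\max}(\V)=\max_{x\in\R^k\setminus\{0\}}\frac{\|x^T\V\|_2}{\|x\|_2}. \]
Taking $x=e_i$ for any $i$ and using $\|\V_{i,*}\|_2=1$ gives $\sigma_{\max}(\V)\ge 1$ for free.

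For the upper bound on $\sigma_{\min}$, I would let $(i,j)$ be a pair of distinct indices attaining the minimum angle, so that $\langle \V_{i,*},\V_{j,*}\rangle=\cos\theta_{\min}$, and plug the test vector $x=e_i-e_j$ into the characterization above. Then $\|x\|_2=\sqrt 2$ and $x^T\V=\V_{i,*}-\V_{j,*}$, so
\[ \|x^T\V\|_2^2=\|\V_{i,*}\|_2^2+\|\V_{j,*}\|_2^2-2\langle \V_{i,*},\V_{j,*}\rangle=2-2\cos\theta_{\min}\le \theta_{\min}^2, \]
where the last step uses the elementary inequality $\cos\theta\ge 1-\theta^2/2$, valid for all real $\theta$. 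Hence $\sigma_{\min}(\V)\le \theta_{\min}/\sqrt 2$, and therefore
\[ \kappa(\V)=\frac{\sigma_{\max}(\V)}{\sigma_{\min}(\V)}\ge \frac{\sqrt 2}{\theta_{\min}}>\frac{1}{\theta_{\min}}, \]
which proves the statement; in fact the argument shows one may take $c=\sqrt 2$ (the inequality is strict because $2-2\cos\theta<\theta^2$ for $\theta>0$, and $\theta_{\min}>0$ since $\V$ has full rank).

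There is no substantial obstacle here, as this is a warm-up estimate; the only points to be careful with are (i) using the one-sided variational bound — we only need that \emph{some} unit combination of rows of $\V$ is short, not a two-sided estimate — and (ii) stating $2-2\cos\theta\le\theta^2$ for all $\theta\ge 0$ so that it applies at $\theta=\theta_{\min}$ regardless of whether $\theta_{\min}$ is small (when $\theta_{\min}$ is large the bound $c/\theta_{\min}$ is weak and the conclusion is easy). The case $k=1$ is vacuous, so we implicitly assume $k\ge 2$ so that a pair of distinct rows exists.
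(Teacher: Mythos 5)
Your proof is correct and takes a genuinely different, and cleaner, route than the paper's. The paper picks the pair $(i,j)$ realizing $\theta_{\min}$, works with the pseudo-inverse $\V^-$, observes that $\V^-_{j,*}$ is orthogonal to $\V_{i,*}$ while $\langle\V_{j,*},\V^-_{j,*}\rangle=1$, and then runs a law-of-cosines argument on the sphere to deduce $\|\V^-_{j,*}\|_2\gtrsim 1/\theta_{\min}$; this lower-bounds $\sigma_{\max}(\V^-)=\sigma_{\min}(\V)^{-1}$ and, combined with $\sigma_{\max}(\V)\ge 1$, yields $\kappa(\V)\gtrsim 1/\theta_{\min}$. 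You instead exhibit the short combination directly: testing the Courant--Fischer quotient at $x=e_i-e_j$ gives $\sigma_{\min}(\V)^2\le \tfrac12\|\V_{i,*}-\V_{j,*}\|_2^2=1-\cos\theta_{\min}\le\theta_{\min}^2/2$, and $\sigma_{\max}(\V)\ge1$ is immediate from the unit-norm rows, so $\kappa(\V)\ge\sqrt2/\theta_{\min}$, with strictness from $\theta_{\min}\in(0,\pi)$ (both $\theta_{\min}=0$ and $\theta_{\min}=\pi$ would contradict rank $k$). This is shorter, avoids the pseudo-inverse and spherical trigonometry entirely, and gives an explicit constant $c=\sqrt2$ where the paper leaves $c$ implicit.

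One point of interpretation worth flagging: the paper's own proof silently uses $\theta_{\min}=\min_{i,j}\arccos\bigl(|\langle\V_{i,*},\V_{j,*}\rangle|\bigr)$, the acute angle between lines, whereas the lemma statement as written (which you follow) has no absolute value. The two differ when the closest pair is nearly antipodal. Your argument extends verbatim to the absolute-value version by replacing the test vector $e_i-e_j$ with $e_i+e_j$ when $\langle\V_{i,*},\V_{j,*}\rangle<0$; in both cases $\|x^T\V\|_2^2=2-2|\langle\V_{i,*},\V_{j,*}\rangle|\le\theta_{\min}^2$. Mentioning this would make your version strictly dominate the paper's, since downstream (in the proof of the sign-pattern separation lemma) the bound is applied to the principal angle between two row directions and hence to the absolute-value quantity.
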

\begin{proof}
 Let $i,j$ be such that $ \arccos ( |\langle \V_{i,*} , \V_{j,*} \rangle |) = \theta_{\min}$. Let $\V^-$ be the pseudo-inverse of $\V$. Since $\V$ has full row rank, it follows that $\V(\V^-)^T = I_k$, thus $\langle \V_{i,*} , \V^-_{j,*} \rangle =0$ and $\langle \V_{j,*} , \V^-_{j,*} \rangle = 1$. The first fact implies that $\V^-_{j,*}$ is orthonormal to $\V_{i,*}$, and the second that $\cos(\theta(\V_{j,*}, \V^-_{j,*}))= (\|\V^-_{j,*}\|_2)^{-1}$ where $\theta(\V_{j,*}, \V^-_{j,*})$ is the angle between $\V_{j,*}$ and $\V^-_{j,*}$.

Now let $x = \V_{i,*}, y = \V^-_{j,*}/\|\V^-_{j,*}\|, z = \V_{j,*}$. Note that $x,y,z$ are all points on the unit sphere in $r$ dimensions, and since scaling does not effect the angle between two vectors, we have $\theta(x,y) = \theta(\V_{i,*}, \V_{j,*}^-)$. We know $\theta(x,y) = \pi/2$, and $\theta_{\min} = \theta(x,z)$, so the law of cosines gives $\cos(\theta(y,z) ) =  \frac{2 - \|y - z\|_2^2}{2 }$. We have $\|y - z\|_2  = \|(y - x) - (z - x)\|_2 \geq |\sqrt{2} - \|z - x\|_2|$. Again by the law of cosines, we have $\|z - x\|_2^2 = 2 - 2\cos(\theta_{\min})$. Since $\cos(x) \approx 1 - \Theta(x^2)$ for  small $x$ (consider the Taylor expansion), it follows that $\|z - x\|_2 \leq  c' \theta_{\min}$ for some constant $c'$. So $\|y - z\|_2^2 \geq 2 - 2\sqrt{2} \|z - x\|_2 + \|z - x\|_2^2 \geq 2 - c'' \theta_{\min} $ for another constant $c''$. It follows that $$\cos(\theta(y,z) ) \leq \frac{c'' \theta_{\min}}{2}$$
From which we obtain $\|\V_{j,*}^-\|_2 \geq 2/(c''\theta_{\min})$. It follows that $\sigma_{1}(\V^-) \geq \|e_{j,*}^T \V^-\|_2 = \|\V^-_{j,*}\|_2 \geq \frac{2}{c'' \theta_{\min}}$. Since the rows of $\V$ have unit norm, we have $\sigma_{1}(\V) \geq 1$, so $\kappa(\V) = \sigma_{1}(\V) \sigma_{1}(\V^-) \geq   \frac{2}{c'' \theta_{\min}}$
which is the desired result setting $c=\frac{2}{c''}$.
\end{proof}

\begin{lemma} \label{prop:uniquesign}
Fix any matrix $\V \in \R^{k \times d}$ with unit norm rows. Let $\X \in \R^{d \times \ell}$ be an i.i.d. Gaussian matrix for any $\ell \geq  t \poly(  k, \kappa)$, where $\kappa = \kappa(\V)$. For every pair $i,j \in [k]$ with $i \neq j$, with probability $1 - 1/\poly(\ell)$ there are at least $t$ coordinates $p \in [\ell]$ such that $(\V\X)_{i,p} < 0$ and $(\V\X)_{j,p} > 0$.
\end{lemma}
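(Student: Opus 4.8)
The plan is to reduce the statement to a one-dimensional orthant probability estimate, followed by a Chernoff concentration bound over the $\ell$ independent columns of $\X$. First I would fix a pair $i \neq j$ and a column index $p \in [\ell]$. Since $\X$ has i.i.d.\ $\mathcal{N}(0,1)$ entries and the rows of $\V$ are unit vectors, the pair $\big((\V\X)_{i,p}, (\V\X)_{j,p}\big)$ is a mean-zero bivariate Gaussian with unit variances and correlation $\rho_{ij} = \langle \V_{i,*}, \V_{j,*}\rangle$. I would then apply the classical formula for the orthant probability of a bivariate normal, $\Pr[(\V\X)_{i,p} < 0,\, (\V\X)_{j,p} > 0] = \frac{1}{4} - \frac{1}{2\pi}\arcsin \rho_{ij}$. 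Let $\theta_{ij} \in [0, \pi/2]$ denote the angle between the \emph{lines} spanned by $\V_{i,*}$ and $\V_{j,*}$, so $|\rho_{ij}| = \cos\theta_{ij}$. A short case split on the sign of $\rho_{ij}$ then gives $\Pr[(\V\X)_{i,p} < 0,\, (\V\X)_{j,p} > 0] \ge \frac{\theta_{ij}}{2\pi}$ in both cases: when $\rho_{ij} > 0$ we have $\arcsin\rho_{ij} = \frac{\pi}{2} - \theta_{ij}$, which makes the expression exactly $\frac{\theta_{ij}}{2\pi}$; and when $\rho_{ij} \le 0$ the probability is already at least $\frac{1}{4} \ge \frac{\theta_{ij}}{2\pi}$.

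Next I would pass from the angle to the condition number via Lemma \ref{prop:condition}: it gives $\kappa = \kappa(\V) > c / \theta_{\min}$ where $\theta_{\min} = \min_{i \neq j}\theta_{ij}$, hence $\theta_{ij} \ge \theta_{\min} > c/\kappa$ for every pair, and so the per-column event $B_p^{(ij)} = \{(\V\X)_{i,p} < 0,\, (\V\X)_{j,p} > 0\}$ has probability $q \ge \frac{c}{2\pi\kappa}$. Because the columns of $\X$ are independent, $N_{ij} := |\{p \in [\ell] : B_p^{(ij)} \text{ holds}\}|$ is a sum of $\ell$ i.i.d.\ Bernoulli$(q)$ variables with mean $\ell q \ge \frac{c\,\ell}{2\pi\kappa}$. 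Taking $\ell \ge t\cdot\poly(k,\kappa)$ with a sufficiently large hidden polynomial (including, say, a factor $\kappa^2$), we get simultaneously $\ell q \ge 2t$ and $\ell q = \Omega(\log \ell)$, so a multiplicative Chernoff bound yields $\Pr[N_{ij} < \ell q / 2] \le e^{-\ell q/8} \le 1/\poly(\ell)$, hence $N_{ij} \ge \ell q / 2 \ge t$ with probability $1 - 1/\poly(\ell)$. Finally I would union-bound over all $k(k-1)$ ordered pairs $(i,j)$; since the hypothesis forces $\ell \ge \poly(k)$, the overall failure probability remains $1/\poly(\ell)$, giving the claim.

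The one step that genuinely needs care is the orthant-probability estimate: one must distinguish the unsigned angle $\theta_{ij}$ between the two lines from the signed correlation $\rho_{ij}$ (which may be negative), and verify that the bound $\Pr \ge \theta_{ij}/(2\pi)$ remains valid in the regime $\rho_{ij} \to 1$, i.e.\ $\theta_{ij} \to 0$ — which is precisely the regime where $\kappa(\V)$ is large and the bound is tight. Everything after that is a routine Chernoff-and-union-bound argument, and all constants and the precise degree of $\poly(k,\kappa)$ can be fixed at the very end.
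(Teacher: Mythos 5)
Your proof is correct and follows essentially the same route as the paper's: both reduce to the per-column probability that a bivariate Gaussian lands in a wedge of angular width $\theta$, invoke Lemma~\ref{prop:condition} to lower bound this by $\Omega(1/\kappa)$, and finish with Chernoff plus a union bound over pairs. You are somewhat more careful than the paper in spelling out the orthant-probability formula and in separating the unsigned line angle from the signed correlation, but the argument is substantively identical.
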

\begin{proof}
 We claim that $\pr{ (\V\X)_{i,p} < 0 , (\V\X)_{j,p} > 0 } = \Omega(1/\kappa)$. To see this, Consider the $2$-dimensional subspace $H$ spanned by $\V_{i,*}$ and $\V_{j,*}$. Let $\theta$ be the angle between $\V_{i,*}$ and $\V_{j,*}$ in the plane $H$. Then the event in question is the event that a random Gaussian vector, when projection onto this plane $H$, lies between two vectors with angle $\theta$ between each other. By the rotational invariance and spherical symmetric of Gaussians (see, e.g. \cite{bryc2012normal}), this probability is $\frac{\theta}{2 \pi}$. 
 Since $\kappa(\V) > \frac{c}{\theta_{\min}}= \Omega(\frac{1}{\theta})$ by Lemma \ref{prop:condition}, it follows that a random gaussian splits $\V_{i,*}$ and $\V_{j,*}$ with probability $\Omega(1/\kappa)$ as desired.
 
Thus on each column $p$ of $f(\V\X)$, $f(\V_{i,*}\X_{*,p}) < 0$ and $f(\V_{j,*}\X_{*,p}) > 0$ with probability at least $\Omega(1/\kappa)$. Using the fact that the entries in separate columns of $\V\X$ are independent, by Chernoff bounds, with probability greater than $1-k^2\exp(\Omega(\ell/\kappa)) > 1 - 1/\poly(\ell)$, after union bounding over all $O(k^2)$ ordered pairs $i,j$, we have that $f(\V_{i,*}\X) < 0$ and $f(\V_{j,*}\X) > 0$ on at least $\Omega(\ell/\kappa) > t$ coordinates.

\end{proof}

\begin{lemma}\label{prop:pdf}
Let $\ZZ_i$ be the $i$-th column of $(\V\X)$, where $\V$ has rank $k$. Then the covariance of the coordinates of $\ZZ_i$ are given by the $k \times k$ posiitve definite covariance matrix $\V \V^T$, and the joint density function is given by:
\[ p(\ZZ_{i,1}, \dots, \ZZ_{i,k}) = \frac{\exp\big( - \frac{1}{2} \ZZ_i^T (\V\V^T)^{-1} \ZZ_i \big)}{\sqrt{(2 \pi)^k \det(\V\V^T) }}  \]
In particular, the joint probability density of any subset of entries of $\V\X$ is smooth and everywhere non-zero.
\end{lemma}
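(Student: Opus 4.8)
The plan is to observe that $\ZZ_i = \V \X_{*,i}$ is a fixed linear image of the standard Gaussian vector $\X_{*,i} \sim \mathcal{N}(0,\mathbb{I}_d)$, and then invoke the standard closure of Gaussians under linear maps. First I would record that $\ex{\ZZ_i} = \V\, \ex{\X_{*,i}} = 0$ and that the covariance is
\[
\ex{\ZZ_i \ZZ_i^T} = \V\, \ex{\X_{*,i}\X_{*,i}^T}\, \V^T = \V \mathbb{I}_d \V^T = \V\V^T,
\]
so the coordinates of $\ZZ_i$ have covariance matrix $\V\V^T \in \R^{k \times k}$ as claimed.

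Next I would verify that $\V\V^T$ is positive definite, which is where the rank-$k$ hypothesis is used: since $\V$ has full row rank $k$, the matrix $\V^T$ has full column rank, so for any nonzero $y \in \R^k$ we have $y^T \V\V^T y = \|\V^T y\|_2^2 > 0$. Hence $\V\V^T$ is invertible and the distribution of $\ZZ_i$ is a non-degenerate $k$-dimensional Gaussian, whose density is exactly
\[
p(\ZZ_{i,1},\dots,\ZZ_{i,k}) = \frac{\exp\!\big( -\tfrac{1}{2}\, \ZZ_i^T (\V\V^T)^{-1} \ZZ_i \big)}{\sqrt{(2\pi)^k \det(\V\V^T)}}.
\]
This can be seen directly: write $\V\V^T = \RR\RR^T$ for some invertible $\RR$ (e.g.\ Cholesky), note $\RR^{-1}\ZZ_i \sim \mathcal{N}(0,\mathbb{I}_k)$ by the covariance computation above, and apply the change-of-variables formula with Jacobian $\det(\RR)^{-1} = \det(\V\V^T)^{-1/2}$.

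Finally, for the "in particular" clause I would use that the marginal of a jointly Gaussian vector on any subset $S \subseteq [k]$ of coordinates is again Gaussian, with covariance equal to the principal submatrix $(\V\V^T)_S$ of $\V\V^T$ indexed by $S$; this submatrix is again positive definite (a principal submatrix of a positive definite matrix is positive definite), so the marginal density has the same non-degenerate Gaussian form, which is $C^\infty$ on all of $\R^{|S|}$ and strictly positive everywhere. I do not anticipate a genuine obstacle here — the only point requiring care is making explicit that the rank-$k$ assumption is precisely what guarantees positive definiteness (and hence that the inverse $(\V\V^T)^{-1}$ and the density formula make sense), and correctly invoking the marginalization property of multivariate Gaussians rather than re-deriving it.
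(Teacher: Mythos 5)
Your derivation of the covariance, the positive definiteness of $\V\V^T$ from the full row rank hypothesis, and the density formula via Cholesky and change of variables is correct and follows the same route as the paper (the paper simply cites a reference for the density formula where you re-derive it). However, there is a small but genuine gap in your treatment of the ``in particular'' clause. The statement asserts smoothness and strict positivity of the joint density of \emph{any subset of entries of the $k\times n$ matrix $\V\X$}, not merely of a subset of coordinates of a single column $\ZZ_i$. Your argument marginalizes the $k$-dimensional Gaussian $\ZZ_i$ over $S\subseteq[k]$, so it only covers subsets lying within one column. A subset of entries spanning several columns is not addressed by marginalizing a single $\ZZ_i$. The missing ingredient is exactly what the paper invokes: distinct columns of $\V\X$ are statistically independent, since they are images of disjoint, independent columns of $\X$. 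With that observation, the joint density of any collection of entries factors as a product over the columns involved, each factor being a nondegenerate Gaussian marginal of the form you derived, so the product is $C^\infty$ and strictly positive. Equivalently, one can note that $\mathrm{vec}(\V\X)$ is itself a $kn$-dimensional Gaussian with covariance $\mathbb{I}_n \otimes \V\V^T$, which is positive definite, and then apply your marginalization argument to that larger vector.
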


\begin{proof}
    Since $\ZZ_i = \V(\X^T_i)^T$, where $\X^T_i$ are i.i.d. normal random variables, t is well known that the covariance is given by $\V\V^T$ \cite{gut2009intermediate}, which is positive definite since $\V$ has full row rank. These are sufficient conditions (\cite{UIUCLecture}) for the pdf to be given in the form as stated in the Proposition. Since distinct columns of $\V\X$ are statistically independent (as they are generated by separate columns of $\X$), the last statement of the proposition follows.   
    
\end{proof}

The following Lemma demonstrates that the the only vectors in the row span of $f(\V^*\X)$ with the same sign and sparsity pattern as $f(\V^*\X)_{i,*}$, for any given row $i$, are positive scalings of $f(\V^*\X)_{i,*}$. Recall that a sparsity pattern $S \subseteq [n]$ of a vector $y \in \R^n$ is just set of coordinates $i \in S$ such that $y_i > 0$. 

\begin{lemma}\label{lem:uniquesign}Let $\X \in \R^{d \times \ell}$ be an i.i.d. Gaussian matrix for any $\ell > t \poly(  k, \kappa(\V^*))$.
Let $S$ be the sparsity pattern of a fixed row $f(\V^*\X)_{i,*}$, and let $\emptyset \subsetneq S' \subseteq S$. Then w.h.p. (in $t$), the only vectors in the row span of $f(\V^*\X)$ with sparsity pattern $S'$, if any exist, are non-zero scalar multiples of $f(\V^*\X)_{i,*}$.
\end{lemma}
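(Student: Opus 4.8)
The plan is to parametrize the row span of $f(\V^*\X)$ by coefficient vectors: every element of the row span has the form $c^T f(\V^*\X)$ for some $c \in \R^k$, and I will show that whenever $c \neq 0$ and $c^T f(\V^*\X)$ is supported inside $S = \mathrm{supp}(f(\V^*\X)_{i,*})$, the support of $c$ must be exactly $\{i\}$. This immediately gives the lemma: such a vector then equals $c_i f(\V^*\X)_{i,*}$ with $c_i \neq 0$, a non-zero scalar multiple of row $i$ (whose support is all of $S$, so in particular no vector in the row span has support a proper subset $S' \subsetneq S$). As a preliminary, let $P_{-i} := \{p \in [\ell] : (\V^*\X)_{i,p} < 0\}$; since each column independently lands on the negative side of the hyperplane $\{x : \langle\V^*_i,x\rangle=0\}$ with probability $1/2$, a Chernoff bound gives $|P_{-i}| = \Omega(\ell)$ with probability $1 - \exp(-\Omega(\ell))$. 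Note $P_{-i}$ is disjoint from $S \supseteq S'$, since $S = \{p : (\V^*\X)_{i,p}>0\}$.

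The heart of the argument is the following claim: with probability $1 - 1/\poly(t)$, for \emph{every} non-empty $R \subseteq [k]\setminus\{i\}$ the $|R|\times|P_{-i}|$ matrix $f(\V^*_R\,\X_{*,P_{-i}})$ (rows of $\V^*$ indexed by $R$, columns of $\X$ indexed by $P_{-i}$) has full row rank $|R|$. Granting this, fix $c \neq 0$ with $T := \mathrm{supp}(c)$ and suppose $c^T f(\V^*\X)$ is supported in $S$. For any $p \in P_{-i}$ we have $(\V^*\X)_{i,p} < 0$, hence $f((\V^*\X)_{i,p}) = 0$, so $(c^T f(\V^*\X))_p = \sum_{j \in T\setminus\{i\}} c_j\, f((\V^*\X)_{j,p})$; since $p \notin S \supseteq S'$, this coordinate must be $0$. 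Thus $(c_{T\setminus\{i\}})^T f(\V^*_{T\setminus\{i\}}\,\X_{*,P_{-i}}) = 0$. If $T\setminus\{i\}$ were non-empty, then $c_{T\setminus\{i\}}$ is a non-zero vector in the left kernel of a full-row-rank matrix, a contradiction; hence $T \subseteq \{i\}$, and $c \neq 0$ forces $T = \{i\}$, which completes the argument.

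To prove the claim I would fix $R$, show $f(\V^*_R\,\X_{*,P_{-i}})$ has full row rank with probability $1 - \exp(-\Omega(\ell/\kappa))$, and then union bound over the $2^{k-1}$ choices of $R$ together with the event $|P_{-i}| = \Omega(\ell)$; since $\ell \geq t\,\poly(k,\kappa)$ we get $2^k\exp(-\Omega(\ell/\kappa)) \leq 1/\poly(t)$. For the fixed-$R$ statement, observe that conditioned on the set $P_{-i}$, the columns $\{\X_{*,p}: p\in P_{-i}\}$ are i.i.d.\ draws from $\mathcal{D}_i := \mathcal{N}(0,\mathbb{I}_d)$ conditioned on the open halfspace $H_i = \{x : \langle\V^*_i,x\rangle<0\}$ (the event $p\in P_{-i}$ depends only on column $p$). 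This is exactly the setting of Lemma \ref{prop:random} applied with weight matrix $\V^*_R$ and input distribution $\mathcal{D}_i$, modulo three checks: (i) $\V^*_R$ has rank $|R|$, since $\V^*$ is rank $k$ and $|R|\le k-1$; (ii) the pushforward of $\mathcal{D}_i$ under $\V^*_R$ has positive density on all of $\R^{|R|}$ — here one uses that $\ker\V^*_R$ has dimension $d-|R|\ge 1$ and is not contained in the hyperplane orthogonal to $\V^*_i$ (because $\V^*_i\notin\mathrm{span}\{\V^*_j : j\in R\}$ by full rank), so $\V^*_R$ maps $H_i$ onto $\R^{|R|}$ and the conditional density remains positive on every fiber; and (iii) for each $j\in R$, $\Pr_{x\sim\mathcal{D}_i}[\langle\V^*_j,x\rangle>0] = \Omega(1/\kappa)$ — which one gets by projecting a standard Gaussian onto the $2$-plane $\mathrm{span}\{\V^*_i,\V^*_j\}$, evaluating the wedge probability $\tfrac{\pi-\theta_{ij}}{2\pi}$ divided by $\tfrac12$, and lower bounding $\pi-\theta_{ij} \ge \theta_{\min} = \Omega(1/\kappa)$ via Lemma \ref{prop:condition} (this is the only place the hypothesis of Lemma \ref{prop:random} on $\inf_v \Pr[\langle v,x\rangle>0]$ is used in its proof, namely for $v$ ranging over rows of the weight matrix). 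With (i)--(iii), the proof of Lemma \ref{prop:random} applies verbatim: a Chernoff bound guarantees each row of $\V^*_R\,\X_{*,P_{-i}}$ is positive on enough columns to admit a system of distinct representatives, and then the determinant of the resulting $|R|\times|R|$ submatrix is a non-zero polynomial in the surviving Gaussian entries — whose joint density is everywhere positive on the relevant orthant by (an adaptation of) Lemma \ref{prop:pdf} — hence non-zero almost surely.

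The main obstacle is quantitative: to survive the $2^{k-1}$-way union bound over subsets $R$ while keeping the sample size $\ell$ only polynomial in $k$ and $\kappa$, one cannot use the bare ``positive-density'' version of the full-rank argument and must instead push through the Chernoff-plus-distinct-representatives version with failure probability exponentially small in $\ell/\poly(k,\kappa)$; this in turn requires being careful that conditioning on $P_{-i}$ preserves both the independence of the relevant columns and the positivity of the conditional density $\mathcal{D}_i$ in the directions $\V^*_j$, $j\in R$, which is precisely what checks (ii) and (iii) above secure.
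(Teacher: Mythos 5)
Your proof is correct and takes a genuinely different, cleaner route than the paper's, though both share the same final non-vanishing-determinant kernel. The paper also works on coordinates outside $S$, but it selects them dynamically: it invokes Proposition~\ref{prop:uniquesign} to extract $O(k^2)$ columns on which a hypothetical bad coefficient vector $w$ must vanish, then runs an iterative row-and-column pruning of the restricted matrix until every surviving row has at least $k$ non-zero entries (so a system of distinct representatives exists), and only then applies the positive-density determinant argument. You instead fix once and for all the halfspace set $P_{-i}$ of $\Omega(\ell)$ columns where row $i$ vanishes, prove full row rank of $f(\V^*_R\X_{*,P_{-i}})$ uniformly over all $R\subseteq[k]\setminus\{i\}$, and conclude by a trivial left-kernel argument — eliminating the pruning step (each row automatically has $\gtrsim\ell/\kappa\gg k$ non-zero entries in $P_{-i}$, so Hall's condition is free) and making explicit the $2^k$-way union bound that is only implicit in the paper's treatment. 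The one genuinely delicate point you correctly identify and resolve is that $\mathcal{D}_i$ does \emph{not} have full support in $\R^d$, so Lemma~\ref{prop:random} does not apply literally; your check~(ii), showing the pushforward under $\V^*_R$ still has positive density on all of $\R^{|R|}$, is exactly the property the paper's proof actually uses, and it holds because $\ker\V^*_R\not\subseteq(\V^*_i)^\perp$. One small correction to your ``main obstacle'' remark: the positive-density determinant step contributes failure probability exactly zero (not merely exponentially small), so it survives any finite union bound without care; the only quantitative part is the Chernoff step, and there $2^{k}e^{-\Omega(\ell/\kappa)}$ is small once $\ell\ge t\,\poly(k,\kappa)$, as you note. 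Also a cosmetic slip: the conditional wedge probability is $\theta_{ij}/\pi$ (with $\theta_{ij}$ the angle between the vectors), not $(\pi-\theta_{ij})/\pi$, but since Lemma~\ref{prop:condition} bounds the angle between the \emph{lines}, both $\theta_{ij}$ and $\pi-\theta_{ij}$ are $\Omega(1/\kappa)$ and the bound goes through.
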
\begin{proof}
	Suppose  $\ZZ = w f(\V^*\X)$ had sparsity pattern $S'$  and was not a scaling of $f(\V^* \X)_{i,*}$. Then $w$ is not $1$-sparse, since otherwise it would be a scaling of a another row of $f(\V^*\X)$, and by Proposition \ref{prop:uniquesign} no row's sparsity pattern is contained within any other row's sparsity pattern. Let $\W$ be $f(\V^* \X)$ restricted to the rows corresponding to the non-zero coordinates in $w$, and write $\ZZ = w \W$ (where now $w$ has also been restricted to the appropriate coordinates). Since $\W$ has at least $2$ rows, and since the sparsity pattern of $w\W$ is contained within the sparsity pattern of $f(\V^* \X)_{i,*}$, by Proposition \ref{prop:uniquesign}, taking $t = 10k^2$, we know that there are at least $10k^2$ non-zero columns of $\W$ for which $w\W$ is $0$, so let $\W'$ be the submatrix of all such columns.

	Now for each row $\W_i'$ of $\W'$ with less than $k$ non-zero entries, remove this row $\W_i'$ and also remove all columns of $\W'$ where $\W_i'$ was non-zero. Continue to do this removal iteratively until we obtain a new matrix $\W''$ where now every row has at least $k$ non-zero entries. 
	Observe that the resulting matrix $\W''$ has at least $9k^2$ columns. If there are no rows left, then since we only removed $k$ columns for every row removed, this means there were at least $9k^2$ columns of $\W'$ which contained only zeros, which is a contradiction since by construction the columns of $\W'$ were non-zero to begin with. So, let $k' \leq k$ be the number of rows remaining in $\W''$. Note that since the rows we removed were zero on the columns remaining in $\W''$, there must still be a vector $w'$, which in particular is $w$ restricted to the rows of $\W''$, which has no zero-valued entries and such that $w' \W'' = 0$. 
	
	Now observe once we obtain this matrix $\W''$, note that we have only conditioned on the sparsity pattern of the entries of $\W''$ (over the randomness of the Gaussians $\X$), but we have not conditioned on the values of the non-zero entries of $\W''$.  Note that this conditioning does not change the continuity of the joint distributions of the columns of $\W''$, since this conditioning is simply restricting the columns to the non-zero intersection of half spaces which define this sign pattern. Since the joint density function of the columns of $\V\X$ is non-zero on all of $\R^k$ by Lemma \ref{prop:pdf}, it follows that, after conditioning, any open set in this intersection of half spaces which defines the sparsity pattern of $\W''$ has non-zero probability measure with respects to the joint density function.

 Given this, the argument now proceeds as in Lemma \ref{prop:random}. Since each row of $\W''$ has at least $k$ non-zero entries, we can find a square 
	matrix $\W^\dagger \in \R^{k' \times k'}$ obtained by a taking a subset of $k' < 9k^2$ 
	columns of $\W''$ and permuting them such that the diagonal of $\W^\dagger$ has a non-zero 
	sign pattern. After conditioning on the sign pattern so that the diagonal is
	non-zero, the determinant $\det(\W^\dagger)$ of $\W^\dagger$ is a non-zero polynomial in $s$ 
	random variables with $k' \leq s \leq (k')^2$. By Lemma \ref{prop:pdf}, the 
	joint density function of these $s$ variables is absolutely continuous and everywhere
	non-zero on the domain. Here the domain $\Omega$ is the intersection of half spaces 
	given by the sign pattern conditioning.  
	
	Since $\Omega$ is non-empty, it has
	unbounded Lebesgue measure in $\R^s$. Since $\det(\W^\dagger)$ is a non-zero polynomial in 
	$s$ real variables, it is well known that $\det(\W^\dagger)$ cannot vanish on any non-empty
	open set in $\R^s$ (see e.g. Theorem 2.6 of \cite{ConradNotes}, and note the
	subsequent remark on replacing $\C^s$ with $\R^s)$. It follows that the set of zeros 
	of $\det(\W^\dagger)$ contain no open set of $\R^s$, and thus has Lesbegue measure $0$ in $\Omega$. Integrating the joint pdf of the $s$ random variables over this subset of measure $0$, we conclude that the probability that the realization of the random variables is in this 
	set is $0$.  So the matrix $\W''$ has rank $k'$, and so 
	$w' \W'' = 0$ is impossible, a contradiction. It follows that $\ZZ$ is a scaling of a row of $f(\V^* \X)$ as needed.
\end{proof}

We will now need the following perturbation bounds for the pseudo-inverse of matrices.
\begin{proposition}[Theorem 1.1 \cite{MENG2010956}]\label{prop:psuedoinverse}
Let $\BB^\dagger$ denote the Moore–Penrose Pseudo-inverse of $\BB$, and let $\|\BB\|_2$ denote the operator norm of $\BB$. Then for any $\E$ we have 
\[  \|(\BB + \E)^\dagger - \BB^\dagger \|_F \leq \sqrt{2} \max \left\{\|\BB^\dagger\|_2^2, \|(\BB+\E)^\dagger\|_2^2 \right\}   \|\E\|_F \]  
\end{proposition}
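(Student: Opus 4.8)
The plan is to reduce this estimate to a single exact perturbation identity for the pseudo-inverse together with two elementary facts: an orthogonal projection never increases the Frobenius norm, and mutually Frobenius-orthogonal matrices obey the Pythagorean identity $\|\sum_i \mathbf{M}_i\|_F^2 = \sum_i \|\mathbf{M}_i\|_F^2$. Write $\widetilde{\BB} = \BB + \E$, let $\mathbf{P} = \BB\BB^\dagger$ be the orthogonal projector onto the column space of $\BB$, and let $\widetilde{\mathbf{Q}} = \widetilde{\BB}^\dagger\widetilde{\BB}$ be the projector onto the row space of $\widetilde{\BB}$. The first step is the classical perturbation identity (Wedin): with $\T_1,\T_2,\T_3$ denoting the three summands on the right in order,
\[
\widetilde{\BB}^\dagger - \BB^\dagger = -\,\widetilde{\BB}^\dagger \E\, \BB^\dagger \;+\; \widetilde{\BB}^\dagger (\widetilde{\BB}^\dagger)^{T} \E^{T} (\mathbb{I}-\mathbf{P}) \;+\; (\mathbb{I}-\widetilde{\mathbf{Q}})\, \E^{T} (\BB^\dagger)^{T}\BB^\dagger .
\]
If one prefers not to quote this, it can be verified directly by expanding each $\dagger$ through a thin SVD of $\BB$ and of $\widetilde{\BB}$ and checking the identity on the four fundamental subspaces.

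The second step is to observe that $\T_1,\T_2,\T_3$ are pairwise orthogonal under the Frobenius inner product. The column spaces of $\T_1$ and $\T_2$ both lie in $\textrm{range}(\widetilde{\BB}^\dagger)$, i.e. the row space of $\widetilde{\BB}$, while the column space of $\T_3$ lies in its orthogonal complement; and the row space of $\T_1$ lies in $\textrm{range}(\BB)$ while the row space of $\T_2$ lies in $\textrm{range}(\BB)^{\perp}$. Hence $\T_3^{T}\T_1 = \T_3^{T}\T_2 = \T_1\T_2^{T} = 0$, so $\mathrm{tr}(\T_i^{T}\T_j) = 0$ for $i \neq j$, and therefore
\[
\|\widetilde{\BB}^\dagger - \BB^\dagger\|_F^2 = \|\T_1\|_F^2 + \|\T_2\|_F^2 + \|\T_3\|_F^2 .
\]

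The third step bounds the three pieces, and is where the constant $\sqrt2$ is obtained. Submultiplicativity gives $\|\T_1\|_F \le \|\widetilde{\BB}^\dagger\|_2\,\|\BB^\dagger\|_2\,\|\E\|_F$. For $\T_2$, since $\|\widetilde{\BB}^\dagger (\widetilde{\BB}^\dagger)^{T}\|_2 = \|\widetilde{\BB}^\dagger\|_2^2$ and $\mathbb{I}-\mathbf{P}$ is an orthogonal projection, $\|\T_2\|_F \le \|\widetilde{\BB}^\dagger\|_2^2\,\|(\mathbb{I}-\mathbf{P})\E\|_F$. For $\T_3$, the trick is to insert $\mathbf{P}$: the matrix $(\BB\BB^{T})^\dagger = (\BB^\dagger)^{T}\BB^\dagger$ has column space $\textrm{range}(\BB)$, so $(\BB\BB^{T})^\dagger = \mathbf{P}(\BB\BB^{T})^\dagger$, whence $\T_3 = (\mathbb{I}-\widetilde{\mathbf{Q}})\E^{T}\mathbf{P}(\BB\BB^{T})^\dagger$ and $\|\T_3\|_F \le \|\BB^\dagger\|_2^2\,\|\mathbf{P}\E\|_F$. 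Put $M = \max\{\|\BB^\dagger\|_2, \|\widetilde{\BB}^\dagger\|_2\}$; combining the three bounds and using $\|(\mathbb{I}-\mathbf{P})\E\|_F^2 + \|\mathbf{P}\E\|_F^2 = \|\E\|_F^2$ yields
\[
\|\widetilde{\BB}^\dagger - \BB^\dagger\|_F^2 \le M^4\|\E\|_F^2 + M^4\big(\|(\mathbb{I}-\mathbf{P})\E\|_F^2 + \|\mathbf{P}\E\|_F^2\big) = 2M^4\|\E\|_F^2 ,
\]
and taking square roots gives the claim, since $M^2 = \max\{\|\BB^\dagger\|_2^2, \|\widetilde{\BB}^\dagger\|_2^2\}$.

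I expect the only real obstacle to be the first step: writing Wedin's identity in a usable form and keeping track of which of the four fundamental subspaces of $\BB$ and $\widetilde{\BB}$ each factor inhabits, since the exact orthogonality in step two rests on that bookkeeping. A secondary point of care is the projector insertion for $\T_3$ (and the parallel remark that $\widetilde{\BB}^\dagger(\widetilde{\BB}^\dagger)^{T}$ already carries the right column space for $\T_2$): bounding $\T_2$ and $\T_3$ independently only gives the weaker constant $\sqrt3$, and it is precisely by routing both contributions through the single projector $\mathbf{P}$ that the Pythagorean cancellation collapses them to $M^4\|\E\|_F^2$ and produces the sharp $\sqrt2$.
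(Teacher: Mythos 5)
The paper does not prove this statement; it invokes it as Theorem~1.1 of the cited reference \cite{MENG2010956}. Your proof supplies a self-contained derivation, and it is correct. I checked the three ingredients. First, the decomposition
\[
\widetilde{\BB}^\dagger - \BB^\dagger = -\widetilde{\BB}^\dagger \E \BB^\dagger + \widetilde{\BB}^\dagger (\widetilde{\BB}^\dagger)^T \E^T (\mathbb{I}-\P) + (\mathbb{I}-\widetilde{\Q}) \E^T (\BB^\dagger)^T \BB^\dagger
\]
is indeed the Wedin identity; it drops out of splitting $\widetilde{\BB}^\dagger - \BB^\dagger$ by the four projector pairs $\{\widetilde{\Q}, \mathbb{I}-\widetilde{\Q}\}$ on the left and $\{\P, \mathbb{I}-\P\}$ on the right, using $(\mathbb{I}-\widetilde{\Q})\widetilde{\BB}^\dagger = 0$, $\BB^\dagger(\mathbb{I}-\P) = 0$, $\BB^T(\mathbb{I}-\P)=0$, and $(\mathbb{I}-\widetilde{\Q})\widetilde{\BB}^T = 0$. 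Second, the pairwise Frobenius orthogonality is correct: $\T_1\T_2^T = 0$ follows from $\BB^\dagger(\mathbb{I}-\P)=0$, and $\T_3^T\T_1 = \T_3^T\T_2 = 0$ both follow from $(\mathbb{I}-\widetilde{\Q})\widetilde{\BB}^\dagger = 0$; in each case $\mathrm{tr}(\T_i^T\T_j)=0$ follows. Third, the bounds on the three pieces, the identity $(\BB^\dagger)^T\BB^\dagger = (\BB\BB^T)^\dagger$ (clear from the thin SVD), the projector insertion in $\T_3$, and the Pythagorean collapse of $\|\P\E\|_F^2 + \|(\mathbb{I}-\P)\E\|_F^2 = \|\E\|_F^2$ are all right and yield exactly the claimed $\sqrt{2}\,\max\{\|\BB^\dagger\|_2^2, \|(\BB+\E)^\dagger\|_2^2\}\,\|\E\|_F$. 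This is, as far as I can tell, essentially the argument of the cited reference itself; what you gain by writing it out is that the Proposition no longer rests on an external black box, and the role of the Frobenius-orthogonality of the three Wedin terms in producing the sharp constant becomes visible.
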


We prove the following corollary which will be useful to us.

\begin{corollary}\label{cor:psuedobound}
For any $\BB,\E$ and $\frac{1}{4} \geq \eps > 0$ with $\|\BB \|_2 \geq 1$, $\|\E\|_F \leq \frac{\eps}{\kappa^2}$ and where $\kappa = \kappa(\BB)$ is the condition number of $\BB$. Then we have 
\[  \|(\BB + \E)^\dagger - \BB^\dagger \|_F \leq O(\eps)\]  
and moreover, if $\BB$ has full column rank, then
\[     \| (\BB + \E)^\dagger \BB - \mathbb{I} \|_F \leq O(\|\BB\|_2\eps)\]
\end{corollary}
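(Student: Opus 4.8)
The plan is to apply Proposition \ref{prop:psuedoinverse} directly, after controlling the two quantities $\|\BB^\dagger\|_2$ and $\|(\BB+\E)^\dagger\|_2$ that appear in the bound. First I would recall that $\|\BB^\dagger\|_2 = 1/\sigma_{\min}(\BB)$, and since $\kappa = \kappa(\BB) = \sigma_{\max}(\BB)/\sigma_{\min}(\BB)$ and $\|\BB\|_2 = \sigma_{\max}(\BB) \geq 1$, we get $\|\BB^\dagger\|_2 = \kappa/\sigma_{\max}(\BB) \leq \kappa$. For the perturbed inverse, I would use Weyl's inequality: $\sigma_{\min}(\BB+\E) \geq \sigma_{\min}(\BB) - \|\E\|_2 \geq \sigma_{\min}(\BB) - \|\E\|_F$. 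Since $\sigma_{\min}(\BB) = \sigma_{\max}(\BB)/\kappa \geq 1/\kappa$ and $\|\E\|_F \leq \eps/\kappa^2 \leq \frac{1}{4\kappa} \leq \frac12 \sigma_{\min}(\BB)$ (using $\eps \leq 1/4$ and $\kappa \geq 1$), we conclude $\sigma_{\min}(\BB+\E) \geq \frac12 \sigma_{\min}(\BB) \geq \frac{1}{2\kappa}$, hence $\|(\BB+\E)^\dagger\|_2 \leq 2\kappa$.

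Plugging these into Proposition \ref{prop:psuedoinverse} gives
\[
\|(\BB+\E)^\dagger - \BB^\dagger\|_F \leq \sqrt{2}\,\max\{\kappa^2, 4\kappa^2\}\,\|\E\|_F = 4\sqrt{2}\,\kappa^2 \cdot \|\E\|_F \leq 4\sqrt{2}\,\kappa^2 \cdot \frac{\eps}{\kappa^2} = O(\eps),
\]
which is the first claim. For the second claim, when $\BB$ has full column rank we have $\BB^\dagger \BB = \mathbb{I}$, so
\[
\|(\BB+\E)^\dagger \BB - \mathbb{I}\|_F = \|(\BB+\E)^\dagger \BB - \BB^\dagger \BB\|_F = \|\big((\BB+\E)^\dagger - \BB^\dagger\big)\BB\|_F \leq \|(\BB+\E)^\dagger - \BB^\dagger\|_F \,\|\BB\|_2,
\]
and combining with the first bound yields $O(\|\BB\|_2 \eps)$, as desired.

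There is no serious obstacle here; the only point requiring a little care is the chain of inequalities showing $\sigma_{\min}(\BB+\E) \geq \frac12 \sigma_{\min}(\BB)$, which relies on using $\|\BB\|_2 \geq 1$ to lower bound $\sigma_{\min}(\BB)$ by $1/\kappa$ and then on the hypotheses $\|\E\|_F \leq \eps/\kappa^2$ and $\eps \leq 1/4$ to ensure the perturbation is at most half of $\sigma_{\min}(\BB)$; once that is in place, the rest is a direct substitution into the cited perturbation bound and submultiplicativity of the Frobenius/operator norms.
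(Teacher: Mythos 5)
Your proof is correct and matches the paper's argument: both apply Proposition \ref{prop:psuedoinverse}, bound $\|\BB^\dagger\|_2 \leq \kappa$ using $\|\BB\|_2 \geq 1$, bound $\|(\BB+\E)^\dagger\|_2 \leq 2\kappa$ via a singular-value perturbation inequality (you cite Weyl, the paper cites Courant--Fischer, which is the same estimate), and finish the second claim by factoring out $\BB$ and applying submultiplicativity. The only cosmetic difference is that you bound both arguments of the $\max$ up front, whereas the paper handles the two cases of the $\max$ separately; the content is identical.
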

\begin{proof}
  We have $\|(\BB + \E)^\dagger - \BB^\dagger \|_F  \leq \max \left\{\|\BB^\dagger\|_2^2, \|(\BB+\E)^\dagger\|_2^2 \right\}  \frac{O(\eps)}{\kappa^2}$ by applying Proposition \ref{prop:psuedoinverse}. In the first case, this is at most $ \frac{1}{\sigma^2_{\min}(\BB) } \frac{O(\eps)}{\kappa^2} = O(\eps)$ as stated. Here we used the fact that $\|\BB\|_2 = \sigma_{\max}(\BB) \geq 1$, so $1/ \sigma_{\min}(\BB) \leq \kappa$.   In the second case of the $\max$, we have $\|(\BB + \E)^\dagger - \BB^\dagger \|_F \leq \|(\BB+\E)^\dagger\|_2^2  \frac{O(\eps)}{\kappa^2} = \sigma_{\min}^{-2}(\BB + \E) \frac{O(\eps)}{\kappa^2} $. By the Courant-Fisher theorem \footnote{See \url{https://en.wikipedia.org/wiki/Min-max_theorem} }, using that $\|\E\|_2 \leq \|\E\|_F \leq 1/(4\kappa)$, we have
  \[    \sigma_{\min}(\BB + \E) \geq \inf_{x : \|x\|_2=1} \| x(\BB + \E) \|_2 \geq\inf_{x : \|x\|_2=1} \big|\;  \|x \BB\|_2 -  \|x \E\|_2 \; \big| \]
  \[ \geq \sigma_{\min}(\BB) - \frac{1}{4 \kappa}  \geq \sigma_{\min}(\BB)/2  \geq  1/(2\kappa) \]
  where the minimum is taken over vectors $x$ with the appropriate dimensions.  Thus in both cases, we have $\|(\BB + \E)^\dagger - \BB^\dagger \|_F \leq O(\eps)$, so
  \[  \| (\BB + \E)^\dagger \BB - \mathbb{I} \|_F = \|( (\BB + \E)^\dagger - \BB^\dagger)\BB \|_F\leq\|\BB\|_2 O(\eps) \]
\end{proof}

\noindent
We now are ready to complete the proof of the correctness of Algorithm \ref{alg:overall_exact_fpt} 

\begin{theorem}\label{thm:recoverVgivenU}(Exact Recovery for Orthonormal $\V^*$.)
Given $\AA = \U^* f(\V^*\X)$, for rank $k$-matrices $\U^* \in \R^{m \times k}, \V^* \in \R^{k \times d}$ where $\V^*$ is orthonormal and $\X \in \R^{d \times n}$ which is i.i.d. Gaussian with $n = \poly ( d,k,m,\kappa(\U^*),\kappa(\V^*) )$, there is a $\poly(n)$-time algorithm which recovers $\U^*,\V^*$ exactly with probability $1-\frac{1}{\poly(d,m,k)}$.
\end{theorem}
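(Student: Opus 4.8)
The plan is to verify the correctness of Algorithm~\ref{alg:overall_exact_fpt} step by step, leaning on the lemmas already established. First, by the rank-preserving sketch (Lemma~\ref{lem:rank_preserving}), $\T\U^*$ is an invertible $k\times k$ matrix with $\kappa(\T\U^*)\le\poly(k,m)\,\kappa(\U^*)$ w.h.p.; by Corollary~\ref{cor:ICA} the ICA routine returns $\widehat{\T\U}$ with $\|\widehat{\T\U}-\T\U^*\mathbf{\Pi}\D\|_F\le\epsilon$ for any inverse polynomial $\epsilon$ of our choosing, where $\mathbf{\Pi}$ is a permutation and $\D$ a strictly positive diagonal matrix. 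I would absorb $\mathbf{\Pi}$ into a relabeling of the rows of $\V^*$ (the recovery guarantee is only up to this inherent permutation) and absorb $\D$ by implicitly replacing $\V^*$ with $\D^{-1}\V^*$, which is legitimate since $f(\D^{-1}\V^*\X)=\D^{-1}f(\V^*\X)$ for the ReLU and positive $\D^{-1}$, and we renormalize rows to unit length only at the very end; this keeps every row of $f(\V^*\overline{\X})$ nonnegative, which is all the later steps use. Since $\widehat{\T\U}$ is within $\epsilon$ of an invertible matrix of bounded condition number, it is itself invertible with comparable condition number (via Corollary~\ref{cor:psuedobound}, or after a $1/\poly(n)$ perturbation), so $(\widehat{\T\U})^{-1}$ is computed exactly by Gaussian elimination.

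Second, I would show the thresholding step in Algorithm~\ref{alg:overall_exact_fpt} recovers the \emph{exact} sparsity pattern of $f(\V^*\overline{\X})$. Writing $(\widehat{\T\U})^{-1}\T\overline{\AA}=f(\V^*\overline{\X})+\E'$, the pseudo-inverse perturbation bound of Corollary~\ref{cor:psuedobound} gives $\|\E'\|_F\le O\!\big(\epsilon\,\kappa(\T\U^*)^2\,\|f(\V^*\overline{\X})\|_2\big)$; since $\overline{\X}$ has only $\ell=\poly(d,m,k,\kappa(\U^*),\kappa(\V^*))$ columns, a union bound over its $k\ell$ entries (each a standard Gaussian because $\V^*$ is orthonormal) shows $\|f(\V^*\overline{\X})\|_2\le\poly(\ell)$ and, crucially, that every nonzero entry of $f(\V^*\overline{\X})$ has magnitude at least $1/\poly(\ell)$, all w.h.p. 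Choosing $\epsilon$ a small enough inverse polynomial forces $\|\E'\|_F<\tau=1/\poly(\ell)$ with $\tau$ below that entrywise lower bound, so rounding entries $\le\tau$ to $0$ produces exactly the sparsity pattern $S_j$ of each row $f(\V^*\overline{\X})_{j,*}$.

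Third, I would invoke Lemma~\ref{lem:uniquesign} (whose hypothesis $\ell>t\,\poly(k,\kappa(\V^*))$ is met): the only vectors in the row span of $\overline{\AA}$ --- equivalently of $f(\V^*\overline{\X})$, since $\U^*$ has rank $k$ --- whose support is contained in $S_j$ are scalar multiples of $f(\V^*\overline{\X})_{j,*}$. Hence, for the right choice of pinned coordinate $r$, the linear system in Step~5 has $f(\V^*\overline{\X})_{j,*}$ (up to a nonzero scalar) as its unique solution; the sign flip in Step~7 makes the scalar positive, so $\W=\D'f(\V^*\overline{\X})$ for a positive diagonal $\D'$ (up to the fixed relabeling). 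On the support $S_i$, $\W_{i,*}|_{S_i}=\D'_{ii}(\V^*_{i,*}\overline{\X})|_{S_i}$ because $f$ is the identity there, so the system $(\W_{i,*})_{S_i}=\V_{i,*}\overline{\X}_{S_i}$ has solution $\D'_{ii}\V^*_{i,*}$, and this solution is unique provided the columns of $\overline{\X}$ indexed by $S_i$ span $\R^d$. That spanning holds almost surely: $|S_i|=\Omega(\ell)\ge d$ w.h.p., and conditioned on their signs these columns retain a density positive on an open subset of the defining half-space, so a Gram-determinant argument of the type used in Lemma~\ref{prop:random} (a nonzero polynomial cannot vanish on an open set) gives full column rank with probability $1$. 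Normalizing the recovered rows to unit length yields $\V=\V^*$ exactly (up to the fixed permutation), and finally solving $\AA=\U f(\V\X)$ by Gaussian elimination recovers $\U=\U^*$ exactly, using that $f(\V^*\X)$ has rank $k$ w.h.p. (same genericity argument). Union bounding all the w.h.p. events gives the claimed $1-1/\poly(d,m,k)$ success probability, and each step runs in $\poly(n)$ time.

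The main obstacle, I expect, is the quantitative chaining in the second step: propagating the inverse-polynomial ICA error $\epsilon$ through multiplication by $(\widehat{\T\U})^{-1}$ --- whose condition number carries factors of $\kappa(\U^*)$, $k$, $m$ from Lemma~\ref{lem:rank_preserving} --- and through the scale of $\overline{\AA}$, while simultaneously establishing the anti-concentration bound $1/\poly(\ell)$ on the nonzero Gaussian entries, so that the \emph{net} error lands strictly below $\tau$ and the sparsity pattern is recovered with no mistakes. Everything downstream (Lemma~\ref{lem:uniquesign} and the genericity/full-rank arguments) is essentially bookkeeping, but this error budget must be balanced carefully, which is why $\ell$ and $n$ are taken to be large polynomials in $d,m,k,\kappa(\U^*),\kappa(\V^*)$.
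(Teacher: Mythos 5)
Your proposal is correct and follows essentially the same route as the paper's proof: sketch by $\T$, run ICA, absorb $\mathbf{\Pi}$ and $\D$, use Corollary~\ref{cor:psuedobound} to control the error after left-multiplying by $(\widehat{\T\U})^{\dagger}$, threshold to recover the exact sparsity pattern, invoke Lemma~\ref{lem:uniquesign} for uniqueness, and solve two final linear systems. The only detail the paper makes explicit that you leave implicit is the geometric guessing of the scaling constant $c$ so that $\|\T\U^*\|_2\in[1,2]$, which is what lets Corollary~\ref{cor:psuedobound} apply and keeps the entries of $\D$ within $[1/(2\kappa(\T\U^*)),2]$ for the anti-concentration step.
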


\begin{proof}
By Corollary \ref{cor:ICA}, after sketching $\AA$ by a Gaussian matrix $\T \in \R^{k \times m}$ and running ICA on $\T\AA$ in $\poly(d,m,k,\kappa(\U^*))$ time, we recover  $\widehat{\T\U^*}$ such that $\|\widehat{\T\U} - \T\U^*\mathbf{\Pi}\D \|_F \leq \frac{1}{\poly( d,k,m,\kappa(\U^*),\kappa(\V^*))}$ for a sufficiently high constant-degree polynomial, such that $\mathbf{\Pi}$ is a permutation matrix and $\D$ is strictly positive diagonal matrix. We can disregard $\mathbf{\Pi}$ by assuming the rows of $\V^*$ have also been permuted by $\mathbf{\Pi}$, and we can disregard $\D$ by pulling this scaling into $\V^*$ (which can be done since it is a positive scaling). Thus $\|\widehat{\T\U} - \T\U^* \|_F \leq \frac{1}{\poly(d,k,m,\kappa(\U^*),\kappa(\V^*))}$

Observe now that we can assume that $1 \leq \| \T\U^* \|_2 \leq 2$ by guessing a scaling factor $c$ to apply to $\AA$ before running ICA. To guess this scaling factor, we can find the largest column (in $L_2$) $y$ of $\T \AA$, and note that $y = (\T \U^*)f(\V^*\X_{*,j})$ for some $j$. 
Since $\|f(\V^*\X_{*,j})\|_2 \leq O(\sqrt{\log(n)})d$ with high probability for all $j \in [n]$ (using the Gaussian tails of $\X$), it follows that $\|y\|_2 \leq \sigma_{\max}(\T\U^*)  O(\sqrt{\log(n)})d$. Since with w.h.p there is at least one column of $f(\V^*\X)$ with norm at least $1/\poly(n)$, it follows that $\|y\|_2 \geq \sigma_{\min}(\T\U^*) /\poly(n) \geq \frac{\sigma_{\max}(\T\U^*)}{\poly(n,\kappa)}$. Thus one can make $\log\big( \poly(n,\kappa,d)  \big) = O(\log(n))$ guesses in geometrically increasing powers of $2$ between $\|y\|_2/ O(\sqrt{\log(n)})d$ and $\|y\|_2 \poly(n,\kappa)$ to find a guess such that $\|c\T\U^*\|_2  \in (1,2)$ as desired. This will allow us to use Corollary \ref{cor:psuedobound} in the following paragraph.

Now let $\widehat{\T\U}^\dagger$  be the pseduo-inverse of $\widehat{\T\U}$, and let $\overline{\AA} = \U^*f(\V^*\overline{\X})$ where $\overline{\X}$ is the first $\poly(d,k,m,\kappa(\U^*),\kappa(\V^*))$ columns of $\X$.   We now claim that the sign pattern of $(\widehat{\T\U})^{\dagger} \T \overline{\AA} = \widehat{\T\U}^{\dagger}\T \U^*f(\V^*\overline{\X})$ is exactly equal to that of $f(\V^*\overline{\X})$ after rounding all entries of with value less than $1/\poly(\ell)$ to $0$. Note that since $\T\U^*$ is full rank, it has an inverse (which is given by the pseudoinverse $(\T\U^*)^\dagger$. 
  Let $\ZZ$ be the resulting matrix after rounding performing this rounding to $\widehat{\T\U}^{\dagger}\T\AA'$. We now apply Corollary \ref{cor:psuedobound}, with $\T\U^* = \BB$ and $\widehat{\T\U} = \BB + \E$. Since we guesses $\sigma_{\max}(\T\U^*)$ up to a factor of $2$ and normalized $\widehat{\T\U}$ by it, it follows that the entries of the diagonal matrix $\D$ are all at most $2$ and at least $1/(2 \kappa(\T\U^*))$, and then using the fact that $\|f(\V^*\overline{\X})\|_F < \|\V^*\overline{\X}\|_F \leq\sqrt{\ell} \|V^*\|_F \leq \sqrt{\ell k}$ w.h.p. in $\ell$ (using well-known upper bounds on the spectral norm of a rectangular Gaussian matrix, see e.g. Corollary 5.35 if \cite{vershynin2010introduction}) we obtain 
  \begin{equation*}
      \begin{split}
          \|\ZZ - \D f(\V^* \overline{\X})\|_F &  = \|\big(\widehat{\T\U}^\dagger(\T\U^*) - \mathbb{I}\big)\D f(\V^*\X')\|_F \\
          & \leq \frac{1}{\poly\left(d,k,m,\kappa(\U^*),\kappa(\V^*)\right)}
      \end{split}
  \end{equation*}
  
  Note that algorithmically, instead of computing the inverse  $\widehat{\T\U}^\dagger$, we can first randomly perturb $\widehat{\T\U}$ by an entry-wise additive $1/\poly(n)$ to ensure it is full rank, and then compute the true inverse, which can be done via Gaussian elimination in polynomially many arithmetic operations. By the same perturbational bounds, our results do not change when using the $1/\poly(n)$ perturbed inverse, as opposed to the original pseudo-inverse. 
  
 Now since the positive entries of $\D f(\V^*\overline{\X})$ have normal Gaussian marginals, and $\D$ is a diagonal matrix which is entry-wise at most $2$ and at least $1/(2\kappa(\T\U^*))$, the probability that any non-zero entry of $f(\V^* \overline{\X})$ is less than $1/\poly(\ell)$ is at most $2\kappa(\T\U^*)/\poly(\ell)$, and we can then union bound over $\poly(d,k,m,\kappa)$ such entries in $\overline{\X}$. Note that by Lemma \ref{lem:rank_preserving}, $\kappa(\T\U^*) < \poly(k,d,m)\kappa(\U^*)$ w.h.p. in $k,d,m$, so $\poly(\ell) >> \kappa(\T\U^*)$. Conditioned on this, with probability $1 - 1/\poly(d,m,k,\kappa)$ for sufficiently large $\ell = \poly(d,k,m,\kappa)$, every strictly positive
entry of $\D f(\V^* \overline{\X})$, and therefore of $f(\V^*\overline{\X})$, is non-zero in $\ZZ$, and moreover, and every other entry will be $0$ in $\ZZ$, which completes the claim that the sign and sparsity patterns of the two matrices are equal.

Given this, for each $i \in [k]$ we can then solve a linear system to find a vector $w_j$ such that $(w_j \overline{\AA})_p = 0$ for all $p$ not in the sparsity pattern of $\ZZ_{i,*}$. In other words, the sparsity pattern of $(w_j \overline{\AA})$ must be contained in the sparsity pattern of $\ZZ_{i,*}$, which is the sparsity pattern of $f(\V^*\overline{\X})_{i,*}$ be the prior argument. 
By Lemma \ref{lem:uniquesign}, the only vector in the row span of $\overline{\AA}$ (which is the same as the row span of $f(\V^*\overline{\X})$ since $\U^*$ is full rank) which has a non-zero sparsity pattern contained in that of $f(\V^*\overline{\X})_{i,*}$ must be a non-zero scaling of $f(\V^* \overline{\X})_{i,*}$. It follows that there is a unique $w_j$, up to a scaling, such that $w_j \overline{\AA}$ is zero outside of the sparsity pattern of $f(\V^*\overline{\X})_{i,*}$. Since at least one of the entries $r$ of $w_j$ is non-zero, there exists some scaling such that $w_j \AA$ is  zero outside of the sparsity pattern of $f(\V^*\overline{\X})_{i,*}$ \textit{and} $(w_j)_r = 1$ (where $(w_j)_r)$ is the $r$-th coordinate of $w_j$). Since the first constraint is satisfied uniquely up to a scaling, it follows that there will be a unique solution $w_j^r$ to at least one of the $r \in [k]$ linear systems in Step 5 of Algorithm \ref{alg:overall_exact_fpt}, which will therefore be optained by the linear system. This vector $w_j$ we obtain from Steps $5$ and $6$ of Algorithm \ref{alg:overall_exact_fpt} will therefore be such that $w_j \overline{\AA}$ is a non-zero scaling of $f(\V^*\overline{\X})_{i,*}$. 

Then in Step $7$ of Algorithm \ref{alg:overall_exact_fpt}, we construct the matrix $\W$, and flip the signs appropriately so that each row of $\W$ is a strictly positive scaling of a row of $f(\V^*\overline{\X})$. 
We then solve the linear system
$(\W_{i,*})_{S_i} = \V_{i,*} \overline{\X}_{S_i}$ for the unknowns $\V$, which can be done with a polynomial number of arithmetic operations via Gaussian elimination. Recall here that $S_i$ is the set of coordinates where $\W_{i,*}$, and therefore $f(\V^*_{i,*} \overline{\X})$, is non-zero. Since at least $1/3$
of the signs in a given row $i$ will be positive with probability 
$1-2^{-\Omega(\ell)}$ by Chernoff bounds, restricting to this subset $S_i$ of
columns gives the equation $\W_{i,*} = \V^*_{i,*} \overline{\X}_{S_i}$. Conditioned on $S_i$ having at least $d$ columns, we have that 
$\overline{\X}_{S_i}$ is full rank almost surely, since it is a matrix of Gaussians conditioned on the fact that every column lies in a fixed halfspace. To see this, apply induction on the columns of $\overline{\X}_{S_i'}$, and note at every step $i<d$, the Lesbegue measure of the span of the first $i$ columns is $0$ in this halfspace, and thus the $i+1$ column will not be contained in it almost surely. It follows that there is a unique solution $\V_{i,*}$ for each row $i$, which 
must therefore be the corresponding row of $\V^*$ (we normalize the rows of $\V_{i,*}$ to have unit norm so that they are precisely the same). So we recover $\V^*$ 
exactly via these linear systems. Finally, we can solve the linear system 
$\AA = \U f(\V^*\X)$ for the variables $\U$ to recover $\U^*$ exactly in strongly
polynomial time. Note that this linear system has a unique solution, since $f(\V^*\X)$ is full rank w.h.p. by Lemma \ref{prop:random}, which completes the proof.
\end{proof}

\subsection{General Algorithm}\label{subsec:generalnoiseless}

We now show how to generalize the algorithm from the previous sub-section to handle non-orthonormal $\V^*$. Observe that when $\V^*$ is no longer orthonormal, the entries within a column of $\V^*\X$ are no longer independent. Moreover, due to the presence of the non-linear function $f(\cdot)$, no linear transformation will exist which can make the samples (i.e. columns of $f(\V^*\X)$) independent entry-wise. 
While the entries do still have Gaussian marginals, they will have the non-trivial covariance matrix $\V^* (\V^*)^T \neq \mathbb{I}_k$. Thus it is no longer possible to utilize previously developed techniques from independent component analysis to recover good approximations to $\U^*$.  This necessitates a new approach. 

Our starting point is the generative model considered by Janzamin et. al.  \cite{janzamin2015beating}, which matches our setting, i.e. $\AA= \U^* f(\V^* \X)$. The main idea behind this algorithm is to construct a tensor that is a function of both $\AA, \X$ and then run a tensor decomposition algorithm to recover the low-rank components of the resulting tensor. While computing a tensor decomposition is NP-hard in general \cite{hillar2013most}, there is a plethora of work on special cases, where computing such decompositions is tractable \cite{bhaskara2014smoothed,song2016sublinear,wang2016online,goyal2014fourier, ge2015decomposing, barak2016noisy}.  
Tensor decomposition algorithms have recently become an invaluable algorithmic primitive and found a tremendous number of applications in statistical and machine learning tasks \cite{janzamin2015beating,janzamin2014score,ge2017learning,anandkumar2014tensor, barak2015dictionary}.

A key step is to construct a non-linear transform of the input by utilizing knowledge about the underlying pdf for the distribution of $\X$, which we denote by $p(x)$. The non-linear function considered is the so called Score Function, defined in \cite{janzamin2014score}, which is the normalized $m$-th order derivative of the input probability distribution function $p(x)$.  

\begin{definition}(Score Function.)
\label{def:score}
Given a random vector $x \in \R^{d}$ such that $p(x)$ describes the corresponding probability density function, the $m$-th order score function $\mathcal{S}_m(x) \in \otimes^{m} \R^{d}$ is defined as 
\begin{equation*}
    \mathcal{S}_m(x) = (-1)^m \frac{\nabla^{(m)}_x p(x)}{p(x)} 
\end{equation*}
\end{definition}

The tensor that Janzamin et. al. \cite{janzamin2014score} considers is the cross moment tensor between $\AA$ and $\mathcal{S}_3(\X)$. This encodes the correlation between the output and the third order score function. Intuitively, working with higher order tensors is necessary since matrix decompositions are only identifiable up to orthogonal components, whereas tensor have identifiable non-orthogonal components, and we are specifically interested in recovering approximations for non-orthonormal $\V^*$. Computing the score function for an arbitrary distribution can be computationally challenging. However, as mentioned in Janzamin et. al. \cite{janzamin2014score}, 
we can use orthogonal polynomials that help us compute the closed form for the score function $\mathcal{S}_{(m)}(x)$, in the special case when $x \sim \mathcal{N}(0,\II)$. 

\begin{definition}(Hermite Polynomials.)
If the input is drawn from the multi-variate Gaussian distribution, i.e. $x \sim \mathcal{N}(0, \II)$, then $\mathcal{S}_{(m)}(x) = \mathcal{H}_m(x)$, where $H_m(x) = \frac{(-1)^m \nabla^{(m)}_x p(x)}{p(x)}$ and $p(x) = \frac{1}{(\sqrt{2\pi})^d} e^{-\frac{\|x\|^2_2}{2}}$. 
\end{definition}

Since we know a closed form for the $m$-th order Hermite polynomial, the tensor $\mathcal{S}_{(m)}$ can be computed efficiently. The critical structural result in the algorithm of \cite{janzamin2015beating} is to show that in expectation, the cross moment of the output and the score function actually forms a rank-$k$ tensor, where the rank-$1$ components capture the rows of $\V^*$. Formally, 

\begin{lemma}(Generalized Stein's Lemma \cite{janzamin2015beating}.)
\label{lem:stein}
Let $\AA, \X$ be input matrices such that $\AA = \U^*f(\V^*\X)$, where $f$ is a non-linear, thrice differentiable activation function. Let $\mathcal{S}_3(x)$ be the $3$-rd order score function from Definition \ref{def:score}. Then,
\begin{equation*}
    \widetilde{\T} = \expecf{}{\sum^{n}_{i=1} \AA_{*,i} \otimes \mathcal{S}_3(\X_{*,i})} = \sum^{k}_{j=1} \expecf{x}{f'''(\V^* x)} \U^*_{*,j} \otimes \V^*_{j,*} \otimes \V^*_{j,*} \otimes \V^*_{j,*}
\end{equation*}
where $f'''$ is the third derivative of the activation function and $x \sim p(x)$.
\end{lemma}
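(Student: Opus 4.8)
The plan is to establish the identity by computing the expectation column-by-column and integrating by parts against the score function. First I would note that, by linearity of expectation, $\widetilde{\T} = \sum_{i=1}^n \expecf{}{\AA_{*,i} \otimes \mathcal{S}_3(\X_{*,i})}$, and since the columns $\X_{*,i}$ are i.i.d.\ with density $p(x)$ and $\AA_{*,i} = \U^* f(\V^* \X_{*,i})$, every term is equal; so it suffices to analyze a single summand $\expecf{x}{\U^* f(\V^* x) \otimes \mathcal{S}_3(x)}$ and multiply by $n$. Pulling the (constant) matrix $\U^*$ out of the expectation, the heart of the matter is to show
\[
\expecf{x}{f(\V^* x) \otimes \mathcal{S}_3(x)} = \sum_{j=1}^k \expecf{x}{f'''(\V_{j,*}^* x)}\, e_j \otimes \V_{j,*}^* \otimes \V_{j,*}^* \otimes \V_{j,*}^*,
\]
after which left-multiplying the first mode by $\U^*$ turns $e_j$ into $\U^*_{*,j}$ and yields the claimed formula.

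The key tool is the defining property of the score function: for a sufficiently smooth and integrable test function $g:\R^d\to\R$, integration by parts $m$ times gives $\expecf{x}{g(x)\,\mathcal{S}_m(x)} = \expecf{x}{\nabla^{(m)}_x g(x)}$, since $\mathcal{S}_m(x) = (-1)^m \nabla^{(m)} p(x)/p(x)$ and the boundary terms vanish for the Gaussian density. I would apply this with $m=3$ to each scalar coordinate $g(x) = f(\V_{\ell,*}^* x)$, the $\ell$-th entry of $f(\V^* x)$. By the chain rule, $\nabla^{(3)}_x f(\V_{\ell,*}^* x) = f'''(\V_{\ell,*}^* x)\, \V_{\ell,*}^* \otimes \V_{\ell,*}^* \otimes \V_{\ell,*}^*$, because each differentiation in $x$ brings down one factor of the gradient $\V_{\ell,*}^*$ of the linear argument. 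Hence $\expecf{x}{f(\V_{\ell,*}^* x)\,\mathcal{S}_3(x)} = \expecf{x}{f'''(\V_{\ell,*}^* x)}\, \V_{\ell,*}^* \otimes \V_{\ell,*}^* \otimes \V_{\ell,*}^*$; assembling over $\ell \in [k]$ in the first tensor mode gives exactly the displayed identity, and restoring the factor $\U^*$ and the sum over $i$ completes the proof.

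The main obstacle — really the only nontrivial point — is justifying the integration-by-parts step rigorously: one must verify that the boundary terms at infinity vanish and that all the integrals converge, which is where the smoothness hypothesis on $f$ (thrice differentiable) and the rapid Gaussian decay of $p(x)$ together with at-most-polynomial growth of $f$ and its derivatives are used. For the ReLU itself $f$ is not thrice differentiable, which is precisely why the paper must first replace $f$ by a Chebyshev polynomial approximation $p(\cdot)$ and argue via a variation-distance bound that the tensor decomposition is unaffected; here, though, I would simply invoke Lemma \ref{lem:stein} as stated for the smooth surrogate, citing \cite{janzamin2015beating, janzamin2014score} for the analytic details of the generalized Stein identity and the Hermite-polynomial closed form of $\mathcal{S}_3$ when $x\sim\mathcal{N}(0,\II)$.
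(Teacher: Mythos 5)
The paper does not prove this lemma; it is quoted verbatim from \cite{janzamin2015beating}, and the paper simply cites that reference. Your proposal supplies the argument itself, and it is the correct and standard one: Stein's identity / the defining integration-by-parts property of the score function, $\expecf{x}{g(x)\,\mathcal{S}_m(x)} = \expecf{x}{\nabla^{(m)}_x g(x)}$, combined with the chain-rule computation $\nabla^{(3)}_x f(\langle v,x\rangle) = f'''(\langle v,x\rangle)\, v\otimes v\otimes v$ and the observation that left-multiplying mode $1$ by $\U^*$ sends $e_j \mapsto \U^*_{*,j}$. This is precisely the derivation in the cited source, so you have filled in a cited result rather than diverged from the paper.

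Two small points worth tightening. First, the lemma as stated in the paper appears to be off by a normalization factor: the left side sums $n$ i.i.d.\ terms while the right side has no factor of $n$; your argument honestly gives $n$ times the right-hand side. The inconsistency is the paper's, not yours — in Algorithm \ref{alg:overall_exact_poly} the tensor is formed as $\frac{1}{n}\sum_i \AA'_{*,i}\otimes\mathcal{S}_3(\X_{*,i})$, which makes the factor cancel — but you should state explicitly that you are proving the normalized version, rather than letting "and the sum over $i$ completes the proof" silently absorb the $n$. Second, your caveat about $f$ not being thrice differentiable for the ReLU, and the paper's resort to a smooth polynomial surrogate plus a variation-distance argument, is exactly the right thing to flag: the lemma genuinely does not apply to the ReLU directly, and the hypotheses (smoothness and growth control so the boundary terms in the triple integration by parts vanish and all integrals converge) should be stated explicitly if one wanted a self-contained treatment rather than deferring to \cite{janzamin2014score}.
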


Note, $\widetilde{\T}$ is a $4$-th order tensor and can be constructed from the input $\AA$ and $\X$. The first mode of $\widetilde{\T}$ can be contracted by multiplying it with a random vector $\theta$, therefore,
\begin{equation*}
    \expecf{}{\sum^{n}_{i=1} \AA_{*,i} \otimes \mathcal{S}_3(\X_{*,i})} = \sum^{k}_{j=1} \lambda_j \V^*_{j,*} \otimes \V^*_{j,*} \otimes \V^*_{j,*}
\end{equation*}
where $\lambda_j = \expecf{x}{f'''(\V^* x)} \langle \U^*_{*,j}, \theta\rangle$. Therefore, if we could recover the low-rank components of $\widetilde{\T}$ we would be obtain a approximate solution to $\V^*$. The main theorem in \cite{janzamin2015beating} states that under a set of conditions listed below, there exists a polynomial time algorithm that recovers an additive error approximation to $\V^*$. Formally, 

\begin{theorem}(Approximate recovery \cite{janzamin2015beating})
\label{thm:anandkumar}
Let $\AA \in \R^{m \times n}$, $\X\in \R^{d \times n}$ be inputs such that $\AA = \U^* f(\V^* \X) + \eta$, where $f$ is a non-linear thrice differentiable activation function, $\U^* \in \R^{m \times k}$ has full column rank, $\V^* \in \R^{k \times d}$ has full row rank, for all $i \in [n]$, $\X_{*,i} \sim \mathcal{N}(0, \II)$ and $\eta$ is mean zero sub-Gaussian noise with variance $\sigma_{\textrm{noise}}$. Then, there exists an algorithm that recovers $\widehat{\V}$ such that $\|\widehat{\V} - \D \mathbf{\Pi} \V^* \|_F \leq \epsilon$, where $\D$ is a diagonal $\pm 1$ matrix and $\mathbf{\Pi}$ is a permutation matrix. Further, the algorithm runs in time $$\poly\left(m,d,k,\frac{1}{\epsilon}, \expecf{}{\|\M_3(x)\M_3(x)^T\|_2} ,\expecf{}{\| \mathcal{S}_2(x)\mathcal{S}_2(x)^T\|_2}, \frac{1}{\lambda_{\min}},\lambda_{\max},\frac{\widetilde{\lambda}_{\max}}{\widetilde{\lambda}_{\min}},\kappa(\V^*), \sigma_{\textrm{noise}}\right)$$
where $\mathcal{S}_3$ is the $3$-rd order score function, $\M_3(x) \R^{d \times d^2}$ is the matricization of $\mathcal{S}_3$, $\lambda_j$ $ = \expecf{x}{f'''(\V^* x)}$ $\langle \U^*_{*,j}, \theta\rangle$, $\widetilde{\lambda}_j = \expecf{x}{f''(\V^* x)}$ $\langle \U^*_{*,j}, \theta\rangle$ ,$\kappa(\V^*)$ is the condition number, $\sigma_{\textrm{noise}}$ is the variance of $\eta$ and. Note, in the case where $\X_{*,i} \sim \mathcal{N}(0, \II)$, $\expecf{}{\|\M_3(x)\M_3(x)^T\|_2} = O(d^3)$ and $\expecf{}{\| \mathcal{S}_2(x)\mathcal{S}_2(x)^T\|_2}= O(d^2)$.
\end{theorem}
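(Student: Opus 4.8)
Since Theorem~\ref{thm:anandkumar} is quoted essentially verbatim from \cite{janzamin2015beating}, its proof is a reduction to that work; the plan here is to outline the structure so the role of each parameter in the running time is transparent. The argument proceeds in three stages: reduce weight recovery to a symmetric third-order tensor decomposition via the Generalized Stein's Lemma, estimate the relevant tensor from $n$ samples with explicit variance bounds, and run a robust (whitened) tensor power method whose output error we propagate back to $\V^*$. For the first stage, apply Lemma~\ref{lem:stein} to the cross-moment tensor $\widetilde{\T} = \expecf{}{\sum_i \AA_{*,i} \otimes \mathcal{S}_3(\X_{*,i})}$, obtaining $\widetilde{\T} = \sum_{j=1}^k \expecf{x}{f'''(\V^* x)}\, \U^*_{*,j} \otimes (\V^*_{j,*})^{\otimes 3}$; the full-rank hypotheses on $\U^*$ and $\V^*$ guarantee all $k$ terms are present and independent in the appropriate modes. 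Contracting the first mode against a random Gaussian vector $\theta$ gives the symmetric tensor $T = \sum_j \lambda_j (\V^*_{j,*})^{\otimes 3}$ with $\lambda_j = \expecf{x}{f'''(\V^*x)}\langle \U^*_{*,j},\theta\rangle$, and a standard anticoncentration argument shows that with high probability over $\theta$ all $\lambda_j$ are nonzero and well-separated, which explains the dependence of the running time on $1/\lambda_{\min}$, $\lambda_{\max}$, and $\widetilde{\lambda}_{\max}/\widetilde{\lambda}_{\min}$.

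For the second stage, replace the population expectation by the empirical average $\widehat{T} = \frac1n \sum_i \langle \AA_{*,i},\theta\rangle\, \mathcal{S}_3(\X_{*,i})$, using the closed form of the third Hermite tensor, which is available precisely because $\X_{*,i}\sim\mathcal{N}(0,\II)$. A matrix/tensor Bernstein inequality then controls $\|\widehat{T}-T\|$ (in the relevant injective norm) in terms of $n$ and the score-function moments $\expecf{}{\|\M_3(x)\M_3(x)^T\|_2}$ and $\expecf{}{\|\mathcal{S}_2(x)\mathcal{S}_2(x)^T\|_2}$, which for the Gaussian are $O(d^3)$ and $O(d^2)$ respectively; the mean-zero subgaussian noise $\eta$ contributes an extra term that vanishes in expectation and concentrates, accounting for the $\sigma_{\textrm{noise}}$ dependence.

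For the third stage, since $\V^*$ need not be orthonormal one whitens: form the analogous second-order object $M_2 = \sum_j \widetilde{\lambda}_j (\V^*_{j,*})^{\otimes 2}$ from $\mathcal{S}_2$, compute a whitening map $W$ with $W^T M_2 W = \II_k$, apply $W$ along all three modes of $\widehat{T}$ to obtain an approximately orthogonally decomposable tensor, run the robust tensor power method of \cite{anandkumar2014tensor} to extract its components, and un-whiten to recover $\widehat{\V}$ up to a permutation $\mathbf{\Pi}$ and a diagonal $\pm 1$ matrix $\D$ (the sign ambiguity is genuine, amounting to not knowing the sign of each $\lambda_j$). The bound $\|\widehat{\V}-\D\mathbf{\Pi}\V^*\|_F \le \epsilon$ follows by propagating the estimation errors in $\widehat{T}$ and in the empirical version of $M_2$ through the whitening step and the power iteration; this is where $\kappa(\V^*)$ (from the conditioning of $W$) and the eigenvalue gaps of the whitened tensor enter, and choosing $n$ to be a sufficiently large polynomial in all the listed parameters drives the error below $\epsilon$.

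I expect the third stage to be the main obstacle: controlling how a small perturbation in the empirical moments is amplified through whitening and the tensor power iteration into an error on $\widehat{\V}$ is delicate, because the whitening matrix $W$ is ill-conditioned when $\kappa(\V^*)$ is large and the convergence of the power method depends on the gaps among the $\widetilde{\lambda}_j$. The bulk of the technical content of \cite{janzamin2015beating} is precisely this propagation analysis, together with the concentration bounds for the Hermite-based score moments and the anticoncentration argument for the random contraction $\theta$.
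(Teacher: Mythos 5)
This theorem is stated in the paper as a citation to \cite{janzamin2015beating} and is not proved there; the paper's own discussion (the construction of $\widetilde{\T}$ via Lemma~\ref{lem:stein}, the contraction by $\theta$, and Remark~\ref{remark:anandkumar} noting that only the whitening, tensor power method, and unwhitening steps of \cite{janzamin2015beating} are invoked) matches your three-stage outline. Your proposal is an accurate high-level account of the argument in the cited work and of how the paper uses it, so it is correct and follows essentially the same route.
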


\begin{remark}\label{remark:anandkumar}
We only use the  Whitening, Tensor Decomposition and Unwhitening steps from Janzamin et. al. \cite{janzamin2015beating}, and therefore the sample complexity and running time only depends on Lemma 9 and Lemma 10 in \cite{janzamin2015beating}. 
\end{remark}
However, there are many technical challenges in extending the aforementioned result to our setting. We begin with using the estimator from Theorem \ref{thm:anandkumar} in the setting where the noise, $\eta$, is $0$. The first technical challenge is the above theorem requires the activation function $f$ to be thrice diffrentiable, however ReLU is not. To get around this, we use a result from approximation theory to show that ReLU can be well approximated every where with a low-degree polynomial. 

\begin{lemma}(Approximating ReLU \cite{goel2017learning}.)
\label{lem:approx_relu}
Let $f(x) = \max(0,x)$ be the ReLU function. Then, there exists a polynomial $p(x)$ such that  $$\sup_{x\in [-1, 1]}|f(x) - p(x)| \leq \eta$$
and $\textrm{deg}(p) = O(\frac{1}{\eta})$ and $p([-1, 1]) \subseteq [0,1]$.
\end{lemma}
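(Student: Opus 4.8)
The plan is to reduce the problem to the classical question of polynomially approximating $|x|$ on $[-1,1]$, using the identity $f(x) = \tfrac12\big(x + |x|\big)$. The key black box is the standard fact from approximation theory that for every $\delta \in (0,1)$ there is a polynomial $q$ with $\deg(q) = O(1/\delta)$ and $\sup_{x\in[-1,1]}\big||x| - q(x)\big| \le \delta$; this follows, e.g., from Jackson's theorem applied to the $1$-Lipschitz function $|x|$, or from the $\Theta(1/k^2)$ decay of the even Chebyshev coefficients of $|x|$, since truncating its Chebyshev series at degree $n$ leaves a tail of size $O(1/n)$. If a self-contained argument is preferred, I would write out the Chebyshev expansion of $|x|$ explicitly and bound the truncation tail directly; this is the only non-elementary ingredient, and the point to get right is the quantitative rate — degree $O(1/\delta)$ for error $\delta$ — rather than the exponentially worse rate a naive Bernstein-polynomial argument would yield.

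Given such a $q$, I would set $\delta = \eta$ and observe that any $q$ with $\||x|-q\|_{\infty,[-1,1]} \le \delta$ automatically satisfies $|x| - \delta \le q(x) \le |x| + \delta$ on $[-1,1]$; hence $q'(x) := q(x) + \delta$ obeys $|x| \le q'(x) \le |x| + 2\delta$ there. Then $\widetilde p(x) := \tfrac12\big(x + q'(x)\big)$ is a polynomial of degree $O(1/\eta)$ with $f(x) \le \widetilde p(x) \le f(x) + \delta$ for all $x \in [-1,1]$, so in particular $\widetilde p([-1,1]) \subseteq [0, 1+\delta]$ because $f([-1,1]) = [0,1]$.

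Finally, to meet the range requirement $p([-1,1]) \subseteq [0,1]$, I would rescale: $p(x) := \widetilde p(x)/(1+\delta)$. This maps $[-1,1]$ into $[0,1]$, has the same degree $O(1/\eta)$, and writing $\widetilde p = f + e$ with $0 \le e(x) \le \delta$ gives $p(x) - f(x) = \big(e(x) - \delta f(x)\big)/(1+\delta)$, whose absolute value is at most $\delta/(1+\delta) \le \delta = \eta$ since $0 \le e(x) \le \delta$ and $0 \le f(x) \le 1$. Adjusting $\delta$ by a constant factor (and handling the trivial regime $\eta \ge 1/2$ by the constant polynomial $\tfrac12$) yields $\deg(p) = O(1/\eta)$, $\sup_{x\in[-1,1]}|f(x)-p(x)| \le \eta$, and $p([-1,1])\subseteq[0,1]$, as claimed. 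The only step with any real content is obtaining the degree-$O(1/\delta)$ approximation of $|x|$; everything after that is elementary error bookkeeping.
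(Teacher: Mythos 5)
Your proposal is correct and follows essentially the same route as the paper's (sketched) proof: decompose $f(x) = \tfrac{1}{2}(x + |x|)$ and invoke the classical $O(1/\delta)$-degree polynomial approximation of $|x|$ on $[-1,1]$. The paper cites this absolute-value approximation to Newman, while you derive it from Jackson's theorem or Chebyshev-coefficient decay; these are all standard routes to the same $O(1/\eta)$ rate. You are in fact somewhat more careful than the paper's sketch, which simply sets $p(x) = \tfrac{1}{2}(x + \hat{p}(x))$ and does not explicitly verify the range condition $p([-1,1]) \subseteq [0,1]$; your shift-then-rescale step closes that gap cleanly. One tiny inaccuracy in an aside: a Bernstein-polynomial approximation of the $1$-Lipschitz function $|x|$ yields error $O(1/\sqrt{n})$ at degree $n$, i.e.\ degree $\Theta(1/\eta^2)$ for error $\eta$ — this is \emph{polynomially} worse than the Chebyshev/Jackson rate of $O(1/\eta)$, not exponentially worse. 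This does not affect your argument.
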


This polynomial is at least thrice differentiable and can be easily extended to the domain we care about using simple transformations. We assume that the samples we observe are of the form $\U^* p(\V^*\X)$ corrupted by small adversarial error.  Formally, the label matrix $\AA$ can be viewed as being generated via $\AA = \U^*p(\V^*\X) + \ZZ$, where $\ZZ = \U^*\big( f(\V^*\X) - p(\V^*\X)\big)$.
  We note that we only use the approximation as an analysis technique and show that we can get an approximate solution to $\V^*$. First, we make a brief remark regarding the normalization of the entries in $\AA$.

\begin{remark}\label{remark:normalize}
Observe in both the noiseless and noisy cases, the latter being where $\AA = \U^*f(\V^*\X) + \E$ where $\E$ is i.i.d. mean $0$ with variance $\sigma^2$, that by scaling $\AA$ by $1/\|\AA_{*,\max}\|_2$, where $\|\AA_{*,\max}\|_2$ is the largest column norm of $\AA$, we can ensure that the resulting $\U^*$  has $\|\U^*\|_2 < m \max\{1,\sigma \}  \kappa(\U^*)$, where $\sigma^2$ is the variance of the noise $\E$ (in the noisy case). To see why this is true, suppose this were not the case. Observe that w.h.p. at least half of the columns $\U^*f(\V^*\X)$ which will have norm at least $\omega(1) \sigma^{-1}_{\min}(\U^*)$ (since w.h.p. half the columns of $f(\V^*\X)$ have norm $\omega(1)$), thus if $\|\U^*\|_2 > m \max\{1,\sigma \}\kappa(\U^*)$ after normalization, then then at least half of the normalized columns of $\U f(\V^*\X)$ will have norm $\omega(m \max\{1,\sigma \})$. By Markov inequality and a Chernoff bound, strictly less than $1/4$ of the columns of the original $\E$ can have norm $\omega(m \sigma)$ w.h.p., and since the normalized $\E$ is strictly smaller, by triangle inequality there will be a column of $\AA = \U^*f(\V^*\X) +\E$ after normalization with larger than unit norm, a contradiction. Thus we can assume this normalization, giving $\eta<< \frac{1}{\|\U^*\|_2}$ for sufficiently small $\eta = O(\frac{1}{\poly(n,d,m,\kappa(\U^*),\kappa(\V^*),\sigma)})$. 
\end{remark}

We now set $\eta$ in Lemma \ref{lem:approx_relu} to be  $ \frac{1}{\poly(n,d,m,\kappa(\U^*), \kappa(\V^*), \sigma)}$.  By the operator norm bound of Lemma \ref{prop:gaussiancondition}, we know that $\|\V^* \X\|_F = O(\sqrt{nk}),$ w.h.p., so $\|\ZZ\|_F = O(\|\U^*\|_2 \sqrt{nk} \eta) = O(\frac{1}{\poly(n)})$ as needed.  We again construct the same tensor, $\widetilde{\T} = \expecf{}{\sum^{n}_{i=1} \AA_{*,i} \otimes \mathcal{S}_3(\X_{*,i})}$. Our analysis technique is now as follows. We add a light $\mathcal{N}(0,1)$ random matrix to our input $\AA$, and argue that the variation distance between the distribution over inputs $\AA$ (for a fixed $\X$), between the case of $\AA$ using $f$ and $\AA$ using the polynomial $p$ as a non-linear activation, is at most $1/\poly(n)$. As a result, the input using ReLUs is statistically indistinguishable in variation distance from samples generated using the polynomial approximation to the ReLU function. Thus, any algorithm that succeeds on such a polynomial approximation must also succeed on the ReLU. 
Therefore, the algorithm from Theorem \ref{thm:anandkumar} still holds for approximate recovery using ReLUs. Formally, 

\begin{lemma}
The variational distance between $n$ samples of the form $\AA = \U^* f(\V^*\X) + \G$, where the columns of $\G$ are $\mathcal{N}(0,\mathbb{I}_d)$ and $\X$ is fixed, and $\AA' = \U^* p(\V^* \X) + \G + \ZZ$  where $\|\ZZ\|_F = \frac{1}{\poly(n)}$ is at  most $\frac{1}{\poly(n)}$.
\end{lemma}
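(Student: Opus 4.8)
The plan is to reduce the claim to a routine bound on the total variation distance between two multivariate Gaussians that share the same covariance matrix and whose means are close. Fix $\X$ (as well as $\U^*,\V^*$ and the matrix $\ZZ$), so that the only randomness is in $\G$, whose columns are i.i.d.\ $\mathcal{N}(0,\mathbb{I})$ — equivalently, $\G$ has i.i.d.\ $\mathcal{N}(0,1)$ entries. Vectorizing the $m\times n$ matrices, $\mathrm{vec}(\AA)\sim\mathcal{N}(a,\mathbb{I})$ with $a=\mathrm{vec}(\U^*f(\V^*\X))$, and $\mathrm{vec}(\AA')\sim\mathcal{N}(a',\mathbb{I})$ with $a'=\mathrm{vec}(\U^*p(\V^*\X)+\ZZ)$. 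These two laws share the covariance $\mathbb{I}$ and differ only through their means, so by rotational invariance the total variation distance between them depends only on $\Delta:=\|a-a'\|_2$, and $d_{TV}\big(\mathcal{N}(a,\mathbb{I}),\mathcal{N}(a',\mathbb{I})\big)\le\tfrac12\Delta$ (e.g.\ by Pinsker's inequality, using $\mathrm{KL}\big(\mathcal{N}(a,\mathbb{I})\,\|\,\mathcal{N}(a',\mathbb{I})\big)=\tfrac12\Delta^2$). Hence it suffices to show $\Delta = 1/\poly(n)$.

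To bound $\Delta$, note $\Delta=\|\U^*f(\V^*\X)-\U^*p(\V^*\X)-\ZZ\|_F\le\|\U^*\|_2\,\|f(\V^*\X)-p(\V^*\X)\|_F+\|\ZZ\|_F$ by the triangle inequality and submultiplicativity. The term $\|\ZZ\|_F$ is $1/\poly(n)$ by hypothesis, and by Remark~\ref{remark:normalize} (after the normalization of $\AA$ described there) we may assume $\|\U^*\|_2\le m\max\{1,\sigma\}\,\kappa(\U^*)$. For the remaining factor, since the rows of $\V^*$ have unit norm, every entry of $\V^*\X$ is distributed as $\mathcal{N}(0,1)$; a Gaussian tail bound together with a union bound over the $kn$ entries shows that, with probability $1-1/\poly(n)$ over $\X$, all entries of $\V^*\X$ lie in $[-R,R]$ for $R=\Theta(\sqrt{\log(nk)})$. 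On $[-R,R]$ the polynomial of Lemma~\ref{lem:approx_relu}, rescaled from $[-1,1]$ via the positive homogeneity of $f$ (which only multiplies its degree by $O(R)$, keeping it polynomially bounded), satisfies $\sup_{|x|\le R}|f(x)-p(x)|\le\eta$; therefore on this event $\|f(\V^*\X)-p(\V^*\X)\|_F\le\sqrt{kn}\,\eta$. Taking $\eta=1/\poly(n,d,m,\kappa(\U^*),\kappa(\V^*),\sigma)$ with a large enough polynomial then gives $\Delta=1/\poly(n)$ on this event.

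It remains to remove the conditioning on the event that the entries of $\V^*\X$ are bounded. Since $\X$ is drawn i.i.d.\ Gaussian (and then held fixed throughout the section), this event fails with probability at most $1/\poly(n)$, and on its complement the total variation distance is trivially at most $1$; adding this contribution to the $\tfrac12\Delta$ bound above still leaves a total of $1/\poly(n)$. Combining everything, $d_{TV}(\AA,\AA')\le\tfrac12\Delta+1/\poly(n)=1/\poly(n)$, which is the claim.

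The only step with any real content is the polynomial approximation: Lemma~\ref{lem:approx_relu} controls $|f(x)-p(x)|$ only on $[-1,1]$, so one must first establish a high-probability $\ell_\infty$ bound $R=\Theta(\sqrt{\log(nk)})$ on the entries of $\V^*\X$ and then check that rescaling the approximating polynomial to radius $R$ costs only an $O(R)$ factor in its degree, so the uniform error $\eta$ can still be driven to $1/\poly(n)$ without the degree (and hence the sample/time complexity elsewhere) blowing up. The surrounding estimates — the triangle inequality for $\Delta$, the norm bound on $\U^*$ from Remark~\ref{remark:normalize}, and the Gaussian total variation bound — are entirely standard.
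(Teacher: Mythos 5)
Your argument is correct and follows the same route as the paper's: both reduce to the standard total-variation bound for identity-covariance Gaussians with close means (the paper tensorizes a column-wise bound; you apply Pinsker to the full vectorized Gaussian, which is equivalent up to constants and in fact slightly tighter), and hence both reduce to a Frobenius-norm bound on the mean shift. The only substantive difference is one of scope: the paper treats the bound $\|\U^*(f(\V^*\X)-p(\V^*\X))\|_F = 1/\poly(n)$ as already established in the paragraph immediately preceding the lemma and so takes the mean shift to be exactly the hypothesized $\ZZ$, whereas you re-derive this estimate inside the proof, and in doing so you are appropriately careful about rescaling the polynomial of Lemma~\ref{lem:approx_relu} from $[-1,1]$ to the high-probability $\ell_\infty$ range $[-\Theta(\sqrt{\log n}),\Theta(\sqrt{\log n})]$ of the entries of $\V^*\X$ -- a domain-extension step the paper elides with a brief reference to ``simple transformations.''
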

\begin{proof}
Given two independent Gaussian $\mathcal{N}(\mu_1, \mathbb{I}) , \mathcal{N}(\mu_2, \mathbb{I})$, a standard result in probability theory is that their variations distance is $\Theta(\|\mu_1 - \mu_2\|_2)$ \cite{dasgupta2008asymptotic}. Thus the variation distance between the $i$-th column of $\AA$ and $\AA'$ is $O(\|\ZZ_{*,i}\|_2)$. Since the columns of the input are independent, the overall distribution is a product distribution so the variation distance adds. Thus the total variation distance is at most $O(\|\ZZ_{i,*}\|_F^2) = \frac{1}{\poly(n)}$ as needed.
\end{proof}

It follows from the above lemma that the algorithm corresponding to Theorem \ref{thm:anandkumar} cannot distinguish  between receiving samples from the ReLU distribution with artificially added Gaussian noise or the samples from the polynomial approximation with small adversarial noise. Therefore, the algorithm recovers an approximation to the underlying weight matrix $\V^*$ in polynomial time. Formally, if we have an algorithm which can solve a class of problems coming from a distribution $\mathcal{D}$ with failure probability at most $\delta$, then it can solve problems coming a distribution $\mathcal{D}'$ with failure probability at most $O(\delta + \delta')$, where $\delta'$ is the variational distance between $\mathcal{D}$ and $\mathcal{D}'$. Since $\delta'$ in our case is $\frac{1}{\poly(n)}$, we can safely ignore this additional failure probability going forward.
This is summarized in the following lemma, which follows directly from the definition of variation distance. Namely, that the probability of any event in one distribution can change by at most the variation distance in another distribution, in particular the event that an algorithm succeeds on that distribution.

\begin{lemma}
Suppose we have an algorithm $\mathcal{A}$ that solves a problem $\mathcal{P}$ taken from a distribution $\mathcal{D}$ over $\R^n$ with probability $1-\delta$. Let $\mathcal{D}'$ be a distribution over $\R^n$ with variation distance at most $\delta' \geq 0$ from $\mathcal{D}$. Then if $\mathcal{P}'$ is drawn from $\mathcal{D}'$, algorithm $\mathcal{A}$ will solve $\mathcal{P}'$ with probability $1 - O(\delta + \delta')$. 
\end{lemma}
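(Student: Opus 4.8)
The plan is to reduce the statement to the defining property of total variation distance: for any event $E$ (measurable subset of the sample space), $\big|\Pr_{x \sim \mathcal{D}}[x \in E] - \Pr_{x \sim \mathcal{D}'}[x \in E]\big| \le d_{TV}(\mathcal{D},\mathcal{D}') \le \delta'$. First I would fix the relevant event. Let $E \subseteq \R^n$ be the set of instances on which $\mathcal{A}$ ``solves'' the problem; by hypothesis $\Pr_{x \sim \mathcal{D}}[x \in E] \ge 1 - \delta$. Applying the variation-distance bound to this $E$ immediately gives $\Pr_{x \sim \mathcal{D}'}[x \in E] \ge \Pr_{x \sim \mathcal{D}}[x \in E] - \delta' \ge 1 - \delta - \delta'$, which is the claimed $1 - O(\delta + \delta')$ bound (indeed with constant $1$).

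The one point that needs a little care is that $\mathcal{A}$ may be randomized, so ``solving'' an instance is not a deterministic predicate of $x$ alone but a joint event over $x$ and the algorithm's internal coins $r \sim \mathcal{R}$, drawn independently of the input. To handle this I would work in the product space: let $E \subseteq \R^n \times \Omega_{\mathcal{R}}$ be the event that $\mathcal{A}$ with coins $r$ succeeds on input $x$. Then one uses the standard fact that tensoring with a fixed independent distribution does not increase total variation distance, i.e. $d_{TV}(\mathcal{D} \times \mathcal{R}, \, \mathcal{D}' \times \mathcal{R}) = d_{TV}(\mathcal{D}, \mathcal{D}') \le \delta'$ (this follows from the coupling characterization of $d_{TV}$, or from a direct computation with the product densities). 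Applying the variation-distance inequality to $E$ in this product space then yields $\Pr_{x \sim \mathcal{D}', r \sim \mathcal{R}}[(x,r) \in E] \ge 1 - \delta - \delta'$, as desired.

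I do not anticipate a genuine obstacle here; the statement is essentially a restatement of the definition of total variation distance, and the only mild subtlety is the bookkeeping for the algorithm's randomness described above (together with the implicit measurability of the success event, which we take for granted in the real-RAM model). The corollary we actually want to invoke in the paper — that running the tensor-decomposition algorithm of Theorem~\ref{thm:anandkumar} on the ReLU-generated samples $\AA = \U^* f(\V^*\X) + \G$ succeeds because those samples are within $\frac{1}{\poly(n)}$ total variation of the polynomial-generated samples $\AA' = \U^* p(\V^*\X) + \G + \ZZ$ on which the algorithm provably succeeds — then follows by taking $\delta$ to be the failure probability of the algorithm on $\mathcal{D}'$ and $\delta' = \frac{1}{\poly(n)}$ the bound from the preceding lemma.
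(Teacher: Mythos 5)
Your argument is correct and matches the paper's: the paper likewise observes that the lemma "follows directly from the definition of variation distance," applying the TV bound to the success event. Your additional remark about handling the algorithm's internal coins via the product space $\mathcal{D}\times\mathcal{R}$ (and the fact that tensoring with an independent distribution preserves TV distance) is a careful detail the paper leaves implicit.
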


\begin{corollary}(Approximate ReLU Recovery.) \label{cor:relurecover}
Let $\AA \in \R^{m \times n}$, $\X\in \R^{d \times n}$ be inputs such that $\AA = \U^* f(\V^* \X)$, where $f$ is the ReLU activation function, $\U^* \in \R^{m \times k}$ has full column rank, $\V^* \in \R^{k \times d}$ has full row rank, for all $i \in [n]$, $\X_{*,i} \sim \mathcal{N}(0, \II)$. Then, there exists an algorithm that recovers $\widehat{\V}$ such that $\|\widehat{\V} - \D \mathbf{\Pi} \V^* \|_F \leq \frac{1}{\poly(n,m,d,\kappa(\U^*))}$, where $\D$ is a diagonal $\pm 1$ matrix and $\mathbf{\Pi}$ is a permutation matrix. Further, the running time of this algorithm is $\poly(n,m,d,\kappa(\U^*))$.
\end{corollary}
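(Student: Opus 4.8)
The plan is to derive this as a direct application of Theorem~\ref{thm:anandkumar} after three preliminary reductions, and then to check that every spectral and moment parameter in the running time of that theorem is polynomially bounded in $n,m,d,\kappa(\U^*)$ (the bound in Theorem~\ref{thm:anandkumar} also involves $\kappa(\V^*)$, which under this section's standing assumption $n=\poly(d,m,\kappa(\U^*),\kappa(\V^*))$ is at most $\poly(n)$ and hence absorbed). First, by Remark~\ref{remark:normalize} I would rescale $\AA$ so that $\|\U^*\|_2 \le m\,\kappa(\U^*)$. Second, since every row of $\V^*$ has unit norm, every entry of $\V^*\X$ is an $\mathcal{N}(0,1)$ variable, so with high probability all $nk$ of them lie in $[-R,R]$ with $R = \Theta(\sqrt{\log n})$; applying Lemma~\ref{lem:approx_relu} after the change of variables $x \mapsto x/R$ (and rescaling the output by $R$) produces a polynomial $p$ of degree $O(R/\eta)$ with $p([-R,R]) \subseteq [0,R]$ and $\sup_{x \in [-R,R]}|f(x)-p(x)| \le R\eta$. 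Writing $\AA = \U^* p(\V^*\X) + \ZZ$ with $\ZZ = \U^*\big(f(\V^*\X)-p(\V^*\X)\big)$, the bound $\|\V^*\X\|_F = O(\sqrt{nk})$ together with $\|\U^*\|_2 \le m\kappa(\U^*)$ gives $\|\ZZ\|_F \le \|\U^*\|_2 \sqrt{nk}\, R\eta \le \tfrac{1}{\poly(n)}$ once $\eta$ is chosen to be an inverse polynomial of sufficiently high degree in $n,m,d,\kappa(\U^*)$.

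Third, I would add a fresh noise matrix $\G$ with i.i.d.\ $\mathcal{N}(0,1)$ entries and run the algorithm of Theorem~\ref{thm:anandkumar} on $\AA+\G$. By the variation-distance bounds established just above, the law of $\AA+\G = \U^* f(\V^*\X)+\G$ is within total variation $\tfrac{1}{\poly(n)}$ of the law of $\U^* p(\V^*\X)+\G$ (for fixed $\X$, hence also after integrating over $\X$ since the columns are independent), so any algorithm that succeeds with high probability on the clean polynomial model also succeeds, up to an extra $\tfrac{1}{\poly(n)}$ failure probability, on $\AA+\G$. The clean model $\U^* p(\V^*\X)+\G$ is exactly of the type handled by Theorem~\ref{thm:anandkumar}: $p$ is $C^{\infty}$ and in particular thrice differentiable, $\U^*$ has full column rank, $\V^*$ has full row rank, $\X$ is i.i.d.\ Gaussian, and the added noise $\G$ is mean-zero sub-Gaussian with $\sigma_{\textrm{noise}}=1$. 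That theorem then returns $\widehat{\V}$ with $\|\widehat{\V}-\D\mathbf{\Pi}\V^*\|_F \le \epsilon$ for a $\pm1$ diagonal $\D$ and permutation $\mathbf{\Pi}$, and I would take $\epsilon = \tfrac{1}{\poly(n,m,d,\kappa(\U^*))}$.

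To conclude I need the parameters in the running time of Theorem~\ref{thm:anandkumar} to be polynomially bounded. For Gaussian $\X$ the score-function terms are $\expec{}{\|\M_3(x)\M_3(x)^T\|_2}=O(d^3)$ and $\expec{}{\|\mathcal{S}_2(x)\mathcal{S}_2(x)^T\|_2}=O(d^2)$, as recorded in the theorem. The remaining terms are $1/\lambda_{\min}$, $\lambda_{\max}$, and $\widetilde{\lambda}_{\max}/\widetilde{\lambda}_{\min}$, where $\lambda_j = \expec{x}{p'''(\langle\V^*_{j,*},x\rangle)}\,\langle\U^*_{*,j},\theta\rangle$ and $\widetilde{\lambda}_j = \expec{x}{p''(\langle\V^*_{j,*},x\rangle)}\,\langle\U^*_{*,j},\theta\rangle$ for the contraction direction $\theta$. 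I would (a) choose $\theta$, say as a random Gaussian vector, so that $|\langle\U^*_{*,j},\theta\rangle|$ lies in $\big[\tfrac{1}{\poly(m,\kappa(\U^*))},\ \poly(m,\kappa(\U^*))\big]$ for every $j\in[k]$ at once---the lower bound from anticoncentration union bounded over the $k\le m$ columns, the upper bound from $\|\U^*\|_2$, using full column rank so no column is degenerate---and (b) show that $\expec{g\sim\mathcal{N}(0,1)}{p''(g)}$ and $\expec{g\sim\mathcal{N}(0,1)}{p'''(g)}$ are $O(1)$ above and $\tfrac{1}{\poly(n)}$ below in absolute value, via Gaussian integration by parts: these equal, up to normalization, the degree-$2$ and degree-$3$ Hermite coefficients of $p$, which agree with those of $f$ up to the error incurred on the tail $|g|>R$. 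Together (a) and (b) give $\lambda_{\min},\widetilde{\lambda}_{\min}\ge \tfrac{1}{\poly}$ and $\lambda_{\max},\widetilde{\lambda}_{\max}\le \poly$; combined with $\sigma_{\textrm{noise}}=O(1)$, the running time collapses to $\poly(n,m,d,\kappa(\U^*))$, which is the claim.

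The hard part is step (b): showing that the third-order coefficient $\lambda_{\min}$ does not degenerate. Because $p$ is a very high-degree polynomial it can be enormous on the Gaussian tail outside $[-R,R]$, so the Hermite-coefficient comparison between $p$ and $f$ must be made quantitative with careful tail control, and this is exactly where the specific choice of polynomial surrogate (and the non-degeneracy of the relevant moment of the rectified activation) matters. Step (a)---choosing $\theta$ to avoid near-orthogonality with all $k$ columns of $\U^*$ quantitatively in $\kappa(\U^*)$---is the secondary point needing care; everything else is bookkeeping with the operator-norm and variation-distance estimates already in hand.
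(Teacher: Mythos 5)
Your proposal follows the paper's own proof route essentially step for step: rescale via Remark~\ref{remark:normalize}, replace $f$ by the Chebyshev/Newman surrogate $p$ of Lemma~\ref{lem:approx_relu} on the high-probability range of $\V^*\X$, fold the discrepancy into an adversarial $\ZZ$ with $\|\ZZ\|_F \le 1/\poly(n)$, add fresh $\G\sim\mathcal{N}(0,\II)$ and argue that $\U^*f(\V^*\X)+\G$ and $\U^*p(\V^*\X)+\ZZ+\G$ are within $1/\poly(n)$ total variation, then invoke Theorem~\ref{thm:anandkumar}. You are in fact more explicit than the paper on two points that the paper compresses: the change of variables taking the approximation guarantee from $[-1,1]$ to $[-R,R]$ with $R=\Theta(\sqrt{\log n})$, which the paper dispatches with ``can be easily extended... using simple transformations''; and the verification that the $\lambda_j,\widetilde\lambda_j$ terms in the running time of Theorem~\ref{thm:anandkumar} are polynomially controlled, which the paper simply defers to Lemmas 9--10 of the cited work via Remark~\ref{remark:anandkumar} rather than arguing directly. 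Your observations (a) on choosing $\theta$ so all $|\langle\U^*_{*,j},\theta\rangle|$ are non-degenerate and (b) on the Hermite coefficients of $p$ agreeing with those of $f$ up to tail error are the right content for closing that deferred step, and you correctly flag (b) as the part requiring the most care given that $p$ is high degree.
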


First observe that we can assume WLOG that $\Pi = \mathbb{I}$, in other words that we recover an approximate $\V^*$ only up to its signs and not a permutation. We do this by simply (implicitly) permuting the rows of $\V^*$ to agree with our permutation, and permuting the columns of $\U^*$ by the same permutation. The resulting $\AA$ is identical, and so we can assume that we know the permutation already.

\begin{Frame}[\textbf{Algorithm \ref{alg:overall_exact_poly} : ExactNeuralNet}$(\AA, \X)$]
\label{alg:overall_exact_poly}
\ttx{Input:} Matrices  $\AA \in \R^{m \times n}$ and $\X \in \R^{d \times n}$ such that each entry in $\X \sim \mathcal{N}(0, 1)$. \\
\begin{enumerate}
    \item Let $\mathcal{S}_3(x) = H_3(x)$, where $H_3(x) = \frac{- \nabla^{(3)}_x p(x)}{p(x)}$ is the 3-rd order Hermite polynomial and and $p(x) = \frac{1}{(\sqrt{2\pi})^d} e^{-\frac{\|x\|^2_2}{2}}$.
    \item Let $\AA' = \AA + \G$ where $\G \in \R^{m \times n}$ and $\G_{i,j} \sim \mathcal{N}(0,1)$. 
    \item Compute the 4-th order tensor $\widetilde{\T} = \frac{1}{n}\sum^{n}_{i=1} \AA'_{*,i} \otimes \mathcal{S}_3(\X_{*,i})$. Collapse the first mode using a random vector $\theta$. By Lemma \ref{lem:stein}, $\widetilde{T}(\theta, \II, \II, \II) = \sum^{k}_{j=1} \lambda_j \V^*_{j,*} \otimes \V^*_{j,*} \otimes \V^*_{j,*}$, where $\lambda_j = \expecf{x}{f'''(\V^* x)} \langle \U^*_{*,j}, \theta\rangle$. 
    \item Compute a CP-decomposition of $\widetilde{T}(\theta, \II, \II, \II)$ using Tensor Power Method corresponding to Theorem \ref{thm:anandkumar}, \cite{janzamin2015beating}, with accuracy parameter $\epsilon = \frac{1}{\poly(d,m, \kappa(\V), \kappa(\U))}$ to obtain $\widehat{\V}$ such that $\|\widehat{\V} - \D \mathbf{\Pi} \V^* \|_F \leq \frac{1}{\poly(d,m, \kappa(\V), \kappa(\U))}$, where $\D$ is a diagonal $\pm 1$ matrix and $\mathbf{\Pi}$ is a permutation matrix.
    \item Run the Recovering Signs Algorithm (\ref{alg:recoversigns}) on $\widehat{\V}$, $\AA$ and $\X$ to obtain $\V^*$. 
    \item Using the matrix $\V^*$ obtained above, set up and solve the following linear system for the matrix $\U$:
    \begin{equation}
        \AA = \U f(\V^* \X)
    \end{equation}
    \item Let $\U^*$ be the solution to the above linear system. 
\end{enumerate}
\ttx{Output:}  $\U^* ,\V^*$.
\end{Frame}

Unfortunately, the ambiguity in signs resulting from the algorithm of Theorem \ref{thm:anandkumar} is a non-trivial difficulty, and must be resolved algorithmically. This is due to the fact that the ReLU is sensitive to negative scalings, as $f(\cdot)$ only commutes with positive scalings. Suppose the diagonal of $\D$ of Corollary \ref{cor:relurecover} is given by the coefficents $\xi_i \in \{1,-1\}$. Then in order to recover the weights, we must recover the terms $\xi_i$,
Naively trying each sign results in a running time of $2^k$, which is no longer polynomial\footnote{We remark that some prior results \cite{janzamin2015beating} were able to handle this ambiguity by considering only a restricted class of smooth activation functions $f(\cdot)$ with the property that $f(x) = 1-f(-x)$ for all $x \in \R$. Using affine transformations after application of the ReLU, this sign ambiguity for such activation functions can be accounted for. Since firstly the ReLU does not satisfy this condition and is non-trivially sensitive to the signs of its input,and secondly we are restricting to optimization over networks without affine terms, a more involved approach to dealing with sign ambiguity is required (especially for the noisy case).}. Thus, a considerably technical challenge will be to show how to determine the correct scaling for each row even in the presence of noise. We begin with the case where there is no noise.

\paragraph{Recovering $\V^*$ from the Tensor Decomposition in the Noiseless Case.}
Recall that the tensor power method provides us with row vectors $v_i$ such that $\|v_i - \xi_i \V_{i,*}^* \|_2 \leq \eps$ where $\eps = O\left(\allowbreak \frac{1}{\poly(d,m,k,\kappa(\U^*),\kappa(\V^*))}\right)$ for $\xi_i \in \{\V^*_{i,*}, -\V^*_{i,*}\}$. Thus, the tensor power method gives us a \textit{noisy} version of \textit{either} $\V^*_{i,*}$ or $-\V^*_{i,*}$, however we do not know which. A priori, it would require $2^k$ time to guess the correct signs of the $k$ vectors $v_i$. In this section, we show that using the combinatorial sparsity patterns in the row span of $\AA$, we can not only recover the signs, but recover the matrix $\V^*$ \textit{exactly}. Our procedure is detailed in Algorithm \ref{alg:recoversigns} below, which takes the outputs $v_i$ from the tensor power method and returns the true matrix $\V^*$ up to a permutation of the rows.

\begin{Frame}[\textbf{Algorithm  \ref{alg:recoversigns}: Exact Recovery of $\V^*$ }]
\label{alg:recoversigns}
\ttx{Input:} Matrices  $\AA = \U^*f(\V^*\X) + \E$, and $v^T_i \in \R^d$ s.t. $\|v_i - \xi_i \V_{i,*}^*\|_2 \leq \eps$ for some $\eps = O\left(\frac{1}{\poly(d,m,\kappa(\U^*),\kappa(\V^*))}\right)$ for some unknown $\xi_i \in \{ 1,-1\}$ and each $i=1,2,\dots,k$.
\begin{enumerate}
\item Let $\overline{\X} \in \R^{d \times \ell}$ be the first $\ell = \poly\left(k,d,m,\kappa(\U^*),\kappa(\V^*)\right)$ columns of $\X$, and let $\overline{\AA} = \U^*f(\V^*\overline{\X})$. 
    \item Let $\tau = \Theta(1/\poly(\ell))$ be a thresholding value. Define the row vectors $v_i^+,v_i^- \in \R^\ell$ via
    
    \[(v_i^+)_j = \begin{cases}
     f(v_i\overline{\X})_{j} & \text{ if } f(v_i\overline{\X})_{j} > \tau \\
     0 & \text{ otherwise }
    \end{cases} \; \; \; \; \; \; v_i^- = \begin{cases}
     f(-v_i\overline{\X})_{j} & \text{ if } f(-v_i\overline{\X})_{j} > \tau \\
     0 & \text{ otherwise }
    \end{cases} \]
  for $j=1,2,\dots,\ell$.
    \item Let $S^+_i$ be the sign pattern of $v_i^+$, and $S^-_i$ be the sign pattern of $v_i^-$.  For $q \in \{+,-\}$, solve define the $r$ linear systems of equations in the variable $w^q_i \in \R^k$, where the $r$-th system is given by
      \[ (w^q_i\overline{\AA})_j  = 0 \; \; \text{ for } j \notin S^q_i \]
      \[ (w^q_i)_r = 1 \]
      Where $(w^q_i)_r$ is the $r$-th coordinate of $(w^q_i)$. Then let $(w^q_i)$ be the vector returned from the first linear system which had a solution. 
     \item Let $q'$ be such that the above linear system returns a solution $w^{q'}_i$ with the constraints given by $S^{q'}_i$ (and at least one of the constraints of the form $(w^{q'}_i)_r = 1$). We output \textbf{FAIL} if this occurs for both $q \in \{+,-\}$.  
     \item Output $\V_{i,*}=z_i/\|z_i\|_2$ where $z_i$ is the  solution to the following linear system.
     \begin{equation*}
    \begin{array}{ll@{}ll}
    \text{ for all } j \in S^{q'}_i & \displaystyle    &(z_i \overline{\X})_j = (w^q_i\overline{\AA})_j  & \\ 
    \end{array}
    \end{equation*}
\end{enumerate}
\ttx{Output:} $\V$ such that $\V = \V^*$.
\end{Frame}

 Before we proceed, we recall a standard fact about the singular values of random Gaussian matrices.

	\begin{lemma}[Corollary 5.35 \cite{vershynin2010introduction}] \label{prop:gaussiancondition}
	Let $\S \in \R^{k \times n}$ be a matrix of i.i.d. normal $\mathcal{N}(0,1)$ random variables, with $k < 10n$. Then with probability $1- 2e^{-n/8}$, for all row vectors $w \in \R^{\ell}$ we have 
	\[  \sqrt{n}/3 \|w\|_2  \leq  \|w \S\|_2 \leq 2\sqrt{n} \|w\|_2 \]
	In other words, we have $  \sqrt{n}/3   \leq \sigma_{\min}(\S) \leq \sigma_{\max}(\S) \leq \ 2\sqrt{n} $.
	\end{lemma}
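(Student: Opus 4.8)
The plan is to recognize this as a standard non-asymptotic bound on the extreme singular values of a rectangular Gaussian matrix; indeed it is exactly Corollary 5.35 of \cite{vershynin2010introduction} applied to $\S^T \in \R^{n \times k}$, so in the writeup the cleanest move is simply to cite that result. If one instead wants a self-contained argument, the route I would take is an $\varepsilon$-net argument on the sphere combined with $\chi^2$ tail bounds. Since $\|w\S\|_2/\|w\|_2$, as $w$ ranges over nonzero row vectors in $\R^k$, ranges exactly over the singular values of $\S$, the claim is equivalent to showing $\sigma_{\min}(\S) \ge \sqrt{n}/3$ and $\sigma_{\max}(\S) \le 2\sqrt{n}$ with the stated probability.

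First I would handle a single fixed direction. For a unit vector $w \in \R^k$, the vector $w\S \in \R^n$ has $j$-th entry $(w\S)_j = \sum_i w_i \S_{i,j}$, which is $\mathcal{N}(0,\|w\|_2^2) = \mathcal{N}(0,1)$, and entries indexed by distinct $j$ involve disjoint columns of $\S$ and are therefore independent. Hence $\|w\S\|_2^2 \sim \chi^2_n$, and the Laurent--Massart bounds give $\Pr\big[\|w\S\|_2^2 \ge n + 2\sqrt{nx} + 2x\big] \le e^{-x}$ and $\Pr\big[\|w\S\|_2^2 \le n - 2\sqrt{nx}\big] \le e^{-x}$ for every $x \ge 0$.

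Next I would discretize: fix a $(1/c)$-net $\mathcal{N}$ of the sphere $S^{k-1}$ with $|\mathcal{N}| \le (3c)^k$, apply the two tail bounds at a deviation level $x = \Theta(n)$, and union bound over $\mathcal{N}$. Because $k$ is small relative to $n$ in every invocation of this lemma (it is only used with $n$ polynomially larger than $k$), the cardinality factor $(3c)^k$ is dominated by $e^{-\Theta(n)}$, and the failure probability can be pushed down to $2e^{-n/8}$. Then I would run the usual net-to-sphere transfer: the upper bound follows from $\sigma_{\max}(\S) \le (1-1/c)^{-1}\max_{w\in\mathcal{N}}\|w\S\|_2$, and the lower bound from writing an arbitrary unit $w = w' + h$ with $w'\in\mathcal{N}$ and $\|h\|_2 \le 1/c$, giving $\|w\S\|_2 \ge \min_{w'\in\mathcal{N}}\|w'\S\|_2 - (1/c)\,\sigma_{\max}(\S)$. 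Choosing the net fineness $c$ and the deviation level $x$ appropriately produces the clean constants $\sqrt{n}/3$ and $2\sqrt{n}$.

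The only real obstacle is bookkeeping: simultaneously (i) making the net-cardinality factor $e^{\Theta(k)}$ negligible against the per-direction failure probability $e^{-\Theta(n)}$ so the union bound closes at exactly $2e^{-n/8}$, and (ii) tracking constants through the $\chi^2$ deviation and the net-to-sphere step so that the final constants come out as $1/3$ and $2$ rather than something messier. Both steps are routine once one commits to concrete choices of $c$ and $x$, and there is no conceptual difficulty here, which is precisely why quoting Corollary 5.35 of \cite{vershynin2010introduction} is the most economical option.
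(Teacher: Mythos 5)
Your primary move—citing Corollary 5.35 of \cite{vershynin2010introduction} applied to $\S^T$—is exactly what the paper does: the lemma is stated with that citation and no further proof. Your supplementary $\varepsilon$-net/$\chi^2$ sketch is sound in outline (Gaussian $2$-stability gives $\|w\S\|_2^2 \sim \chi^2_n$ for unit $w$, Laurent--Massart tails, union over a net, net-to-sphere transfer), and it implicitly and correctly treats the hypothesis as $k \ll n$, which is what the stated constants actually require; the displayed condition ``$k < 10n$'' is almost certainly a typo for something like $10k < n$.
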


\begin{theorem}\label{thm:recoverexactsigns}
With high probability in $d,m$, Algorithm \ref{alg:recoversigns} does not fail, and finds $\V$ such that $\V = \V^*$.
\end{theorem}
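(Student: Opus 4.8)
The plan is to fix a row index $i \in [k]$, condition on a handful of high-probability events over the Gaussian block $\overline{\X}$, analyze the $i$-th iteration of Algorithm~\ref{alg:recoversigns}, and finish with a union bound over $i \in [k]$ (yielding the claimed $1/\poly(d,m,k)$ failure probability). Let $q^\star \in \{+,-\}$ be the sign with $\|q^\star v_i - \V^*_{i,*}\|_2 \le \eps$; since $\xi_i \in \{1,-1\}$ and $\|v_i - \xi_i \V^*_{i,*}\|_2 \le \eps$ this is just $q^\star = \xi_i$, and $q^\star v_i \overline{\X}$ agrees entrywise with $\V^*_{i,*}\overline{\X}$ up to $\eps\,\|\overline{\X}_{*,j}\|_2$. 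Since we are in the noiseless case, $\overline{\AA} = \U^* f(\V^*\overline{\X})$ with $\U^*$ of full column rank, so the row span of $\overline{\AA}$ equals the row span of $f(\V^*\overline{\X})$; I will use this identification throughout, writing a generic element of this span as $w f(\V^*\overline{\X})$.

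\textbf{Step 1 (thresholding recovers the sign patterns exactly).} I would condition on: $\|\overline{\X}_{*,j}\|_2 = O(\sqrt{d\log \ell})$ for all $j\in[\ell]$; every positive entry of $\V^*_{i,*}\overline{\X}$ exceeds $1/\poly(\ell)$ and no entry vanishes (the entries are i.i.d.\ $\mathcal N(0,1)$ because $\|\V^*_{i,*}\|_2 = 1$); and at least $\ell/3$ entries of $\V^*_{i,*}\overline{\X}$ are positive and at least $\ell/3$ negative (Chernoff). Because $\eps = 1/\poly(d,m,k,\kappa(\U^*),\kappa(\V^*))$ may be taken an arbitrarily high power smaller than $1/\poly(\ell)$ (as $\ell$ is itself a fixed $\poly(d,m,k,\kappa)$), one can pick $\tau = \Theta(1/\poly(\ell))$ with $\eps\cdot O(\sqrt{d\log\ell}) \ll \tau \ll 1/\poly(\ell)$, and then verify directly that $v_i^{q^\star}$ is nonzero exactly on $T := \{\,j : \V^*_{i,*}\overline{\X}_{*,j} > 0\,\} = \sg{f(\V^*_{i,*}\overline{\X})}$, while the opposite choice produces $S^{-q^\star}_i = [\ell]\setminus T =: \overline{T}$; the sets $T,\overline{T}$ partition $[\ell]$ almost surely.

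\textbf{Step 2 (the correct sign is feasible and recovers the row).} The vector $f(\V^*_{i,*}\overline{\X})$ lies in the row span of $\overline{\AA}$ and is supported exactly on $T$, so its (nonzero) coefficient vector satisfies the constraints ``$(w\overline{\AA})_j = 0$ for $j\notin T$'', and choosing $r$ where that coefficient vector does not vanish makes the normalized Step~3 system feasible. By Lemma~\ref{lem:uniquesign} (applicable since $\ell > t\poly(k,\kappa(\V^*))$) the only vectors in the row span supported inside $T = \sg{f(\V^*_{i,*}\overline{\X})}$ are scalar multiples of $f(\V^*_{i,*}\overline{\X})$, so $w^{q^\star}_i\overline{\AA} = c\,f(\V^*_{i,*}\overline{\X})$ for some $c\neq 0$, and (flipping its sign if necessary, as it is a scaling of a nonnegative vector) we may take $c>0$. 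The Step~5 system then asks for $z_i$ with $(z_i\overline{\X})_j = c\,\V^*_{i,*}\overline{\X}_{*,j}$ for all $j\in T$; since $|T|\ge \ell/3 \ge d$ and $\overline{\X}_{*,T}$ is full rank almost surely (Gaussian columns conditioned on lying in a fixed halfspace, by induction on columns exactly as in Theorem~\ref{thm:recoverVgivenU}), the unique solution is $z_i = c\,\V^*_{i,*}$, whence $\V_{i,*} = z_i/\|z_i\|_2 = \V^*_{i,*}$.

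\textbf{Step 3 (the wrong sign is infeasible --- the main obstacle).} It remains to show no nonzero $y = w f(\V^*\overline{\X})$ vanishes on all of $T$ (equivalently is supported inside $\overline{T}$); this forces the Step~3 system for $-q^\star$ to be infeasible, so the algorithm never outputs \textbf{FAIL} and always picks $q' = q^\star$, and Step~2 then gives $\V = \V^*$. If $w$ is $1$-sparse, $y$ is a scaling of some row $f(\V^*_{l,*}\overline{\X})$ whose support would have to lie inside $\overline{T}$: impossible for $l=i$ (as $T\neq\emptyset$), and for $l\neq i$ it contradicts the fact that rows $i$ and $l$ of $f(\V^*\overline{\X})$ are simultaneously strictly positive on $\Omega(\ell/\kappa(\V^*))$ columns --- which follows because the angle between $\V^*_{i,*}$ and $\V^*_{l,*}$ is bounded away from both $0$ and $\pi$ by $\Omega(1/\kappa(\V^*))$ (Lemma~\ref{prop:condition}) together with the rotational-invariance computation in the proof of Lemma~\ref{prop:uniquesign}. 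If $w$ is not $1$-sparse, I would reproduce the rank argument of Lemma~\ref{lem:uniquesign} with $T$ in place of the support of a single row: letting $\W$ be $f(\V^*\overline{\X})$ restricted to the $\ge 2$ rows on which $w$ is nonzero, the two facts just stated produce $\ge 10k^2$ columns of $\W$ that are nonzero yet on which $w\W$ vanishes, and then the iterative pruning to a matrix $\W''$ whose every row has $\ge k$ nonzeros, followed by the square-submatrix and polynomial-nonvanishing argument (using the everywhere-positive joint density of Lemma~\ref{prop:pdf}), forces $\W''$ to have full rank, contradicting $w\W'' = 0$. The genuinely delicate point is precisely this transport of the uniqueness argument of Lemma~\ref{lem:uniquesign} to the ``reflected'' pattern $\overline{T}$, which is not the support of any row of $f(\V^*\overline{\X})$, so the simultaneous-positivity estimate is essential both to rule out stray $1$-sparse combinations and, after pruning, the higher-rank ones.
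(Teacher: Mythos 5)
Your proposal is correct and follows the same two main ideas as the paper's proof: (i) for a small enough threshold, $S^\pm_i$ exactly equals the support of $f(\pm\V^*_{i,*}\overline{\X})$; (ii) the correct sign's system recovers a scaling of $f(\V^*_{i,*}\overline{\X})$ by Lemma~\ref{lem:uniquesign}, the row-rank of $\overline{\X}_{S_i^{q'}}$ gives a unique $z_i$, and the wrong sign's system is infeasible. The only place you diverge is in how you establish that infeasibility. The paper appends $-\V^*_{i,*}$ to $\V^*$ to form a $(k+1)\times d$ matrix $\W$, re-derives the pairwise sign-disagreement of Lemma~\ref{prop:uniquesign} for the two new pairs, and then invokes Lemma~\ref{lem:uniquesign} on the row span of $f(\W\overline{\X})$ to conclude that the only vector supported in $S^-_i$ is a scaling of $f(-\V^*_{i,*}\overline{\X})$, which is not in the row span of $f(\V^*\overline{\X})$. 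You instead re-run the Lemma~\ref{lem:uniquesign} argument directly for the complementary pattern $\overline{T}$ inside the (unaugmented) row span, replacing the opposite-sign disagreement of Lemma~\ref{prop:uniquesign} by the dual estimate that rows $i$ and $l$ are simultaneously strictly positive on $\Omega(\ell/\kappa)$ columns; this estimate indeed follows from the same rotational-invariance computation together with Lemma~\ref{prop:condition}. Your route has the minor advantage of never leaving the rank-$k$ row span, so the joint-density nondegeneracy of Lemma~\ref{prop:pdf} applies verbatim without worrying that the augmented $\W$ has only rank $k$; the paper's route has the advantage that Lemma~\ref{lem:uniquesign} can be cited as a black box once the sign-disagreement for the augmented pairs is checked. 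Both lead to the same conclusion, and the remaining steps (feasibility of the right sign, full row-rank of $\overline{\X}_{S^{q'}_i}$ via induction over Gaussian columns in a halfspace, normalization) match the paper exactly.
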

\begin{proof}
  Fix a $i \in [k]$, and WLOG suppose the input row $v_i$ is such that $\|v_i - \V_{i,*}^*\|_2 \leq \eps$ (i.e. WLOG suppose $\xi_i = 1$). Then $\|v_i\overline{\X} - \V_{i,*}^*\overline{\X}\|_2 \leq  O(1) \sqrt{\ell} \eps $ by the operator norm bound of Lemma \ref{prop:gaussiancondition}, and since $f$ can only decrease the distance between matrices, it follows that  $\|f(v_i\overline{\X}) - f(\V_{i,*}^*\overline{\X})\|_2 \leq O(1) \sqrt{\ell} \eps $ . Similarly, we have $\|f(-v_i\overline{\X}) - f(-\V_{i,*}^*\overline{\X})\|_2 \leq  O(1) \sqrt{\ell} \eps $ . 
  
  We now condition on the event that none of the non-zero entries of $f(\V_{i,*}^*\overline{\X}),$ and  $f(-\V_{i,*}^*\overline{\X})$ are less than $\tau = \Theta(1/\poly(\ell))$ (where $\tau$ is as in Algorithm \ref{alg:recoversigns}), which holds by a union bound with probability $1-1/\poly(\ell)$ (high prob in $d,m$) and the fact that the non-zero entries of these matrices have folded Gaussian marginals (distributed as the absolute value of a Gaussian). Given this, it follows that the sign patterns $S^+_i$ and $S^-_i$ of $v_i^+$ and $v_i^-$ are precisely the sign patterns of $f(\V_{i,*}^*\overline{\X})$ and $f(-\V_{i,*}^*\overline{\X})$ respectively. Since $f(\V_{i,*}^*\overline{\X})$ is in the row space of $\AA$, at least one of the the linear systems run on $S^+_i$ will have a unique solution given by taking $w^+_i = c e_i^T \U^{-1}$ for an appropriate constant $c \neq 0$ such that one of the constraints of the form $(w^+_i)_r = 1$ is satisfied, and where $\U^{-1}$ is the left inverse of $\U$. 
  
  Now consider the matrix $\W$ such that $\W$ is $\V^*$ with the row $-\V_{i,*}^*$ appended at the end. Then Applying the same argument as in Lemma \ref{prop:uniquesign}, we see that the sign patterns of every pair of rows of $f(\W\overline{\X})$ disagrees on at least $\poly(k)$ signs w.h.p.. This is easily seen for all pairs which contain one of $\{\V^*_{i,*},-\V^*_{i,*}\}$ by applying the exact argument of the lemma and noting that the condition number of the matrix $\V^*$ does not change after negating the $i$-th row. The pair $\{f(\V^*_{i,*}\overline{\X}),f(-\V^*_{i,*}\overline{\X})\}$ itself disagrees on all sign patterns, which completes the proof of the claim. Note here that disagree means that, for any two rows $y^i,y^j$ in question there are at least $\poly(k)$ coordinates such that \textit{both} $y^i_p > 0$ and $y^j_p < 0$ \textit{and} vice-versa. Thus no sparsity pattern is contained within any other.  Then by Lemma \ref{lem:uniquesign}, it follows that with high probability the only vector in the row span of $f(\W\overline{\X})$ which has a sparsity pattern contained within $S^-_i$ is a scalar multiple of $f(-\V^*_{i,*}\overline{\X})$. Since no vector with such a sparsity pattern exists in the row span of $f(\V^*\overline{\X})$, the linear system with constraints given by $S^-_i$ will be infeasible with high probability. 
  
  We conclude from the above $q' = +$ in the fourth step of Algorithm \ref{alg:recoversigns}, and that $w^{q'}_i = w^+_i$ is such that the sign pattern of $w^+_i\overline{\AA}$ is $S^+_i$, which is also the sign pattern of $f(\V^{*}_{i,*} \overline{\X})$. Since $w^+_i\overline{\AA}$ is in the row span of $f(\V^*\overline{\X})$, again by Lemma \ref{lem:uniquesign}, we conclude that $w^+_i\overline{\AA} = cf(\V^{*}_{i,*} \overline{\X})$ for some constant $c > 0$ (we can enfoce $c>0$ by flipping the sign of $w^+_i$ so that $w^+_i\overline{\AA}$ has no strictly negative entries). The linear system in step $5$ solves the equation $z_i \overline{\X}_{S^+_i} = w^+_i\overline{\AA}_{S^+_i}$, where $\overline{\X}_{S^+_i}$ is $\overline{\X}$ restricted to the columns corresponding to indices in $S^+_i$, and similarly with $\overline{\AA}_{S^+_i}$. This will have a unique solution if $\overline{\X}_{S^+_i}$ has full row rank. Since an index is included in $S^+_i$ with probability $1/2$ independently, it follows that $|S^+_i| > \ell/3 > \poly(d)$ with probability $1-2^{-\Omega(\ell)}$. A column of $\overline{\X}_{S^+_i}$ is just an i.i.d. Gaussian vector conditioned on being in a fixed half-space. Then if the first $i<d$ columns of $\overline{\X}_{S^+_i}$ are independent, they span a $i-1$ dimensional subspace. The Lebesgue measure of this subspace intersected with the halfspace has has measure $0$, since the half-space is $d$-dimensional and the subspace is $i$-dimensional. It follows that the probability that the $i+1$ column of $\overline{\X}_{S^+_i}$ is in this subspace is $0$, from which we conclude by induction that $\overline{\X}_{S^+_i}$ has rank $d$. Thus the solution $z_i$ is unique, and must therefore be equal to $\frac{1}{c}\V^*_{i,*}$ as we also have $\V^*_{i,*}\X_{S^+_i} = cw^+_i\overline{\AA}$ for $c>0$. After normalizing $z_i$ to have unit norm, we conclude $\V_{i,*} = z_i/\|z\|_2 = \V^*_{i,*}$ as needed.
  
\end{proof}

Given the results developed thus far, the correctness of our algorithm for the exact recovery of $\U^*,\V^*$ in the realizable (noiseless) case follows immediately. Recall we can always assume WLOG that $\|\V^*_{i,*}\|_2 = 1$ for all rows $i \in [k]$.
\begin{theorem}(Exact Recovery for Gaussian Input.)\label{thm:exactfinal}
Suppose $\AA = \U^* f(\V^* \X)$ where $\U^* \in \R^{m \times k}, \V^* \in \R^{k \times d}$ are both rank-$k$, and such that $\X \in \R^{d \times n}$ is i.i.d. Gaussian.  Assume WLOG that $\|\V^*_{i,*}\|_2 = 1$ for all rows $i \in [k]$. If $n = \Omega(\poly(d,m,\kappa(\U^*),\kappa(\V^*)))$, then Algorithm \ref{alg:overall_exact_poly} runs in $\poly(n)$-time and recovers $(\U^*)^T,\V^*$ exactly up to a permutation of the rows w.h.p. (in $d,m$).  
\end{theorem}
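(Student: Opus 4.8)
The plan is to assemble the pieces already in place: obtain an approximate, sign- and permutation-ambiguous copy of $\V^*$ from the tensor-decomposition step, upgrade it to an \emph{exact} copy of $\V^*$ using the sparsity-pattern machinery of Algorithm \ref{alg:recoversigns}, and then recover $\U^*$ exactly by a single linear solve. First I would apply Remark \ref{remark:normalize} to rescale $\AA$ so that $\|\U^*\|_2 < m\,\kappa(\U^*)$; this is exactly what lets us take the Chebyshev-approximation parameter $\eta$ of Lemma \ref{lem:approx_relu} as small as $1/\poly(n,d,m,\kappa(\U^*),\kappa(\V^*))$ while keeping $\|\ZZ\|_F$, for $\ZZ=\U^*(f(\V^*\X)-p(\V^*\X))$, negligible (using $\|\V^*\X\|_F=O(\sqrt{nk})$ w.h.p. from Lemma \ref{prop:gaussiancondition}). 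Adding the light Gaussian matrix $\G$ and invoking the variation-distance lemma, the tensor-power-method guarantee of Theorem \ref{thm:anandkumar}, i.e. Corollary \ref{cor:relurecover}, produces vectors $v_i$ with $\|v_i-\xi_i\V^*_{i,*}\|_2\le 1/\poly(d,m,k,\kappa(\U^*),\kappa(\V^*))$ for unknown signs $\xi_i\in\{1,-1\}$ and a permutation which we may assume WLOG is the identity by implicitly permuting the rows of $\V^*$ and the columns of $\U^*$ accordingly.

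The core step is then Theorem \ref{thm:recoverexactsigns}: feeding the $v_i$ and $\AA$ to Algorithm \ref{alg:recoversigns}, I would argue that for each $i$ the candidate sign pattern $S_i^{+}$ arising from the correct sign is feasible in the row span of $\AA$ — witnessed by $w_i^{+}=c\,e_i^{T}(\U^*)^{\dagger}$ for a suitable $c>0$, since $f(\V^*_{i,*}\overline\X)$ lies in the row span of $\overline\AA$ — whereas $S_i^{-}$ is infeasible. The infeasibility uses Lemma \ref{lem:uniquesign} applied to the augmented matrix $\W$ ($\V^*$ with the extra row $-\V^*_{i,*}$ appended), together with Lemma \ref{prop:uniquesign} to ensure that the sign patterns of all rows of $f(\W\overline\X)$ pairwise disagree on $\poly(k)$ coordinates, so that no sparsity pattern is contained in another and the only vector in the row span with a pattern contained in $S_i^{-}$ would be a scaling of $f(-\V^*_{i,*}\overline\X)$, which is absent from the row span of $f(\V^*\overline\X)$. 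Exact sign recovery then gives $w_i^{+}\overline\AA = c\,f(\V^*_{i,*}\overline\X)$ with $c>0$, and the restricted linear system $z_i\overline\X_{S_i^{+}}=w_i^{+}\overline\AA_{S_i^{+}}$ has a unique solution because $\overline\X_{S_i^{+}}$ has full row rank almost surely once $|S_i^{+}|\ge d$ (a Gaussian matrix conditioned on a halfspace, handled by the usual induction-on-columns argument). Normalizing $z_i$ to unit norm yields $\V^*_{i,*}$ exactly.

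With $\V^*$ recovered exactly, I would finish by solving $\AA = \U f(\V^*\X)$ for $\U$ via Gaussian elimination; this has the unique solution $\U^*$ because $f(\V^*\X)$ has full row rank w.h.p. by Lemma \ref{prop:random} (Gaussian columns satisfy its hypothesis with constant $1/2$, so $n=\Omega(k\log k)$ suffices). For the running time: the tensor step costs $\poly(n,m,d,\kappa(\U^*))$ by Corollary \ref{cor:relurecover}; Algorithm \ref{alg:recoversigns} solves $O(k^2)$ linear systems of size $\poly(\ell)$ with $\ell=\poly(d,m,k,\kappa(\U^*),\kappa(\V^*))$; and the last solve is polynomial, so the total is $\poly(n)$. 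A union bound over the finitely many high-probability events — sketch/rank preservation, no nonzero entry of $f(\pm\V^*_{i,*}\overline\X)$ falling below the threshold $\tau$, pairwise sign-pattern disagreement, and full rank of the various Gaussian submatrices — gives success probability $1-1/\poly(d,m,k)$. The most delicate point, and the one I would treat most carefully, is showing that after thresholding at $\tau$ the perturbed rows $f(\pm v_i\overline\X)$ have \emph{exactly} the sparsity patterns of $f(\pm\V^*_{i,*}\overline\X)$: this combines the operator-norm bound $\|f(\pm v_i\overline\X)-f(\pm\V^*_{i,*}\overline\X)\|_2\le O(\sqrt{\ell})\,\eps$ from Lemma \ref{prop:gaussiancondition} with anti-concentration of folded-Gaussian marginals to rule out borderline entries near $0$.
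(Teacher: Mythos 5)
Your proposal is correct and follows essentially the same route as the paper's proof: Remark \ref{remark:normalize} plus Corollary \ref{cor:relurecover} (Chebyshev approximation, variation-distance argument, tensor power method) gives the sign-ambiguous approximation to $\V^*$, Theorem \ref{thm:recoverexactsigns} (Algorithm \ref{alg:recoversigns}) upgrades this to exact recovery of $\V^*$ via the sparsity-pattern feasibility argument, and $\U^*$ follows from a single linear solve whose uniqueness is guaranteed by Lemma \ref{prop:random}. You unpack the cited lemmas in more detail than the paper's one-paragraph proof, but the decomposition, the key lemmas invoked, and the order of the argument all coincide with the paper's.
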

\begin{proof}
 By Theorem \ref{thm:anandkumar} and Corollary \ref{cor:relurecover}, we can recover $\D\V^*$ up to $\eps =  \frac{1}{\poly(d,m,\kappa(\U^*),\kappa(\V^*))}$ error in polynomial time, and then by Theorem \ref{thm:recoverexactsigns} we can not only recover the signs $\xi_i$ that constitute the diagonal of $\D$, but also recover $\V^*$ \textit{exactly} (all in a polynomial number of arithmetic operations). Given the fact that $f(\V^*\X)$ is full rank by Lemma \ref{prop:random}, the solution $\U$ to the linear system $\U f(\V^*\X) = \AA$ is unique, and therefore equal to $\U^*$.  This linear system can be solved in polynomial time by Gaussian elimination, and thus the runtime does not depend on the bit-complexity of $\U^*,\V^*$ in the real RAM model. So the entire procedure runs in time polynomial in the sample complexity $n$, which is polynomial in all relevant parameters as stated in the Theorem.
\end{proof}

\subsection{Extension to Symmetric Input Distributions.} \label{sec:symmetric}
The independent and concurrent work of Ge et al. \cite{ge2018learning} demonstrates the existence of an algorithm that approximately recovers $\U^*$, $\V^*$ in polynomial time, given that the input $\X$ is drawn from a mixture of a symmetric probability distribution and a Gaussian. In this section, we observe how our techniques can be combined with the those of \cite{ge2018learning} to achieve exact recovery of $\U^*,\V^*$ for this broader class of distributions. Namely,
 that we can replace running the tensor decomposition algorithm from \cite{janzamin2015beating} with the algorithm of \cite{ge2018learning} instead to obtain good approximations to $\U^*,\V^*$, and then use our results on the uniqueness of sparsity patterns in the row-span of $f(\V^*\X)$ to 
 obtain exactly recovery. Only minor changes are needed in the proofs of our sparsity pattern uniqueness results (Lemmas \ref{prop:uniquesign} and \ref{lem:uniquesign}) to extend them to mixtures of symmetric distributions and Gaussians.

 . 

\begin{definition}(Symmetric Distribution.)
Let $x\in \R^d$ be a vector random variable and $\mathcal{D}$ be a probability distribution function such that $x \sim \mathcal{D}$. Then, $\mathcal{D}$ is a symmetric distribution if for all $x$, the probability of $x$ and $-x$ is equal, i.e. $\mathcal{D}(x) = \mathcal{D}(-x)$. 
\end{definition}

Ge et. al. \cite{ge2018learning} define an object called the distinguishing matrix, denoted by $\M$, and require that the minimum singular value of $\M$ is bounded away from $0$. 

\begin{definition}(Distinguishing Matrix \cite{ge2018learning}.)
Given an input distribution $\mathcal{D}$ the distinguishing matrix is defined as  $\NN^{\mathcal{D}} \in \R^{d^2 \times \binom{k}{2}}$, whose columns are indexed by $i,j$ such that $1\leq i < j\leq k$ and
\[
\NN^{\mathcal{D}}_{i,j} = \frac{1}{n}\sum_{k\in[n]} (\V^*_{i,*}\X_{*,k})(\V^*_{j,*}\X_{*,k})(\X_{*,k} \otimes \X_{*,k}) \mathbf{1}\left\{(\V^*_{i,*}\X_{*,k})(\V^*_{j,*}\X_{*,k})\leq 0\right\}
\]
Similarly an augmented distinguishing matrix $\M^{\mathcal{D}} \in \R^{d^2 \times \left(\binom{k}{2}+1\right)}$ has all the same columns as $\NN^{\mathcal{D}}$ with the last column being $\frac{1}{n}\sum_{k\in[n]} \X_{*,k} \otimes \X_{*,k}$.
\end{definition}
In order to bound the singular values of the distinguishing matrix, Ge et. al. consider input distributions that are perturbations of symmetric distributions. In essence, given a desired target distribution $\mathcal{D}$, the algorithm of Ge et. al. can handle a similar distribution $\mathcal{D}_\gamma$, which is obtained by mixing $\mathcal{D}$ with a Gaussian with random covariance.

More formally, the perturbation is paramaterized by $\gamma\in(0,1)$, which will define the mixing rate. It is required that $\gamma > \frac{1}{\poly(N)}$ in order to achieve polynomial running time (where $\poly(N)$ is the desired running time of the algorithm). First, let $\G$ be an i.i.d. entry-wise $\mathcal{N}(0,1)$ random Gaussian matrix, which will be used to give the random the covariance.  To generate $\mathcal{D}_\gamma$, first define a new distribution $\mathcal{N}_{\G}'$ as follows. To sample a point from $\mathcal{N}_{\G}'$, first sample a Gaussian $g \sim \mathcal{N}(0, \II_d)$ and then output $\G g$. Then the perturbation $\mathcal{D}_{\gamma}$ of the input distribution $\mathcal{D}$ is a mixture between $\mathcal{D}$ and $\mathcal{N}_{\G}'$. To sample $\X_{*,i}$ from $\mathcal{D}_{\gamma}$, 
pick $z$ as a Bernoulli random variable where $\Pr[z = 1] = \gamma$ and $\Pr[z = 0] = 1 - \gamma$, then  for $i \in [n]$ 
\begin{equation*}
  \X_{*,i} \sim 
     \begin{cases}
     \mathcal{D} \; \;\;\;\;\;\; \;\;\;\;\;\; \;\;\;\;\; \textrm{ if $z = 0$ } \\
     \G g \; \;\;\;\; \;\;\;\;\;\; \;\;\;\;  \textrm{ otherwise}
     \end{cases}
\end{equation*}
If the input is drawn from a mixture distribution $\mathcal{D}_{\gamma}$,  $\sigma_{\min}(\M)$ is bounded away from $0$. We refer the reader to Section 2.3 in \cite{ge2018learning} for further details. We observe that we can extend the main algorithmic result therein with our results on exact recover to recover $\U^*$, $\V^*$ with zero-error in polynomial time. 

\begin{theorem}(Informal Theorem 7 in \cite{ge2018learning}.)\label{thm:rongs}
Let $\U^*\in \R^{m \times k}, \V^* \in \R^{k \times d}$ be full rank $k$ such that $\AA = \U^* f(\V^* \X)$, $f$ is ReLU, and for all $i \in [n]$ $\X_{*,i}\sim \mathcal{D}_{\gamma}$ as defined above. Let $\M$ be the distinguishing matrix as defined in \cite{ge2018learning}. For all $i \in [n]$, let $\Gamma$ be such that $\| \X_{*,i}\|_2\leq \Gamma$. Then, there exists an algorithm than runs in time $\poly\left(\Gamma, 1/\eps, 1/\delta , \|\U^*\|_2, \frac{1}{\sigma_{\min}(\expec{}{\X_{*,i} \X_{*,i}^T})}, \frac{1}{\sigma_{\min}(\U^*)}, \frac{1}{\sigma_{\min}(\M)}\right)$ and with probability $1-\delta$ outputs a matrix $\widehat{\U}$ such that $\|\widehat{\U} - \U^*\mathbf{\Pi}\D\|_F \leq \eps$. 
\end{theorem}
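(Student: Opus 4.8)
The plan is to observe that the claimed statement is, modulo cosmetic renaming of parameters, exactly Theorem~7 of \cite{ge2018learning} instantiated in our setting, so the bulk of the ``proof'' is to check that our hypotheses are a special case of theirs and then invoke their result as a black box. First I would verify compatibility of the assumptions: the ReLU $f(x)=\max\{x,0\}$ is one of the activation functions covered by \cite{ge2018learning}; the standing assumption that $\U^* \in \R^{m\times k}$ and $\V^* \in \R^{k\times d}$ are both rank $k$ matches the non-degeneracy requirements there; and the input assumption $\X_{*,i}\sim\mathcal{D}_\gamma$ with $\gamma\ge 1/\poly(N)$ is precisely the perturbed-symmetric model for which their bound on $\sigma_{\min}(\M)$ holds. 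The only point needing a remark is the normalization $\|\V^*_{i,*}\|_2=1$ used elsewhere in this section: this is without loss of generality because a positive diagonal scaling $\D'$ satisfies $\U^* f(\D'\V^*\X)=(\U^*\D')f(\V^*\X)$, and pulling $\D'$ into $\U^*$ only rescales $\|\U^*\|_2$ and $1/\sigma_{\min}(\U^*)$, quantities that already appear polynomially in the running time.

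For completeness I would recall the high-level structure of the argument in \cite{ge2018learning}. It proceeds by the method of moments, using the symmetry of $\mathcal{D}$ (and rotational invariance of the Gaussian component $\G g$): the ReLU agrees with a linear function except on the negative halfline, so a suitable cross-moment of $\AA=\U^* f(\V^*\X)$ against $\X\otimes\X$ cancels the ``linear part'' and leaves a matrix spanned by the rank-one terms $\V^*_{i,*}\otimes\V^*_{j,*}$ over pairs $i\neq j$ whose pre-activations disagree in sign --- exactly the information encoded in the distinguishing matrix $\M$. A linear-algebraic ``pure neuron'' detection step then recovers the individual rows $\V^*_{i,*}$ up to sign and permutation, after which $\U^*$ is obtained by solving the (over-determined, almost-surely full-rank) linear system $\AA=\U f(\V^*\X)$. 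The per-step perturbation analysis has denominators $\sigma_{\min}(\U^*)$, $\sigma_{\min}(\M)$, and $\sigma_{\min}(\expec{}{\X_{*,i}\X_{*,i}^T})$, together with the bounds $\Gamma$ on $\|\X_{*,i}\|_2$ and $\|\U^*\|_2$, which is exactly the parameter list appearing in the stated running time.

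I expect the genuinely hard part --- and the reason \cite{ge2018learning} works with the smoothed family $\mathcal{D}_\gamma$ rather than an arbitrary symmetric $\mathcal{D}$ --- to be the lower bound on $\sigma_{\min}(\M)$: if $\M$ is near-singular the rank-one terms $\V^*_{i,*}\otimes\V^*_{j,*}$ cannot be separated and the moment problem is ill-posed. Mixing in $\G g$ with weight $\gamma$ perturbs the second-order sign-disagreement statistics enough to force $\sigma_{\min}(\M)=\Omega(1/\poly(N))$ with high probability, which is what makes the sample complexity and runtime polynomial. Since all of this is carried out in \cite{ge2018learning}, I would simply cite their Theorem~7 here; the novel step, carried out next in this section, is to feed the resulting approximation $\widehat{\U}\approx\U^*\mathbf{\Pi}\D$ (and the corresponding approximation to $\V^*$) into our sparsity-pattern uniqueness results --- Lemmas~\ref{prop:uniquesign} and~\ref{lem:uniquesign}, which require only minor modifications to go from i.i.d.\ Gaussian columns to columns drawn from $\mathcal{D}_\gamma$ --- in order to bootstrap approximate recovery up to \emph{exact} recovery of $\U^*,\V^*$, yielding Corollary~\ref{cor:symmetric_input}.
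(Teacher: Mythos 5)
This statement is a restatement of Theorem~7 from \cite{ge2018learning}, and the paper does not prove it — it simply cites it as a black box to be combined with the sparsity-pattern machinery that follows. Your proposal correctly identifies this and matches the paper's approach; the additional high-level sketch of the method-of-moments argument and the role of $\sigma_{\min}(\M)$ is accurate background but not needed, since the theorem is invoked by citation rather than reproved.
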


We use the algorithm corresponding to the aforementioned theorem to obtain an approximation $\widehat{\U}$ to $\U^*$, and then obtain an approximation to $f(\V^*\X)$ by multiplying $\AA$ on the left by $\widehat{\U}^{-1}$. The error in our approximation of $\V^*$ obtained via $\widehat{\U}^{-1}\AA$ is analyzed in Section \ref{sec:ICA}. Given this approximation of $\V^*$, we observe that running steps $4$-$8$ of our Algorithm \ref{alg:overall_exact_fpt}  recovers $\V^*,\U^*$ exactly (see Remark \ref{rem:symmetric} below).  Note that the only part of Algorithm  \ref{alg:overall_exact_fpt} that required $\V^*$ to be orthonormal is step $3$ which runs ICA, which we are replacing here with the algorithm of Theorem \ref{thm:rongs}.

Here we remove the random matrix $\T$ from Algorithm \ref{alg:overall_exact_fpt}, as it is not needed if we are already given an approximation $\widehat{\U}$ of $\U^*$. Thus we proceed exactly as in Algorithm \ref{alg:overall_exact_fpt} by restricting $\X,\AA$ to $\ell = \poly(d,m,k,\frac{1}{\gamma},\kappa(\U^*),\kappa(\V^*))$ columns $\overline{\X}, \overline{\AA}$, and then rounding the entries of $\widehat{f(\V\overline{\X})} = \widehat{\U}^{-1} \overline{\AA}$ below $\tau$ to $0$. Finally, we solve the same linear system as in Algorithm \ref{alg:overall_exact_fpt} to recover the rows of $f(\V^*\X)$ exactly, from which $\V^*$ and then $\U^*$ can be exactly recovered via solving the final two linear systems in Algorithm \ref{alg:overall_exact_fpt}. We summarized this formally as follows.

\begin{corollary}(Exact Recovery for Symmetric Input.)
\label{cor:symmetric_input}
Suppose $\AA = \U^* f(\V^* \X)$ where $\U^* \in \R^{m \times k}, \V^* \in \R^{k \times d}$ are both rank-$k$, for all $i \in [n]$, $\X_{*,i} \sim \mathcal{D}_{\gamma}$, and $\| \X_{*,i}\|_2\leq \Gamma$.  Assume WLOG that $\|\V^*_{i,*}\|_2 = 1$ for all rows $i \in [k]$. If $$n \geq \poly\left(d,m,\kappa(\U^*),\kappa(\V^*),\Gamma, \frac{1}{\gamma}, \|\U^*\|_2, \frac{1}{\sigma_{\min}(\expec{}{\X_{*,i} \X_{*,i}^T})}, \frac{1}{\sigma_{\min}(\U^*)}, \frac{1}{\sigma_{\min}(\M)}\right)$$ 
then there exists an algorithm that runs in $\poly(n)$-time and recovers $(\U^*)^T,\V^*$ exactly up to a permutation of the rows w.h.p. (in $d,m$).  
\end{corollary}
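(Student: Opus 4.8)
## Proof Proposal for Corollary \ref{cor:symmetric_input}

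The plan is to mirror the proof of Theorem \ref{thm:exactfinal}, but replace the tensor-decomposition-based approximation step (which required Gaussian inputs) with the algorithm of Ge et al. from Theorem \ref{thm:rongs}, and then verify that the exact-recovery machinery of Lemmas \ref{prop:uniquesign}, \ref{prop:pdf}, and \ref{lem:uniquesign} still goes through when $\X$ has columns drawn from $\mathcal{D}_\gamma$ rather than a pure Gaussian. First, I would invoke Theorem \ref{thm:rongs} with accuracy parameter $\eps' = \frac{1}{\poly(d,m,\kappa(\U^*),\kappa(\V^*))}$ and failure probability $\delta = \frac{1}{\poly(d,m,k)}$; the stated sample complexity $n$ dominates all the parameters appearing in that theorem's running time, so in $\poly(n)$ time we obtain $\widehat{\U}$ with $\|\widehat{\U} - \U^* \mathbf{\Pi} \D\|_F \leq \eps'$. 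As in Section \ref{sec:polyexact}, absorb the permutation $\mathbf{\Pi}$ by implicitly permuting the rows of $\V^*$ and absorb the positive diagonal $\D$ into $\V^*$, so WLOG $\widehat{\U} \approx \U^*$. Then form $\widehat{f(\V^* \overline{\X})} = \widehat{\U}^\dagger \overline{\AA}$ on a block $\overline{\X}$ of $\ell = \poly(d,m,k,1/\gamma,\kappa(\U^*),\kappa(\V^*))$ columns; applying Corollary \ref{cor:psuedobound} exactly as in the proof of Theorem \ref{thm:recoverVgivenU} (after guessing a scaling so that $\|\U^*\|_2 \in (1,2)$, using Remark \ref{remark:normalize}) shows $\|\widehat{\U}^\dagger \overline{\AA} - f(\V^*\overline{\X})\|_F \leq \frac{1}{\poly(\ell)}$.

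Next, I would argue that thresholding at $\tau = 1/\poly(\ell)$ recovers the exact sign (= sparsity) pattern of $f(\V^*\overline{\X})$. This requires two things: (i) that no nonzero entry of $f(\V^*\overline{\X})$ is smaller than $\tau$ with high probability, and (ii) the uniqueness results of Lemmas \ref{prop:uniquesign} and \ref{lem:uniquesign}. For (i), a nonzero entry of $f(\V^*\overline{\X})$ has the law of $(\V^*_{i,*} \X_{*,j}) \mid \V^*_{i,*}\X_{*,j} > 0$, and since $\X_{*,j} \sim \mathcal{D}_\gamma$ is a mixture that puts mass $\gamma$ on $\G g$ (which has a continuous density bounded away from $0$ near the origin over the relevant region), the density of $\V^*_{i,*}\X_{*,j}$ near $0^+$ is at least $\gamma/\poly(d)$; hence the probability any such entry lies in $(0,\tau)$ is at most $\tau \gamma^{-1}\poly(d) \leq 1/\poly(\ell)$, and we union bound over the $\leq k\ell$ entries. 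For (ii), the only place the Gaussianity of $\X$ was used in Lemma \ref{prop:uniquesign} is the claim that $\Pr[(\V\X)_{i,p} < 0, (\V\X)_{j,p} > 0] = \Omega(1/\kappa)$: here, conditioning on the event (prob.\ $\gamma$) that the column came from $\G g$, the same rotational-invariance argument in the $2$-plane spanned by $\V_{i,*},\V_{j,*}$ gives probability $\Omega(\gamma/\kappa)$ overall, which is still inverse-polynomial, so Chernoff over $\ell$ columns still yields $t$ disagreeing coordinates w.h.p.\ — this is exactly Remark \ref{rem:symmetric} referenced in the statement. Similarly, Lemma \ref{prop:pdf} (smoothness and everywhere-positivity of the joint density of $\V^* \overline{\X}$ restricted to any subset of entries) holds because the $\G g$ component of the mixture is a nondegenerate Gaussian (full rank a.s.), so the mixture density is continuous and strictly positive on all of $\R^k$; then the determinant-vanishing / Lebesgue-measure-zero argument in Lemma \ref{lem:uniquesign} carries over verbatim.

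With the exact sign patterns $S_i$ in hand, the rest is identical to steps 5--8 of Algorithm \ref{alg:overall_exact_fpt}: for each $i$ solve the linear system asking for a vector $w_i$ in the row span of $\overline{\AA}$ supported inside $S_i$; by Lemma \ref{lem:uniquesign} (symmetric-mixture version) the unique such vector up to scaling is $f(\V^*_{i,*}\overline{\X})$, so after a sign flip $w_i\overline{\AA} = c_i f(\V^*_{i,*}\overline{\X})$ with $c_i > 0$; then solve $(w_i\overline{\AA})_{S_i} = \V_{i,*} \overline{\X}_{S_i}$ and normalize. The solution is unique because $\overline{\X}_{S_i}$ has full row rank a.s.\ — here I need $|S_i| = \Omega(\ell) > d$ (true since each column lands in the relevant halfspace with probability $\Omega(\gamma)$, by the mixture argument, so Chernoff applies), and the inductive "a subspace has measure zero in a halfspace" argument works for any absolutely continuous column distribution, in particular $\mathcal{D}_\gamma$ restricted to a halfspace. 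Finally solve $\AA = \U f(\V^*\X)$ for $\U$; $f(\V^*\X)$ is full row rank w.h.p.\ by Lemma \ref{prop:random} (whose hypothesis $\inf_v \Pr_{x\sim\mathcal{D}_\gamma}[\langle v,x\rangle > 0] = \Omega(1)$ holds since the $\G g$ component alone contributes $\gamma/2$ and the symmetric component contributes exactly $\tfrac12(1-\gamma)$ by symmetry — actually $\geq \tfrac12(1-\gamma)$ suffices and follows from $\mathcal{D}(x)=\mathcal{D}(-x)$). This gives $\U = \U^*$ exactly, all in a polynomial number of arithmetic operations via Gaussian elimination.

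I expect the main obstacle to be the careful accounting in the second paragraph — specifically verifying that conditioning on sign patterns of the mixture distribution $\mathcal{D}_\gamma$ still yields a density that is continuous and strictly positive on the appropriate orthant-intersection, and that all the inverse-polynomial probability bounds ($\Omega(\gamma/\kappa)$ for disagreeing signs, $\Omega(\gamma)$ for a coordinate being positive, density $\gtrsim \gamma/\poly(d)$ near $0$) survive. Everything downstream is a transcription of the Gaussian-input proof with $\gamma$-dependent polynomial factors inserted, and the condition-number/pseudo-inverse perturbation bounds are entirely distribution-agnostic once the sign patterns are pinned down.
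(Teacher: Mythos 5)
Your proposal follows the same route as the paper's (brief) treatment: invoke Theorem~\ref{thm:rongs} to get $\widehat{\U}$, multiply $\overline{\AA}$ by its pseudo-inverse and threshold to read off the sign pattern, argue via Remark~\ref{rem:symmetric} that Lemmas~\ref{prop:uniquesign}--\ref{lem:uniquesign} extend to $\mathcal{D}_\gamma$, and finish with the linear-system machinery of Algorithm~\ref{alg:overall_exact_fpt}. The scaffolding and the key observations (conditioning on the Gaussian component to get an $\Omega(\gamma/\kappa)$ disagreement probability in Lemma~\ref{prop:uniquesign}; everywhere-positivity of the mixture density for Lemma~\ref{lem:uniquesign}; $\gamma/2$ mass in each halfspace for Lemma~\ref{prop:random}) are all correct and match the paper's Remark~\ref{rem:symmetric}.

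There is, however, a logical slip in your thresholding argument. To conclude that $\Pr[0 < \V^*_{i,*}\X_{*,j} < \tau]$ is small, what you need is that the density of $\V^*_{i,*}\X_{*,j}$ is bounded \emph{above} in a neighborhood of $0$; a \emph{lower} bound on the density (which is what ``bounded away from $0$'' gives, and what you invoke) implies the probability of the interval $(0,\tau)$ is \emph{at least} $\Omega(\gamma\tau/\poly(d))$, the opposite of what you want. The Gaussian component does have a bounded density, but the symmetric component $\mathcal{D}$ could a priori concentrate mass arbitrarily close to the hyperplane $\{\langle \V^*_{i,*}, x\rangle = 0\}$, and the boundedness assumption $\|\X_{*,i}\|_2 \le \Gamma$ does not preclude this. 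This is, to be fair, a point the paper itself passes over silently when it says ``proceed exactly as in Algorithm~\ref{alg:overall_exact_fpt}''; the intended remedy is either an implicit bounded-density assumption on $\mathcal{D}$ or an argument restricting attention to those columns drawn from the Gaussian mixture component (of which there are $\Omega(\gamma\ell)$). You should state which fix you are using rather than invert the direction of the density bound.
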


\begin{remark}\label{rem:symmetric}
To prove the correctness of Algorithm \ref{alg:recoversigns} on $\mathcal{D}_\gamma$, we need only to generalize Lemmas \ref{prop:uniquesign} and \ref{lem:uniquesign} which together give the uniqueness of sparsity patterns of the rows of $f(\V^*\X)$ in the rowspan of $\AA$.
We note that Lemma \ref{prop:uniquesign} can be easily generalized by first conditioning on the input being Gaussian, which in $\mathcal{D}_\gamma$ occurs with $\gamma$ probability, and then going applying the same argument, replacing $\frac{1}{\kappa}$ with $\frac{\gamma}{\kappa}$ everywhere. The only change in the statement of Lemma \ref{prop:uniquesign} is that we now require $\ell = t \poly(k , \kappa, \frac{1}{\gamma})$ to handle $\X \sim \mathcal{D}_\gamma$. 

Next, the proof of Lemma \ref{lem:uniquesign} immediately goes through as the argument in the proof which demonstrates the determinant in question is non-zero only requires that the distribution $\mathcal{D}_\gamma$ is non-zero everywhere in the domain. Namely, the proof requires that the support of $\mathcal{D}_\gamma$ is all of $\R^d$. Note, this condition is always the case for the mixture $\mathcal{D}_{\gamma}$ since Gaussians are non-zero everywhere in the domain.  
\end{remark}

\subsection{Necessity of $\poly(\kappa(\V^*))$ Sample Complexity} \label{subsec:kappadependency}
So far, our algorithms for the exact recovery of $\U^*,\V^*$ have had polynomial dependency on the condition numbers of $\U^*$ and $\V^*$. In this section, we make a step towards justifying the necessity of these dependencies. As always, we work without loss of generality under the assumption that $\|\V_{i,*}\|_2 = 1$ for all rows $i \in [k]$. Specifically, demonstrate the following.
\begin{lemma}\label{lem:kappadependency}
 Any algorithm which, when run on $\AA, \X$, where that $\AA = \U^* f(\V^*\X)$, and $\X$ has i.i.d. Gaussian $\mathcal{N}(0,1)$ entries, recovers $(\U^*)^T ,\V^*$ exactly (up to a permutation of the rows) with probability at least $1 - c$ for some sufficiently small constant $c>0$, requires $n = \Omega(\kappa(\V^*))$ samples.
\end{lemma}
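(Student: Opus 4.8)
The plan is to prove the lower bound by exhibiting, for every sufficiently large target value $\kappa$, a one‑parameter family of instances, all having $\kappa(\V^*)=\Theta(\kappa)$, that cannot be told apart from $o(\kappa)$ samples. I would work with $d=3$, $k=2$, $m=1$ and the fixed matrix $\U^* = (1,1)\in\R^{1\times 2}$, so that $\AA = \U^* f(\V^*\X) = f(v_1\X) + f(v_2\X)$ is a single row, where $v_1,v_2$ are the two (unit‑norm) rows of $\V^*$. Fix a small angle $\theta_0\in(0,\pi/4]$, put $s = 2\cos(\theta_0/2)\,e_1\in\R^3$, and for $\phi\in[0,\pi)$ set $t(\phi) = \sin(\theta_0/2)\big(\cos\phi\, e_2 + \sin\phi\, e_3\big)$ and let $\V^*(\phi)$ be the matrix with rows $v_1 = \tfrac12 s + t(\phi)$ and $v_2 = \tfrac12 s - t(\phi)$. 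A short computation gives $\|v_1\|_2 = \|v_2\|_2 = 1$, $\langle v_1,v_2\rangle = \cos\theta_0$, $\mathrm{rank}(\V^*(\phi)) = 2$, and $\V^*(\phi)\V^*(\phi)^T$ has eigenvalues $1\pm\cos\theta_0$, so $\kappa(\V^*(\phi)) = \cot(\theta_0/2) = \Theta(1/\theta_0)$ --- a value that is the \emph{same} for every $\phi$. It therefore suffices to show that exact recovery of $\V^*(\phi)$ (up to a row permutation) requires $n = \Omega(1/\theta_0)$ samples.

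The argument rests on two observations. First, for a fixed $\phi$, over the Gaussian randomness of $\X$, a given column satisfies $\mathrm{sgn}(v_1\X_{*,c}) \ne \mathrm{sgn}(v_2\X_{*,c})$ with probability exactly $\theta_0/\pi$; this is the same wedge‑of‑angle‑$\theta_0$ computation used in the proof of Lemma \ref{prop:uniquesign}. Call such a column \emph{splitting}. Second, on the event that \emph{none} of the $n$ columns is splitting, $\AA$ collapses to $f(s\X)$ and no longer depends on $\phi$: when $v_1\X_{*,c}$ and $v_2\X_{*,c}$ are both positive, $\AA_c = v_1\X_{*,c}+v_2\X_{*,c} = s\X_{*,c} > 0$, and when both are negative, $\AA_c = 0$ while $s\X_{*,c} = v_1\X_{*,c}+v_2\X_{*,c} < 0$, so in either case $\AA_c = f(s\X_{*,c})$. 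Given these, I would place the uniform prior on $\phi\in[0,\pi)$ (note the only permutation ambiguity among $\{v_1,v_2\}$ is the map $\phi\mapsto\phi+\pi$, already quotiented out on $[0,\pi)$), and for any algorithm $\mathcal A$ with internal coins $R$ and output $\widehat\V$ bound $\Pr_{\phi,\X,R}[\widehat\V = \V^*(\phi)\text{ up to perm}] \le \Pr[N\ge 1] + \Pr[\widehat\V = \V^*(\phi)\text{ up to perm},\, N=0]$, where $N$ counts splitting columns. A union bound gives $\Pr[N\ge 1]\le n\theta_0/\pi$; and on $\{N=0\}$, $\widehat\V = \mathcal A(\X, f(s\X), R)$ is a function of $(\X,R)$ only, so for each fixed $(\X,R)$ the set of $\phi$ for which it matches $\V^*(\phi)$ has Lebesgue measure $0$, making the second term $0$. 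Hence the success probability is at most $n\theta_0/\pi$; if it is required to be $\ge 1-c$ for every instance then $1-c \le n\theta_0/\pi$, i.e.\ $n = \Omega(1/\theta_0) = \Omega(\kappa(\V^*))$ for any constant $c<1$.

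The main subtlety I anticipate is not in any single estimate but in setting up the family correctly: the lower bound genuinely needs $\U^*$ to be rank‑deficient ($m<k$ here). If $\U^*$ had full column rank, then on the splitting‑free columns one would have $\AA_{*,c} = (\U^*\V^*)\X_{*,c}$, so $\ge d$ such columns determine $\U^*\V^*$ by a linear system, and then $\V^*$, with no dependence on $\kappa$ at all; taking $m=1$, $\U^*=(1,1)$ sidesteps this, and is allowed since the lemma imposes no rank condition on $\U^*$. One also needs $d\ge 3$: in $d=2$ the vector $v_1+v_2$ already pins down $\{v_1,v_2\}$ up to the permutation, whereas for $d\ge 3$ the unit pairs summing to a fixed $s$ form a positive‑dimensional (non‑atomic) continuum, which is exactly what drives the measure‑zero step above. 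Verifying the $\kappa(\V^*(\phi)) = \cot(\theta_0/2)$ identity and the $\theta_0/\pi$ splitting probability are the only routine computations, and both are essentially already present in the paper.
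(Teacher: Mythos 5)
Your proposal is correct, and it takes a genuinely different route from the paper's. The paper works in $d=2$ and exhibits just \emph{two} instances $(\U^1,\V^1)$, $(\U^2,\V^2)$ parameterized by $a$ and $2a$ respectively (so $\kappa(\V^1)=1/a$, $\kappa(\V^2)=1/(2a)$); the key observation there is that whenever $|x_1^j| > 2a\,|x_2^j|$ on every column $j$ --- an event of probability $\Omega(1)$ when $n \le c/a$ --- the two outputs $\AA^1,\AA^2$ coincide, and an accounting argument (if both instances are recovered with probability $1-c$ yet the inputs coincide with probability $p$, then $p\le 2c$, a contradiction for small $c$) yields $n=\Omega(1/a)=\Omega(\kappa)$. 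You instead work in $d=3$ with a fixed $\U^*=(1,1)$ and a \emph{continuous} one-parameter family $\V^*(\phi)$, all sharing the exact same $\kappa=\cot(\theta_0/2)$; you then argue via a Bayesian average over $\phi$ that, conditioned on no ``splitting'' column (probability $\ge 1-n\theta_0/\pi$), the algorithm's input $f(s\X)$ is $\phi$-independent, and by injectivity of $\phi\mapsto\{v_1(\phi),v_2(\phi)\}$ on $[0,\pi)$ the set of $\phi$ matching any fixed output has Lebesgue measure zero. The splitting probability $\theta_0/\pi$, the unit-norm/angle/condition-number identities, and the ``$\AA_c = f(s\X_{*,c})$ on non-splitting columns'' identity all check out. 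Each approach buys something: the paper's is lower-dimensional ($d=2$) and needs no prior or measure-theoretic step, but its two instances have condition numbers differing by a factor of $2$ and it needs the explicit two-sided failure accounting; yours keeps $\kappa$ literally identical across the entire hard family, uses a single $\U^*$ (so the hard instances differ only in $\V^*$), and the measure-zero step cleanly drives the ``lucky guess'' term to exactly $0$ rather than bounding it --- at the cost of needing $d\ge 3$ so that unit pairs summing to a fixed $s$ form a non-degenerate continuum. Your side remarks (the necessity of rank-deficient $\U^*$, and the necessity of $d\ge 3$ for the continuous-family version) are also consistent with what the paper's construction does implicitly ($m=1<k=2$ there as well).
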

\begin{proof}
  We construct two instances of $\AA^1 = \U^1f(\V^1\X)$ and $\AA^2= \U^1f(\V^2\X)$ . Let \[\U^1 = \begin{bmatrix}\sqrt{1+ a^2}/2 & \sqrt{1+a^2}/2  \\
  \end{bmatrix}  \; \; \; \; \; \; \; \V^1 = \begin{bmatrix} \frac{1}{\sqrt{1+ a^2}} & \frac{a}{\sqrt{1+ a^2}} \\
     \frac{1}{\sqrt{1+ a^2}} &- \frac{a}{\sqrt{1+ a^2}} \\
  \end{bmatrix}  \]
  
  \[\U^2 = \begin{bmatrix}\sqrt{1+ (2a)^2}/2 & \sqrt{1+(2a)^2}/2  \\
  \end{bmatrix} \; \; \; \; \; \; \; \V^2 = \begin{bmatrix} \frac{1}{\sqrt{1+ (2a)^2}} & \frac{a}{\sqrt{1+ (2a)^2}} \\
     \frac{1}{\sqrt{1+ (2a)^2}} &- \frac{(2a)}{\sqrt{1+ (2a)^2}} \\
  \end{bmatrix}  \]
  Now note that for $a \in [0,1]$, the rows of $\V^1$ have unit norm, and $\kappa(\V^1) = \frac{1}{a}$. Now let $a^i = ia$, and note, however, that for the $j$-th sample $\X_{*,j} = [x_1^j,x_2^j]$ i.i.d. Gaussian, we have for $i \in \{1,2\}$
  \[ \U^i f(\V^i \X_{*,j}) =  \frac{f(x_1^j + a^ix_2^j) + f(x_1^j - a^ix_2^j)}{2}  \]
 Now note that when $|x_1^j| > (2a)|x_2^j|$, for \textit{both} $i \in \{1,2\}$ we have either 
   \[\AA^i_{*,j} = \U^i f(\V^i \X_{*,j}) =  0  \]
   or 
    \[\AA^i_{*,j} =  \U^i f(\V^i \X_{*,j}) = x_1^j   \]
    And in either case we do not get any information about $a$. In such a case, the $j$-th column of $\AA^1$ and $\AA^2$ are the same. In particular, conditioned on a given $\X$ such that $|x_1^j| > (2a)|x_2^j|$ for all columns $j$, we have $\AA^1 = \AA^2$. Now note that the probability that one Gaussian is $\frac{1}{2a}$ times larger than another is $\Theta(\frac{1}{a})$, thus any algorithm that takes less than $c \frac{1}{a}$ samples, for some absolute constant $c > 0$, cannot distinguish between $\AA^1$ and $\AA^2$, since we will have $\AA^1 = \AA^2$ with $\Omega(1)$ probability in this case, which completes the proof.
\end{proof}

\section{A Polynomial Time Algorithm for Gaussian input and Sub-Gaussian Noise}  
\label{sec:noisycase}
In the last section, we gave two algorithms for exact recovery of $\U^*,\V^*$ in the noiseless (exact) case. Namely, where the algorithm is given as input $\AA = \U^*f(\V^*\X)$ and $\X$. Our general algorithm for this problem first utilized a tensor decomposition algorithm which allowed for approximate recovery of $\V^*$, up to the signs of its rows. Observe that this procedure, given by Theorem \ref{thm:anandkumar}, can handle mean zero subgaussian noise $\E$, such that $\AA = \U^*f(\V^*\X) + \E$. In this section, we will show how to utilize this fact as a sub-procedure to recover approximately recover $\U^*, \V^*$ in this noisy case.

We begin with using the algorithm corresponding to Theorem \ref{thm:anandkumar} to get an approximate solution to $\V^*$, up to permutations and $\pm 1$ scaling. We note that the guarantees of Theorem \ref{thm:anandkumar} still hold when the noise $\E$ is sub-Gaussian. Therefore, we obtain a matrix $\widetilde{\V}$ such that $\|\widehat{\V} - \D \mathbf{\Pi} \V^* \|_F \leq \epsilon$, where $\D$ is a diagonal $\pm 1$ matrix and $\mathbf{\Pi}$ is a permutation matrix.

\begin{Frame}[\textbf{Algorithm  \ref{alg:recoversigns2}: Recovering Signs}$(v_i, \AA, \X)$]
\label{alg:recoversigns2}
\ttx{Input:} Matrices  $\AA = \U^*f(\V^*\X) + \E$, and $v^T_i \in \R^d$ s.t. $\|v_i - \xi_i\V_{i,*}^*\|_2 \leq \eps$ for some unknown $\xi_i \in \{ 1,-1\}$ and $i=1,2,\dots,k$, where $\eps = O(\frac{1}{\poly(d,m,k,\kappa(\U^*),\kappa(\V^*)})$.
\begin{enumerate}
\item Let $\overline{\X} \in \R^{d \times \ell}$ be the first $\ell = \poly(k,d,m,\kappa(\U^*),\kappa(\V^*),\sigma)$ columns of $\X$, and similarly define $\overline{\E}$, and let $\overline{A} = \U^*f(\V^*\overline{\X}) + \overline{\E}$. 
    \item For $i \in [k]$, let $$S_{i,+} = \{f(v_j\overline{\X}), f(-v_j\overline{\X})\}_{j \neq i} \cup \{f(v_i\overline{\X})\}$$ and $$S_{i,-} = \{f(v_j\overline{\X}, f(-v_j\overline{\X})\}_{j \neq i} \cup \{f(-v_i\overline{\X})\}$$ 
    
    \item Let $\P_{S_{i,+}}$ be the orthogonal projection matrix onto the row span of vectors in $S_{i,+}$. Compute
    \[ a^+_{i,j} = \|\overline{\AA}_{j,*}(\mathbb{I} - \P_{S_{i,+}}) \|_2^2 \]
      \[ a^-_{i,j} = \|\overline{\AA}_{j,*}(\mathbb{I} - \P_{S_{i,-}}) \|_2^2 \]
    For each $j \in [m]$.
    \item Let $a^+_i = \sum_j a^+_{i,j}$, and $a^-_i = \sum_j a^-_{i,j}$. If $a^+_i < a^-_i$, set $\V_{i,*} = v_i$, otherwise set $\V_{i,*} = -v_i$.
\end{enumerate}
\ttx{Output:} $\V$ such that $\|\V - \V^*\|_2 \leq \eps$, thus recovering $\xi_i$ for $i \in [k]$.
\end{Frame}

Recall that in the noiseless case, we needed to show that given a approximate version of $\V^*$ up to the signs of the rows, we can recover both the signs and $\V^*$ exactly in polynomial time. Formally, we were given rows $v_i$ such that $\|v_i - \xi_i \V^*_{i,*}\|_2$ was small for some $\xi_i \in \{1,-1\}$, however we did not know $\xi_i$. This issue is a non-trivial one, as we cannot simply guess the $\xi_i$'s (there are $2^k$ possibilities), and moreover we cannot assume WLOG that the $\xi_i$'s are $1$ by pulling the scaling through the ReLU, which is only commutes with \textit{positive} scalings. Our algorithm for recovery of the true signs $\xi_i$ in the exact case relied on combinatorial results about the sparsity patterns of $f(\V^*\X)$. Unfortunately, these combinatorial results can no longer be used as a black-box in the noisy case, as the sparsity patterns can be arbitrarily corrupted by the noise. Thus, we must develop a refined, more general algorithm for the recovery of the signs $\xi_i$ in the noise case. Thus we begin by doing precisely this.

\subsection{Recovering the Signs $\xi_i$ with Subgaussian Noise}
\label{subsec:noiseysigns}

\begin{lemma}\label{prop:subgaussian}
Let $g \in \R^n$ be a row vector of i.i.d. mean zero variables with variance $\sigma$, and let $\mathcal{S}$ be any fixed $k$ dimensional subspace of $\R^n$. Let $\P_\mathcal{S} \in \R^{n \times n}$ be the projection matrix onto $\mathcal{S}$. Then for any $\delta>0$ with probability $1-\delta$, we have
\[ \|g \P_\mathcal{S}\|_2 = \sigma\sqrt{ k /\delta} \]
\end{lemma}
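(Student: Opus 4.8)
The plan is to bound $\|g\P_\mathcal{S}\|_2^2$ in expectation and then invoke Markov's inequality, since no concentration stronger than a second-moment bound is claimed (the right-hand side scales like $\sqrt{1/\delta}$ rather than $\sqrt{\log(1/\delta)}$). First I would use the two defining properties of an orthogonal projection: $\P_\mathcal{S}$ is symmetric and idempotent, so that
\[
\|g\P_\mathcal{S}\|_2^2 = g\,\P_\mathcal{S}\P_\mathcal{S}^\top g^\top = g\,\P_\mathcal{S}\, g^\top = \sum_{i,j\in[n]} (\P_\mathcal{S})_{i,j}\, g_i g_j .
\]

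Next I would take expectations. Since the $g_i$ are independent and mean zero with variance $\sigma^2$ (interpreting the ``variance $\sigma$'' in the statement as standard deviation $\sigma$, consistent with the rest of the paper), we have $\ex{g_i g_j} = 0$ for $i\neq j$ and $\ex{g_i^2} = \sigma^2$, hence
\[
\ex{\|g\P_\mathcal{S}\|_2^2} = \sigma^2 \sum_{i\in[n]} (\P_\mathcal{S})_{i,i} = \sigma^2\, \trace{\P_\mathcal{S}} = \sigma^2 k,
\]
where the last equality is because the eigenvalues of a rank-$k$ orthogonal projection are $k$ ones and $n-k$ zeros, so its trace is exactly $k$.

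Finally, since $\|g\P_\mathcal{S}\|_2^2$ is a nonnegative random variable with expectation $\sigma^2 k$, Markov's inequality gives
\[
\prob{}{\|g\P_\mathcal{S}\|_2^2 \geq \frac{\sigma^2 k}{\delta}} \leq \delta,
\]
and taking square roots yields $\|g\P_\mathcal{S}\|_2 \leq \sigma\sqrt{k/\delta}$ with probability at least $1-\delta$, as claimed. I do not anticipate any genuine obstacle here: the argument is a routine second-moment computation combined with Markov, and the only points requiring a word of care are the idempotency/symmetry reduction in the first display and the matching of the $\sigma$ normalization to the paper's convention. (If a sharper sub-Gaussian tail were ever needed, one could instead apply the Hanson–Wright inequality to the quadratic form $g\P_\mathcal{S}g^\top$, but the weaker Markov bound suffices for how this lemma is used later.)
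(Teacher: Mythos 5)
Your proof is correct and matches the paper's argument: both reduce $\|g\P_\mathcal{S}\|_2^2$ to a quadratic form with expectation $\sigma^2 k$ (you via idempotency and the trace, the paper via the factorization $\P_\mathcal{S}=\W^T\W$ with $\W$ having orthonormal rows, which amounts to the same computation) and then apply Markov's inequality. The route is essentially identical, and your remark that the $\sqrt{1/\delta}$ scaling signals a second-moment/Markov bound rather than a sub-Gaussian tail is a correct reading of the statement.
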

\begin{proof}
We can write $P_\mathcal{S}  = \W^T \W$ for matrices $\W \in \R^{k \times n}$ with orthonormal rows. Then $\ex{\|g \W^T\|_2^2} = \sigma^2 k$, and by Markov bounds with probability $1-\delta$ we have $\|g \W^T\|_2^2 = \|g \W^T \W\|_2^2 <  \sigma^2 k /\delta$ as needed.

\end{proof}

\begin{lemma}\label{prop:projectionbound}
Let $\Q \in \R^{k \times \ell}$ be a matrix of row vectors for $\ell > \poly(k)$ (for some sufficiently large polynomial) with $1 \leq \|\Q\|_2$ and let $\P_{\Q} = \Q^T(\Q^T \Q)^{-1} \Q$ be the projection onto them. Let $\E$ be such that $\|\E\|_F  \leq \frac{\eps}{\big(\kappa(\Q)\|\Q\|_2\big)^4}$, and let $\P_{\Q+\E}$ be the projection onto the rows of $\Q+\E$ . Then for any vector $x^T \in \R^\ell$, we have
\[  \|x\P_{\Q + \E} \|_2=      \| x\P_{\Q} \|_2 \pm O(\eps\|x\|_2)\]
\end{lemma}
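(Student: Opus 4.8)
The plan is to show that the projection matrices $\P_{\Q}$ and $\P_{\Q+\E}$ are close in operator norm, and then conclude by applying this to the vector $x$. First I would invoke the perturbation bound for the pseudo-inverse, Proposition \ref{prop:psuedoinverse} (or rather its corollary, Corollary \ref{cor:psuedobound}), since the projection onto the rows of a matrix $\MM$ with full row rank can be written as $\P_{\MM} = \MM^\dagger \MM = \MM^T (\MM \MM^T)^{-1} \MM$, and more symmetrically as $\P_{\MM} = (\MM^T)(\MM^T)^\dagger$ using the pseudo-inverse of $\MM^T$. The key identity is $\P_{\Q} = (\Q^T)(\Q^T)^\dagger$, so $\P_{\Q+\E} - \P_{\Q} = (\Q+\E)^T(\Q+\E)^{T\dagger} - \Q^T (\Q^T)^\dagger$. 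I would add and subtract a cross term $(\Q+\E)^T (\Q^T)^\dagger$, giving $\P_{\Q+\E} - \P_{\Q} = \E^T (\Q^T)^\dagger + (\Q+\E)^T\big( (\Q+\E)^{T\dagger} - (\Q^T)^\dagger \big)$. The first term has Frobenius norm at most $\|\E\|_F \|(\Q^T)^\dagger\|_2 = \|\E\|_F / \sigma_{\min}(\Q)$; for the second term, bound $\|(\Q+\E)^T\|_2 \leq \|\Q\|_2 + \|\E\|_2 \leq 2\|\Q\|_2$ and apply Corollary \ref{cor:psuedobound} to $\Q^T$ (which has full column rank since $\Q$ has full row rank), noting that its condition number equals $\kappa(\Q)$. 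The hypothesis $\|\E\|_F \leq \eps / (\kappa(\Q)\|\Q\|_2)^4$ is comfortably strong enough to satisfy the smallness requirement $\|\E\|_F \leq \eps'/\kappa(\Q^T)^2$ in that corollary for a suitable $\eps'$.

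Carrying this out, Corollary \ref{cor:psuedobound} gives $\|(\Q+\E)^{T\dagger} - (\Q^T)^\dagger\|_F \leq O\big(\kappa(\Q)^2 \|\E\|_F\big)$ (tracking the $\kappa^2$ from the $\max\{\|\BB^\dagger\|_2^2, \dots\}$ term together with $\sigma_{\min}(\Q)^{-1} \leq \kappa(\Q)$ since $\|\Q\|_2 \geq 1$). Combining, $\|\P_{\Q+\E} - \P_{\Q}\|_F \leq \kappa(\Q)\|\E\|_F + 2\|\Q\|_2 \cdot O(\kappa(\Q)^2 \|\E\|_F) = O\big( \kappa(\Q)^2 \|\Q\|_2 \|\E\|_F \big)$. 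Plugging in $\|\E\|_F \leq \eps/(\kappa(\Q)\|\Q\|_2)^4$ yields $\|\P_{\Q+\E} - \P_{\Q}\|_F = O(\eps)$ (the extra powers of $\kappa(\Q)\|\Q\|_2 \geq 1$ in the denominator only help). Finally, for any row vector $x^T \in \R^\ell$, by the triangle inequality $\big| \|x\P_{\Q+\E}\|_2 - \|x\P_{\Q}\|_2 \big| \leq \|x(\P_{\Q+\E} - \P_{\Q})\|_2 \leq \|\P_{\Q+\E} - \P_{\Q}\|_2 \|x\|_2 \leq \|\P_{\Q+\E} - \P_{\Q}\|_F \|x\|_2 = O(\eps \|x\|_2)$, which is exactly the claimed bound $\|x\P_{\Q+\E}\|_2 = \|x\P_{\Q}\|_2 \pm O(\eps\|x\|_2)$.

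The main obstacle I anticipate is purely bookkeeping: making sure the power of $\kappa(\Q)\|\Q\|_2$ appearing in the final constant is actually dominated by the fourth power assumed in the hypothesis, and correctly matching the smallness precondition of Corollary \ref{cor:psuedobound} (which requires the perturbation to be small relative to $\kappa^2$, and assumes $\|\BB\|_2 \geq 1$ — here $\BB = \Q^T$ with $\|\Q^T\|_2 = \|\Q\|_2 \geq 1$ by hypothesis, so this is fine). One subtlety worth stating carefully is that $\P_{\Q+\E}$ is well-defined as an orthogonal projection only if $\Q + \E$ has full row rank; this follows since $\sigma_{\min}(\Q+\E) \geq \sigma_{\min}(\Q) - \|\E\|_2 > 0$ given the smallness of $\|\E\|_F$, exactly as in the proof of Corollary \ref{cor:psuedobound}. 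No continuous optimization or probabilistic argument is needed here — it is a deterministic linear-algebraic perturbation statement, and the only real content is the pseudo-inverse perturbation bound already imported from \cite{MENG2010956}.
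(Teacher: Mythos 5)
Your proposal is correct and gives a cleaner decomposition than the paper's own argument, though both ultimately rest on the pseudo-inverse perturbation bound imported from \cite{MENG2010956} (Proposition \ref{prop:psuedoinverse}, packaged as Corollary \ref{cor:psuedobound}). The paper writes $\P_{\Q+\E}$ out as $(\Q+\E)^T\big((\Q+\E)(\Q+\E)^T\big)^{-1}(\Q+\E)$, perturbs the inner Gram factor by applying Corollary \ref{cor:psuedobound} to $\BB = \Q\Q^T$ (which has condition number $\kappa(\Q)^2$, hence the fourth power in the hypothesis), and then expands the triple product into $\P_\Q$ plus a collection of cross terms $\Q^T\ZZ'(\Q+\E)$, $\Q^T(\Q\Q^T)^{-1}\E$, $\E^T(\cdots)(\Q+\E)$, etc., each of which must be bounded separately. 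You instead use the identity $\P_\Q = \Q^T(\Q^T)^\dagger$ and the telescoping decomposition $\P_{\Q+\E}-\P_\Q = \E^T(\Q^T)^\dagger + (\Q+\E)^T\big((\Q+\E)^{T\dagger}-(\Q^T)^\dagger\big)$, so there are only two terms: one linear in $\E$ that bounds trivially, and one handled by a single application of Corollary \ref{cor:psuedobound} to $\BB = \Q^T$ (which has condition number $\kappa(\Q)$, not $\kappa(\Q)^2$). This buys you a tidier bookkeeping with fewer error terms, and your invocation of the corollary actually needs one less power of $\kappa(\Q)$ than the paper's does — the hypothesis's fourth power leaves you slack. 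You also correctly note that $\Q+\E$ retains full row rank under the smallness assumption, so $\P_{\Q+\E}$ is a well-defined orthogonal projection; the paper does not make this explicit. As a minor remark in your favor, the pseudo-inverse formulation $\Q^T(\Q^T)^\dagger$ sidesteps the paper's (consistent) notational slip of writing $\Q^T\Q$ where the $k\times k$ Gram matrix $\Q\Q^T$ is intended.
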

\begin{proof}
  We have $\P_{\Q+\E} =  (\Q + \E)^T(  (\Q + \E)^T  (\Q + \E))^{-1}  (\Q + \E) $.  Now $$(\Q + \E)^T  (\Q + \E) = \Q^T \Q + \E^T \Q + \Q^T \E + \E^T \E$$ 
  Further, $\| \E^T \Q + \Q^T \E + \E^T \E\|_F \leq \|\E\|_F \|\Q\|_2 + \|\E\|_F^2 \leq 2\frac{\eps}{\kappa^4(\Q)\|\Q\|_2^2}$. Thus we can write $(\Q + \E)^T  (\Q + \E)  = \Q^T\Q + \ZZ$ where $\|\ZZ\|_F \leq 2\frac{\eps}{\kappa^4(\Q)\|\Q\|_2^2}$. 
  Applying Corollary \ref{cor:psuedobound} with $\BB = \Q^T\Q$, and $\E = \ZZ$, we can write $(\Q + \E)^T  (\Q + \E))^{-1} = (\Q^T\Q)^{-1} + \ZZ'$, where $\|\ZZ'\|_F \leq O(\frac{\eps}{\kappa^2(\Q)\|\Q\|_2^2})$. Thus 
  \begin{equation*}
      \begin{split}
        \P_{\Q+\E} & = (\Q + \E)^T((\Q^T\Q)^{-1} + \ZZ') (\Q + \E) \\
        & =  \P_{\Q}  + \Q^T \ZZ' (\Q + \E)  + \Q^T (\Q^T\Q)^{-1} \E  + \E^T((\Q^T\Q)^{-1} + \ZZ')(\Q+ \E) 
      \end{split}
  \end{equation*}
   Therefore,
   \begin{equation*}
      \begin{split}
        \|\Q^T \ZZ' (\Q + \E)\|_F & \leq \|\Q^T\|_2 \|\ZZ' (\Q + \E)\|_F \\
        & \leq \|\Q^T\|_2 \|\ZZ' \|_F \|(\Q + \E)\|_2 \\
        & \leq \|\Q^T\|_2 \|\ZZ' \|_F  (\|\Q\|_2 + \|\E\|_F) \\
        & = O\left(\frac{\eps}{\kappa^2(\Q)}\right) = O(\eps)
      \end{split}
  \end{equation*}
   Next, we have 
   \begin{equation*}
      \begin{split}
       \| \Q^T (\Q^T\Q)^{-1} \E\|_F & \leq \| \Q^T \|_2 \|(\Q^T\Q)^{-1}\|_2 \|\E\|_F \\ 
       & \leq \frac{\eps}{\|\Q\|_2^3} \\
       & < \eps
      \end{split}
  \end{equation*}
  where in the second to last inequality we used the fact that $\|\Q\|_2 > 1$ so $\sigma^{-2}_{\min}(\Q) = \|(\Q^T \Q)^{-1}\|_2 < 1/\kappa^2(\Q)$. Applying the above bounds similarly, we have $\|\E^T(\Q^T\Q)^{-1}\Q\|_F \leq O(\eps)$, $\|\E^T(\Q^T\Q)^{-1}\E\|_F \leq O(\eps),$ and $\|\E^T\ZZ'(\Q+\E)\|_F \leq O(\eps)$. We conclude $\P_{\Q+\E} = \P_{\Q} + \ZZ''$, where $\| \ZZ''\|_F \leq O(\eps)$. It follows that for any $x \in \R^\ell$, we have
  \begin{equation*}
      \begin{split}
       \|x \P_{\Q+\E}\|_2 & = \| x \P_{\Q}\|_2 \pm \|x \ZZ''\|_2 \\
       & = \| x \P_{\Q}\|_2 \pm  O(\eps\|x\|_2)
      \end{split}
  \end{equation*}
\end{proof}

\begin{lemma}\label{prop:reallylongprojectionprop}
Let $\Q \in \R^{r \times \ell}$ for $1 \leq r \leq 2k$ be any matrix whose rows are formed by taking $r$ distinct rows from the set $\{f(\V^*_{i,*}\overline{\X}), f(-\V^*_{i,*}\overline{\X}) \}_{i \in [k]}$, where $\overline{\X}$ is $\X$ restricted to the first $\ell = \poly(k,d,m,\kappa(\V^*))$ columns. Then w.h.p. (in $\ell$), both $\|\Q\|_F^2 \leq 10 r \ell$ and $\sigma_{\min}(\Q) = \Omega(\frac{\sqrt{\ell}}{(\kappa(\V^*))^2}))$.
\end{lemma}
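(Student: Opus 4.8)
I would prove the two statements separately, the Frobenius bound being routine and the singular-value bound being the substance. For $\|\Q\|_F^2 \le 10r\ell$: every row of $\Q$ has the form $f(w\overline{\X})$ with $w \in \{\pm\V^*_{i,*}\}$ a \emph{unit} vector, so $w\overline{\X}$ has i.i.d.\ $\mathcal{N}(0,1)$ entries and $\|f(w\overline{\X})\|_2^2 = \sum_{p=1}^\ell f(g_p)^2$ has mean $\ell/2$; a Bernstein ($\chi^2$-type) tail bound gives $\|f(w\overline{\X})\|_2^2 \le 3\ell$ except with probability $e^{-\Omega(\ell)}$, and a union bound over the $r \le 2k$ rows (and the at most $\binom{2k}{r}$ choices of row set) yields the claim w.h.p.\ in $\ell$.

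For the lower bound it suffices to show $\min_{\|a\|_2 = 1}\|a^T\Q\|_2 = \Omega(\sqrt\ell/\kappa^2)$, where $\kappa = \kappa(\V^*)$. Writing $\Q = f(\W\overline{\X})$ with $\W \in \R^{r\times d}$ the matrix of signed rows, its columns are i.i.d.\ copies of $f(\W g)$, $g \sim \mathcal{N}(0,\mathbb{I}_d)$, so $\|a^T\Q\|_2^2 = \sum_{p=1}^\ell (a^T f(\W g^{(p)}))^2$ is a sum of $\ell$ i.i.d.\ nonnegative terms with common mean $a^T M a$, where $M := \mathbb{E}[f(\W g)f(\W g)^T]$. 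The plan is then: (i) show $\lambda_{\min}(M) \ge \tfrac{1}{2\pi\kappa^4}$; (ii) for fixed $a$, apply Bernstein's inequality for sums of i.i.d.\ sub-exponential variables (each $a^T f(\W g)$ is $O(\sqrt r)$-sub-Gaussian since it is $\sqrt r$-Lipschitz in $g$) to get $\|a^T\Q\|_2^2 \ge \tfrac12 \ell\, a^T M a \ge \tfrac{\ell}{4\pi\kappa^4}$ except with probability $e^{-\Omega(\ell/\poly(k,\kappa))}$; (iii) union bound over an $\epsilon$-net of the unit sphere in $\R^r$ of size $(C\sqrt r\,\kappa^2)^{O(k)}$ and use that $a\mapsto\|a^T\Q\|_2$ is $\|\Q\|_2 \le \|\Q\|_F \le \sqrt{10 r\ell}$-Lipschitz to move from the net to the infimum. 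Since $\ell = \poly(k,d,m,\kappa)$ can be chosen with sufficiently many powers, the net term is beaten and $\sigma_{\min}(\Q) = \Omega(\sqrt\ell/\kappa^2)$ follows.

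Step (i) is the heart of the argument. Using $f(x) = \tfrac{x + |x|}{2}$ coordinatewise, set $z := \W g \sim \mathcal{N}(0,\W\W^T)$ (unit diagonal, since the rows of $\W$ are unit vectors). All mixed moments $\mathbb{E}[z_i |z_j|]$ vanish: conditioning on $z_j$ and using $\mathbb{E}[z_i \mid z_j] = \langle w_i, w_j\rangle z_j$ together with symmetry of $z_j$ gives $\mathbb{E}[z_i|z_j|] = \langle w_i,w_j\rangle\,\mathbb{E}[z_j|z_j|] = 0$. Now group the rows of $\W$ by which $\V^*$-row they are a signed copy of; let $U \in \R^{t\times d}$, $t \le k$, collect the distinct $\V^*$-rows involved, so $\sigma_{\min}(U) \ge \sigma_{\min}(\V^*) \ge 1/\kappa$, and set $y := Ug \sim \mathcal{N}(0, UU^T)$. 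Expanding $f(x) = \tfrac{x+|x|}{2}$ and treating a singleton row $\epsilon u_s$ and a $\{u_s, -u_s\}$ pair separately, one checks $a^T f(z) = c^T y + d^T |y|$ for vectors $c, d \in \R^t$ linear in $a$ with $\|c\|_2^2 + \|d\|_2^2 = \tfrac12\|a\|_2^2 = \tfrac12$; since the mixed moments $\mathbb{E}[y_s|y_{s'}|]$ again all vanish,
\[ \mathbb{E}\big[(a^T f(z))^2\big] = c^T (UU^T) c + d^T \mathbb{E}\big[|y|\,|y|^T\big]\, d \ge \sigma_{\min}(U)^2 \|c\|_2^2 + \lambda_{\min}\!\big(\mathbb{E}[|y|\,|y|^T]\big)\,\|d\|_2^2. \]
For the last eigenvalue, the Hermite/Mehler expansion gives $\mathbb{E}[|y_i||y_j|] = \sum_{\mu \ge 0} c_\mu \langle u_i, u_j\rangle^{2\mu}$ with $c_\mu \ge 0$ and $c_1 = \tfrac1\pi$, hence $\mathbb{E}[|y||y|^T] \succeq \tfrac1\pi (UU^T)^{\odot 2}$ (Schur powers of a PSD matrix are PSD); and for any PSD $A$ with diagonal entries at most $1$ one has $A^{\odot 2} \succeq \lambda_{\min}(A)^2\, \mathbb{I}$, by writing $A = \lambda_{\min}(A)\mathbb{I} + B$ with $B \succeq 0$ and expanding $A^{\odot 2} = \lambda_{\min}(A)^2 \mathbb{I} + 2\lambda_{\min}(A)\,\mathrm{diag}(B) + B^{\odot 2}$ (the last two terms PSD because $\mathrm{diag}(B) \succeq 0$). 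Applying this to $A = UU^T$ yields $\lambda_{\min}(\mathbb{E}[|y||y|^T]) \ge \tfrac1\pi \sigma_{\min}(U)^4 \ge \tfrac{1}{\pi\kappa^4}$, so $\mathbb{E}[(a^T f(z))^2] \ge \kappa^{-2}\|c\|_2^2 + (\pi\kappa^4)^{-1}\|d\|_2^2 \ge (2\pi\kappa^4)^{-1}$, giving $\lambda_{\min}(M) \ge \tfrac{1}{2\pi\kappa^4}$.

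The main obstacle is the case where $\W$ contains a pair $v, -v$: then $\W\W^T$ is singular and the naive estimate $\tfrac14 a^T\W\W^T a$ is worthless on the ``difference'' direction, so one genuinely needs the absolute-value part $\tfrac14 \mathbb{E}[(a^T|z|)^2]$ --- whence the passage to $y$, the vanishing of all mixed moments $\mathbb{E}[y_s|y_{s'}|]$, and the Schur-power lower bound on the absolute-value Gaussian kernel. A secondary bookkeeping point is keeping step~(i)'s loss explicit ($\kappa^{-4}$) so that it can be paid off against the $(\poly(k,\kappa))^{O(k)}$ net size, which is what forces $\ell$ to be a large enough polynomial in $k$ and $\kappa$.
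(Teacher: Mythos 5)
Your argument is correct, but it is a genuinely different route from the one in the paper, and in my view a cleaner one.

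The paper's proof works directly with the empirical matrix $f(\W\overline{\X})$: after rotating $\V^*$ to lower-triangular form, it isolates the ``new'' coordinate $\varphi e_k$ of each row, conditions on a subset $\mathcal{S}$ of columns (of density $\Omega(1/(\lambda\kappa))$) where that coordinate dominates, and shows that on $\mathcal{S}$ the projection of $f(\V^*_{i,*}\overline{\X})$ onto the span of the remaining rows of $f(\W\overline{\X})$ loses a constant fraction of its norm; this is then stitched back into a $\sigma_{\min}$ bound via the SVD. Your plan instead passes to the population Gram matrix $M = \mathbb{E}[f(\W g)f(\W g)^T]$, bounds $\lambda_{\min}(M)$ with a clean sequence of identities --- the split $f(x) = \tfrac{x+|x|}{2}$, the vanishing of the mixed moments $\mathbb{E}[z_i|z_j|]$ and $\mathbb{E}[y_s|y_{s'}|]$ by symmetry, the Mehler/Hermite expansion giving $\mathbb{E}[|y||y|^T] \succeq \tfrac1\pi(UU^T)^{\odot 2}$, and the Schur-power fact $A^{\odot 2} \succeq \lambda_{\min}(A)^2\,\mathbb{I}$ --- and then transfers to the empirical matrix by Bernstein plus an $\varepsilon$-net over $\mathbb{S}^{r-1}$. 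Your treatment of the $\{v,-v\}$ degeneracy (the linear part $\W\W^T$ is singular there, so the $|z|$ contribution is essential) is exactly the right thing to notice and is handled crisply by passing to the reduced matrix $U$ of distinct rows. Both routes land on the same rate $\sigma_{\min}(\Q) = \Omega(\sqrt\ell/\kappa^2)$; what your version buys is a self-contained spectral lemma about the absolute-value Gaussian kernel and a concentration argument that avoids the paper's conditioning bookkeeping, at the (minor) cost of having to pay a $(\poly(k,\kappa))^{O(k)}$ net term, which $\ell = \poly(k,d,m,\kappa)$ comfortably covers. One small thing to make explicit: the final union bound should also run over the at most $4^k$ choices of row-subset $\W$ for the $\sigma_{\min}$ claim, not only for the Frobenius bound --- harmless, but worth stating.
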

\begin{proof}
  The first bound $\|\Q\|_F^2 \leq 10 r \ell$ follows from the fact that the $\|\cdot\|_2^2$ norm of each row is distributed as a $\chi^2$ random variable, so the claim follows from standard tail bounds for such variables \cite{laurent2000}. 
  For the second claim, write $\Q = f(\W \overline{\X})$, where the rows of $\W$ are the $r$ distinct rows from the set $\{\V^*_{i,*},-\V^*_{i,*} \}_{i \in [k]}$ corresponding to the rows of $\Q$. Let $\W^+$ be the subset of rows of the form $\V^*_{i,*}$, and $\W^-$ its complement. 
  There now there is a rotation matrix $\RR$ that rotates $\W^+$ to be lower triangular, so that the $j$-th row of $\W^+\R$ is supported on the first $j$ columns. Let $\W$ be such that the pairs of rows $\{\V^*_{i,*},-\V^*_{i,*} \}$ with the same index $i$ are placed together. Then $\W\R$ is block-upper triangular, where the $j$-th pair of rows of the form $\{\V^*_{i,*},-\V^*_{i,*}\} $ are supported on the first $j$ columns. 
     Since Gaussians are rotationally invariant, $\W\RR \overline{\X}$ has the same distribution as $\W\overline{\X}$, thus we can assume that $\W$ is in this block lower triangular form, and $\V^*$ is in lower triangular form.

    WLOG assume the rank of $\W$ is $k$ (the following arguements will hold when the rank is $k' < k$). We now claim that we can write $\W_{r,*} = \alpha + \varphi e_k$, where $\alpha$ is in the span of $e_1,\dots,e_{k-1}$ and $\varphi = \Omega(\frac{1}{\kappa})$ where $\kappa = \kappa(\V^*)$. 
    To see this, note that if this were not the case, the projection of $\W_{r,*}$ onto the all prior rows with the same sign (i.e. all either of the form $\V_{i,*}^*$ or $-\V_{i,*}^*$) would be less than $\frac{1}{\kappa}$, since the prior span all of $\R^{k-1}$ on the first $k-1$ columns, and the only part of $\W_{r,*}$ outside of this span has weight $\varphi$. Let $w$ be such that $w\W'$ is this projection, where $\W'$ excludes the last row of $\W$ WLOG this row is of the form $\V_{i,*}^*$, and WLOG $i=k$. Then can write $w \W' = v (\V^*)'$ where $(\V^*)'$ excludes the last row of $\V^*$. 
    Then $\|[v,-1]\V^*\|_2 < \frac{1}{\kappa}$, and since $\|\V^*\|_2 \geq 1$ it follows that $\kappa(\V^*) > \kappa$, a contradiction since $\kappa$ is defined as the condition number of $\V^*$. 
    
     Now let $x^i$ be the $i$-th column of $\overline{\X}$, and let $\mathcal{E}_i$ be the event that $\varphi x_{k}^i > \lambda | \langle \alpha, (x_1^i,\dots,x_{k-1}^i)\rangle |$, where $\lambda$ will later be set to be a sufficiently large constant. Since $\varphi x_k^i$ and $ \langle \alpha, (x_1^i,\dots,x_{k-1}^i)\rangle$ are each distributed as a Gaussian with variance at most $1$, by anti-concentration of Gaussians $\pr{\mathcal{E}_i} = \Omega(\frac{1}{\lambda \kappa})$. So let $\mathcal{S} \subset [\ell]$ be the subset of $i$ such that $\mathcal{E}_i$ holds, and let $\X_\mathcal{S}$ be $\overline{\X}$ restricted to this subset. Let $\V_{i,*}^*$ be the last column of $\W$ (WLOG we assume it is $\V_{i,*}^*$ and not $-\V_{i,*}^*$). We now upper bound the norm of the projection of $f(\V_{i,*}^*\X_{\mathcal{S}})$ onto the row span of $f(\W\X_{\mathcal{S}})$. Now $f(-\V_{i,*}^*\X)$, if it exists as a row of $f(\W\X)$, will be identically $0$ on the coordinates in $\mathcal{S}$ (because $\V_{i,*}^* \X$ is positive on these coordinates by construction). So we can disregard it in the following analysis.  
     By construction of $\mathcal{S}$ we can write $$f(\V^*_{i,*} \X_{\mathcal{S}}) = f(\varphi(\X_{\mathcal{S}})_{k,*}) + b$$
     where $\|b\|_2 \leq  \|\frac{1}{\lambda}f(\V^*_{i,*} \X_{\mathcal{S}})\|_2$, where $(\X_{\mathcal{S}})_{k,*}$ is the $k$-th row of $\X_{\mathcal{S}}$. 
     By the triangle inequality, the projection of $f(\V_{i,*}^*\X_{\mathcal{S}})$ onto the rowspan of $f(\W'\X_{\mathcal{S}})$ (where $\W'$ is $\W$ excluding $\V_{i,*}^*$), is 
     $$\|f(\V^*_{i,*} \X_{\mathcal{S}})\P_{\W'} \|_2 \leq  \frac{1}{\lambda} \|f(\V^*_{i,*} \X_{\mathcal{S}})\|_2 +  \|f(\varphi (\X_{\mathcal{S}})_{k,*})\P_{\W'}\|_2$$ 
     where $\P_{\W'}$ is the projection onto the rowspan of $f(\W'\X_{\mathcal{S}})$. 
     Crucially, observe that $f(\W'\X_{\mathcal{S}})$, and thus $\P_{\W'}$, does not depend on the $k$-th row $(\X_{\mathcal{S}})_{k,*}$ of $\X_{\mathcal{S}}$. Now  $$\|f(\varphi (\X_{\mathcal{S}})_{k,*})\P_{\W'}\|_2 \leq \|f(\varphi (\X_{\mathcal{S}})_{k,*})\P_{\W'+1}\|_2$$ 
     where $\P_{\W'+1}$ is the projection onto the row span of $\{f(\W'\X_{\mathcal{S}})_{j,*} \}_{\text{rows } j \text{ of } \W'} \bigcup \{ \mathbf{1} \}$ where $\mathbf{1}$ is the all $1's$ vector. This holds since adding a vector to the span of the subspace being projected onto can only increase the length of the projection.  Let $\P_{1}$ be the projection just onto the row $\mathbf{1}$.

     Now observe that $\varphi f( (\X_{\mathcal{S}})_{k,*})(\mathbb{I} -\P_{1})$ is a mean $0$ i.i.d. shifted rectified-Gaussian vector with variance strictly less than $\varphi^2$ (here rectified means $0$ with prob $1/2$ and positive Gaussian otherwise). Moreover, the mean of the entries of $f(\varphi (\X_{\mathcal{S}})_{k,*})$ is $\Theta(\varphi)$. The $L_2$ of these vectors are thus sums of sub-exponential random variables, so by standard sub-exponential concentration (see e.g. \cite{wainwright2019high}) we have $\|f(\varphi (\X_{\mathcal{S}})_{k,*})(\mathbb{I} -\P_{1})\|_2 = \Theta(\varphi ) \sqrt{|S|} ,$ and moreover
     \begin{equation}\label{eqn:projlemma1}
     \|f(\varphi (\X_{\mathcal{S}})_{k,*})\|_2-\|f(\varphi (\X_{\mathcal{S}})_{k,*})(\mathbb{I} -\P_{1})\|_2 = \Omega(\varphi )\sqrt{|\mathcal{S}|} 
     \end{equation}
     w.h.p in $\log(|\mathcal{S}|)$ where $|S| \geq \Theta(1) \frac{\ell}{\kappa \lambda} = \poly(d,m,k, \kappa)$. Now by Lemma \ref{prop:subgaussian}, we have $$\|f(\varphi (\X_{\mathcal{S}})_{k,*})(\mathbb{I} -\P_{1}) \P_{\W'+1}\|_2 \leq \varphi \sqrt{(2k+1) /\delta}$$
     with probability $1-\delta$, for some $\delta = 1/\poly(k,d,m)$. So 
     $$\|f(\varphi (\X_{\mathcal{S}})_{k,*})(\mathbb{I} -\P_{1})\P_{\W'+1}\|_2 \leq O(\frac{\poly(k,d,m)}{\sqrt{|\mathcal{S}|}})\|f(\varphi (\X_{\mathcal{S}})_{k,*})(\mathbb{I} -\P_{1})\|_2$$ 
     Write $f(\varphi (\X_{\mathcal{S}})_{k,*}) =f(\varphi (\X_{\mathcal{S}})_{k,*})\P_{+1} + f(\varphi (\X_{\mathcal{S}})_{k,*}) (\mathbb{I} -\P_{1})$. 
Then by triangle inequality, we can upper bound $\| f(\varphi( \X_\mathcal{S})_{k,*}) \P_{\W'+ 1}\|_2$ by
\begin{equation*}
    \begin{split}
        & \leq \|f(\varphi (\X_{\mathcal{S}})_{k,*})\P_{1}\P_{\W'+1} \|_2    + \|f(\varphi (\X_{\mathcal{S}})_{k,*})(\mathbb{I} -\P_{1})\P_{\W'+1}\|_2 \\
        & \leq \|f(\varphi (\X_{\mathcal{S}})_{k,*})\P_{1} \|_2    + O(\frac{\poly(k,d,m)}{\sqrt{|S|}})\|f(\varphi (\X_{\mathcal{S}})_{k,*})(\mathbb{I} -\P_{1})\|_2 \\
        & = \Big(\|f(\varphi (\X_{\mathcal{S}})_{k,*}) \|_2^2 - \|f(\varphi (\X_{\mathcal{S}})_{k,*})(\mathbb{I} -\P_{1}) \|_2^2  \Big)^{1/2}  + O(\frac{\poly(k,d,m)}{\sqrt{|S|}})\|f(\varphi (\X_{\mathcal{S}})_{k,*})\|_2 
    \end{split}
\end{equation*}
     Using the bound from Equation \ref{eqn:projlemma1}, for some constants $c,c'<1$ bounded away from $1$, we have
     \begin{equation*}
         \begin{split}
             & =     \|f(\varphi (\X_{\mathcal{S}})_{k,*}) \|_2(1 - c) + O(\frac{\poly(k,d,m)}{\sqrt{|S|}})\|f(\varphi (\X_{\mathcal{S}})_{k,*})\|_2\\
           & \leq     \|f(\varphi (\X_{\mathcal{S}})_{k,*}) \|_2(1 - c')
         \end{split}
     \end{equation*}
           Thus $\|f(\V^*_{i,*} \X_{\mathcal{S}})\P_{\W'}\|_2 \leq \|f(\V^*_{i,*} \X_{\mathcal{S}})\P_{\W'+1}\|_2 \leq (1 -\Theta(1))\|f(\V^*_{i,*} \X_{\mathcal{S}})\|_2$. Now by setting $\lambda > 2c'$ a sufficently large constant, the $b$ term becomes negligible, and the above bound holds replacing $c'$ with $c'/2$. Since we have $\|f(\V^*_{i,*} \X_{\mathcal{S}})\|_2 = \Theta( \varphi \sqrt{|\mathcal{S}|})$, and $\|f(\V^*_{i,*} \overline{\X})\|_2 = \Theta(\sqrt{\ell})$, if $\overline{\P_{\W'}}$ is projection of  onto the rows of $f(\W' \overline{X})$, we have
           \begin{equation*}
               \begin{split}
                   \|f(\V^*_{i,*} \overline{\X})\overline{\P_{\W'}}\|_2 & \leq \|f(\V^*_{i,*} \overline{\X})\|_2\left(1 - \Theta\left(\frac{\varphi}{\kappa \lambda}\right)\right) \\ 
                   & <\|f(\V^*_{i,*} \overline{\X})\|_2(1 - \Theta(\frac{1}{\kappa^2 }))
               \end{split}
           \end{equation*}        
           Where here we recall $\lambda = \Theta(1)$. Since this argument used no facts about the row $i$ we were choosing, it follows that the norm of the projection of any row onto the subspace spanned by the others others in $f(\W \overline{\X})$ is at most a $(1 - \Theta(\frac{1}{\kappa^2 }))$ factor less than the norm was before the projection. In particular, this implies that $f(\W\overline{\X})$ is full rank. Note by sub-exponential concentration, each row norm is $\Theta(\sqrt{\ell})$ w.h.p.  We are now ready to complete the argument. Write $f(\W \overline{\X}) = \BB \Sigma \Q^T$ in its singular value decomposition.
 Since the projection of one row onto another does not change by a row rotation,, we can rotate $\Q^T$ to be the identity, and consider $\BB\Sigma$. Let $u_i$ be a unit vector in the direction of the $i$-th row projected onto the orthogonal space to the prior rows.  Now for any unit vector $u$, write it as $u = \sum_i u_i a_i$ (which we can do because $f(\W\overline{\X})$ is full rank). Noting that $\frac{\|f(\W \overline{\X})_{i,*}\|_2}{\|f(\W \X_{\mathcal{S}})_{i,*}\|_2} = O(\poly(k) \kappa^2)$ for any row $i$, we have 
 \begin{equation*}
     \begin{split}
         \|f(\W\overline{\X}) u\|_2^2  & \geq \sum_i \langle u_i ,u\rangle^2 \Omega(\frac{\ell}{\kappa^4 }) \\
         & \geq \sum_i a_i^2 \Omega(\frac{\ell}{\kappa^4 })\\
         & = \Omega(\frac{\ell}{\kappa^4})
     \end{split}
 \end{equation*}
 Thus $\sigma_{\min}(\Q) = \sigma_{\min}(f(\W\overline{\X})) =\Omega( \frac{\sqrt{\ell}}{\kappa^2})$ as needed, where recall we have been writing $\kappa = \kappa(\V^*)$. 
\end{proof}
Using the bounds developed within the proof of the prior lemma gives the following corollary.

\begin{corollary}\label{cor:projectnegV}
Let $ \P_{S_{i,+}}, \P_{S_{i,-}}$ be as in Algorithm \ref{alg:recoversigns2}. Then 
\[ \| f(\V_{i,*}^* \X) \P_{S_{i,-}} \|_2 = \|f(\V_{i,*}^* \X)\|_2 \Big( 1 - \Omega(\frac{1}{\kappa(\V^*)^2 \poly(k)})\Big)\]
and 
\[ \| f(-\V_{i,*}^* \X) \P_{S_{i,+}} \|_2 = \|f(-\V_{i,*}^* \X)\|_2 \Big( 1 - \Omega(\frac{1}{\kappa(\V^*)^2 \poly(k)})\Big)\]
\end{corollary}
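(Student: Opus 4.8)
The claim follows by combining the structural estimates of Lemma~\ref{prop:reallylongprojectionprop} with the subspace‑perturbation bound of Lemma~\ref{prop:projectionbound}. I argue under the normalization $\xi_i=1$, i.e.\ $\|v_i-\V^*_{i,*}\|_2\le\eps$; this is the WLOG convention used throughout the analysis of Algorithm~\ref{alg:recoversigns2}, and the case $\xi_i=-1$ is symmetric with the roles of $S_{i,+}$ and $S_{i,-}$ interchanged. Let $\W\in\R^{2k\times d}$ be the matrix whose rows are $\{\V^*_{l,*},-\V^*_{l,*}\}_{l\in[k]}$, and set $M=f(\W\overline{\X})\in\R^{2k\times\ell}$. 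Since $\{v_j,-v_j\}$ and $\{\V^*_{j,*},-\V^*_{j,*}\}$ agree as unordered pairs up to additive $\eps$ for every $j\ne i$, and $\|-v_i-(-\V^*_{i,*})\|_2\le\eps$, the defining vectors of $S_{i,-}$ are, up to reordering, an additive perturbation of the rows of $M_{-i}$, the submatrix of $M$ with the row $f(\V^*_{i,*}\overline{\X})$ deleted; likewise $S_{i,+}$ is a perturbation of $M$ with the row $f(-\V^*_{i,*}\overline{\X})$ deleted. Thus it suffices to control projections onto $\mathrm{span}(M_{-i})$ and onto its perturbation.

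\textbf{Idealized projections.} Apply Lemma~\ref{prop:reallylongprojectionprop} to $\W$ (which has $2k$ rows, each a $\pm$ copy of a unit row of $\V^*$, so the lemma applies with $r=2k$): w.h.p.\ $\|M\|_F^2=O(k\ell)$, every row of $M$ has norm $\Theta(\sqrt\ell)$, and $\sigma_{\min}(M)=\Omega(\sqrt\ell/\kappa(\V^*)^2)$, so in particular $M$ has full row rank. For the row $M_{i,*}=f(\V^*_{i,*}\overline{\X})$, the distance from $M_{i,*}$ to $\mathrm{span}(M_{-i})$ is at least $\sigma_{\min}(M)$: letting $n$ be a unit vector in the rowspan of $M$ orthogonal to the rows of $M_{-i}$, this distance equals $|\langle M_{i,*},n\rangle|$, while $\|nM\|_2^2=\langle M_{i,*},n\rangle^2\ge\sigma_{\min}(M)^2$. (Equivalently, this loss can be read off directly from the argument inside the proof of Lemma~\ref{prop:reallylongprojectionprop}.) Combined with $\|M_{i,*}\|_2=\Theta(\sqrt\ell)$ this yields $\|M_{i,*}\P_{M_{-i}}\|_2\le\|M_{i,*}\|_2\bigl(1-\Omega(\kappa(\V^*)^{-2}\poly(k)^{-1})\bigr)$, and applying the same estimate to the row $-\V^*_{i,*}$ of $M$ handles the $S_{i,+}$ display.

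\textbf{Transferring to $S_{i,\pm}$.} By Lemma~\ref{prop:gaussiancondition}, $\sigma_{\max}(\overline{\X})=O(\sqrt\ell)$ w.h.p., and since $f$ is $1$‑Lipschitz each defining vector of $S_{i,-}$ differs from the corresponding row of $M_{-i}$ by $O(\sqrt\ell\,\eps)$ in $\ell_2$; hence $S_{i,-}$ is generated by the rows of $M_{-i}+\E_S$ with $\|\E_S\|_F=O(\sqrt{k\ell}\,\eps)$. From the bounds above, $\|M_{-i}\|_2\le\|M_{-i}\|_F=O(\sqrt{k\ell})$ and $\sigma_{\min}(M_{-i})\ge\sigma_{\min}(M)$, so $\kappa(M_{-i})=O(\sqrt k\,\kappa(\V^*)^2)$ and $(\kappa(M_{-i})\|M_{-i}\|_2)^4=\poly(k,\ell,\kappa(\V^*))$. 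Lemma~\ref{prop:projectionbound} then applies with error parameter $\eps'=\|\E_S\|_F\cdot(\kappa(M_{-i})\|M_{-i}\|_2)^4=\eps\cdot\poly(k,d,m,\kappa(\V^*))$, giving $\|x\P_{S_{i,-}}\|_2=\|x\P_{M_{-i}}\|_2\pm O(\eps'\|x\|_2)$ for every $x$. Taking $x=f(\V^*_{i,*}\overline{\X})$ (of norm $\Theta(\sqrt\ell)$), and recalling that $\eps=O(1/\poly(d,m,k,\kappa(\U^*),\kappa(\V^*)))$ may be taken to be an inverse polynomial of arbitrarily high degree while $\ell=\poly(k,d,m,\kappa(\V^*))$, we have $\eps'\ll\kappa(\V^*)^{-2}\poly(k)^{-1}$, so the $O(\eps'\|x\|_2)$ term is absorbed into the idealized gap and the first display follows. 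The second display is identical with $S_{i,-}$, $M_{-i}$, $f(\V^*_{i,*}\overline{\X})$ replaced by $S_{i,+}$, $M$ with its $-\V^*_{i,*}$ row removed, and $f(-\V^*_{i,*}\overline{\X})$.

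\textbf{Main obstacle.} The delicate point is the last step: the $\eps$‑level errors in the $v_j$'s are amplified first by $\sigma_{\max}(\overline{\X})=O(\sqrt\ell)$ and then by the $(\kappa(\Q)\|\Q\|_2)^4$ factor in Lemma~\ref{prop:projectionbound}, so one must check that, with $\ell$ and $\kappa(\V^*)$ being $\poly(n)$‑sized, choosing $\eps$ a sufficiently high‑degree inverse polynomial still leaves the perturbation error strictly below the $\Omega(\kappa(\V^*)^{-2}\poly(k)^{-1})$ gap provided by Lemma~\ref{prop:reallylongprojectionprop}. This is routine polynomial bookkeeping, but it is precisely where the quantitative statement of the corollary is won; everything else is a direct consequence of the two cited lemmas.
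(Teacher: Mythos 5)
Your overall decomposition — idealized projection bound plus a perturbation step via Lemma~\ref{prop:projectionbound} — is the right structure, and the perturbation bookkeeping in the ``Transferring'' paragraph is fine. But the central step of the ``Idealized projections'' paragraph does not give the bound you claim, and the parenthetical ``(Equivalently, this loss can be read off directly from the argument inside the proof of Lemma~\ref{prop:reallylongprojectionprop})'' is not in fact equivalent to the route you take.

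Concretely: your argument shows that the distance from $M_{i,*}$ to $\mathrm{span}(M_{-i})$ is at least $\sigma_{\min}(M)=\Omega(\sqrt\ell/\kappa(\V^*)^2)$. By the Pythagorean theorem, $\|M_{i,*}\P_{M_{-i}}\|_2^2 \le \|M_{i,*}\|_2^2 - \sigma_{\min}(M)^2$, and since $\|M_{i,*}\|_2=\Theta(\sqrt\ell)$ this yields only
\[ \|M_{i,*}\P_{M_{-i}}\|_2 \le \|M_{i,*}\|_2\Bigl(1-\Omega\bigl(\sigma_{\min}(M)^2/\|M_{i,*}\|_2^2\bigr)\Bigr) = \|M_{i,*}\|_2\bigl(1-\Omega(\kappa(\V^*)^{-4})\bigr), \]
not the $1-\Omega(\kappa(\V^*)^{-2}\poly(k)^{-1})$ stated in the corollary. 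These are not the same: $\kappa(\V^*)$ and $k$ are independent parameters, so $\kappa^{-4}$ is not $\Omega(\kappa^{-2}\poly(k)^{-1})$ in general. The reason the two routes are not interchangeable is directional: inside the proof of Lemma~\ref{prop:reallylongprojectionprop}, the paper first establishes the sharper statement $\|f(\V^*_{i,*}\overline{\X})\P_{\W'}\|_2\le\|f(\V^*_{i,*}\overline{\X})\|_2(1-\Theta(1/\kappa^2))$ (equivalently, the row-to-span distance is $\Omega(\sqrt\ell/\kappa)$), and only afterwards \emph{derives} $\sigma_{\min}(M)=\Omega(\sqrt\ell/\kappa^2)$ from it, paying a $\kappa$ factor in the chaining argument. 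Going back from $\sigma_{\min}$ to the per-row projection pays a second $\kappa$ factor and cannot recover the original bound. To prove the corollary as stated you should cite the intermediate projection inequality established inside the proof of Lemma~\ref{prop:reallylongprojectionprop} directly (which is exactly what the paper does when it says ``using the bounds developed within the proof of the prior lemma''), rather than reconstructing it from $\sigma_{\min}(M)$.
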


\begin{theorem}\label{thm:recoversignsnoisy}

Let $\AA = \U^*f(\V^*\X) + \E$, where $\E$ is i.i.d. mean zero with variance $\sigma^2$. Then given $v^T_i \in \R^d$ such that. $\|v_i - \xi_i\V_{i,*}^*\|_2 \leq \eps$ for some unknown $\xi_i \in \{ 1,-1\}$ and $i=1,2,\dots,k$, where $\eps = O(\frac{1}{\poly(d,m,k,\kappa(\U^*),\kappa(\V^*)})$ is sufficiently small, with high probability, Algorithm \ref{alg:recoversigns2} returns $\V$ such that $\|\V - \V^*\|_2 \leq \eps$ in $\poly(d,m,k,\kappa(\U^*),\kappa(\V^*),\sigma)$ time.
\end{theorem}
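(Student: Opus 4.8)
The plan is to prove correctness of Algorithm \ref{alg:recoversigns2} by showing that for every $i\in[k]$ the comparison in Step 4 picks the right sign, i.e.\ (fixing $i$ and assuming WLOG $\xi_i = 1$, the case $\xi_i=-1$ being symmetric) that $a^-_i - a^+_i > 0$ with high probability; once all signs are correct, Step 4 outputs $\V_{i,*}=\xi_i v_i$, so $\|\V_{i,*}-\V^*_{i,*}\|_2 = \|v_i - \xi_i\V^*_{i,*}\|_2 \le \eps$ for each $i$, hence $\|\V-\V^*\|_2 \le \|\V-\V^*\|_F \le \sqrt{k}\,\eps$, which is of the stated polynomial order. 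Write $\W = f(\V^*\overline\X)\in\R^{k\times\ell}$, so that $\overline{\AA}_{j,*} = \U^*_{j,*}\W + \overline{\E}_{j,*}$. First I would replace each set of vectors $S_{i,\pm}$ by its \emph{ideal} version $\widetilde S_{i,\pm}$, obtained by substituting $f(\pm\V^*_{l,*}\overline\X)$ for $f(\pm v_l\overline\X)$ when $l\neq i$ and $f(\pm\V^*_{i,*}\overline\X)$ for $f(\pm v_i\overline\X)$. Since $\|v_l - \xi_l\V^*_{l,*}\|_2\le\eps$, $f$ is $1$-Lipschitz, and $\|\overline\X\|_2 = O(\sqrt\ell)$ w.h.p., each ideal vector differs from the corresponding one in $S_{i,\pm}$ by $O(\sqrt\ell)\,\eps$ in $\ell_2$; by Lemma \ref{prop:reallylongprojectionprop} the matrix of the (at most $2k$) ideal vectors has spectral norm $O(\sqrt{\ell k})$ and smallest singular value $\Omega(\sqrt\ell/\kappa(\V^*)^2)$, so its condition number and norm are bounded by a fixed polynomial in $\ell,k,\kappa(\V^*)$. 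Hence Lemma \ref{prop:projectionbound} applies with $\eps$ playing the role of a sufficiently small inverse polynomial (which it is, since $\eps$ is the tensor-decomposition error we control), giving $\|x\,\P_{S_{i,\pm}}\|_2 = \|x\,\widetilde\P_{i,\pm}\|_2 \pm O(\eps'\|x\|_2)$ for $\eps'$ as small as desired, where $\widetilde\P_{i,\pm}$ projects onto $\mathrm{span}(\widetilde S_{i,\pm})$.

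The key structural observation is that $\mathrm{span}(\W)$ lies inside $\mathrm{span}(\widetilde S_{i,+})$, while $f(\V^*_{i,*}\overline\X)$ has a large component orthogonal to $\mathrm{span}(\widetilde S_{i,-}) = \mathrm{span}\big(\{f(\pm\V^*_{l,*}\overline\X)\}_{l\ne i}\cup\{f(-\V^*_{i,*}\overline\X)\}\big)$. Concretely, $\|\U^*_{j,*}\W(\mathbb{I} - \P_{S_{i,+}})\|_2 = O(\eps'\|\U^*_{j,*}\W\|_2)$, whereas since all terms $\U^*_{j,l}f(\V^*_{l,*}\overline\X)$ with $l\ne i$ already lie in $\mathrm{span}(\widetilde S_{i,-})$, we get $\|\U^*_{j,*}\W(\mathbb{I} - \P_{S_{i,-}})\|_2 = |\U^*_{j,i}|\cdot\|f(\V^*_{i,*}\overline\X)(\mathbb{I}-\widetilde\P_{i,-})\|_2 \pm O(\eps'\|\U^*_{j,*}\W\|_2)$, and by Corollary \ref{cor:projectnegV} together with $\|f(\V^*_{i,*}\overline\X)\|_2 = \Theta(\sqrt\ell)$ this last norm is $\Omega(\sqrt{\ell}/(\kappa(\V^*)\sqrt{\poly(k)}))$ w.h.p. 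Now expand $a^{\pm}_{i,j} = \|\U^*_{j,*}\W(\mathbb{I}-\P_{S_{i,\pm}})\|_2^2 + 2\langle \U^*_{j,*}\W(\mathbb{I}-\P_{S_{i,\pm}}),\overline\E_{j,*}\rangle + \|\overline\E_{j,*}(\mathbb{I}-\P_{S_{i,\pm}})\|_2^2$ and sum over $j$. The signal contribution to $a^-_i - a^+_i$ is $\|\U^*_{*,i}\|_2^2\cdot\Omega(\ell/(\kappa(\V^*)^2\poly(k))) - O(\eps'^2\|\U^*\|_2^2\ell k) = \Omega(\sigma_{\min}(\U^*)^2\,\ell/(\kappa(\V^*)^2\poly(k)))$, using $\|\U^*_{*,i}\|_2\ge\sigma_{\min}(\U^*)$ and the normalization bound on $\|\U^*\|_2$ from Remark \ref{remark:normalize}. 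The noise-residual difference equals $\sum_j(\|\overline\E_{j,*}\P_{S_{i,+}}\|_2^2 - \|\overline\E_{j,*}\P_{S_{i,-}}\|_2^2)$, and since $S_{i,\pm}$ span $O(k)$-dimensional subspaces, Lemma \ref{prop:subgaussian} with a union bound over the $m$ rows gives $\|\overline\E_{j,*}\P_{S_{i,\pm}}\|_2^2 = O(\sigma^2 km)$ for all $j$ w.h.p., so this difference is $O(\sigma^2 km^2)$ in magnitude; the cross terms are mean-zero with standard deviation $O(\sigma\sqrt{\text{signal}})$ (here I would use that $\overline\X,\overline\E$ can be taken independent of $v_i$, e.g.\ by running the tensor-decomposition step on a disjoint block of the $n$ samples, or by noting the dependence on the first $\ell$ of $n\gg\ell$ columns contributes negligibly). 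Choosing $\ell = \poly(d,m,k,\kappa(\U^*),\kappa(\V^*),\sigma)$ large enough makes the signal term dominate both error terms, so $a^-_i > a^+_i$ and Step 4 correctly sets $\V_{i,*}=v_i$. Union-bounding over $i\in[k]$ and over the high-probability events invoked (Lemmas \ref{prop:reallylongprojectionprop}, \ref{prop:subgaussian}, Corollary \ref{cor:projectnegV}, and standard $\chi^2$ tails for $\|\overline\E\|_F$) proves correctness; the running time is $\poly(\ell) = \poly(d,m,k,\kappa(\U^*),\kappa(\V^*),\sigma)$, as the algorithm only forms $O(k)$ subspaces of dimension $O(k)$ and projects $m$ vectors onto them.

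The step I expect to be the main obstacle is the quantitative separation underlying Corollary \ref{cor:projectnegV}: showing that $f(\V^*_{i,*}\overline\X)$ retains an $\Omega(1/(\kappa(\V^*)\sqrt{\poly(k)}))$ fraction of its norm orthogonal to $\mathrm{span}\big(\{f(\pm\V^*_{l,*}\overline\X)\}_{l\ne i}\cup\{f(-\V^*_{i,*}\overline\X)\}\big)$, i.e.\ the spectral bound $\sigma_{\min}(f(\W\overline\X)) = \Omega(\sqrt\ell/\kappa(\V^*)^2)$ for $\W$ consisting of up to $2k$ rows drawn from $\{\pm\V^*_{l,*}\}_{l\in[k]}$ (Lemma \ref{prop:reallylongprojectionprop}); this rests on a delicate conditioning argument on the Gaussian columns and on extracting a well-conditioned "pivot'' direction from each row. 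The remaining difficulties — making the signal outrun the subgaussian noise (absorbed by enlarging $\ell$), controlling the perturbation from $v_i$ to $\V^*_{i,*}$ (absorbed by shrinking $\eps$ via Lemma \ref{prop:projectionbound}), and the mild dependence between $v_i$ and $\overline\E$ — are routine once this spectral estimate is in hand.
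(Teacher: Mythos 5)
Your proposal follows essentially the same route as the paper's proof: fix $i$, assume WLOG $\xi_i=1$, replace the approximate projections $\P_{S_{i,\pm}}$ with their idealized counterparts via Lemma~\ref{prop:projectionbound} (using the condition-number and spectral bounds from Lemma~\ref{prop:reallylongprojectionprop}), observe that $\mathrm{span}(f(\V^*\overline\X))$ sits inside $\mathrm{span}(\widetilde S_{i,+})$ while Corollary~\ref{cor:projectnegV} gives the $\Omega(1/(\kappa(\V^*)^2\poly(k)))$ residual against $\widetilde S_{i,-}$, expand $a^{\pm}_{i,j}$ into signal, cross, and noise pieces, control the noise projection via Lemma~\ref{prop:subgaussian} and the cross term by Chebyshev, lower-bound the signal by $\|\U^*_{*,i}\|_2^2$ (you use $\sigma_{\min}(\U^*)$, the paper uses $1/\kappa(\U^*)$ with an implicit normalization), and take $\ell$ polynomially large so the signal dominates. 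The one thing you flag that the paper does not spell out is the statistical dependence between $v_i$ (produced from the full sample) and the test block $\overline{\X},\overline{\E}$; your suggestion to handle this by sample-splitting is a reasonable and arguably necessary hygiene step that the paper leaves implicit.
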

\begin{proof}
  Consider a fixed $i \in [k]$, and WLOG assume $\xi_i = 1$, so we have $\|v_i - \V_{i,^*}^*\|_2 \leq \eps =  O(\frac{1}{\poly(d,m,k,\kappa(\U^*),\kappa(\V^*)})$. We show $a^+_i < a^-_i$ with high probability. Now fix a row $j$ of $\overline{\AA} = \U^*f(\V^* \overline{\X}) + \overline{\E}$  as in Algorithm \ref{alg:recoversigns2}, where $\overline{\Q}$ refers to restricting to the first $\ell = \poly(k,d,m,\kappa(\U^*), \allowbreak \kappa(\V^*),\sigma)$ columns of a matrix $\Q$. Note that we choose $\eps$ so that $\eps < 1/\poly(\ell)$ (which is achieved by taking $n$ sufficiently large). This row is given by $\overline{\AA}_{j,*} = \U^*_{j,*} f(\V^*\overline{\X}) + \overline{\E}_{j,*}$. As in the proof of Theorem \ref{thm:recoverexactsigns}, using Lemma \ref{prop:gaussiancondition} to bound $\|\overline{\X}\|_2$, we have $\|f(v_i\overline{\X}) - f(\xi \V_{i,*}^* \overline{\X})\|_2 \leq O(\eps \sqrt{\ell})$. We now using Lemma \ref{prop:projectionbound} to bound the projection difference between using approximate projection matrix $\P_{S_{i,+}}$ formed by our approximate vectors $f(v_i\overline{\X})$, and the true projection matrix  $\P_{S_{i,+}}^*$ formed by the vectors $f(\V^*_{i,*}\overline{\X})$. By Lemma \ref{prop:reallylongprojectionprop}, the 
  condition number and the spectral norm of the matrix formed by the rows that span $\P_{S_{i,+}}^*$ are at most $O( r \poly(k) \kappa^2)$ and $O(r\ell) = \poly(k,d,m,\kappa(\U^*),\kappa(\V^*),\sigma)$ respectively, w.h.p. (in $k,d,m$). Setting $\eps = \eps' /\poly(k,d,m,\kappa(\U^*),\kappa(\V^*),\sigma)$ sufficiently small,  Lemma \ref{prop:projectionbound} gives $\|x\P_{S_{i,+}}\|_2 = \|x\P_{S_{i,+}}^*\|_2 \pm O(\eps' \|x\|_2) $ for $\eps' = \frac{1}{\poly(k,d,m,\kappa(\V^*),\kappa(\U^*),\sigma}$ and any vector $x$. 
  
  Now we have
  $a_{i,j}^+ =( \|\overline{\E_{j,*}}(\mathbb{I} - \P_{S_{i,+}})\|_2 \pm  O(\eps' \sigma \sqrt{\ell}))^2 \leq \|\overline{ \E_{j,*}} \|_2^2 +  \pm  O(\sigma^2 \eps' \ell \poly(k,d,m))$. Here we used that $\|\E\|_2^2 \leq \sigma^2 \ell \poly(k,d,m)$, w.h.p. in $k,d,m$ (by Chebyshev's inequality), and the fact that w.h.p. we have $\|(\U_{j,i}^*f(\V_{i,*}^*\overline{\X})  + \overline{\E_{j,*}})\|_2 = O(\sigma \sqrt{\ell})$. Then, setting $\eps' = \eps''/\poly(\ell)$, we have
  \begin{equation*}
      \begin{split}
      (a_{i,j}^-)^2  & = \|(\U_{j,i}^*f(\V_{i,*}^*\overline{\X})  + \overline{\E_{j,*}})(\mathbb{I} - \P_{S_{i,-}})\|_2^2 \pm O(\eps'') \\
          & = \|(\U_{j,i}^*f(\V_{i,*}^*\overline{\X})  + \overline{\E_{j,*}})\|_2^2 -\|(\U_{j,i}^*f(\V_{i,*}^*\overline{\X})  + \overline{\E_{j,*}}) \P_{S_{i,-}}\|_2^2 \pm  O(\eps'')\\
   & \geq \|(\U_{j,i}^*f(\V_{i,*}^*\overline{\X})  + \overline{\E_{j,*}})\|_2^2 - \Big(\|(\U_{j,i}^*f(\V_{i,*}^*\overline{\X}) \P_{S_{i,-}}\|_2 +  \|\|\overline{\E_{j,*}}\P_{S_{i,-}}\|_2       \Big)^2 \pm O(\eps'') 
      \end{split}
  \end{equation*}
  \noindent
  where the second equality follows  by the Pythagorean Theorem.
  Applying Lemma \ref{prop:subgaussian} and Corollary \ref{cor:projectnegV}, writing $\kappa = \kappa(\V^*)$, with probability $1-\delta$ we have
    \begin{equation*}
    \begin{split}
        (a_{i,j}^-)^2   \geq & \|(\U_{j,i}^*f(\V_{i,*}^*\overline{\X})  +\overline{ \E_{j,*}})\|_2^2 \\ 
        & - \Big(\|(\U_{j,i}^*f(\V_{i,*}^*\overline{\X})\|_2 ( 1 - \Omega(\frac{1}{\kappa^2 \poly(k)})) + 2 \sigma\sqrt{ k /\delta} \Big)^2 \pm   O(\eps'')
    \end{split}
    \end{equation*}
  \noindent  Setting $\delta < 1/\poly(k,d,m)$ to get high probability gives
  \begin{equation*}
    \begin{split}
        (a_{i,j}^-)^2    &\geq   \|(\U_{j,i}^*f(\V_{i,*}^*\overline{\X})  + \overline{\E_{j,*}})\|_2^2 - \|(\U_{j,i}^*f(\V_{i,*}^*\overline{\X})\|_2^2 ( 1 -\Omega(\frac{1}{\kappa^2 \poly(k)})) \\
        & - 4 \sigma\sqrt{ \frac{k} {\delta}} \|(\U_{j,i}^*f(\V_{i,*}^*\overline{\X})\|_2 \pm  O(\eps' \sqrt{\ell}\sigma  + \frac{\sigma^2 k}{\delta})\\
        & =  \Omega(\frac{1}{\kappa^2 \poly(k)}) \|\U_{j,i}^*f(\V_{i,*}^*\overline{\X}) \|_2^2 + \|\overline{\E_{j,*}}\|_2^2 + 2\langle \U_{j,i}^*f(\V_{i,*}^*\overline{\X}) ,  \overline{\E_{j,*}} \rangle \\
        & -  4 \sigma\sqrt{ k /\delta}\|(\U_{j,i}^*f(\V_{i,*}^*\overline{\X})\|_2  \pm O(\eps' \sqrt{\ell}\sigma + \frac{\sigma^2 k}{\delta}) 
    \end{split}
    \end{equation*}
   Thus,
   \begin{equation*}
       \begin{split}
           a^-_i - a^+_i > & \sum_{j \in [m]} \Omega(\frac{1}{\kappa^2 \poly(k)}) \|\U_{j,i}^*f(\V_{i,*}^*\overline{\X}) \|_2^2 +  2\langle \U_{j,i}^*f(\V_{i,*}^*\overline{\X}) , \overline{ \E_{j,*}} \rangle \\
           & -  4 \sigma\sqrt{ k /\delta}\|(\U_{j,i}^*f(\V_{i,*}^*\overline{\X})\|_2 \pm  O(\eps' \sqrt{\ell}\sigma + \frac{\sigma^2 k}{\delta} )
       \end{split}
   \end{equation*}
   By Chebyshev's inequality, we have, $|2\langle \U_{j,i}^*f(\V_{i,*}^*\overline{\X}) ,  \overline{\E_{j,*}} \rangle |  <\poly(dkm) \sigma \| \U_{j,i}^*f(\V_{i,*}^*\overline{\X})\|_2$ w.h.p. in $d,k,m$.  Thus $\sum_j |2\langle \U_{j,i}^*f(\V_{i,*}^*\overline{\X}) ,  \overline{\E_{j,*}} \rangle | < \poly(dkm) \sigma \| \U_{j,i}^*f(\V_{i,*}^*\overline{\X})\|_2 $. Now $\|\U_{*,i}^*\|_2 > 1/\kappa(\U^*)$, otherwise we would have $\|\U^* e_i\|_2 < 1/\kappa(\U^*)$, which is impossible by definition. Thus there is at least one entry of $\U^*_{*,i}$ with magnitude at least $1/(m\kappa(\U^*))$. So 
   \begin{equation*}
       \begin{split}
           \sum_j \|\U_{j,i}^*f(\V_{i,*}^*\overline{\X}) \|_2^2 & \geq \frac{1}{m \kappa(\U^*)}\|f(\V_{i,*}^*\overline{\X}) \|_2^2 \\
           & = \Omega(\ell \frac{1}{m\kappa(\U^*)})
       \end{split}
   \end{equation*}
   where the last bound follows via bounds on $\chi^2$ variables \cite{laurent2000}.
   
   The above paragraph also demonstrates that $\frac{ |2\langle \U_{j,i}^*f(\V_{i,*}^*\overline{\X}) ,  \overline{\E_{j,*}} \rangle |}{\|\U_{j,i}^*f(\V_{i,*}^*\overline{\X}) \|_2^2} \leq \frac{ \poly(dkm) \sigma }{\sqrt{\ell} }$, so taking $\ell$ sufficiently large this is less than $1/2$. Thus 
   \begin{equation*}
       \begin{split}
            \sum_{j \in [m]} \Omega(\frac{1}{\kappa^2 \poly(k)}) \|\U_{j,i}^*f(\V_{i,*}^*\overline{\X}) \|_2^2 +  2\langle \U_{j,i}^*f(\V_{i,*}^*\overline{\X}) , \overline{ \E_{j,*}} \rangle > \frac{1}{2}  \sum_{j \in [m]} \Omega(\frac{1}{\kappa^2 \poly(k)}) \|\U_{j,i}^*f(\V_{i,*}^*\overline{\X}) \|_2^2
       \end{split}
   \end{equation*}
   and we are left with 
    \begin{equation*}
       \begin{split}
           a^-_i - a^+_i & > \sum_{j \in [m]} \Omega(\frac{1}{\kappa^2 \poly(k)}) \|\U_{j,i}^*f(\V_{i,*}^*\overline{\X}) \|_2^2 -  4 \sigma\sqrt{ k /\delta}\|(\U_{j,i}^*f(\V_{i,*}^*\overline{\X})\|_2 -  O(\eps' \sqrt{\ell}\sigma + \frac{\sigma^2 k}{\delta})\\
&  \geq \Omega(\ell \frac{1}{m\kappa^2 \kappa(\U^*) \poly(k)}) -  O( \sigma\sqrt{ k \ell /\delta} + \eps' \sqrt{\ell}\sigma+ \frac{\sigma^2 k}{\delta} )
       \end{split}
   \end{equation*}
   Taking $\delta = 1/\poly(d,k,m)$ as before and $\ell$ sufficiently larger than $1/\delta^2$, the above becomes $ a^-_i - a^+_i =  \Omega(\ell \frac{1}{m\kappa^2 \kappa(\U^*) \poly(k)}) = \omega(1)$ w.h.p. in $d,k,m$. Thus the algorithm correctly determines $\xi_i = 1$ after seeing $a^-_i > a^+_i$, and the analysis is symmetric in the case that $xi_i = -1$.

\end{proof}
\subsection{Recovering the Weights $\U^*,\V^*$}

We have now shown in the prior section that given approximate $\V^*_{i,*}$'s where the signs $\xi_i$ are unknown, we can recover the signs exactly in polynomial time in the noisy setting. Thus we recover $\V$ such that $\|\V - \V^*_{i,*}\|_2 \leq \eps$ for some polynomially small $\eps$. To complete our algorithm, we simply find $\U^*$ by solving the linear regression problem
\[      \min_{\U} \|\U f(\V\X) - \AA\|_F \]
It is well know that the solution to the above regression problem can be solved by computing the pseduoinverse of the matrix $f(\V\X)$, thus the entire procedure can be carried out in polynomial time. The following lemma states that, even in the presence of noise, the solution $\U$ from the regression problem must in fact be very close to the true solution $\U^*$.

\begin{lemma}\label{lem:approxregression}
Let $\overline{\X}$ be the first  $\ell = \poly(k,d,m,\kappa(\U^*),\kappa(\V^*),\sigma)$ columns of $\X$, and similarly define $\overline{\E}$. Set $\overline{\AA} = \U^*f(\V^*\overline{\X}) + \overline{\E}$. Then given $\V$ such that $\|\V - \V^*\|_2 \leq \eps$ where $\eps = \frac{\eps'}{\poly(\ell)}$ for some $\eps'>0$, 
if $\U$ is the regression solution to $\min_{\U} \|\U f(\V \overline{\X}) - \overline{\AA}\|_2$, then $\|\U - \U^*\|_2 < O(\eps')$ with high probability in $m,d$. 
\end{lemma}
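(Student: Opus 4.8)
The plan is to use the closed-form least-squares solution together with a perturbation analysis around $\V^*$. Since the ReLU $f$ is $1$-Lipschitz coordinate-wise, set $\Delta := f(\V\overline{\X}) - f(\V^*\overline{\X})$; then $\|\Delta\|_F \le \|(\V-\V^*)\overline{\X}\|_F \le \|\V-\V^*\|_F\,\|\overline{\X}\|_2 \le 2\sqrt{k\ell}\,\eps$ w.h.p., using $\|\V-\V^*\|_F \le \sqrt k\,\eps$ and the operator-norm bound $\|\overline{\X}\|_2 \le 2\sqrt\ell$ of Lemma~\ref{prop:gaussiancondition}. Applying Lemma~\ref{prop:reallylongprojectionprop} with $r=k$ to the rows $f(\V^*_{1,*}\overline{\X}),\dots,f(\V^*_{k,*}\overline{\X})$ gives $\sigma_{\min}(f(\V^*\overline{\X})) = \Omega(\sqrt\ell/\kappa(\V^*)^2)$ w.h.p.; since $\eps = \eps'/\poly(\ell)$ makes $\|\Delta\|_2 \le \|\Delta\|_F$ negligible compared to this, Weyl's inequality yields $\sigma_{\min}(f(\V\overline{\X})) = \Omega(\sqrt\ell/\kappa(\V^*)^2)$ as well. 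In particular $f(\V\overline{\X})$ has full row rank $k$, so the regression problem has the unique solution $\U = \overline{\AA}\,(f(\V\overline{\X}))^\dagger$ with $(f(\V\overline{\X}))^\dagger = f(\V\overline{\X})^T\big(f(\V\overline{\X})f(\V\overline{\X})^T\big)^{-1}$, and I would record the estimates $\|(f(\V\overline{\X}))^\dagger\|_2 = O(\kappa(\V^*)^2/\sqrt\ell)$ and $\big\|(f(\V\overline{\X})f(\V\overline{\X})^T)^{-1}\big\|_2 = O(\kappa(\V^*)^4/\ell)$.

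Next I would substitute $\overline{\AA} = \U^*f(\V^*\overline{\X}) + \overline{\E}$, then $f(\V^*\overline{\X}) = f(\V\overline{\X}) - \Delta$, and use $f(\V\overline{\X})(f(\V\overline{\X}))^\dagger = \mathbb{I}_k$ to obtain the exact identity
\[ \U - \U^* \;=\; -\,\U^*\,\Delta\,(f(\V\overline{\X}))^\dagger \;+\; \overline{\E}\,(f(\V\overline{\X}))^\dagger . \]
For the first ("approximation error") term, by Remark~\ref{remark:normalize} we may assume $\|\U^*\|_2 \le m\max\{1,\sigma\}\kappa(\U^*)$, so $\|\U^*\Delta(f(\V\overline{\X}))^\dagger\|_2 \le \|\U^*\|_2\|\Delta\|_F\|(f(\V\overline{\X}))^\dagger\|_2 = O\big(m\sqrt k\,\sigma\,\kappa(\U^*)\kappa(\V^*)^2\big)\,\eps$, which is below $\eps'$ because $\eps = \eps'/\poly(\ell)$ and $\ell$ is polynomially large in $m,k,\sigma,\kappa(\U^*),\kappa(\V^*)$.

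The main step — and the genuinely delicate point — is the second ("noise") term. The crude bound $\|\overline{\E}\|_2\|(f(\V\overline{\X}))^\dagger\|_2 = O(\sigma\,\kappa(\V^*)^2)$ does not decay, so instead I would write $\overline{\E}(f(\V\overline{\X}))^\dagger = \big(\overline{\E}f(\V^*\overline{\X})^T + \overline{\E}\Delta^T\big)(f(\V\overline{\X})f(\V\overline{\X})^T)^{-1}$ and bound the two pieces separately. Crucially $\overline{\E}$ is independent of $f(\V^*\overline{\X})$ (since $\V^*$ is fixed and $\overline{\E}$ is independent of $\overline{\X}$), so conditioning on $\overline{\X}$ the entry $\langle\overline{\E}_{i,*},f(\V^*\overline{\X})_{j,*}\rangle$ is mean-zero sub-Gaussian with variance $\sigma^2\|f(\V^*\overline{\X})_{j,*}\|_2^2 = O(\sigma^2\ell)$ (the row norms being $O(\sqrt\ell)$ w.h.p.\ by $\chi^2$ tail bounds as used in the earlier lemmas), whence $\|\overline{\E}f(\V^*\overline{\X})^T\|_F \le \sqrt{mk}\cdot O(\sigma\sqrt{\ell\log(mk)})$ w.h.p.; the correction satisfies $\|\overline{\E}\Delta^T\|_F \le \|\overline{\E}\|_2\|\Delta\|_F = O(\sigma\sqrt\ell)\cdot 2\sqrt{k\ell}\,\eps$, which is negligible. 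Therefore
\[ \|\overline{\E}(f(\V\overline{\X}))^\dagger\|_2 \;\le\; \big(\|\overline{\E}f(\V^*\overline{\X})^T\|_F + \|\overline{\E}\Delta^T\|_F\big)\cdot O\!\left(\frac{\kappa(\V^*)^4}{\ell}\right) \;=\; \tilde O\!\left(\frac{\sigma\sqrt{mk}\,\kappa(\V^*)^4}{\sqrt\ell}\right), \]
which is at most $\eps'$ once $\ell$ is taken to be a sufficiently large polynomial in $m,k,\sigma,\kappa(\V^*)$ and $1/\eps'$. A union bound over the $O(1)$ high-probability events (the bound on $\|\overline{\X}\|_2$, the singular-value lower bound of Lemma~\ref{prop:reallylongprojectionprop}, the row-norm bounds, the spectral bound $\|\overline{\E}\|_2 = O(\sigma\sqrt\ell)$, and the sub-Gaussian concentration of the $mk$ entries of $\overline{\E}f(\V^*\overline{\X})^T$) then yields $\|\U - \U^*\|_2 \le O(\eps')$ w.h.p.\ in $m,d$, as claimed. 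The obstacle worth flagging is precisely this noise term: one must peel off $\Delta$ to expose the true independence between $\overline{\E}$ and $f(\V^*\overline{\X})$, and then use a Frobenius estimate on $\overline{\E}f(\V^*\overline{\X})^T$ rather than a product of operator norms, since the latter leaves an irreducible $\Theta(\sigma\,\kappa(\V^*)^2)$ error.
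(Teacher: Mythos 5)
Your proof is correct, and the overall strategy matches the paper's: both use the closed-form least-squares solution and split the error $\U-\U^*$ into an approximation term driven by $\Delta=f(\V\overline{\X})-f(\V^*\overline{\X})$ and a noise term driven by $\overline{\E}$, then push both below $\eps'$ using the $\sigma_{\min}$ bound from Lemma~\ref{prop:reallylongprojectionprop}. (The paper runs the argument row by row via the normal equations, while you handle all of $\U$ at once through $(f(\V\overline{\X}))^\dagger$; this is cosmetic.) Where you genuinely deviate, and arguably improve, is the treatment of the noise term. The paper bounds $\expecf{}{\|\overline{\E}_{i,*}\,f(\V\overline{\X})^T[f(\V\overline{\X})f(\V\overline{\X})^T]^{-1}\|_2^2}$ directly, which implicitly treats $f(\V\overline{\X})$ as independent of $\overline{\E}_{i,*}$; but $\V$ was produced by the tensor-decomposition step from the full noisy $\AA$, so it is not literally independent of $\overline{\E}$. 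You sidestep this by writing $\overline{\E}(f(\V\overline{\X}))^\dagger=(\overline{\E}f(\V^*\overline{\X})^T+\overline{\E}\Delta^T)[f(\V\overline{\X})f(\V\overline{\X})^T]^{-1}$, so that the concentration argument is applied to $\overline{\E}f(\V^*\overline{\X})^T$ where the independence (of $\overline{\E}$ from the fixed $\V^*$ and from $\overline{\X}$) is genuine, while the coupling of $\overline{\E}$ with $\V$ is quarantined into the $\overline{\E}\Delta^T$ remainder and the deterministic operator-norm factor $\|[f(\V\overline{\X})f(\V\overline{\X})^T]^{-1}\|_2$, neither of which needs an expectation over $\overline{\E}$. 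That is a cleaner accounting. One small caveat worth noting in both proofs: making the final $\tilde O(\sigma\sqrt{mk}\,\kappa(\V^*)^4/\sqrt\ell)$ bound fall below $\eps'$ requires $\ell$ to grow polynomially in $1/\eps'$ as well, which the lemma statement's $\ell=\poly(k,d,m,\kappa(\U^*),\kappa(\V^*),\sigma)$ elides; the paper's proof implicitly tolerates this same slack (``Setting $\ell,1/\eps$ to be sufficiently large polynomials''), so it is not a gap in your argument specifically.
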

\begin{proof}
Note $\|f(\V\overline{\X}) - f(\V^*\overline{\X})\|_2 \leq  O(1) \sqrt{\ell} \eps$ by standard operator norm bounds on Gaussian matrices (see Lemma \ref{prop:gaussiancondition}), and the fact that $|a - b| > |f(a) - f(b)|$ for any $a,b \in \R$. 
  The regression problem is solved row by row, so fix a row $i \in [m]$ and consider $\min_u \|uf(\V\overline{\X}) - \overline{\AA}_{i,*}\|_2 = \min_u \|u f(\V\overline{\X}) - (\U_{i,*}^* f(\V^*\overline{\X}) + \overline{\E}_{i,*})\|_2 =  \|u f(\V\overline{\X}) - (\U_{i,*}^* f(\V \overline{\X})  + \overline{\E_{i,*}} + \U_{i,*}^*\ZZ)\|_2$, where $\ZZ$ is a matrix such that $\|\ZZ\|_F \leq  O(1) \sqrt{\ell} \eps$. Now by the normal equations\footnote{See \url{https://en.wikipedia.org/wiki/Linear_least_squares} }, if $u^*$ is the above optimizer, we have 
  \begin{equation*}
      \begin{split}
           u^* & = \Big(  \U_{i,*}^* f(\V \overline{\X}) + \overline{\E_{i,*}}+ \U_{i,*}^*\ZZ\Big) f(\V \overline{\X})^T \big[f(\V \overline{\X}) f(\V \overline{\X})^T  \big]^{-1}  \\
  & = \U_{i,*}^*  + \Big(\overline{\E_{i,*}} +\U_{i,*}^*\ZZ\Big) f(\V \overline{\X})^T \big[f(\V \overline{\X}) f(\V \overline{\X})^T  \big]^{-1}  \\
  & = \U_{i,*}^*  +  \overline{\E_{i,*}}f(\V \overline{\X})^T \big[f(\V \overline{\X}) f(\V \overline{\X})^T  \big]^{-1} + \U_{i,*}^*\ZZ' \\
      \end{split}
  \end{equation*}
 
 Where $\ZZ'$ is a matrix such that $\|\ZZ'\|_F = O\big( \frac{\sqrt{\ell} \eps \kappa(f(\V\overline{\X})}{\sigma_{\min}(f(\V\overline{\X}))} \big)$. Note that we can scale $\overline{\AA}$ at the beginning so that no entry is larger than $\ell^2$, which implies w.h.p. that each row of $\U^*$ has norm at most $\ell^2$. Thus \[\|\U_{i,*}\ZZ'\|_F = O\big( \frac{\ell^{5/2} \eps \kappa(f(\V\overline{\X})}{\sigma_{\min}(f(\V\overline{\X}))} \big)\]
 Now note $\ex{\|\overline{\E_{i,*}}f(\V \overline{\X})^T \big[f(\V \overline{\X}) f(\V \overline{\X})^T  \big]^{-1}\|_2^2} = O( \sqrt{\ell k} \frac{\sigma^2}{ \sigma^2_{\min}(f(\V\overline{\X}) ) } )$ using $\|f(\V\overline{\X}\|_2 = O(\sqrt{\ell k})$ by the same operator norm bounds as before. By Markov bounds, w.h.p. in $m,d,$ we have \[\|\overline{\E_{i,*}}f(\V \overline{\X})^T \big[f(\V \overline{\X})   f(\V \overline{\X})^T   \big]^{-1} \|_2^2= O( \sqrt{\ell }\poly( d,m) \frac{\sigma^2}{ \sigma^2_{\min}(f(\V\overline{\X}) ) } )\]  Now by Courant-Fischer theorem and application of the triangle inequality, we have $\sigma_{\min}(f(\V\overline{\X})) > \sigma_{\min}(f(\V^*\overline{\X})) - O(\sqrt{\ell} \eps)$, and by Lemma \ref{prop:reallylongprojectionprop} (see Section \ref{subsec:noiseysigns}), we have $ \sigma_{\min}(f(\V^*\overline{\X})) = \Omega(\frac{\sqrt{\ell}}{\kappa(\V^*) \poly(k)})$, thus for $\ell$ sufficiently large we obtain $\sigma_{\min}(f(\V\overline{\X})) = \Omega(\frac{\sqrt{\ell}}{\kappa(\V^*) \poly(k)})$, in which case we have
\[\|\overline{\E_{i,*}}f(\V \overline{\X})^T \big[f(\V \overline{\X})   f(\V \overline{\X})^T   \big]^{-1}\|_2^2= O\Big( \poly( d,m) \frac{\sigma^2}{ \sqrt{\ell } } \Big)\]
Setting $\ell,1/\eps$ to be sufficiently large polynomials in $(d,m,k,\kappa(\U^*),\kappa(\V^*),\sigma)$, we obtain 
\[ \|u^* - \U^*_{i,*}\|_2 \leq O(\eps'/\sqrt{m}) \]
from which the Lemma follows.
\end{proof}

We now state our main theorem for recovery of the weight matrices in the noisey case.
\begin{theorem}\label{thm:noisyfinal}
Let $\AA = \U^*f(\V^*\X) + \E$ be given, where $\U^* \in \R^{m \times k},\V^*\in \R^{k \times d}$ are rank-$k$ and $\E$ is a matrix of i.i.d. mean zero subgaussian random variables with variance $\sigma^2$. Then given $n = \Omega\Big(  \poly\big( d,m,\kappa(\U^*),\kappa(\V^*) , \sigma, \frac{1}{\eps}\big) \Big)$, there is an algorithm that runs in $\poly(n)$ time and w.h.p. outputs $\V,\U$ such that 
\[ \|\U - \U^*\|_F \leq \eps \; \; \; \; \; \|\V - \V^* \|_F \leq \eps\]
\end{theorem}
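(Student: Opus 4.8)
The plan is to assemble the three ingredients already developed in this section and the previous one: (i) approximate recovery of $\V^*$ up to unknown row signs via tensor decomposition, (ii) exact recovery of those signs via the projection-based scheme of Algorithm \ref{alg:recoversigns2}, and (iii) approximate recovery of $\U^*$ by solving a single linear regression. First I would normalize $\AA$ as in Remark \ref{remark:normalize} so that $\|\U^*\|_2 \le m\max\{1,\sigma\}\kappa(\U^*)$, which lets us treat the ReLU-versus-polynomial approximation error as a small additive perturbation. Then, since the guarantees of Theorem \ref{thm:anandkumar} (as invoked through Corollary \ref{cor:relurecover}, whose variation-distance argument carries over verbatim to the sub-Gaussian noise setting because $\eta$ in that theorem is already allowed to be mean-zero sub-Gaussian) hold with sample complexity $\poly(d,m,\kappa(\U^*),\kappa(\V^*),\sigma,1/\eps)$, we obtain row vectors $v_i$ with $\|v_i - \xi_i \V^*_{i,*}\|_2 \le \eps$ for unknown $\xi_i \in \{1,-1\}$ and $\eps$ an arbitrarily small inverse polynomial in the relevant parameters.

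Next I would feed the $v_i$'s into Algorithm \ref{alg:recoversigns2} and invoke Theorem \ref{thm:recoversignsnoisy}: for each $i$, comparing the average squared residual $a_i^+$ of the rows of $\overline{\AA}$ after projecting off $S_{i,+}$ against $a_i^-$ (projecting off $S_{i,-}$) correctly identifies $\xi_i$ with high probability, because by Corollary \ref{cor:projectnegV} projecting $f(-\V^*_{i,*}\overline{\X})$ onto $\P_{S_{i,+}}$ loses a $\Omega(1/(\kappa(\V^*)^2\poly(k)))$ fraction of its norm, and this gap dominates the noise and approximation terms once $\ell = \poly(k,d,m,\kappa(\U^*),\kappa(\V^*),\sigma)$ is large enough. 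This gives $\V$ with $\|\V - \V^*\|_2 \le \eps$ (after normalizing rows to unit norm). Finally, with $\V$ in hand I would solve $\min_\U \|\U f(\V\X) - \AA\|_F$ via the pseudoinverse of $f(\V\X)$ and apply Lemma \ref{lem:approxregression}: writing $f(\V\overline{\X}) = f(\V^*\overline{\X}) + \ZZ$ with $\|\ZZ\|_F = O(\sqrt{\ell}\,\eps)$ and using the lower bound $\sigma_{\min}(f(\V^*\overline{\X})) = \Omega(\sqrt{\ell}/(\kappa(\V^*)\poly(k)))$ from Lemma \ref{prop:reallylongprojectionprop}, the normal equations show the regression optimum differs from $\U^*$ by $O(\eps')$ for an inverse-polynomially small $\eps'$. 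Choosing all internal inverse-polynomial error targets smaller than the desired final $\eps$ yields $\|\U - \U^*\|_F \le \eps$ and $\|\V - \V^*\|_F \le \eps$, and every step runs in $\poly(n)$ time.

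I expect the main obstacle — and the step requiring the most care — to be the sign-recovery guarantee, i.e. verifying that Theorem \ref{thm:recoversignsnoisy} applies cleanly. The subtlety is that the noise $\E$ can arbitrarily corrupt the sparsity patterns of $f(\V^*\X)$, so the clean combinatorial argument used in the noiseless case (Theorem \ref{thm:recoverexactsigns}) is unavailable; instead one must control, for each row $j$ of $\overline{\AA}$, the competition between the "signal gain" $\Omega(\|\U^*_{j,i}f(\V^*_{i,*}\overline{\X})\|_2^2/(\kappa(\V^*)^2\poly(k)))$ from projecting off the wrong subspace, the cross term $\langle \U^*_{j,i}f(\V^*_{i,*}\overline{\X}), \overline{\E}_{j,*}\rangle$, the sub-Gaussian residual projection $\|\overline{\E}_{j,*}\P\|_2 = O(\sigma\sqrt{k/\delta})$, and the perturbation error from using $v_i$ rather than $\V^*_{i,*}$ (handled by Lemma \ref{prop:projectionbound}, which in turn needs the spectral-norm and condition-number bounds on $f(\W\overline{\X})$ from Lemma \ref{prop:reallylongprojectionprop}). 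Making sure every one of these terms is $o$ of the signal gain after summing over $j \in [m]$, and that the requisite high-probability events hold simultaneously, is where the bulk of the work lies; everything else is bookkeeping with polynomially small errors.
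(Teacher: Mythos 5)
Your proposal is correct and follows essentially the same route as the paper: approximate recovery of $\V^*$ up to row-signs via the tensor decomposition of Theorem \ref{thm:anandkumar} (together with the Chebyshev-approximation/variation-distance argument to handle the non-smooth ReLU), sign resolution via the projection comparison of Theorem \ref{thm:recoversignsnoisy}, and recovery of $\U$ by linear regression via Lemma \ref{lem:approxregression}. Your identification of the sign-recovery step as the delicate part, and of the competing terms that must be balanced there, matches the structure of the paper's argument.
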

\begin{proof}
  The proof of correctness of the Tensor Decomposition based approximate recovery of $\V^*$ up to the signs is the same as in the exact case, via Theorem \ref{thm:anandkumar}. By Theorem \ref{thm:recoversignsnoisy}, we can recover the signs $\xi_i$, and thus recover $\V$ so that $\|\V - \V^* \|_F \leq \eps$. Observe that while the results in Section \ref{subsec:noiseysigns} were stated for $\eps = \Theta\left(\frac{1}{\poly(d,m,\kappa(\U^*),\kappa(\V^*) \sigma)}\right)$, they can naturally be generalized to any $\eps$ which is at least this small by increasing $n$ by a $\poly(1/\eps)$ factor before running the tensor decomposition algorithm. Then by Lemma \ref{lem:approxregression}, we can recover $\U$ in polynomial time such that $\|\U - \U^*\|_F \leq \eps$ as desired, which completes the proof.
\end{proof}

\begin{remark}
As in Remark \ref{remark:normalize}, we have implicitly normalized the entire matrix $\AA$ so that the columns of $\AA$ have at most unit norm. If one seeks bounds for the recovery of the \textit{unnormalized} $\U^*$, the error becomes $\|\U - \U^*\|_F \leq \eps \|\U^*\|_2$. To see why this holds, note that the normalization factor of Remark \ref{remark:normalize} is at least $\Omega(\frac{1}{\|\U^*\|_2 + \sqrt{m\log(\ell)}})$, where $\ell = \poly(d,m,\kappa(\U^*),\kappa(\V^*),\sigma)$ is as in Section \ref{subsec:noiseysigns}, and $O(\sqrt{m\log(\ell)})$ is a bound on the max column norm of $\E$ by subgaussian concentration. Thus multiplying by the inverse of this normalization factor blows up the error to $\|\U^* \|_2\eps$ after scaling $\eps$ down by a polynomial factor.
\end{remark}

\section{A Fixed-Parameter Tractable Exact Algorithm for Arbitrary $\U^*$}
\label{sec:FPT}

In the prior sections, we required that $\U^* \in \R^{m \times k}$ have rank $k$ in order to recover it properly. Of course, this is a natural assumption, making $\U^*$ identifiable. In this section, however, we show that even when $m<k$ and $\U^*$ does not have full column rank, we can still recover $\U^*\V^*$ exactly in the noiseless case where we are given $\AA = \U^*f(\V^*\X)$ and $\X$, as long as the no two columns of $\U^*$ are non-negative scalar multiples of each other. Observe that this excludes columns from being entirely zero, but allows for columns of the form $[u,-u]$ for for $u \in \R^m$, as long as $u$ is non-zero. Our algorithm requires $n = \poly(d,k) \kappa^{\Omega(k)}$,  samples, and runs in time $O(n  \poly(d,m,k))$. Here $\kappa = \kappa(\V^*)$ is the condition number of $\V^*$. Our algorithm does not have any dependency on the condition number of $\U^*$. 

\begin{Frame}[\textbf{Algorithm \ref{alg:overall_exact_fpt_real} : FPTExactNeuralNet}$(\V', \X, \mathcal{S})$.]
\label{alg:overall_exact_fpt_real}
\ttx{Input:} Matrices  $\AA = \U^*f(\V^*\X) \in \R^{d \times n}$ and $\X \in \R^{r \times n}$ such that each entry in $\X \sim \mathcal{N}(0, 1)$. \\
\begin{enumerate}
    \item Find a subset $S$ of columns of non-zero columns of $\AA$ such that each for each $i \in S$ there is a $j \in S $, $j \neq i$, with $\AA_{*,i} = c \AA_{*,j}$ for some $c > 0$. 
    \item Partition $S$ into $S_r$ for $r \in [k]$ such that for each pair $i,j \in S_r$, $i \neq j$, we have $\AA_{*,i} = c \AA_{*,j}$ for some $c \in \R^{\neq 0}$. 
    \item For each $i \in [k]$, choose a representative $j_i \in S_i$, and let $\U_{*,i} = \AA_{*,j_i}$. For each $j \in S_i$, let $c_{i,j}$ be such that $ c_{i,j} \U_{*,i} = \AA_{*,j} $.
    
    \item  let $\W$ be the matrix where the $i$-th row is given by the solution $w_i$ to the following linear system:
    \begin{equation*}
        \begin{split}
            \forall i \in [k]: \;\;\;\;   
            & w_{i}\X_{*,j} = c_{i,j}  \;\;\;\; \textrm{ if } j  \in S_i \\
        \end{split}
    \end{equation*}
    \item Set $\V_{i,*} = \W_{i,*}/\|\W_{i,*}\|_2$, and let $\U$ be the solution to the following linear system:
    \begin{equation*}
        \begin{split}
            \U f(\V \X) = \AA
        \end{split}
    \end{equation*}
\end{enumerate}
\ttx{Output:}  $(\U, \V)$.
\end{Frame}

The runtime of our algorithm is polynomial in the sample complexity $n$ and the size of the networks $d,k$, but simply requires $\poly(d,k) \kappa^{\Omega(k)}$ samples in order to obtain columns of $f(\V^*\X)$ which are $1$-sparse, in which case the corresponding column of $\AA$ will be precisely a positive scaling of a column of $\U^*$. In this way, we are able to progressively recover each column of $\U^*$ simply by finding columns of $\AA$ which are scalar multiples of each other. The full algorithm, Algorithm \ref{alg:overall_exact_fpt_real}, is given formally below. 

\begin{lemma}\label{lem:kappaksigns}
For each $i \in [k]$, with probability $1-\delta$, at least $d$ columns of $f(\V^* \overline{\X})$  are positive scalings of $e_i^T$, where $\overline{\X}$ is the first $\ell$ columns of $\X$ for $n  = \Omega(d \log(k/\delta)\kappa^{O(k)})$. In other words, $|S_i| \geq d$. 
\end{lemma}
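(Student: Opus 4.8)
The plan is to show that, for a fixed $i$, each column of $f(\V^*\overline{\X})$ is a positive scaling of $e_i$ with probability $p_i\geq \kappa^{-O(k)}$ independently across columns, and then apply a Chernoff bound together with a union bound over $i\in[k]$. Throughout we use the standing normalization $\|\V^*_{j,*}\|_2=1$, so that $\sigma_{\max}(\V^*)\in[1,\sqrt k]$ and $\sigma_{\min}(\V^*)=\sigma_{\max}(\V^*)/\kappa\geq 1/\kappa$, where $\kappa=\kappa(\V^*)$.

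\textbf{Step 1 (reduction to a Gaussian orthant probability).} A column $\V^*\overline{\X}_{*,j}$ has the law of $\V^*x$ for a standard Gaussian $x\in\R^d$. Since $f$ is rectified, $f(\V^*x)$ is supported exactly on $\{\ell:(\V^*x)_\ell>0\}$ and is strictly positive there, so $f(\V^*x)$ is a positive multiple of $e_i$ precisely when $(\V^*x)_i>0$ and $(\V^*x)_\ell<0$ for all $\ell\neq i$ (the boundary has measure zero). Decomposing $x=Bh+B'h'$ in an orthonormal basis adapted to $W=\mathrm{rowspan}(\V^*)$, one has $\V^*x=\tilde\V h$ with $\tilde\V=\V^*B\in\R^{k\times k}$, $h\sim\mathcal N(0,\mathbb I_k)$, and $\tilde\V\tilde\V^T=\V^*(\V^*)^T$, so $\tilde\V$ has the same singular values as $\V^*$. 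Letting $D_i$ be the $\pm1$ diagonal matrix equal to $+1$ in coordinate $i$ and $-1$ elsewhere, and $M=D_i\tilde\V$ (invertible, $\|M\|_2=\sigma_{\max}(\V^*)$, $\|M^{-1}\|_2=1/\sigma_{\min}(\V^*)$, $\kappa(M)=\kappa$), we obtain $p_i=\Pr_{h\sim\mathcal N(0,\mathbb I_k)}[Mh\ge 0]$, i.e. the Gaussian measure of the simplicial cone $C=\{h:Mh\ge 0\}$.

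\textbf{Step 2 (lower-bounding the solid angle — the main step).} Because $C$ is scale-invariant, $p_i$ equals the fraction of $S^{k-1}$ lying in $C$, so it suffices to find a sizeable spherical cap inside $C$. Take $w=M^{-1}\mathbf 1/\|M^{-1}\mathbf 1\|_2$; then $Mw=\mathbf 1/\|M^{-1}\mathbf 1\|_2$ has every entry at least $1/(\|M^{-1}\|_2\sqrt k)=\sigma_{\min}(\V^*)/\sqrt k$, so any $z$ with $\|z\|_2\le \rho_0:=\sigma_{\min}(\V^*)/(2\sqrt k\,\sigma_{\max}(\V^*))=1/(2\sqrt k\,\kappa)$ keeps all coordinates of $M(w+z)=Mw+Mz$ positive, giving $B(w,\rho_0)\subseteq C$. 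Hence $C$ contains the spherical cap of angular radius $\ge \rho_0/2$ about $w$, whose normalized surface measure on $S^{k-1}$ is at least $(\rho_0/2)^{k-1}$ up to factors of $k^{O(k)}$, i.e. $p_i\ge \kappa^{-O(k)}$ (the $k^{\pm O(k)}$ factors being harmless, since they are absorbed into the already $\kappa^{O(k)}$-type sample bound).

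\textbf{Step 3 (Chernoff and union bound).} For a fixed $i$, the number of ``good'' columns among the first $\ell$ columns of $\overline{\X}$ is distributed as $\mathrm{Bin}(\ell,p_i)$; taking $\ell=\Omega\big((d+\log(k/\delta))/p_i\big)=\Omega\big(d\log(k/\delta)\,\kappa^{O(k)}\big)$ and applying a Chernoff bound yields at least $d$ good columns with probability $\ge 1-\delta/k$. A union bound over $i\in[k]$ then gives $|S_i|\ge d$ for every $i$ with probability $\ge 1-\delta$; note a good column of $f(\V^*\overline{\X})$ being a positive scaling of $e_i$ is exactly the statement that the corresponding column of $\AA=\U^*f(\V^*\overline{\X})$ is a positive scaling of $\U^*_{*,i}$, so all such columns land in a single part $S_i$ of the partition formed in Algorithm \ref{alg:overall_exact_fpt_real}. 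The main obstacle is Step 2: one must bound $p_i$ via the \emph{normalized} solid angle (equivalently, project the witness ball radially onto the sphere), rather than crudely integrating the Gaussian density over a ball whose center may lie at distance $\Theta(\kappa\sqrt k)$ from the origin, since the latter would only yield the far weaker bound $p_i\ge e^{-\Omega(\kappa^2k)}$.
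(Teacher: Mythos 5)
Your proof is correct, and it takes a genuinely different route from the paper's. The paper also works with $\tilde g = \V^* g$ for $g\sim\mathcal N(0,\mathbb I)$, but its argument is a two-stage conditioning: first it shows $\Pr[\|\tilde g\|_2 = O(1/\kappa)] = \Omega\bigl((k\kappa)^{-k}\bigr)$ via entrywise anti-concentration, and then it argues that on the ball $\mathcal B$ of radius $O(1/\kappa)$ the density $p(x)\propto\exp\bigl(-\tfrac12 x^T\Sigma^{-1}x\bigr)$ varies by at most a factor of $2$ (since the eigenvalues of $\Sigma^{-1}$ are at most $\kappa^2$), so conditionally each of the $2^k$ orthants carries probability $\geq 2^{-(k+1)}$; multiplying gives $\Omega\bigl((2k\kappa)^{-k}\bigr)$. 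You instead use the scale-invariance of the cone $C=\{Mh\geq 0\}$ to reduce to a solid-angle computation on $S^{k-1}$, and exhibit an explicit inscribed ball $B\bigl(M^{-1}\mathbf 1/\|M^{-1}\mathbf 1\|_2,\; 1/(2\sqrt k\,\kappa)\bigr)\subseteq C$ whose radial projection is a spherical cap of angular radius $\Omega(1/(\sqrt k\,\kappa))$, hence normalized measure $\kappa^{-O(k)}$. Both routes land on the same $\kappa^{-O(k)}$ orthant probability, after which Chernoff and a union bound over $i\in[k]$ are identical. Your version has the advantage of being a purely geometric statement about the inradius of the cone on the sphere and of making the source of the $\kappa^{k}$ factor transparent; your closing caveat --- that integrating the Gaussian density over a ball of radius $\rho_0$ around the far-from-origin witness $w$ would only yield $e^{-\Omega(\kappa^2 k)}$ --- is a correct warning about a superficially similar but useless alternative. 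The paper's version avoids spherical-cap asymptotics at the cost of an explicit two-ball conditioning argument and a bound on the anisotropy of the density. One small bookkeeping point: the paper's proof also first rotates $\V^*$ to lower-triangular form, but this step is only used for the Sylvester/rank argument elsewhere and is not needed for the orthant-probability bound, which you correctly handle via the adapted orthonormal basis $B$.
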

\begin{proof} Let $\kappa = \kappa(\V^*)$. 
  As in the proof of Lemma \ref{prop:reallylongprojectionprop}, we can assume that $\V^*$ is lower triangular by rotating the rows by a matrix $\RR$, and noting that $\RR \X$ has the same distribution as $\X$ by the rotational invariance of Gaussians.
  We now claim that $\pr{\|\V^*g \|_2 < \frac{1}{k \kappa}} = \Omega((\frac{1}{k \kappa})^k)$, where $g \sim \mathcal{N}(0,\mathbb{I}_d)$ is a Gaussian vector. To see this, since $\V^*$ is rank $k$ and in lower triangular form, $\V^*$ is supported on its the first $k$ columns. Thus it suffices to compute the value $\|\V^* g\|_2$ were $g \in \R^k$ is a $k$-dimensional Gaussian. By the anti-concentration of Gaussian, each $g_i < \frac{1}{k\kappa}$ with probability at least $\Omega(1/(k\kappa))$. Since the entries are independent, it follows that $\pr{\|g\|_2 \leq \frac{1}{\sqrt{k} \kappa }} = \Omega(1/(k\kappa)^k)$. Let $\mathcal{E}_1$ be the event that this occurs. Since $\V^*$ has unit norm rows, it follows by Cauchy-Schwartz that conditioned on $\mathcal{E}_1$, we have $\tilde{g} = \V^* g$ satisfies $\|\tilde{g}\|_2 = O( \frac{1}{\kappa})$

  Now consider the pdf of the $k$-dimensional multivariate Gaussian $\tilde{g}$ that has covariance $\Sigma = \V^* (\V^*)^T$, which is given by
 \[ p(x) = \frac{\exp\big( - \frac{1}{2} x \Sigma^{-1} x \big)}{\sqrt{(2 \pi)^k \det(\Sigma) }} \]
 for $x \in \R^k$. Now condition on the event $\mathcal{E}_2$ that $\tilde{g}$ is contained within the ball $\mathcal{B}$ of radius $O(\frac{1}{ \kappa })$ centered at $0$. Since $\mathcal{E}_1$ implies $\mathcal{E}_2$, we have $\pr{\mathcal{E}_2}=\Omega(1/(k\kappa)^k)$ Now the eigenvalues of $\Sigma$ are the squares of the singular values of $\V^*$, which are all between $1/\kappa$ and $\sqrt{k}$. So  all eigenvalues of $\Sigma^{-1}$ are between $1/k$ and $\kappa^2$. Thus for all $x \in \mathcal{B}$, we have 
 \[\frac{1}{2} \leq  \frac{1}{e^{1/2}} \leq \exp\big( - \frac{1}{2} x \Sigma^{-1} x \big) \leq 1  \]
  It follows that
  \[ \sup_{x,y \in \mathcal{B}} \frac{p(x)}{p(y)} \leq 2\] 
  Now let $\mathcal{O}_1,\mathcal{O}_2,\dots,\mathcal{O}_{2^k}$ be the intersection of all $2^k$ orthants in $\R^k$ with $\mathcal{B}$. The above bound implies that 
  \[  \max_{i,j \in [2^k]} \frac{\int_{\mathcal{O}_i} p(x) dx} { \int_{\mathcal{O}_j} p(y) dy } \leq 2 \]
  Thus conditioned on $\mathcal{E}_2$ for the i.i.d. gaussian vector $\tilde{g}\sim \mathcal{N}(0,\Sigma) \in \R^k$, the probability that $\tilde{g}$ is in a given $\mathcal{O}_i$ is at most twice the probability that $g$ is in $\mathcal{O}_j$ for any other $j$. Thus $ \min_{i \in [2^k] }\pr{\tilde{g} \in \mathcal{O}_i} > \frac{1}{2^{k+1}}$. Thus for any sign pattern $\mathcal{S}$ on $k$-dimensional vectors, and in particular for the sign partner $\mathcal{S}_i$ of $e_i$, the probability that $\tilde{g}$ has this sign pattern conditioned on $\mathcal{E}_2$ is at least $\frac{1}{2^{k+1}}$. Since $\pr{\mathcal{E}_2}=\Omega(1/(k\kappa)^k)$, it follows that in $n = \Omega( d \log(k/\delta) (k \kappa)^{2k})$ repetitions, a scaling of $e_i$ will appear at least $d$ times in the columns of $f(\V^*\X)$ with probability $1 - \delta/k$, and the Lemma follows by a union bound over $e_i$ for $i \in [k]$.

\end{proof}

\begin{theorem}\label{thm:fptfinal}
Suppose $\AA = \U^*f(\V^*\X)$ for $\U^* \in \R^{m \times k}$ for any $m \geq 1$ such that no two columns of $\U^*$ are non-negative scalar multiples of each other, and $\V^* \in \R^{k \times n}$ has rank$(\V^* ) = k$, and $n > \kappa^{O(k)} \poly(dkm)$. Then
Algorithm \ref{alg:overall_exact_fpt_real} recovers $\U^*,\V^*$ exactly with high probability in time $\kappa^{O(k)} \poly(d,k,m)$.
\end{theorem}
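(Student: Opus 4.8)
The plan is to show that, under the stated sample complexity, Algorithm \ref{alg:overall_exact_fpt_real} reads off each column of $\U^*$ (up to positive scaling) from a cluster of mutually parallel columns of $\AA$, recovers each row of $\V^*$ by solving a small over-determined but consistent linear system, and then recovers $\U^*$ by one final linear solve.

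First I would invoke Lemma \ref{lem:kappaksigns}: with $n = \kappa^{O(k)}\poly(d,k,m)$ i.i.d.\ Gaussian columns, w.h.p.\ for every $i \in [k]$ there is a set $\mathcal{C}_i$ of at least $d$ indices $j$ with $f(\V^*\X_{*,j})$ supported exactly on coordinate $i$, i.e.\ $f(\V^*\X_{*,j}) = (\V^*_{i,*}\X_{*,j})\,e_i$ with $\V^*_{i,*}\X_{*,j} > 0$. For such $j$, $\AA_{*,j} = (\V^*_{i,*}\X_{*,j})\,\U^*_{*,i}$ is a strictly positive rescaling of the $i$-th column of $\U^*$, so the $\AA$-columns indexed by $\mathcal{C}_i$ are pairwise positive scalar multiples of one another; hence they all land in the set $S$ of Step 1, and since no two columns of $\U^*$ are non-negative scalar multiples of each other (and none is zero), $\mathcal{C}_i$ and $\mathcal{C}_{i'}$ fall into distinct classes of the Step 2 partition for $i \neq i'$. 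Thus Step 2 produces $k$ classes, one containing each $\mathcal{C}_i$.

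The crux is a structural statement: w.h.p.\ over $\X$ the only nonzero columns of $\AA$ that are scalar multiples of another column are those coming from $1$-sparse columns of $f(\V^*\X)$, and two such lie in the same class iff they share the same support singleton. To prove it I would fix $p \neq q$, let $T_p,T_q$ be the supports of $f(\V^*\X)_{*,p},f(\V^*\X)_{*,q}$, and write $\AA_{*,p} = \U^*_{*,T_p}(\V^*_{T_p}\X_{*,p})$, $\AA_{*,q} = \U^*_{*,T_q}(\V^*_{T_q}\X_{*,q})$. Since $\V^*$ has full row rank and $|T_p|,|T_q| \le k$, conditioned on the supports the vectors $\V^*_{T_p}\X_{*,p}$ and $\V^*_{T_q}\X_{*,q}$ are independent non-degenerate Gaussians (restricted to an orthant) in $\R^{|T_p|}$, $\R^{|T_q|}$. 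A case analysis on $\textrm{rank}(\U^*_{*,T_p})$, $\textrm{rank}(\U^*_{*,T_q})$ then finishes: whenever one of these ranks is $\ge 2$, parallelism of $\AA_{*,p}$ and $\AA_{*,q}$ is a measure-zero event, since two independent non-degenerate Gaussians in a space of dimension $\ge 2$ are parallel with probability $0$ — the same genericity reasoning used in Lemmas \ref{prop:random} and \ref{lem:uniquesign}; and in the rank-$1$ sub-cases all of $\{\U^*_{*,i}\}_{i\in T}$ lie on one line, so the hypothesis that no two columns of $\U^*$ are non-negative scalar multiples forces $|T|=1$. Given this, $S = \bigcup_i \mathcal{C}_i$, the Step 2 partition is exactly $\{\mathcal{C}_i\}_{i\in[k]}$, so Step 3 sets $\U_{*,i}$ to a positive rescaling of $\U^*_{*,i}$ and records $c_{i,j} = \V^*_{i,*}\X_{*,j}/\V^*_{i,*}\X_{*,j_i}$.

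Finally, the Step 4 system $w_i\X_{*,j}=c_{i,j}$ over $j\in\mathcal{C}_i$ is satisfied by $w_i = \V^*_{i,*}/(\V^*_{i,*}\X_{*,j_i})$, and it has a unique solution because the $|\mathcal{C}_i|\ge d$ vectors $\{\X_{*,j}\}_{j\in\mathcal{C}_i}$ span $\R^d$ almost surely — they are i.i.d.\ Gaussians conditioned on lying in a fixed halfspace, so the inductive argument from the proof of Theorem \ref{thm:recoverexactsigns} applies. Hence $\W_{i,*}$ is a positive multiple of $\V^*_{i,*}$, and after normalization $\V_{i,*}=\V^*_{i,*}$ exactly; since $f(\V\X)=f(\V^*\X)$ has full row rank w.h.p.\ by Lemma \ref{prop:random}, the last linear system $\U f(\V\X)=\AA$ has a unique solution, the positively rescaled $\U^*$. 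Every step uses $\poly(n,d,m,k)$ arithmetic operations, giving total time $\kappa^{O(k)}\poly(d,k,m)$. I expect the main obstacle to be exactly the structural claim of the third paragraph for rank-deficient $\U^*$: correctly ruling out (or accounting for) columns of $\AA$ parallel to a column of $\U^*$ whose underlying $f(\V^*\X)$-column is not $1$-sparse (this can genuinely happen when $\ker(\U^*)$ is nontrivial, e.g.\ for a $[u,-u]$ pair), which is precisely where the ``no two columns are non-negative scalar multiples'' hypothesis must be used carefully, together with the boundary case $|T|=d$.
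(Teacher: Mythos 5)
Your plan mirrors the paper's own proof essentially step for step: invoke Lemma \ref{lem:kappaksigns} to get $\ge d$ one-sparse columns of $f(\V^*\X)$ per coordinate, argue the corresponding columns of $\AA$ are positive rescalings of $\U^*_{*,i}$ and so seed the right partition classes, rule out any other column of $\AA$ joining them by a genericity argument, recover each $\V^*_{i,*}$ from the overdetermined-but-consistent system over $\{\X_{*,j}\}_{j \in S_i}$, and finish with a linear solve for $\U^*$. Your organization of the genericity step as a case split on $\mathrm{rank}(\U^*_{*,T_p})$ and $\mathrm{rank}(\U^*_{*,T_q})$ is a more explicit form of the paper's ``the distribution of $\AA_{*,q}$ lives on a $t\ge 2$-dimensional conic hull'' phrasing, but it is the same argument with the same reach.

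The gap you flag at the end is real, and the paper's own proof contains exactly the same one. The paper asserts that ``the conic hull of any subset of $2$ columns of $\U^*$ is $2$-dimensional, since no two columns of $\U^*$ are non-negative scalings of each other,'' but this is false for a pair $\{u,-u\}$: the conic hull of $u$ and $-u$ is the whole line $\R u$, which is $1$-dimensional. So the measure-zero argument breaks in exactly the configuration the theorem explicitly advertises ($\U^*$ containing a $\pm u$ pair), and your ``rank-$1$ forces $|T|=1$'' step fails for the same reason: with $T=\{i,j\}$ and $\U^*_{*,i}=-\U^*_{*,j}$ one has $\mathrm{rank}(\U^*_{*,T})=1$ and $|T|=2$ while the hypothesis is met. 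This is not just a hole in the proof. With $\U^*=[u,-u]$, \emph{every} nonzero column of $\AA$ is some scalar multiple of $u$ regardless of the sparsity of the underlying $f(\V^*\X)$-column, so Step~2 of Algorithm~\ref{alg:overall_exact_fpt_real} (which merges columns related by any $c\in\R^{\neq 0}$) collapses $S$ into a single class rather than $k=2$; and even reading Step~2 with $c>0$, the Step~4 system $w_i\X_{*,j}=c_{i,j}$ inherits constraints from $2$-sparse columns of $f(\V^*\X)$ that are inconsistent with $w_i$ being a positive multiple of $\V^*_{i,*}$. You have therefore rediscovered an actual error in the theorem as stated. Your argument (and the paper's) does close under the stronger hypothesis that no two columns of $\U^*$ are scalar multiples at all, i.e.\ every $2$-column submatrix of $\U^*$ has rank $2$; then the rank-$\ge 2$ branch of your case analysis covers every $|T|\ge 2$.
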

\begin{proof}
  By Lemma \ref{lem:kappaksigns}, at least $d$ columns of $f(\V^* \X)$ will be scalar multiples of $e_i$  for each $i$. Thus the set $S$ of indices, as defined in Step $2$ of Algorithm \ref{alg:overall_exact_fpt_real}, will contain each column of $\U^*$ as a column. It suffices to show that no two columns of $\overline{\AA}$ can be scalar multiples of each other if they are not a scalar multiple of $\U^*$. To see this, if two columns of $f(\V^*\X)$ were not $1$-sparse, then the distribution of $\U^* f(\V^*\X)$ on these columns is supported on a $t$-dimensional manifold living inside $\R^m$, for some $t \geq 2$. In particular, this manifold is the conic hull of at least $t' \geq t \geq 2$ columns of $\U^*$ (where $t'$ is the sparsity of the columns of $f(\V^*\X)$. This follows from the fact that the conic hull of any subset of $2$ columns of $\U^*$ is $2$-dimensional, since no columns two of $\U^*$ are non-negative scalings of each other. Thus the probability that two draws from such a distribution lie within the same $1$-dimensional subspace, which has measure $0$ inside of any $t \geq 2$-dimensional conic hull, is therefore $0$, which completes the claim.

  To complete the proof of the theorem, by pulling a diagonal matix $\D$ through $f$, we can assume $\U = \U^*$. By construction then, $c_{i,j}$ is such that $(\D \V_{i,*} \X_{*,j}) = c_{i,j}$, as it is the scaling which takes $\U_{*,i}$ to $\AA_{*,j}$. Thus $w_i$, as defined in step $4$ of Algorithm \ref{alg:overall_exact_fpt_real}, is the solution to a linear equation $w_i \X_{S_i} = c$ for some fixed vector $c$, where $\X_{S_i}$ is $\X$ restricted to the columns in $S_i$. Since $|S_i| \geq d$ by Lemma \ref{lem:kappaksigns}, to show that $w_i$ is unique it suffices for $\X_{S_i}$ to be full rank. But as argued in the proof of Theorem \ref{thm:recoverVgivenU}, any subset of $d$ columns of $\X$ will be rank $d$ and invertible with probability $1$. Thus $w_i$ is unique, and must therefore be a scaling of $\V_{i,*}^*$, which we find by normalizing $w_i$ to have unit norm. After this normalization, we can renormalize $\U$, or simply solve a linear system for $\U$ as in Step 5 of Algorithm \ref{alg:overall_exact_fpt_real}. By  Lemma \ref{prop:random}, $f(\V^*\X)$ will have full rank w.h.p., so the resulting $\U$ will be unique and therefore equal to $\U^*$ as desired.
  
\end{proof}

\section{A Fixed-Parameter Tractable Algorithm for Arbitrary Non-Adversarial Noise}\label{sec:1bit}

In the noisy model, the observed matrix $\AA$ is generated by a perturbation $\E$ of some neural network $\U^* f(\V^*\X)$ with rank $k$ matrices $\U^* \in \R^{m \times k}, \V \in \R^{k \times d}$, and i.i.d. Gaussian $\mathcal{N}(0,1)$ input $\X \in \R^{d \times n}$. Formally, we are given as input $\X$ and $\AA= \U^* f(\V\X) + \E$, which is a noisy observation of the underlying network $\U^* f(\V\X)$, and tasked with recovering approximations to this network.
In Section \ref{sec:noisycase}, we showed that approximate recovery of the weight matrices $\U^*,\V^*$ is possible in polynomial time when the matrix $\E$ was i.i.d. mean $0$ and sub-Gaussian. In this section, we generalize our noise model substantially to include all error matrices $\E$ which \textit{do not depend} on the input matrix $\X$. 
 Our goal is then to obtain $\U,\V$ such that 
\[ \|\U f(\V\X) - \AA \|_F \leq (1+\eps)\|\E\|_F\]
\noindent
Thus we would like to be able to recover a good approximation to the observed input, where we are competing against the cost $\text{OPT} = \|\AA - \U^* f(\V^*\X)\|_2 = \|\E\|_2$. Observe that this is a slightly different objective than before, where our goal was to recover the actual weights $\U^*,\V^*$ approximately. This is a product of the more general noise model we consider in this Section. The loss function here can be thought of as recovering $\U,\V$ which approximate the observed classification nearly as well as the optimal generative $\U^*,\V^*$ do. This is more similar to the empirical loss considered in other words \cite{arora2016understanding}. 
The main result of this section is the development of a fixed parameter tractable algorithm which returns $\U,\V$ such that 

\begin{equation}\label{eqn:fptbound}
     \|\AA - \U f(\V\X)\|_F \leq \|\E \|_F +  O\Big( \Big[\sigma_{\min} \eps \sqrt{n m}  \|\E\|_2\Big]^{1/2} \Big)
\end{equation} 
Where $\sigma_{\max} = \sigma_{\max}(\U^*)$, and $\|\E\|_2$ is the spectral norm of $\E$. In this section, to avoid clustering, we will write $\sigma_{\max},\sigma_{\min}$, and $\kappa$ to denote the singular values and condition number of $\U^*$. Our algorithm has no dependency on the condition number of $\V^*$. The runtime of our algorithm is  $(\frac{\kappa}{\eps})^{O(k^2)} \poly(n, r,d)$, which is fixed-parameter tractable in $k,\kappa, \frac{1}{\eps}$. Here the sample complexity $n$ satisfies $n = \Omega(\poly(r,d,\kappa, \frac{1}{\eps}))$.  

We remark that the above bound in Equation \ref{eqn:fptbound} may at first seem difficult to parse. Intuitively, this bound will be a $(1+\eps)$ multiplicative approximation whenever the Frobenius norm of $\E$ is roughly an $\sqrt{m}$ factor larger than the spectral norm--in other words, when the error $\E$ is relatively flat. Note that these bounds will hold when $\E$ is drawn from a very wide class of random matrices, including matrices with heavier tails (see \cite{vershynin2010introduction} and discussion below). When this is not the case, and $\|\E\|_2 \approx \|\E\|_F$, then we lose an additive $\sqrt{m}$ factor in the error guarantee. Note that this can be compensated by scaling $\eps$ by a $\frac{1}{\sqrt{m}}$ factor, in which case we will get a $(1 + \eps)$ multiplicative approximation for any $\E$ which is not too much smaller than $\U^* f(\V^*\X)$ (meaning $\|\E\|_F = \Omega( \eps \|\U^* f(\V^*\X)\|_F)$). The runtime in this case will be $(m \kappa/\eps)^{O(k^2)}$, which is still $(\kappa/\eps)^{O(k^3)}$ whenever $m = O(2^k)$. Note that if the noise $\E$ becomes arbitrarily smaller than the signal $\U^* f(\V^*\X)$, then the multiplicative approximation of Equation \ref{eqn:fptbound} degrades, and instead becomes an additive guarantee.

To see why this is a reasonable bound, we must first examine the normalizations implicit in our problem. As always we assume that $\V^*$ has unit norm rows. Using the $2$-stability of Gaussians, we know that $\ex{(\V^*\X)_{i,j}^2} = 1$ for any $i,j \in [k] \times [n]$, and by symmetry of Gaussians we have that $\ex{f(\V^*\X)_{i,j}^2} = 1/2$. By linearity of expectation we have $\ex{\|f(\V^*\X)\|_F^2} = kn/2$. Since $ \sigma_{\min}^2 \|f(\V^*\X)\|_F^2\leq \|\U f(\V^*\X)\|_F^2 \leq \sigma_{\max}^2 \|f(\V^*\X)\|_F^2$, it follows that
\[\frac{\sigma_{\min}^2(\U)kn}{2} \leq \ex{ \|\U f(\V^*\X)\|_F^2} \leq \frac{\sigma_{\max}^2(\U)kn}{2}\]
Thus for the scale of the noise $\E$ to be within a $\Omega(1)$ factor of the average squared entry of $\U f(\V^*\X)$ on average, we expect $\|\E\|_F = O( \sigma_{\max} \sqrt{nk })$ and  $\|\E\|_F = \Omega( \sigma_{\min} \sqrt{nk })$

Now consider the case where $\E$ is a random matrix, coming from a very broad class of distributions. Since $\E \in \R^{m \times n}$ with $n >> m$, one of the main results of random matrix theory is that many such matrices are \textit{approximately} isometries \cite{vershynin2010introduction}. Thus, for a such a random matrix $\E$ normalized to be within a constant of the signal, we will have  $\|\E\|_2 = O( \sigma_{\max}  \sqrt{\frac{nk}{m}})$. This gives

\[ \|\AA - \U f(\V^*\X)\|_F \leq \|\E\|_F(1 + O(\eps))\]
after scaling $\eps$ by a quadratic factor. In general, we get multiplicative approximations whenever either the spectrum of $\E$ is relatively flat, or when we allow $(m \kappa/\eps)^{O(k^2)}$ runtime. Note that in both cases, for the above bound to be a $\|\AA - \U f(\V^*\X)\|_F \leq (1+\eps)\|\E\|_F$ approximation, we must have $\|\E\|_F = \Omega( \eps \|\U^* f(\V^*\X)\|_F)$ as noted above. Otherwise, the error we are trying to compete against is too small when compared to the matrices in question to obtain a multiplicative approximation.

\subsection{Main Algorithm}
Our algorithm is then formally given in  Figure \ref{fig:gaussiannoise}. 
Before presenting it, we first recall some fundamental tools of numerical linear algebra. First, we recall the notion of a subspace-embedding.

\begin{definition}[Subspace Embedding]
Let $\U \in \R^{m \times k}$ be a rank-$k$ matrix and, let $\mathcal{F}$ be family of random matrices with $m$ columns, and let $\S$ be a random matrix sampled from $\mathcal{F}$. Then we say that $\S$ is a $(1 \pm \delta)$-$\ell_2$-subspace embedding for the \textit{column space} of $\U$ if for all $x \in \R^k$,
\[ \|\S\U x\|_2 = (1 \pm \delta)\|\U x\|_2 \]
\end{definition}
Note in the above definition, $\S$ is a subspace embedding for the column span of $\U$, meaning for any other basis $\U'$ spanning the same columns as $\U$, we have that $\S$ is also a $(1 \pm \delta)$-$\ell_2$-subspace embedding for $\U'$. For brevity, we will generally say that $\S$ is a subspace embedding for a matrix $\U$, with the understanding that it is in fact a subspace embedding for \textit{all} matrices with the same column span as $\U$. Note that if $\S$ is a subspace embedding for a rank-$k$ matrix $\U$ with largest and smallest singular values $\sigma_{\max}$ and $\sigma_{\min}$ respectively, then $\S \U$ is rank-$k$ with largest and smallest singular values each in the range $(1 \pm \delta)\sigma_{\max}$ and $(1 \pm \delta)\sigma_{\min}$ respectively. The former fact can be seen by the necessity that $\|\S\U x\|_2$ be non-zero for all non-zero $x \in \R^k$, and the latter by the fact that $\max_{x \in \R^k, \; \|x\|_2 = 1} \|\S \U x\|_2 = (1 \pm \delta)\max_{x \in \R^k, \; \|x\|_2 = 1} \|\U x\|_2 = \sigma_{\max}$, and the same bound holds replacing $\max$ with $\min$. Our algorithm will utilize the following common family of random matrices.

\begin{Frame}[\textbf{Algorithm \ref{fig:gaussiannoise} : Neural Net LRA with Gaussian Input and Noise}$(\AA,\X)$.]
\label{fig:gaussiannoise}
\begin{enumerate}
	          \item Generate a random matrix $\S \in \R^{c_1 k / \delta^2 \times m}$ of i.i.d. Gaussian $\mathcal{N}(0,1/k)$ variables for some sufficiently large constant $c_1$ and $\delta = 1/10$. 
	   
	    \item Enumerate all $ k \times c_1 k/\delta^2$ matrices $\M^1,\M^2,\ldots,\M^\nu$ with entries of the form $ \frac{1}{\sigma_{\min}(\U)}(1 + \frac{\eps^4}{c  k^4})^{-i}$ for integers $0 \leq  i  \leq c'k^8 (1/\eps^8) \kappa^8 $ for sufficiently large constants $c,c'$ and any $\frac{1}{\eps}> k$. Note that $\nu = 2^{O(k^2 \log( \frac{1}{\eps}k \kappa) ) }$.

	    \item For $i=1,2,\dots,\nu$
	    \begin{enumerate}
	        \item Generate a matrix $\G \in \R^{k \times n}$ s.t. $\G$ consists of i.i.d. $\mathcal{N}\left(0, \Theta\big(\big(\frac{\eps^{-2} \kappa^2 k \|\M \S \E \|_F}{\sqrt{n}}\big)^2\big)\right)$ random variables. Note, we can guess the value $\|\M \S \E \|_F$ in $O(\log(n))$ powers of $2$ around $\|\M \S \AA \|_F$.
	       	    \item For each row $p \in [k]$ and $q \in [n]$, let $y_q = \ttx{sign}{(\M^i \S \AA + \G)_{p,q}}$.
	       	    \item For each $p=1,2,\dots,[k]$, let $w^p_i$ be the solution to the following convex program:
	       	  \[ \max_{w, } \sum_{i=1}^n y_i \langle w , X_{*,i} \rangle \]
	       	  \[ \text{subject to } \; \|w\|_2^2 \leq 1 \]
	       	  \item Let $\V^i \in \R^{k \times d}$ be the matrix with $p$-th row equal to $w^p_i$.
	    \end{enumerate}
\item Let $\U$ and $\V^{i^*}$ be the matrices that achieve the minimum value of the linear regression problem 
\[\arg \min_{\U, \V^i} \|\AA - \U f(\V^i \X)\|_F^2\]
\end{enumerate}
\textbf{Output:} $(\U,\W^{i^*})$.
\end{Frame}

\begin{proposition}[Gaussian Subspace Embedding \cite{sarlos2006improved}] \label{prop:subspaceembedding}
Fix any rank $k$-matrix matrix $\U \in \R^{m \times k}$, and let $\S\in \R^{c_1 k/\eps^2 \times m}$ be a random matrix where every entry is generated i.i.d. Gaussian $\mathcal{N}(0,1/k)$, for some sufficiently large constant $c_1$. Then with probability $99/100$, $\S$ is a subspace embedding for $\U$.
\end{proposition}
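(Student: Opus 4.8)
The plan is to reduce the statement to showing that $\S\U$ acts as an approximate isometry on the $k$-dimensional column space of $\U$, and then establish this via Gaussian concentration together with a net argument over the unit sphere $S^{k-1}$.

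First I would observe that, since the subspace-embedding property depends only on the column span of $\U$ (as noted immediately after the definition), we may assume without loss of generality that $\U$ has orthonormal columns, i.e.\ $\U^T\U = I_k$. It then suffices to show $\|\S\U y\|_2 = (1\pm\delta)\|y\|_2$ for every $y \in \R^k$. The key structural fact is that $\S\U$ is, up to the fixed scaling implicit in the definition's normalization, an $r \times k$ matrix of i.i.d.\ Gaussians with $r = c_1 k/\delta^2$: by rotational invariance of the Gaussian, each row $\S_{i,*}\U$ of $\S\U$ is a Gaussian vector with covariance $\tfrac1k\,\U^T\U = \tfrac1k I_k$, and distinct rows are independent since the rows of $\S$ are.

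Next, for a fixed unit vector $y$, $\|\S\U y\|_2^2 = \tfrac1k\sum_{i=1}^r g_i^2$ where $g_i \sim \mathcal{N}(0,1)$ i.i.d., a scaled $\chi^2_r$ variable with mean $r/k$. Using standard $\chi^2$ tail bounds (the Laurent-Massart inequalities of \cite{laurent2000}, as used elsewhere in this paper), $\Pr\!\big[\,\big|\,\|\S\U y\|_2^2 - r/k\,\big| > \delta\, r/k\,\big] \le 2e^{-c\delta^2 r}$ for an absolute constant $c$. I would then take a $\tfrac14$-net $\mathcal{N}$ of $S^{k-1}$ with $|\mathcal{N}| \le 9^k$, and union-bound this concentration estimate over $\mathcal{N}$: since $r = c_1 k/\delta^2$, choosing $c_1$ a sufficiently large constant makes the total failure probability $9^k\cdot 2e^{-c\delta^2 r} = 2\exp(k\ln 9 - c\,c_1 k) \le 1/100$. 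Finally, the standard successive-approximation trick (writing an arbitrary unit $x$ as $\sum_{j\ge0}\gamma_j x_j$ with $x_j\in\mathcal{N}$ and $|\gamma_j|\le 4^{-j}$, and summing the resulting geometric series) upgrades the estimate from the net to all of $S^{k-1}$, losing only a constant factor in $\delta$; rescaling $\delta$ by that constant (absorbable into $c_1$) yields the claim, which then holds for every matrix with the same column span as $\U$.

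I expect no genuinely hard step here: this is a textbook Johnson-Lindenstrauss-style argument. The only points requiring care are (i) tracking the dependence of $c_1$ on $\delta$ through the union bound and the net-to-sphere extension, and (ii) reconciling the normalization in the statement (entries $\mathcal{N}(0,1/k)$ with $c_1 k/\delta^2$ rows) with the exact $(1\pm\delta)$ form of the definition. As an alternative that avoids the net argument entirely, one may instead invoke the non-asymptotic singular-value bounds for rectangular Gaussian matrices --- e.g.\ Corollary 5.35 of \cite{vershynin2010introduction}, which appears as Lemma \ref{prop:gaussiancondition} here, or the sharper $\sqrt{r}\pm\sqrt{k}\pm t$ estimate --- applied to the i.i.d.\ Gaussian matrix $\sqrt{k}\,\S\U$: this directly places $\sigma_{\min}(\S\U)$ and $\sigma_{\max}(\S\U)$ within a $(1\pm\delta)$ factor of the appropriate scaling once $c_1$ is large enough, which is exactly the subspace-embedding guarantee.
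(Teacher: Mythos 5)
The paper gives no proof of this proposition at all; it simply cites \cite{sarlos2006improved} and moves on, so there is no ``paper's approach'' for your argument to agree or disagree with. Your proof is the standard and correct one for Gaussian subspace embeddings, by either route you describe: reduce to orthonormal $\U$, observe that $\S\U$ is (by rotational invariance of the Gaussian) an i.i.d.\ Gaussian matrix of the appropriate shape, and then either run the $\chi^2$-concentration-plus-net argument or invoke the rectangular-Gaussian singular value bound (Corollary 5.35 of \cite{vershynin2010introduction}, which the paper itself records as Lemma~\ref{prop:gaussiancondition}). The second route is somewhat cleaner and is essentially a one-liner once you have that lemma, so I would lead with it.

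The one real issue is the normalization mismatch you flagged in your point (ii): with $r = c_1 k/\delta^2$ rows and entries $\mathcal{N}(0,1/k)$, $\mathbb{E}\|\S\U y\|_2^2 = r/k = c_1/\delta^2$ rather than $1$, so $\S$ as literally stated is an approximate \emph{scaled} isometry, not a $(1\pm\delta)$ embedding in the sense of the paper's Definition. To make the proposition true as written one should take the variance to be $1/r$ (equivalently $\delta^2/(c_1 k)$) rather than $1/k$; with that fix your argument is exactly right. This is almost certainly a typo in the paper rather than a gap in your reasoning --- for the purposes of Algorithm~\ref{fig:gaussiannoise} only the preservation of the \emph{condition number} $\kappa(\S\U^*) = (1\pm O(\delta))\kappa(\U^*)$ is used, which is invariant under the global rescaling --- but it is worth being explicit that the concentration you prove places $\|\S\U y\|_2$ in $(1\pm\delta)\sqrt{r/k}$, and one must divide by $\sqrt{r/k}$ (or absorb it into $\U$) before comparing against the Definition.
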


Our algorithm is then as follows. We first sketch the input matrix $\AA$ by a $O(k) \times m$ Gaussian matrix $\S$, and condition on it being a subspace embedding for $\U$. We then left multiply by $\S$ to obtain $\S\AA = \S \U f(\V\X) + \S\E$. Now we would like to ideally recover $f(\V\X)$, and since $\S$ is a subspace embedding for $\U$, we know that $\S\U$ has full column rank and thus has an left inverse. Since we do not know $\U$, we must guess the left inverse $(\S\U)^{-1} \in \R^{k \times O(k)}$ of $\S\U$. We generate guesses $\M^i$ of $(\S\U)^{-1}$, and try each of them. For the right guess, we know that after left multiplying by $\M^i$ we will have $\M^i \S \AA = f(\V\X) + \M^i \S \E + \ZZ$, where $\ZZ$ is some error matrix which arises from our error in guessing $(\S\U)^{-1}$.

We then observe that the signs of each row of this matrix can be thought of as labels to a noisy halfspace classification problem, where the sign of $(\M^i \S \AA )_{p,q}$ is a noisy observation of the sign of $\langle \V_{p,*} , \X_{*,q} \rangle$. Using this fact, we then run a convex program to recover each row $\V_{p,*}$. In order for recovery to be possible, there must be some non-trivial correlation between the labeling of these signs, meaning the sign of $(\M^i \S \AA )_{p,q}$, and the true sign of $\langle \V_{p,*} , \X_{*,q} \rangle$. In order to accomplish this, we must \textit{spread out} the error $\E$ to allow the value of $\langle \V_{p,*} , \X_{*,q} \rangle$ to have an effect on the observed sign a non-trivial fraction of the time. We do this by adding a matrix $\G$ such that the $i$-th row $\G_{i,*}$ consists of i.i.d. $\mathcal{N}\big(0, \Theta\big(\big(\frac{\eps^{-2} \kappa^2 k \|\M \S \E \|_F}{\sqrt{n}}\big)^2\big)\big)$  random variables to $\M^i \S \AA$. We will simply  guess the value $\|\M \S \E \|_F$ here in $O(\log(n))$ powers of $2$ around $\|\M \S \AA \|_F$. We prove a general theorem (Theorem \ref{thm:modular}) about the recovery of hyperplanes $v\in \R^d$ when given noisy labels from a combination of ReLU observations, adversarial, and non-adversarial noise components. Finally, we solve for $\U$ by regression. 
The full procedure is described formally given in Algorithm \ref{fig:gaussiannoise}.

\subsection{Analysis} 
First note that by our earlier bounds on the singular values of $\X$ (Proposition \ref{prop:gaussiancondition}), we have $\|f(\V^*\X)\|_F \leq O(\sqrt{n k})$, thus if $\|\E\|_F > \sigma_{\max}(\U^*)\frac{\sqrt{n k}}{\eps}$, we can simply return $\U^* = 0,\V^* = 0$, and obtain our desired competitive approximation with the cost $OPT = \|\E\|_F$. Thus, where can now assume that $\|\E\|_F < \sigma_{\max}(\U^*)\frac{\sqrt{n k}}{\eps}$.

By Proposition \ref{prop:subspaceembedding}, with probability $99/100$ we have both that $\S\U^*$ is rank-$k$ and that the largest and smallest singular values of $\S\U^*$ are perturbed by at most a $(1 \pm \delta)$ factor, meaning $\sigma_{\max}(\U^*) = (1 \pm \delta) \sigma_{\max}(\S\U^*)$ and $\sigma_{\min}(\U^*) = (1 \pm \delta) \sigma_{\in}(\S\U^*)$, from which it follows that $\kappa(\U^*) = (1 \pm O(\delta)) \kappa(\S\U^*)$. Note that we can repeat the algorithm $O(n)$ times to obtain this result with probability $1-\exp(-n)$ at least once by Hoeffding bounds. So we can now condition on this and assume the prior bounds on the singular values and rank of $\S\U^*$. Thus we will now write $\sigma_{\max} = \sigma_{\max}(\S\U^*)$,
  $\sigma_{\min} = \sigma_{\min}(\S\U^*)$, and $\kappa = \kappa(\S\U^*)$, with the understanding that these values have been perturbed by a $(1 \pm 3\delta) <(1 \pm 1/2)$ factor.

We can assume that we know $\kappa$ and $\sigma_{\min}(\U^*)$ up to a factor of $2$ by guessing them in geometrically increasing intervals.
Note that we can assume $\sigma_{\max}$ is within a $\poly(n)$ factor of the largest column norm of $\AA$, since otherwise $\|\E \|_F$ would necessarily be larger than $\sigma_{\max}(\U^*)\frac{\sqrt{n k}}{\eps}$. Given this column norm, we obtain an interval $[a,b] \subset \R$ $\frac{a}{b} = \poly(n, \kappa)$, such that both $\kappa$ and $\sigma_{\min}(\U^*)$ must live inside $[a,b]$. Then we can make $O(\log^2(\frac{a}{b})) = O(\log^2(n \kappa))$ guesses to find $\kappa$ and $\sigma_{\min}(\U^*)$ up to a factor of $2$. 
Thus guessing the correct approximations to $\kappa, \sigma_{\min}(\U^*)$ will not effect our run time bounds, since our overall complexity is already polynomial in $n$ and $\kappa$. Similarly, we can also guess the value of $\|\M \S \E \|_F$ up to a factor of $2$ using $O(\log(n \kappa))$ guesses, as is needed in step $3a$ of Algorithm \ref{fig:gaussiannoise}.

The following Proposition gives the error bound needed for the right guess of the inverse $(\S\U)^{-1}$

	\begin{proposition}\label{prop:guesserror}
		On the correct guess of $\sigma_{\max}(\S\U^*)$ (up to a constant factor of $2$ error), there is an $i \in [\nu]$ such that $\M^i = (\S\U^*)^{-1} + \mathbf{\Lambda}$ where $\|\mathbf{\Lambda}\|_{\infty} \leq \frac{\eps^4}{ \sigma_{\min} \kappa^4 k^4}$. 
    \end{proposition}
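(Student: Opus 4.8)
The plan is to show that the finite family $\M^1,\dots,\M^\nu$ enumerated in Step 2 of Algorithm \ref{fig:gaussiannoise} contains, after rounding each coordinate to the nearest grid value, an entrywise-accurate approximation to the left pseudoinverse $(\S\U^*)^{\dagger}$, which we abbreviate $(\S\U^*)^{-1}$. The content is almost entirely a covering argument, together with two standard facts: that $(\S\U^*)^{-1}$ is well-defined with controlled operator norm, and that its entries therefore lie in a bounded interval whose endpoints we know up to a constant factor.

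First I would record the norm bound. By Proposition \ref{prop:subspaceembedding} we have conditioned on $\S$ being a subspace embedding for the column span of $\U^*$, so $\S\U^* \in \R^{(c_1 k/\delta^2) \times k}$ has full column rank $k$; hence $(\S\U^*)^{-1}$ is its unique minimum-Frobenius-norm left inverse and $\|(\S\U^*)^{-1}\|_2 = 1/\sigma_{\min}(\S\U^*) = 1/\sigma_{\min}$. Since $|M_{p,q}| \le \|M\|_2$ for any matrix, every one of the $k \cdot (c_1 k/\delta^2) = O(k^2)$ entries of $(\S\U^*)^{-1}$ has absolute value at most $1/\sigma_{\min}$ (here $\delta = 1/10$ is a constant). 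As explained in the discussion preceding the proposition, by guessing in geometrically increasing intervals we may condition on our estimate $\tilde\sigma_{\min}$ of $\sigma_{\min}$ being correct up to a factor of $2$ (equivalently, on the correct guesses of $\sigma_{\max}(\S\U^*)$ and $\kappa$), and without loss of generality $\sigma_{\min}/2 \le \tilde\sigma_{\min} \le \sigma_{\min}$, so the largest grid magnitude $1/\tilde\sigma_{\min} \ge 1/\sigma_{\min}$ already dominates the magnitude of every entry of $(\S\U^*)^{-1}$.

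Next comes the covering step. The per-coordinate grid consists of $0$ together with $\pm\frac{1}{\tilde\sigma_{\min}}(1+\frac{\eps^4}{ck^4})^{-i}$ for $0 \le i \le c' k^8 \eps^{-8}\kappa^8$. Fix an entry $v$ of $(\S\U^*)^{-1}$. If $|v|$ lies below the smallest grid magnitude $g_{\min} = \frac{1}{\tilde\sigma_{\min}}(1+\frac{\eps^4}{ck^4})^{-c'k^8\eps^{-8}\kappa^8}$, round $v$ to $0$: since $\ln(1+\frac{\eps^4}{ck^4}) \ge \frac{\eps^4}{2ck^4}$, we get $g_{\min} \le \frac{2}{\sigma_{\min}}\exp(-\frac{c'}{2c}k^4\eps^{-4}\kappa^8) \ll \frac{\eps^4}{\sigma_{\min}\kappa^4 k^4}$ for $c'$ a large enough constant, so the error introduced is below the target. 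Otherwise $g_{\min} \le |v| \le 1/\tilde\sigma_{\min}$, and there is a grid magnitude $g$ with the same sign as $v$ and $|v|/(1+\frac{\eps^4}{ck^4}) \le g \le |v|$, hence $|v - g| \le \frac{\eps^4}{ck^4}\,|v| \le \frac{\eps^4}{ck^4\tilde\sigma_{\min}}$; taking the multiplicative spacing in Step 2 fine enough (concretely, spacing $1 + \Theta(\eps^4/(k^4\kappa^4))$ suffices and leaves $\nu$ unchanged up to constants), this is at most $\frac{\eps^4}{\sigma_{\min}\kappa^4 k^4}$. Choosing for each of the $O(k^2)$ coordinates the grid value nearest to the corresponding entry of $(\S\U^*)^{-1}$ produces some $\M^i$ in the enumeration, and with $\mathbf{\Lambda} = \M^i - (\S\U^*)^{-1}$ we get $\|\mathbf{\Lambda}\|_\infty \le \frac{\eps^4}{\sigma_{\min}\kappa^4 k^4}$, as claimed. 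The bound $\nu = 2^{O(k^2 \log(k\kappa/\eps))}$ follows since there are $O(k^8\eps^{-8}\kappa^8)$ choices per coordinate and $O(k^2)$ coordinates.

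I expect the only genuinely delicate point to be calibrating the two grid parameters against the target $\ell_\infty$ precision: the multiplicative spacing must be fine enough that spacing $\times\,(1/\sigma_{\min})$ falls below $\frac{\eps^4}{\sigma_{\min}\kappa^4 k^4}$, while the range (the number $c'k^8\eps^{-8}\kappa^8$ of grid points) must be large enough that even the smallest nonzero entries of $(\S\U^*)^{-1}$ are swallowed with room to spare, and both must hold while keeping $\nu = 2^{O(k^2 \log(k\kappa/\eps))}$. Everything else — the subspace-embedding facts and the identity $\|(\S\U^*)^{-1}\|_2 = 1/\sigma_{\min}$, the reduction to an underestimate of $\sigma_{\min}$, and the counting — is routine.
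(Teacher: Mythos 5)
Your proof follows the same route as the paper's: bound every entry of $(\S\U^*)^{-1}$ by $\|(\S\U^*)^{-1}\|_2 = 1/\sigma_{\min}$, round each coordinate to the nearest point of the geometric grid, and split cases on whether the entry falls below the smallest grid magnitude. The paper's proof is a two-sentence version of this covering argument.

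You have, however, caught a genuine discrepancy between the algorithm as stated and the precision claimed. With the grid spacing $1+\frac{\eps^4}{ck^4}$ from Step~2 (for an absolute constant $c$), the nearest grid point to an entry $v$ satisfies $|v-g| \le |v|\cdot \frac{\eps^4}{ck^4}$, so for entries of size close to $1/\sigma_{\min}$ one only gets additive error $O\!\big(\frac{\eps^4}{k^4\sigma_{\min}}\big)$ — off from the target $\frac{\eps^4}{\sigma_{\min}\kappa^4 k^4}$ by a $\kappa^4$ factor. The paper's proof asserts the multiplicative accuracy $1 \pm \frac{\eps^4}{\kappa^4 k^4}$ directly, which implicitly requires $c \gtrsim \kappa^4$, contradicting ``sufficiently large constant.'' Your fix — refine the spacing to $1+\Theta(\eps^4/(k^4\kappa^4))$ — is exactly right, and you correctly note it leaves $\nu = 2^{O(k^2\log(k\kappa/\eps))}$ unchanged since the per-coordinate grid count remains $\poly(k,\kappa,1/\eps)$. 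The $\kappa^4$ in the target is not gratuitous: it is consumed downstream in Proposition~\ref{prop:fitsinto} (to absorb $\sigma_{\max}/\sigma_{\min}=\kappa$ and feed into the $\omega = \eps^{-4}\kappa^4 k^2$ budget of Theorem~\ref{thm:modular}), so the finer grid really is needed. You also make explicit two things the paper leaves implicit — inclusion of $0$ and negative grid values, and reduction to an underestimate of $\sigma_{\min}$ — which is appropriate care rather than a new idea.
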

    \begin{proof}
    First note that no entry in $(\S\U^*)^{-1}$ can be greater than $\frac{1}{\sigma_{\min}}$ (since $\sigma_{\min}$ is the smallest singular value of $\S\U^*$, and therefore $\frac{1}{\sigma_{\min}}$ is the largest singular value of $(\S\U^*)^{-1}$.  Thus there is a guess of $\M^i$ such that for each entry $(p,q)$ of $(\S\U^*)^{-1}$ in the range $(\frac{1}{\sigma_{\min}(1/\eps)^4 \kappa^4 k^4}, \frac{1}{\sigma_{\min}})$, we have $\M^i_{p,q} = (\S\U^*)^{-1}_{p,q}(1 \pm \frac{1}{(1/\eps)^4 \kappa^4 k^4}) = (\S\U^*)^{-1} \pm \frac{1}{\sigma_{\min}(1/\eps)^4 \kappa^4 k^4}$. For all other entries less than $\frac{1}{\sigma_{\min}(1/\eps)^4 \kappa^4 k^4}$, we get  $\M^i_{p,q} = (\S\U^*)^{-1}_{p,q} \pm \frac{1}{\sigma_{\min}(1/\eps)^4 \kappa^4 k^4}$ by setting $\M^i_{p,q} = \frac{1}{\sigma_{\min} (1/\eps)^4 \kappa^4 k^4}$ (which is the lowest guess of value which we make for the coordinates of $\M^i$), from which the proposition follows.
    \end{proof}
    
    \noindent

	\subsection{Learning Noisy Halfspaces:} 
By Proposition \ref{prop:guesserror}, we know that for the correct guess of $\M^i$ we can write $\M^i \S \AA =  f(\V^*\X) + (\mathbf{M}^i\S \E) + \ZZ $ where $\ZZ = \mathbf{\Lambda} \S \U^* f(\V^*\X)$. Thus $\M^i \S \AA $ can be thought of as a noisy version of $f(\V^*\X)$. We observe now that our problem can be viewed as the problem of learning a halfspace in $\R^d$ with noisy labels. Specifically, we are obtain examples of the form $\X_{*,q}$ with the label $y_q = \text{Sign}\big(f(\V_{p,*}^*\X_{*:q}) + (\mathbf{M}^i \S \E)_{p,q} +\mathbf{Z}_{p,q} \big) \in \{1,-1\}$, and our goal is to recover $\V_{p,*}^*$ from these labeled examples $\{y_q\}$. Note that if the labeled examples were of the form $\X_{*,q}$ and $ \text{Sign}(\langle \V_{p,*}^*,\X_{*,q} \rangle)$, then this would correspond to the noiseless learning problem for half-spaces. Unfortunately, our problem is not noiseless, as it will often be the case that  $\text{Sign}\big(f(\V_{p,*}\X_{*:q}) + (\mathbf{M}^i \S \E)_{p,q} +\mathbf{Z}_{p,q} \big)  \neq \text{Sign}(\langle \V_{p,*}, \X_{*,q} \rangle)$ (in fact, this will happen very close to half of the time). We will demonstrate, however, that recovery of $\V_{p,*}$ is still possible by showing that there is a non-trivial correlation between the labels $y_q$ and the true sign. To do this, we show the following more general result.

\begin{theorem}\label{thm:modular} 
Given $n$ i.i.d. Gaussian examples $\X \in \R^{d \times n}$ with labels $y_q = \text{Sign}\big( (f(\V\X) + \G +  \mathbf{B})_{p,q}  \big) \in \{1,-1\}$ where $\G$ is an arbitrary fixed matrix independent of $\X$, and $\BB$ is any matrix such that $\|\BB \|_F \leq \frac{\sqrt{n}}{ \omega}$ for any $\omega = o(\sqrt{n})$. Then if $v_{p,*}$ is the solution to the convex program in step $3c$ of Figure \ref{fig:gaussiannoise} run on the inputs examples $\X$ and $\{y_q\}$, then with probability $1-e^{-n^{1/2}/10}$ we have 
\[ \|v_{p,*} - \V_{p,*}\|_2^2 = O\left(\sqrt{\omega} \frac{\|\G\|_F}{\sqrt{n}} \left(\frac{\sqrt{d}}{\sqrt{n}} + \frac{1}{n^{1/4}}+ \frac{ \log(\omega)}{\omega}\right) \right) \]
\end{theorem}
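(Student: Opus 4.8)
\textbf{Proof proposal for Theorem \ref{thm:modular}.}
The plan is to analyze the convex program in step $3c$ of Figure \ref{fig:gaussiannoise} directly: maximizing a linear functional over the unit ball, its optimum is simply the normalized empirical average $\hat w := S/\|S\|_2$ with $S := \sum_{q=1}^n y_q \X_{*,q}$, so it suffices to understand the direction of $S$. Fix the row index $p$, write $v^* := \V_{p,*}$ (unit norm), and decompose each sample as $\X_{*,q} = g_q v^* + \X^\perp_{*,q}$, where $g_q := \langle v^*, \X_{*,q}\rangle \sim \mathcal N(0,1)$ and $\X^\perp_{*,q}$ is the component orthogonal to $v^*$. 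The two are independent, and conditionally on the $g_q$'s the $\X^\perp_{*,q}$ are still i.i.d. $\mathcal N(0,\Pi^\perp)$ with $\Pi^\perp$ the projection onto $v^{*\perp}$. Since $(f(\V\X))_{p,q} = \max\{0,g_q\}$, the label $y_q = \ttx{sign}(\max\{0,g_q\} + c_q)$ with $c_q := \G_{p,q} + \BB_{p,q}$ is a deterministic function of $g_q$ alone (the matrices $\G,\BB$ being fixed); this separation of the informative coordinate $g_q$ from the orthogonal nuisance is the engine of the argument.

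\emph{Signal.} First I would compute $\ex{S}$. Since $\X^\perp_{*,q}$ has mean zero and is independent of $y_q$, $\ex{S} = \big(\sum_q \ex{y_q g_q}\big) v^* =: \beta\, v^*$. A one–dimensional integral using $\ex{g\,\mathbf 1[g > t]} = \varphi(t)$ for $t \ge 0$ (standard normal density) gives $\ex{y_q g_q} = 2\varphi(|c_q|)\,\mathbf 1[c_q < 0]$, so $\beta = 2\sum_{q : c_q < 0}\varphi(|c_q|) \ge 0$. The crux is a \emph{lower bound} on $\beta$. Let $\tau := \|\G_{p,*}\|_2/\sqrt n$, so $\tau \le \|\G\|_F/\sqrt n$. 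Using $\|\BB_{p,*}\|_2 \le \sqrt n/\omega$, at most an $O(1/(\omega^2\tau^2))$ fraction of the columns can have $|\BB_{p,q}| \ge \tau$; on a constant fraction of the rest the entry $\G_{p,q}$, being spread at scale $\tau$, lands in an interval such as $(-2\tau,-\tau)$, which forces $c_q < 0$ and $|c_q| = O(\tau)$ and hence $\varphi(|c_q|) = \Omega(\varphi(O(\tau)))$. This yields $\beta = \Omega(n\,\varphi(O(\tau)))$, which in the only regime where the target bound is non‑vacuous (i.e.\ $\|\G\|_F/\sqrt n$, hence $\tau$, not too large) is $\Omega(n/\tau)$ up to $\mathrm{poly}(\omega,\log\omega)$ factors — so in expectation $S$ points along $v^*$ with a large coefficient. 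This is the \textbf{main obstacle}, and the one place the structure of $\G$ matters beyond $\|\G\|_F$; for the $\G$ produced in Algorithm \ref{fig:gaussiannoise} it holds because its entries are i.i.d.\ Gaussian, and the adversary controlled only in Frobenius norm cannot make $c_q \ge 0$ on too many columns nor shift $|c_q|$ by more than $O(\tau)$ off a $\Theta(1/(\omega^2\tau^2))$ fraction. One must also track the resulting change $\sum_q |\varphi(|c_q|)-\varphi(|\G_{p,q}|)| = O(n\log\omega/\omega)$ that $\BB$ induces on $\beta$, which is what produces the $\log\omega/\omega$ term.

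\emph{Noise and concentration.} Write $S = \big(\sum_q y_q g_q\big) v^* + S^\perp$ with $S^\perp := \sum_q y_q \X^\perp_{*,q}$. The longitudinal part is a sum of $n$ independent terms $y_q g_q$ with mean $\ex{y_q g_q}$ and $O(1)$ sub‑exponential parameters, so Bernstein gives $\big|\sum_q y_q g_q - \beta\big| = O(n^{3/4})$ with probability $1 - e^{-\Omega(\sqrt n)}$. For $S^\perp$ I use independence again: conditioned on $(g_1,\dots,g_n)$, hence on all the signs $y_q$, the $\X^\perp_{*,q}$ are i.i.d.\ $\mathcal N(0,\Pi^\perp)$, so $S^\perp \sim \mathcal N(0,n\Pi^\perp)$ and $\|S^\perp\|_2^2 \sim n\,\chi^2_{d-1}$, whence $\|S^\perp\|_2 = O\big(\max\{\sqrt{nd},\,n^{3/4}\}\big)$ with probability $1 - e^{-\Omega(\sqrt n)}$. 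A union bound over the two events costs at most $e^{-\sqrt n/10}$.

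\emph{Combining.} On this high‑probability event $S = \big(\beta \pm O(n^{3/4})\big)v^* + S^\perp$ with $\|S^\perp\|_2 = O(\max\{\sqrt{nd},n^{3/4}\})$ and $\beta$ bounded below as above, so a routine perturbation estimate for $w\mapsto w/\|w\|_2$ gives $1 - \langle \hat w, v^*\rangle = O\big(\|S^\perp\|_2/\beta + n^{3/4}/\beta + (\Delta\beta\text{ from }\BB)/\beta\big)$. Substituting $\beta = \Omega(n/\tau)$ and $\tau \le \|\G\|_F/\sqrt n$ turns the first two terms into $O\big(\sqrt\omega\,\tfrac{\|\G\|_F}{\sqrt n}(\tfrac{\sqrt d}{\sqrt n} + \tfrac{1}{n^{1/4}})\big)$ and the third into $O\big(\sqrt\omega\,\tfrac{\|\G\|_F}{\sqrt n}\cdot\tfrac{\log\omega}{\omega}\big)$, the $\sqrt\omega$'s absorbing the slack from the $\BB$‑corruption bookkeeping in the signal step. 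Since $\|\hat w - v^*\|_2^2 = 2(1 - \langle\hat w, v^*\rangle)$ and $v_{p,*} = \hat w$, this is exactly the claimed bound; everything besides the signal lower bound is concentration of measure plus the normalization calculation, and I expect those parts to be routine.
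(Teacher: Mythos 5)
Your approach is genuinely different from the paper's: rather than invoking the Plan--Vershynin framework (defining $h(z)$ and a hypothetical $h'(z)$, applying their Lemma 4.1 / Proposition 4.2, and bounding $\sup_z |h-h'|$), you observe that the maximizer of a linear functional over the unit ball is simply $\hat w = S/\|S\|_2$ with $S = \sum_q y_q\X_{*,q}$, split $S$ into the signal direction $v^*$ and the orthogonal nuisance, and do a normalization-perturbation argument. The noise and concentration steps (Bernstein on the longitudinal part, $\chi^2$ on the orthogonal part) are correct, and the target inequality does emerge from $\|S^\perp\|_2/\beta$ and $n^{3/4}/\beta$ once $\beta = \Omega(n/\eta)$ is in hand. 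If it worked as written, this would be a cleaner, more elementary route than the paper's.

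However, there is a genuine gap at exactly the place you flagged: the signal lower bound. You write that ``the entry $\G_{p,q}$, being spread at scale $\tau$, lands in an interval such as $(-2\tau,-\tau)$,'' but the theorem's $\G$ is an \emph{arbitrary fixed} matrix, not a Gaussian. Taking $\G$ with every entry equal to a positive constant $c$ (so $\|\G\|_F/\sqrt n = c\sqrt k$), we get $c_q = c + \BB_{p,q} > 0$ for all but an $O(1/\omega^2 c^2)$ fraction of columns, hence $y_q = +1$ essentially always and $\beta = 0$: the labels carry no information and the theorem's conclusion would fail, yet the RHS can be made strictly less than $4$ by increasing $n$ and $\omega$. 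So a signal lower bound cannot hold for the literal reading of the statement; the theorem only makes sense when the labels include the fresh Gaussian dithering matrix $\G''$ that the algorithm (and the paper's proof) adds before taking signs. The paper sets $\G' = \G + \G''$ with $\G''_{i,j} \sim \mathcal N(0,\eta^2)$ and $\eta = \Theta(\sqrt\omega\,\|\G\|_F/\sqrt n)$, notes that after this smoothing at most an $O(1/\omega)$ fraction of entries of $\G$ can dominate $\G''$ (those are dumped into a new sparse $\BB'$ alongside $\BB$), and then derives $\lambda_q = \Omega(1/\eta)$ by an explicit density calculation for the effective noise $\sim\mathcal N(\mu_q,\eta^2)$ with $|\mu_q|<\eta/10$. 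You observed that ``the structure of $\G$ matters beyond $\|\G\|_F$'' and that some Gaussianity is needed, but you attributed that Gaussianity to the $\G$ in the theorem, when in fact the theorem's $\G$ is the adversarial piece $\M^i\S\E$ and the Gaussian dithering $\G''$ is a separate object. To repair the proposal, you should explicitly add $\G''$ (matching the algorithm), take the expectation of $\beta$ over the randomness of $\G''$ to obtain $\mathbb{E}[\beta]=\Omega(n/\eta)$, and then supply a concentration bound for $\beta$ over $\G''$'s randomness (a sum of $n$ independent bounded terms, so Hoeffding suffices); the rest of your argument can then go through.
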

\noindent
Before we prove the theorem, we first show that our setting fits into this model. Observe that in our setting, $\G = \M^i \S \E$, and $\BB = \ZZ$. Note that the Gaussian matrix added in Step 3a of Algorithm \ref{fig:gaussiannoise} is a component of proof of Theorem \ref{thm:modular}, and different than the $\G$ here. Namely, for Theorem \ref{thm:modular} to work, one must first add Gaussian matrix to *smear out* the fixed noise matrix $\M^i \S \E$. See the proof of Theorem \ref{thm:modular} for further details. The following Proposition formally relates our setting to that of Theorem \ref{thm:modular}.
\begin{proposition}\label{prop:fitsinto}
We have $\|(\mathbf{M}^i \S \E) \|_F =O( \frac{1}{\sigma_{\min}} \sqrt{m} \|\E\|_2)$, and $\|\ZZ\|_F = \|\mathbf{\Lambda} \S \U^* f(\V\X)\|_2 \leq \sqrt{n}\frac{2}{(1/\eps)^4 \kappa^4 k^{2}}$
\end{proposition}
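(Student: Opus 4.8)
The plan is to establish the two inequalities separately; both follow by combining Proposition \ref{prop:guesserror} with elementary operator-norm manipulations and standard Gaussian-matrix concentration. Throughout I work on the high-probability event that $\S$ is a subspace embedding for $\U^*$ (so that the symbols $\sigma_{\min},\sigma_{\max},\kappa$ used in this section really are the singular values and condition number of $\S\U^*$), together with the events $\|\S\|_F = O(\sqrt m)$ (the $O(km)$ entries of $\S$ are $\mathcal{N}(0,1/k)$, so this holds with probability $1-e^{-\Omega(km)}$ by $\chi^2$-concentration) and $\|\X\|_2 = O(\sqrt n)$ (Lemma \ref{prop:gaussiancondition}, using $d < 10 n$). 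Recall from the paragraph preceding Theorem \ref{thm:modular} that on the good guess, Proposition \ref{prop:guesserror} gives $\M^i = (\S\U^*)^{-1}+\mathbf{\Lambda}$ with $\|\mathbf{\Lambda}\|_{\infty} \le \frac{\eps^4}{\sigma_{\min}\kappa^4 k^4}$, and that consequently $\M^i\S\AA = \M^i\S\U^* f(\V^*\X) + \M^i\S\E = f(\V^*\X) + \mathbf{\Lambda}\S\U^* f(\V^*\X) + \M^i\S\E$, i.e. $\ZZ = \mathbf{\Lambda}\S\U^* f(\V^*\X)$.

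For the first bound I would first control $\|\M^i\|_2$. By the triangle inequality $\|\M^i\|_2 \le \|(\S\U^*)^{-1}\|_2 + \|\mathbf{\Lambda}\|_2 = \sigma_{\min}^{-1} + \|\mathbf{\Lambda}\|_2$, and since $\mathbf{\Lambda}$ is a $k \times O(k)$ matrix we have $\|\mathbf{\Lambda}\|_2 \le \|\mathbf{\Lambda}\|_F \le O(k)\,\|\mathbf{\Lambda}\|_{\infty} \le O\big(\tfrac{\eps^4}{\sigma_{\min}\kappa^4 k^3}\big) \le \sigma_{\min}^{-1}$, so $\|\M^i\|_2 = O(\sigma_{\min}^{-1})$. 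The one point that needs care is that we want the \emph{spectral} norm $\|\E\|_2$ on the right-hand side rather than the (weaker) $\|\E\|_F$: so instead of factoring through $\|\S\|_2$, I would use the submultiplicativity $\|\S\E\|_F \le \|\S\|_F\,\|\E\|_2$ and plug in $\|\S\|_F = O(\sqrt m)$. Then $\|\M^i\S\E\|_F \le \|\M^i\|_2\,\|\S\|_F\,\|\E\|_2 = O\big(\tfrac{\sqrt m}{\sigma_{\min}}\|\E\|_2\big)$, which is the stated bound.

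For the second bound I would write $\|\ZZ\|_F = \|\mathbf{\Lambda}\S\U^* f(\V^*\X)\|_F \le \|\mathbf{\Lambda}\|_2\,\|\S\U^*\|_2\,\|f(\V^*\X)\|_F$ and substitute the three factors: $\|\mathbf{\Lambda}\|_2 \le O(k)\|\mathbf{\Lambda}\|_{\infty} \le O\big(\tfrac{\eps^4}{\sigma_{\min}\kappa^4 k^3}\big)$ from the previous paragraph; $\|\S\U^*\|_2 = \sigma_{\max} = \kappa\,\sigma_{\min}$; and $\|f(\V^*\X)\|_F \le \|\V^*\X\|_F \le \|\V^*\|_F\,\|\X\|_2 = O(\sqrt k \cdot \sqrt n) = O(\sqrt{nk})$, using that $\V^*$ has unit-norm rows and the operator-norm bound of Lemma \ref{prop:gaussiancondition}. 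Multiplying, the two copies of $\sigma_{\min}$ cancel and one is left with $\sqrt n$ times a strictly decreasing polynomial in $\kappa$ and $k$; since $\kappa,k \ge 1$ and $\eps < 1$, this is at most $\sqrt n\cdot\frac{2}{(1/\eps)^4\kappa^4 k^2}$, i.e. exactly the claimed inequality (absorbing the absolute constants, e.g. by taking the grid resolution in the enumeration of the $\M^i$ in Algorithm \ref{fig:gaussiannoise} slightly finer if needed).

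I do not expect a real obstacle: this is a packaging statement whose only purpose is to feed $\G = \M^i\S\E$ and $\BB = \ZZ$ into Theorem \ref{thm:modular}, and every estimate above is routine. The two things one must not be sloppy about are (i) using the $\|\S\|_F\|\E\|_2$ form of submultiplicativity so that the first bound genuinely comes out in terms of the spectral norm $\|\E\|_2$ and not $\|\E\|_F$, and (ii) being consistent about which $\sigma_{\min},\kappa$ one means (those of $\S\U^*$, valid on the subspace-embedding event), so that $\|(\S\U^*)^{-1}\|_2 = \sigma_{\min}^{-1}$ and $\|\S\U^*\|_2 = \sigma_{\max} = \kappa\sigma_{\min}$ hold as written; everything else is the operator-norm arithmetic sketched above.
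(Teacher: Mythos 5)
Your proof follows essentially the same route as the paper's: use Proposition~\ref{prop:guesserror} to control $\M^i$, chain operator-norm and Frobenius-norm submultiplicativity so the first bound factors through $\|\E\|_2$ rather than $\|\E\|_F$, and control $\|f(\V^*\X)\|_F$ via the spectral norm of $\X$ from Lemma~\ref{prop:gaussiancondition}. Your use of a single global $\|\mathbf{\Lambda}\|_2$ and $\|\S\|_F$ rather than the paper's row-by-row version of the same bounds is cosmetic.

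However, your final step in the second bound does not actually go through as written, and this is worth noticing because the paper has the same slip. Chaining $\|\mathbf{\Lambda}\|_2 = O\big(\tfrac{\eps^4}{\sigma_{\min}\kappa^4 k^3}\big)$, $\|\S\U^*\|_2 = \kappa\sigma_{\min}$, and $\|f(\V^*\X)\|_F = O(\sqrt{nk})$ gives $\|\ZZ\|_F = O\big(\tfrac{\eps^4\sqrt n}{\kappa^3 k^{5/2}}\big)$, and the inequality you then assert, namely $\tfrac{\eps^4}{\kappa^3 k^{5/2}} \le \tfrac{2\eps^4}{\kappa^4 k^2}$, holds only when $\kappa \le 2\sqrt k$; for poorly conditioned $\U^*$ it fails. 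The paper's proof makes the identical error when it writes $\tfrac{\sigma_{\max}}{\sigma_{\min}(1/\eps)^4\kappa^4 k^3} \le \tfrac{1}{(1/\eps)^4\kappa^4 k^3}$, silently dropping a $\sigma_{\max}/\sigma_{\min} = \kappa$; the correct conclusion from the argument as given is $\|\ZZ\|_F \le \tfrac{2\sqrt n}{(1/\eps)^4\kappa^3 k^2}$. The fix is not to ``absorb absolute constants,'' as you suggest, but to sharpen the grid in Step 2 of Algorithm~\ref{fig:gaussiannoise} by one extra power of $\kappa$ (so Proposition~\ref{prop:guesserror} yields $\|\mathbf{\Lambda}\|_\infty \le \eps^4/(\sigma_{\min}\kappa^5 k^4)$), which changes nothing in the $(\kappa/\eps)^{O(k^2)}$ runtime, or equivalently to loosen $\omega$ in Corollary~\ref{cor:1} by a factor of $\kappa$.
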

\begin{proof}
Since $\S\U^*$ is $\kappa = \sigma_{\max}/\sigma_{\min}$ conditioned (as conditioned on by the success of $\S$ as a subspace embedding for $\U^*$), it follows that for any row $p$, we have $\|((\S\U^*)^{-1})_{p,*} \|_2 \leq \frac{1}{\sigma_{\min}}$. Thus by Proposition \ref{prop:guesserror} we have $\|\M^i_{p,*}\|_2 \leq \frac{1}{\sigma_{\min}} + \frac{1}{\sigma_{\min} (1/\eps)^4 \kappa^4 k^3} \leq \frac{2}{\sigma_{\min}}$, and by Proposition \ref{prop:gaussiancondition}, noting that $\S$ can be written as a i.i.d. matrix of $\mathcal{N}(0,1)$ variables scaled by $\frac{1}{\sqrt{k}}$, we have $\|\M^i_{p,*} \S\|_2 \leq \frac{2}{\sigma_{\min}} \sqrt{\frac{2m}{k}}$. Applying this over all $O(k)$ rows, it follows that $\|\M^i \S \E\|_F =O( \frac{1}{\sigma_{\min}} \sqrt{m} \|\E\|_2)$, where $\|\E\|_2$ is the spectral norm of $\E$.

		For the second, note the bound $\|\mathbf{\Lambda}\|_\infty \leq 1/(\sigma_{\min} (1/\eps)^4 \kappa^4 k^4)$ from Proposition \ref{prop:guesserror} implies that $\|\mathbf{\Lambda}_{p,*}\|_2 \leq  1/(\sigma_{\min}(1/\eps)^4 \kappa^4 k^3)$ (using that $k > c_1/\delta$ where $\delta$ is as in Figure \ref{fig:gaussiannoise}), so $\|\mathbf{\Lambda}_{p,*} \S \U^* \|_2 \leq \frac{\sigma_{\max}}{\sigma_{\min}(1/\eps)^4 \kappa^4 k^3} \leq \frac{1}{(1/\eps)^4 \kappa^4 k^3}$.
		Now by Proposition \ref{prop:gaussiancondition}, we have that the largest singular value of $\X$ is at most $2 \sqrt{n}$ with probability at least $1-2e^{-n/8}$, which we now condition on. Thus $\|\V\X\|_F \leq 2 \sqrt{nk}$, from which it follows $\|f(\V\X)\|_F \leq 2\sqrt{nk}$, giving $\|\ZZ_{p,*}\|_2 \leq 2\sqrt{n}\frac{1}{(1/\eps)^4 \kappa^4 k^{5/2}}$ for every $p \in [k]$, so $\|\ZZ\|_F \leq \sqrt{n}\frac{2}{(1/\eps)^4 \kappa^4 k^{2}}$ as needed. 
		
\end{proof}
By Theorem \ref{thm:modular} and Proposition \ref{prop:fitsinto}, we obtain the following result. 

\begin{corollary}\label{cor:1}
        	Let $i$ be such that $\M^i = (\S\U^*)^{-1} + \mathbf{\Lambda}$, where $\|\mathbf{\Lambda}\|_\infty \leq  1/(\sigma_{\min} (1/\eps)^4 \kappa^4 k^4)$ as in Proposition \ref{prop:guesserror}, and let $\W^i$ be the solution to the convex program as defined in Step $3d$ of the algorithm in Figure \ref{fig:gaussiannoise}. Then with probability $1 - \exp(-\sqrt{n}/20)$, for every row $p \in [k]$ we have 
        	\[ \|\V_{p,*} - \W^i_{p,*}\|_2^2 \leq \frac{\eps \sqrt{m} \|\E\|_2 }{\sigma_{\min} \sqrt{n}}\] 
\end{corollary}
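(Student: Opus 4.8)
The plan is to obtain Corollary \ref{cor:1} by simply instantiating Theorem \ref{thm:modular} with the decomposition of $\M^i \S \AA$ recorded immediately before the corollary, and then plugging in the two quantitative estimates of Proposition \ref{prop:fitsinto}. Concretely, on the good guess $\M^i = (\S\U^*)^{-1} + \mathbf{\Lambda}$ we have $\M^i\S\AA = f(\V^*\X) + (\M^i\S\E) + \ZZ$ with $\ZZ = \mathbf{\Lambda}\S\U^* f(\V^*\X)$, so the labels $y_q = \ttx{sign}\big((\M^i\S\AA + \G)_{p,q}\big)$ used in Step $3c$ of Algorithm \ref{fig:gaussiannoise} are exactly of the form $y_q = \text{Sign}\big((f(\V^*\X) + \G' + \BB)_{p,q}\big)$ with $\G' = \M^i\S\E$ (the Gaussian matrix $\G$ added in Step $3a$ is internal to the proof of Theorem \ref{thm:modular}, not the $\G$ of its statement) and $\BB = \ZZ$. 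Hence the convex program's output $w^p_i = \W^i_{p,*}$ is precisely the vector $v_{p,*}$ that Theorem \ref{thm:modular} controls.

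First I would fix the parameter $\omega$ so that the hypothesis $\|\BB\|_F \le \sqrt n/\omega$ of Theorem \ref{thm:modular} is met: by the second bound of Proposition \ref{prop:fitsinto}, $\|\ZZ\|_F \le \sqrt n\cdot \frac{2}{(1/\eps)^4\kappa^4 k^2}$, so $\omega := \tfrac12 (1/\eps)^4\kappa^4 k^2$ suffices, and since $n = \Omega(\poly(d,m,k,\kappa,1/\eps))$ is a sufficiently large polynomial we have $\omega = o(\sqrt n)$ as required. Applying Theorem \ref{thm:modular} and then the first bound $\|\G'\|_F = \|\M^i\S\E\|_F = O\big(\tfrac{1}{\sigma_{\min}}\sqrt m\,\|\E\|_2\big)$ of Proposition \ref{prop:fitsinto} gives, for each fixed $p\in[k]$ with probability $1-e^{-\sqrt n/10}$,
\[ \|\W^i_{p,*} - \V^*_{p,*}\|_2^2 = O\!\left( \sqrt\omega\;\frac{\sqrt m\,\|\E\|_2}{\sigma_{\min}\sqrt n}\left(\frac{\sqrt d}{\sqrt n} + \frac{1}{n^{1/4}} + \frac{\log\omega}{\omega}\right)\right). \]

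It then remains to check that the factor $\sqrt\omega\big(\tfrac{\sqrt d}{\sqrt n} + n^{-1/4} + \tfrac{\log\omega}{\omega}\big)$ is $O(\eps)$. Since $\sqrt\omega = \Theta((1/\eps)^2\kappa^2 k)$, the term $\sqrt\omega\cdot\tfrac{\log\omega}{\omega} = \Theta\!\big(\tfrac{\log((1/\eps)\kappa k)}{(1/\eps)^2\kappa^2 k}\big)$ is already $O(\eps)$ using $\log x = O(x)$, while $\sqrt\omega\,\tfrac{\sqrt d}{\sqrt n}$ and $\sqrt\omega\,n^{-1/4}$ are $O(\eps)$ once $n$ is taken to be a sufficiently large fixed polynomial in $d,m,k,\kappa,1/\eps$ — exactly the sample-complexity assumption of this section. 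Thus $\|\W^i_{p,*} - \V^*_{p,*}\|_2^2 = O(\eps)\cdot\frac{\sqrt m\,\|\E\|_2}{\sigma_{\min}\sqrt n}$, and rescaling $\eps$ by the hidden constant yields the stated bound $\frac{\eps\sqrt m\,\|\E\|_2}{\sigma_{\min}\sqrt n}$. Finally I would union-bound the per-row failure event over the $k$ rows (and absorb the earlier conditioning events, that $\S$ is a subspace embedding and $\|\X\|_2 \le 2\sqrt n$, which hold with probability $1-e^{-\Omega(n)}$), using $k\,e^{-\sqrt n/10}\le e^{-\sqrt n/20}$ for $n$ large, to get the claimed probability $1-\exp(-\sqrt n/20)$. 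The only real work is this last piece of parameter bookkeeping; there is no new probabilistic or geometric argument, as all of that is encapsulated in Theorem \ref{thm:modular}, which I am assuming, so the expected main obstacle is merely ensuring the chosen polynomial sample complexity simultaneously drives all three error terms below the target threshold.
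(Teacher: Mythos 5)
Your proposal follows exactly the same route as the paper's proof: instantiate Theorem \ref{thm:modular} with $\G = \M^i\S\E$ and $\BB = \ZZ$, read off $\omega = \Theta\big(\eps^{-4}\kappa^4 k^2\big)$ and $\|\G\|_F = O(\sigma_{\min}^{-1}\sqrt{m}\,\|\E\|_2)$ from Proposition \ref{prop:fitsinto}, verify the bracketed error factor is $O(\eps)$ under the polynomial sample-complexity assumption, and union-bound over the $k$ rows. The parameter bookkeeping (including the $\sqrt{\omega}\log\omega/\omega = O(\eps)$ check, which the paper leaves implicit) is correct, so this is essentially the paper's argument with the constants tracked slightly more explicitly.
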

\begin{proof}
By Proposition \ref{prop:fitsinto} we can apply Theorem \ref{thm:modular} with $\omega = \eps^{-4} \kappa^4 k^2$ and $\|\G\|_F = O(\frac{1}{\sigma_{\min}} \sqrt{m} \|\E\|_2)$, we obtain the stated result for a single row $p$ with probability at least $1 - e^{-\sqrt{n}/10}$ after taking $n = \poly(\kappa,d)$ sufficiently large. Union bounding over all $k$ rows gives the desired result.  

\end{proof}

\paragraph*{Proof of Theorem \ref{thm:modular}}  
 To prove the theorem, we will use techniques from \cite{plan2013robust}.  Let $v \in \R^d$ be fixed with $\|v\|_2 =1 $, and let $\X \in \R^{d \times n}$ be a matrix of i.i.d. Gaussian $\mathcal{N}(0,1)$ variables. Let $y_q$ be a noisy observation of the value $\sgn{\langle v , \X_{*,q} \rangle}$, such that the $y_q$'s are independent for different $q$. 
 We say that the $y_q$'s are \textit{symmetric} if  $\ex{y_q \; | \; \X_{*,q} } = \theta_q(\langle v, \X_{*,q}\rangle)$ for each $q \in [n]$. In other words, the expectation of the noisy label $y_q$ given the value of the sample $\X_{*,q}$ depends only on the value of the inner product $\langle v, \X_{*,q}\rangle$.
We consider now the following requirement relating to the correlation between $y_q$ and $\sgn{\langle v, \X_{*,q} \rangle}$. 
\begin{equation}\label{eqn:assump1}
     \mathbb{E}_{g \sim \mathcal{N}(0,1)} \Big[ \theta_q(g)g \Big] = \lambda_q \geq 0
\end{equation}
Note that the Gaussian $g$ in Equation \ref{eqn:assump1} can be replaced with the identically distributed variable $\langle v, X_{*,q}\rangle$. In this case, Equation \ref{eqn:assump1} simply asserts that there is indeed some correlation between the observed labels $y_q$ and the ground truth $\sgn{\langle v , \X_{*,q} \rangle}$. When this is the case, the following convex program is proposed in \cite{plan2013robust} for recovery of $v$
\begin{equation}\label{eqn:opt}
 \max_{w, \: \|w\|_2 \leq 1 }\sum_{q=1}^n y_q \langle w , X_{*,q} \rangle 
\end{equation}

We remark that we must generalize the results of \cite{plan2013robust} here in order to account for $\theta_q$ depending on $q$. Namely, since $\E$ is not identically distribution, we must demonstrate bounds on the solution to the above convex program for the range of parameters $\{\lambda_q\}_{q \in [n]}$.

Now fix a row $p \in [k]$ and let $v = \V_{p,*}^*$. We will write $\G' = \G + \G''$, where $\G''$ is an i.i.d. Gaussian matrix distributed $(\G'')_{i,j} \sim \mathcal{N}(0,\eta^2 )$ for all $i,j \in [k] \times [n]$, where $\eta = 100 \sqrt{\omega} \|\G\|_F/\sqrt{n}$. For technical reasons, we replace the matrix $\G$ with $\G'$ be generating and adding $\G''$ to our matrix $(f(\V\X) + \G + \BB)$. 
Then the setting of Theorem \ref{thm:modular}, we have $y_q = \text{Sign}( (f(\V^*\X) + \G' + \BB)_{p,q} )$. Note that by the definition of $\eta$, at most $\frac{n}{100\omega}$ entries in $\G$ can be larger than $\eta/10 = 10 \sqrt{\omega} \|\G\|_F/\sqrt{n}$. Let $\BB'$ be the matrix of entries of $\G$ which do not satisfy this, so we instead write $y_q = \text{Sign}( (f(\V^*\X) + \G' + \BB' + \BB)_{p,q} )$, where $\G' = \G + \G'' - \BB'$. Thus $\G_{p,q}' \sim \mathcal{N}(\mu_{p,q}, \eta^2)$ where $\mu_{p,q} <  10 \sqrt{\omega} \|\G\|_F/\sqrt{n} = \eta/10$. Note that $\BB'$ is $\frac{n}{100\omega}$ sparse, as just argued.

Note that the above model does not fully align with the aforementioned model, because $\BB$ is an arbitrary matrix that can depend potentially on $f(\V^*\X)$, and not just $\langle \V_{p,*}^*, \X_{*,q} \rangle$. So instead, suppose hypothetically that in the place of $y_q$ we were given the labels $y_q' =  \text{Sign}( (f(\V^*\X) + \G')_{p,q} )$, which indeed satisfies the above model. Note that we have also removed $\BB'$ from the definition of $y_q'$, since we will handle it at the same time as we handle $\BB$. In this case we can write $\ex{y_q' \; | \; \X_{*,q} } = \mathbb{E}_g [ \text{sign}(f( \langle \X_{*,q}, \V_{p,*}^* \rangle) + g_{p,q} )\; \big| \;\langle \X_{*,q}, \V_{p,*} \rangle ]$ where $g_{p,q} \sim \mathcal{N}(\G_{p,q} - \BB_{p,q}', \eta^2)$ is a Gaussian independent of $X$.

Proposition \ref{prop:lambda} gives the corresponding value of $\lambda$ for this model. 

\begin{proposition}\label{prop:lambda}
The function $\theta_q$ as defined by the hypothetical labels $y_q'$ satisfies Equation \ref{eqn:assump1} with $\lambda_q \geq \frac{c}{\eta}$ for some constant $c >0$, where $\eta = 100 \sqrt{\omega} \|\G\|_F/\sqrt{n}$.
\end{proposition}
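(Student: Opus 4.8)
The plan is to compute $\lambda_q$ essentially exactly and then bound the resulting one‑dimensional Gaussian integral from below. Since we work under the normalization $\|\V^*_{p,*}\|_2 = 1$, the direction statistic $t := \langle \V^*_{p,*}, \X_{*,q}\rangle$ is a standard Gaussian, so by the remark following Equation \ref{eqn:assump1} we may write $\lambda_q = \ex{\theta_q(t)\,t}$ with $t\sim\mathcal N(0,1)$. Unfolding the definition of $\theta_q$ gives $\lambda_q = \ex{\text{sign}(f(t)+g)\,t}$, the expectation over the pair $(t,g)$ where $g := \G'_{p,q}\sim\mathcal N(\mu_{p,q},\eta^2)$ with $|\mu_{p,q}|\le \eta/10$, and crucially $g$ is \emph{independent} of $t$ (this uses that $\G$ is independent of $\X$ and that $\G''$ is freshly sampled).

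The main step is to evaluate this by conditioning on $g$. If $g>0$ then $f(t)+g>0$ for \emph{every} $t\in\R$ (because $f(\cdot)\ge 0$), so the conditional label is the constant $+1$ and contributes $\ex{t}=0$. If $g=-a$ with $a>0$, then $f(t)-a<0$ for $t<a$ while $f(t)-a=t-a>0$ for $t>a$, so $\text{sign}(f(t)-a)$ equals $-1$ on $(-\infty,a)$ and $+1$ on $(a,\infty)$; its conditional expectation against $t$ is $\int_a^\infty t\phi(t)\,dt-\int_{-\infty}^{a} t\phi(t)\,dt = 2\phi(a)$, where $\phi$ denotes the standard normal density. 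Combining the two cases yields the clean identity $\lambda_q = 2\,\ex{\phi(-g)\,\mathbf{1}\{g<0\}}$, which is manifestly nonnegative (so the $\lambda_q\ge 0$ requirement of Equation \ref{eqn:assump1} comes for free).

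It then remains to lower bound $\lambda_q = 2\int_0^\infty \phi(u)\,\tfrac1\eta\,\phi\!\big(\tfrac{u+\mu_{p,q}}{\eta}\big)\,du$ by $c/\eta$. I would simply restrict the integral to $u\in[0,\min\{1,\eta\}]$: on this range $\phi(u)\ge\phi(1)$, and $\big|\tfrac{u+\mu_{p,q}}{\eta}\big|\le 1+\tfrac1{10}$ (using $|\mu_{p,q}|\le\eta/10$), so $\tfrac1\eta\phi\!\big(\tfrac{u+\mu_{p,q}}{\eta}\big)\ge \tfrac1\eta\phi(1.1)$. Hence $\lambda_q \ge 2\phi(1)\phi(1.1)\min\{1,\eta\}/\eta$, which is $\ge c/\eta$ with $c:=2\phi(1)\phi(1.1)>0$ whenever $\eta\ge 1$ — the regime relevant to Theorem \ref{thm:modular}; when $\eta\le 1$ the same computation gives $\lambda_q=\Omega(1)$, which is stronger than what is needed downstream.

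The one real subtlety — the "main obstacle" — is ensuring that the small bias $\mu_{p,q}$ together with the $t\le 0$ portion of the domain does not spoil positivity or the constant. A direct approach that expresses $\theta_q$ through the Gaussian CDF leaves a $\big(2\Phi(\mu_{p,q}/\eta)-1\big)$‑type term that one must show cancels against the $\{t\le 0\}$ contribution; the conditioning‑on‑$g$ identity above avoids this entirely, which is why I would organize the proof that way. Everything else is a routine Gaussian estimate, and the independence of $\G$ and $\G''$ from $\X$ (already in the hypotheses and construction) is precisely what licenses the conditioning.
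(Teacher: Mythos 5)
Your proof is correct, and it takes a genuinely different route from the paper's while landing on the same integral. The paper's proof computes $\lambda_q = \ex{\theta_q(t)\,t}$ via Stein's lemma, writing $\lambda_q = \ex{\theta_q'(t)}$ and then asserting $\theta_q'(t) = 2 p_q(-f(t))$. You instead condition on the independent variable $g$ and observe that the conditional contribution of $\{g>0\}$ vanishes (the label is constantly $+1$) while the contribution of $\{g=-a<0\}$ is exactly $2\phi(a)$, giving $\lambda_q = 2\,\ex{\phi(-g)\mathbf{1}\{g<0\}} = 2\int_0^\infty p_q(-u)\phi(u)\,du$. Your route is cleaner and in fact more careful than the paper's: differentiating $\theta_q(t) = 1 - 2\Pr[g\le -f(t)]$ produces the chain-rule factor $f'(t) = \mathbf{1}\{t>0\}$, which the paper silently drops. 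That omission inserts a spurious contribution $\int_{-\infty}^0 2p_q(0)\phi(t)\,dt = p_q(0) = \Theta(1/\eta)$ from the half-line $\{t<0\}$ into the paper's expression $\ex{2p_q(-f(t))}$; your conditioning argument never passes through the derivative and yields the correct formula directly, with the bonus that $\lambda_q\ge 0$ is manifest.

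One substantive caveat to flag. Your lower bound is honest where the paper's sketch is not: restricting to $u\in[0,\min\{1,\eta\}]$ gives $\lambda_q \ge 2\phi(1)\phi(1.1)\cdot\min\{1,\eta\}/\eta$, which equals $c/\eta$ only when $\eta\ge 1$ and degrades to $\Omega(1)$ when $\eta<1$. That degradation is the \emph{true} behavior (as $\|\G\|_F\to 0$ the labels become essentially noiseless and $\lambda_q$ saturates at a constant; it does not diverge like $1/\eta$). The paper's stated $\Omega(1/\eta)$ for all $\eta$ is only achieved through the spurious $\{t<0\}$ term just described. However, your final parenthetical --- that $\Omega(1)$ is ``stronger than what is needed downstream'' when $\eta<1$ --- should be pushed back on: for $\eta<1$ the bound $\lambda_q=\Omega(1)$ is strictly \emph{weaker} than $\lambda_q=\Omega(1/\eta)$, and Lemma \ref{lem:modular}'s substitution $1/\lambda = O(\eta)$ would then not be justified as written. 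In practice this only occurs when $\G$ is tiny (a benign regime), but it is a genuine imprecision in the proposition that your careful derivation exposes rather than hides, and deserves to be acknowledged explicitly rather than dismissed.
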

\begin{proof}
We can write $\ex{y_q' \; | \; \X_{*,q} } = \mathbb{E} [ \text{sign}(f( \langle \X_{*,q}, \V_{p,*} \rangle) + g_{p,q} )\; \big| \;\langle \X_{*,q}, \V_{p,*} \rangle ]$ where $g_{p,q} \sim \mathcal{N}(\G_{p,q} - \BB_{p,q}, \eta^2)$. Let $\mu_q =\G_{p,q} - \BB_{p,q}$ (for a fixed row $p$.  Then $\theta(z) = 1 - 2\pr{g \leq - f(z)}$, and Equation \ref{eqn:assump1} can be evaluated by integration by parts. Let $p_q(z) = \frac{1}{\sqrt{2\pi \eta^2}}e^{-\frac{(z - \mu_q)^2}{2\eta^2}}$ is the p.d.f. of $g_{p,q}$. Note by the prior paragraphs we have $\eta^2 > 10\mu_q^2$ for all $q$. Then we have 
\begin{equation*}
    \begin{split}
        \lambda & = \ex{\theta'(g)} =\ex{2p(-f(z))} \\
        & =  \mathbb{E}_{z \sim \mathcal{N}(0,1)} \Big[ \sqrt{\frac{2}{\pi (\eta^2)}} e^{-(f(z)+\mu_q)^2/(2 (\eta^2)) } \Big] \\
            & =  \sqrt{\frac{2}{\pi (\eta^2)}}  \mathbb{E}_{z \sim \mathcal{N}(0,1)} \Big[ e^{-\frac{f(z)^2+2\mu_q f(z) + \mu_q^2}{2 \eta^2} } \Big] \\
            & =  \Omega(\frac{1}{\eta})
    \end{split}
\end{equation*}

  \end{proof}\noindent
Now for any $ z\in \R^d$ with $\|z\|_2 \leq 1$, let $h(z) =\frac{1}{n}\sum_{q=1}^n y_q\langle z , X_{*,q} \rangle$, and let $h'(z) =\frac{1}{n}\sum_{q=1}^n y_q'\langle z , X_{*,q} \rangle$.  Observe that the hypothetical function $h'$ corresponds to the objective function of Equation \ref{eqn:opt} with values of $y_q'$ which satisfy the model of \ref{eqn:assump1}, whereas $h$, corresponding to the labels $y_q$ which we actually observe, does not. Let $B^d_2 = \{ x \in \R^d \: | \: \|x\|_2 \leq 1 \}$ and let $\mathcal{B}^d_2 = B-B= \{ x-z \: | \: x,y \in B\}$ be the Minkowski difference. The following follows immediately from \cite{plan2013robust}.
\begin{lemma}[Lemma 4.1 \cite{plan2013robust}]\label{propexp}
For any $z \in B^d_2$, we have $\ex{h'(z)} = \frac{1}{n}\sum_{q=1}^n \lambda_q \langle z, \V_{p,*}\rangle$ and thus because $h'$ is a linear function, we have 
\[\ex{h'(\V_{p,*}) -h'(z) } = \ex{h'(\V_{p,*} - z) }= \frac{1}{n}\sum_{q=1}^n\lambda_q(1 - \langle \V_{p,*},z\rangle) \geq \frac{1}{n}\sum_{q=1}^n \frac{\lambda_q}{2} \|\V_{p,*} - z\|_2^2\]
\end{lemma}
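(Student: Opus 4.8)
The plan is to compute $\ex{h'(z)}$ directly by conditioning column-by-column and then exploiting rotational invariance of the standard Gaussian. First I would write $\ex{h'(z)} = \frac1n\sum_{q=1}^n \ex{y_q'\langle z,\X_{*,q}\rangle}$ and, for each $q$, condition on $\X_{*,q}$. By construction of the hypothetical labels $y_q'$ and the symmetry discussion preceding Proposition \ref{prop:lambda}, we have $\ex{y_q'\mid\X_{*,q}} = \theta_q(\langle\V_{p,*},\X_{*,q}\rangle)$, so $\ex{y_q'\langle z,\X_{*,q}\rangle\mid\X_{*,q}} = \langle z,\X_{*,q}\rangle\,\theta_q(\langle\V_{p,*},\X_{*,q}\rangle)$.

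Next I would decompose $\X_{*,q} = g_q\V_{p,*} + w_q$, where $g_q = \langle\V_{p,*},\X_{*,q}\rangle\sim\mathcal N(0,1)$ (using $\|\V_{p,*}\|_2=1$) and $w_q$ is the projection of $\X_{*,q}$ onto the orthogonal complement of $\V_{p,*}$; by rotational invariance of $\mathcal N(0,\mathbb I_d)$, $w_q$ is mean zero and independent of $g_q$. Then $\langle z,\X_{*,q}\rangle = \langle z,\V_{p,*}\rangle g_q + \langle z,w_q\rangle$, and taking expectations the cross term drops out since $\ex{\langle z,w_q\rangle}=0$ and $w_q$ is independent of $g_q$ (hence of $\theta_q(g_q)$). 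This leaves $\ex{y_q'\langle z,\X_{*,q}\rangle} = \langle z,\V_{p,*}\rangle\,\ex{g_q\theta_q(g_q)} = \lambda_q\langle z,\V_{p,*}\rangle$ by Equation \ref{eqn:assump1}. Summing over $q$ gives the first identity $\ex{h'(z)} = \tfrac1n\sum_q\lambda_q\langle z,\V_{p,*}\rangle$.

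For the second part I would use that $z\mapsto h'(z)$ is linear, so $h'(\V_{p,*}) - h'(z) = h'(\V_{p,*}-z)$ holds deterministically; applying the identity just proved with $\V_{p,*}-z$ in place of $z$ and using $\|\V_{p,*}\|_2^2=1$ yields $\ex{h'(\V_{p,*}-z)} = \tfrac1n\sum_q\lambda_q\langle\V_{p,*}-z,\V_{p,*}\rangle = \tfrac1n\sum_q\lambda_q(1-\langle\V_{p,*},z\rangle)$. Finally, since $\|z\|_2\le 1$ we have $\|\V_{p,*}-z\|_2^2 = 1 - 2\langle\V_{p,*},z\rangle + \|z\|_2^2 \le 2(1-\langle\V_{p,*},z\rangle)$, and as each $\lambda_q\ge 0$ by Proposition \ref{prop:lambda}, multiplying through and averaging gives $\tfrac1n\sum_q\lambda_q(1-\langle\V_{p,*},z\rangle)\ge\tfrac1n\sum_q\tfrac{\lambda_q}{2}\|\V_{p,*}-z\|_2^2$.

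I do not expect any real obstacle: this is the standard Plan--Vershynin computation, with the only modification being that $\theta_q$ (hence $\lambda_q$) now depends on $q$, which costs nothing since all steps are linear in $q$ and we simply carry the average $\tfrac1n\sum_q\lambda_q$ through. The one point requiring a little care is the very first conditioning step --- one must use the \emph{hypothetical} labels $y_q'$ (for which the symmetric model $\ex{y_q'\mid\X_{*,q}} = \theta_q(\langle\V_{p,*},\X_{*,q}\rangle)$ holds), rather than the observed labels $y_q$, whose dependence on the adversarial matrix $\BB$ would break the identity; this is precisely why $y_q'$ was introduced.
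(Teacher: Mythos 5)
Your proof is correct and is exactly the standard Plan--Vershynin argument (which the paper cites rather than reproves): decompose $\X_{*,q}$ into its component along $\V_{p,*}$ and an independent orthogonal piece, use symmetry so the orthogonal piece integrates to zero, read off $\lambda_q$ from Equation~\ref{eqn:assump1}, and then finish by linearity of $h'$ together with the elementary bound $1-\langle\V_{p,*},z\rangle\ge\tfrac12\|\V_{p,*}-z\|_2^2$ for $\|z\|_2\le 1$. The only modification relative to \cite{plan2013robust} is carrying the $q$-dependent $\lambda_q$ through the sum, which, as you note, costs nothing because everything is linear in $q$; you also correctly flag that the computation must be run on the hypothetical labels $y_q'$ rather than the observed $y_q$, which is precisely the point the paper makes in the surrounding discussion.
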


We now cite  Proposition 4.2 of \cite{plan2013robust}. We remark that while the proposition is stated for the concentration of the value of $h'(z)$ around its expectation when the $\lambda_q$ are are all uniformly the same $\lambda_q = \lambda$, we observe that this fact has no bearing on the proof of Proposition \ref{prop:4.2} below. This is because only the $y_q \in \{1,-1\}$ depend on the $\lambda_q$'s, and the concentration result of Proposition \ref{prop:4.2}, in fact, holds for \textit{any} possible values of the $y_q$'s. Thus one could replace $h'(z)$ below with any function of the form $\hat{h}(z) = \frac{1}{n}\sum_{q=1}^n y_q \langle z,g_q\rangle$ for any values of $y_q \in \{1,-1\}$, and the following concentration result would hold as long as $\{g_q\}_{q \in [n]}$'s is a collection of independent $\mathcal{N}(0,\mathbb{I}_d)$ variables. 
\begin{proposition}[Proposition 4.2 \cite{plan2013robust}]\label{prop:4.2}
For each $t > 0$, we have
\[ \bpr{ \sup_{z \in \mathcal{B}^d_2} \big|h'(z) - \ex{h'(z)}     \big| \geq \frac{4\sqrt{d}}{\sqrt{n}} + t } \leq 4 \exp(-\frac{-n t^2}{8})  \]
\end{proposition}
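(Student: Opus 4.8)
The plan is to exploit the fact that $h'$ is a \emph{linear} function of $z$ in order to collapse the supremum over $\mathcal{B}_2^d$ to the Euclidean norm of a single random vector, and then to establish a dimension-free concentration bound for that norm about its mean. Concretely, set $v := \frac1n\sum_{q=1}^n\big(y_q X_{*,q} - \ex{y_q X_{*,q}}\big)\in\R^d$, where $y_q$ are the labels appearing in $h'$, so that $h'(z)-\ex{h'(z)} = \langle z,v\rangle$ for every $z$. Since $\mathcal{B}_2^d$ is exactly the Euclidean ball of radius $2$, we get $\sup_{z\in\mathcal{B}_2^d}\big|h'(z)-\ex{h'(z)}\big| = 2\|v\|_2$, so it suffices to prove $\pr{\|v\|_2 \ge \tfrac{2\sqrt{d}}{\sqrt{n}} + \tfrac t2} \le 4e^{-nt^2/8}$.

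Next I would bound the mean. The vectors $u_q := \frac1n\big(y_q X_{*,q} - \ex{y_qX_{*,q}}\big)$ are independent across $q$ (the columns of $\X$, together with the independent extra randomness that generates the labels, are independent across $q$) and mean zero, hence the cross terms vanish and $\ex{\|v\|_2^2} = \sum_q \ex{\|u_q\|_2^2} \le \frac1{n^2}\sum_q \ex{\|X_{*,q}\|_2^2} = \frac dn$, using $|y_q| = 1$. Therefore $\ex{\|v\|_2}\le \sqrt{d/n} \le \frac{\sqrt d}{\sqrt n}$, and it remains only to control the deviation $\|v\|_2 - \ex{\|v\|_2}$.

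The crux is this concentration step. Write $\|v\|_2 = \sup_{w\in S^{d-1}}\langle w,v\rangle$ and view $w\mapsto \langle w,v\rangle = \frac1n\sum_q\big(y_q\langle w, X_{*,q}\rangle - \ex{y_q\langle w, X_{*,q}\rangle}\big)$ as a stochastic process. For $w,w'\in S^{d-1}$ the increment $\langle w-w',v\rangle$ is a sum of $n$ independent mean-zero terms, each bounded in absolute value by $|\langle w-w',X_{*,q}\rangle|$, which is $\mathcal{N}(0,\|w-w'\|_2^2)$; hence the increment is sub-Gaussian with variance proxy $O(\|w-w'\|_2^2/n)$, i.e. the process has sub-Gaussian increments for the metric $C\|w-w'\|_2/\sqrt n$. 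Applying the concentration inequality for suprema of sub-Gaussian processes (a Borell--TIS / Talagrand-type bound, as carried out in \cite{plan2013robust}), $\|v\|_2$ concentrates around $\ex{\|v\|_2}$ with a \emph{dimension-free} sub-Gaussian tail whose proxy is $\sup_{w\in S^{d-1}}(\text{variance proxy of }\langle w,v\rangle) = O(1/n)$. Tracking the absolute constants exactly as in \cite{plan2013robust} yields $\pr{\|v\|_2 \ge \ex{\|v\|_2} + \tfrac{\sqrt d}{\sqrt n} + \tfrac t2}\le 4e^{-nt^2/8}$, which together with $\ex{\|v\|_2}\le \sqrt d/\sqrt n$ and the reduction above is exactly the claim.

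The main obstacle, then, is precisely the passage from the trivial pointwise bound $\pr{\langle w,v\rangle\ge s}\le 2e^{-\Omega(ns^2)}$ to the uniform statement while keeping the deviation term free of $d$: a naive $\epsilon$-net union bound leaks a factor $\sqrt d$ into the exponent (the net size $5^d$ is too large for the modest $\tfrac{4\sqrt d}{\sqrt n}$ budget to absorb), so one must instead use a chaining/entropy argument to obtain the stated constants $4$ and $1/8$. I would also record, as in the surrounding discussion, that nothing in this argument uses any structure of the labels beyond $y_q\in\{-1,1\}$, so the bound holds verbatim for an arbitrary fixed sign vector $y$.
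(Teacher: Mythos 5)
The paper does not actually prove this proposition; it is cited verbatim from \cite{plan2013robust}, accompanied only by a remark that the conclusion is insensitive to the specific form of the correlation constants $\lambda_q$. Your opening steps are correct: linearity of $h'$ collapses $\sup_{z\in\mathcal{B}_2^d}|h'(z)-\ex{h'(z)}|$ to $2\|v\|_2$ with $v = \frac1n\sum_q(y'_q\X_{*,q} - \ex{y'_q\X_{*,q}})$, and independence and mean-zeroness of the $u_q$ together with $|y'_q|=1$ give $\ex{\|v\|_2}\le\sqrt{\ex{\|v\|_2^2}}\le\sqrt{d/n}$.

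The concentration step, however, has a genuine gap. You show sub-Gaussian increments $\langle w-w',v\rangle$ with variance proxy $O(\|w-w'\|_2^2/n)$ and then appeal to a ``Borell--TIS / Talagrand-type bound'' to get a dimension-free deviation tail for the supremum. But Borell--TIS is a statement about \emph{Gaussian} processes: it needs the quantity to be a Lipschitz functional of an underlying Gaussian, and $\|v\|_2$ is not a Lipschitz function of $\X$, because the map $\X\mapsto v$ passes through the discontinuous sign function $y'_q = \mathrm{sign}(\cdots)$. Generic chaining for sub-Gaussian processes, the other tool you gesture at, bounds the \emph{expected} supremum, not the tails of the supremum around its mean; neither tool, as invoked, yields $\pr{\|v\|_2 \ge \ex{\|v\|_2} + s}\le\exp(-\Omega(ns^2))$. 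The proof in \cite{plan2013robust} resolves this with an orthogonal decomposition your argument never uses. Write $u=\V^*_{p,*}$ and $\X_{*,q} = \alpha_q u + \beta_q$ with $\alpha_q = \langle\X_{*,q},u\rangle$ and $\beta_q\perp u$; the essential structural fact is that $y'_q$ is a function of $\alpha_q$ and noise independent of $\X$, hence is independent of $\beta_q$. Then $v = s\,u + w$ where $s = \frac1n\sum_q(y'_q\alpha_q - \ex{y'_q\alpha_q})$ is a one-dimensional sub-Gaussian sum, and, \emph{conditionally on the $y'_q$'s}, $w = \frac1n\sum_q y'_q\beta_q$ is a genuine Gaussian $\mathcal{N}(0,\tfrac1n\Pi_{u^\perp})$, since a sign flip leaves the centered Gaussian $\beta_q$ invariant. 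Borell--TIS applies to the $1$-Lipschitz functional $\|w\|_2$, a scalar Chernoff bound handles $|s|$, and combining the two events gives the leading constant $4$ and the exponent $nt^2/8$. Your closing remark that ``nothing uses structure beyond $y_q\in\{-1,1\}$'' is therefore too loose: for a deterministic sign vector the claim is immediate since $v\sim\mathcal{N}(0,\tfrac1n\mathbb{I}_d)$, but the $y'_q$ here are random and correlated with $\X_{*,q}$, and the proof depends precisely on that correlation being mediated by a single linear projection plus independent noise.
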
\noindent 
We now demonstrate how to utilize these technical results in our setting. First, however, we must bound $\sup_{z \in B} |h'(z) - h(z)|$, since in actuality we will need bounds on the value of $h(z)$. We first introduce a bound on the expected number of flips between the signs $y_{p,*}$ and $y_{p,*}'$.

\begin{proposition}\label{prop:expbound}
Let $T = \{q \in [n] \; | \; y_q \neq y_q'\}$. Then with probability $1-e^{-10\sqrt{n}}$, we have $|T| \leq 11\frac{n}{\omega}$. 

\end{proposition}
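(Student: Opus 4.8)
The plan is to charge each ``flip'' index $q \in T$ to one of three sets whose sizes can be bounded either deterministically or with overwhelming probability. First I would unwind the definitions from the proof of Theorem~\ref{thm:modular}: $y_q' = \text{Sign}\big((f(\V^*\X)+\G')_{p,q}\big)$ and $y_q = \text{Sign}\big((f(\V^*\X)+\G'+\BB'+\BB)_{p,q}\big)$, where $\G' = \G + \G'' - \BB'$, the matrix $\G''$ has i.i.d.\ $\mathcal N(0,\eta^2)$ entries with $\eta = 100\sqrt{\omega}\,\|\G\|_F/\sqrt n$, and $\BB'$ is the restriction of $\G$ to the entries of magnitude $> \eta/10$, which (as noted just before the Proposition) is $\tfrac{n}{100\omega}$-sparse. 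Writing $\zeta_q := (f(\V^*\X)+\G')_{p,q}$, we have $y_q' = \text{Sign}(\zeta_q)$ and $y_q = \text{Sign}\big(\zeta_q + (\BB'+\BB)_{p,q}\big)$, so a sign flip means $0$ lies between $\zeta_q$ and $\zeta_q+(\BB'+\BB)_{p,q}$; hence $q\in T$ forces $|\zeta_q| \le |(\BB'+\BB)_{p,q}|$.

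Next I would fix the threshold $\tau := \eta/(2\omega)$ and write $T \subseteq T_0 \cup T_1 \cup T_2$ with $T_0 := \{q : \BB'_{p,q}\neq 0\}$, $T_1 := \{q\notin T_0 : |\BB_{p,q}| > \tau\}$, and $T_2 := \{q : |\zeta_q| \le \tau\}$; this is a cover, since for $q\in T$ with $\BB'_{p,q}=0$ we have $|\zeta_q|\le|\BB_{p,q}|$, which is either $>\tau$ (so $q\in T_1$) or $\le\tau$ (so $q\in T_2$). Sparsity of $\BB'$ gives $|T_0| \le \tfrac{n}{100\omega}$, and Markov's inequality on the entries of row $p$ of $\BB$, together with $\|\BB_{p,*}\|_2 \le \|\BB\|_F \le \sqrt n/\omega$, gives $|T_1| \le \|\BB_{p,*}\|_2^2 / \tau^2 \le 4n/\eta^2 \le n/\omega$, using $\eta^2 \ge 4\omega$. (This last inequality, i.e.\ $\|\G\|_F = \Omega(\sqrt n)$, is the only regime restriction the argument needs; if $\|\G\|_F$ were smaller one may simply take $\eta := \max\{100\sqrt\omega\,\|\G\|_F/\sqrt n,\ 2\sqrt\omega\}$, which affects no later step.)

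For $T_2$ I would argue probabilistically. Given the $q$-th column $\X_{*,q}$, the quantity $\zeta_q$ equals a constant plus $\G''_{p,q}\sim\mathcal N(0,\eta^2)$, hence is Gaussian with variance $\eta^2$ and so has density at most $1/(\sqrt{2\pi}\,\eta)$ everywhere; therefore $\pr{|\zeta_q|\le\tau \mid \X_{*,q}} \le \tfrac{2\tau}{\sqrt{2\pi}\,\eta} = \tfrac{1}{\sqrt{2\pi}\,\omega} < \tfrac{1}{2\omega}$, uniformly in $\X_{*,q}$. Since the columns of $\X$ and the entries of $\G''$ are mutually independent, $|T_2|$ is stochastically dominated by $\mathrm{Bin}(n,\tfrac1{2\omega})$, so a Chernoff bound gives $\pr{|T_2|>n/\omega} \le e^{-n/(6\omega)} \le e^{-10\sqrt n}$ (using $\omega = o(\sqrt n)$ and $n$ large). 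Off this event $|T| \le |T_0|+|T_1|+|T_2| \le \tfrac{n}{100\omega}+\tfrac{n}{\omega}+\tfrac{n}{\omega} \le \tfrac{11n}{\omega}$, which is the claim.

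The one genuinely delicate point is that the heavy entries of $\G$ (collected into $\BB'$) must be discarded through the sparsity of $\BB'$ rather than through a Frobenius-norm estimate: each entry of $\BB'$ has magnitude $\gtrsim\eta\gg\tau$, so folding $\BB'$ into the $T_1$ bound would produce a count of order $n\omega$, far too large. Choosing a single threshold $\tau$ that makes both $|T_1|$ and the per-coordinate flip probability $\pr{|\zeta_q|\le\tau}$ of order $1/\omega$ is exactly what pins $\tau\in[\Theta(1/\sqrt\omega),\,\Theta(\eta/\omega)]$ and hence forces $\eta\gtrsim\sqrt\omega$; with that constraint in hand, the remaining steps are the routine computations above.
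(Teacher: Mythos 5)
Your proof is correct in outline but takes a genuinely different route from the paper. The paper bounds $\ex{|T|}$ directly: for each $q$, the flip probability is at most the probability that $\G'_{p,q}$ lands in an interval of length $2|\BB_{p,q}|$, so $\ex{|T|} \lesssim \|\BB_{p,*}\|_1 \leq \sqrt{n}\|\BB_{p,*}\|_2 \leq n/\omega$, and a Chernoff bound on a sum of independent (but non-identically-distributed) indicators finishes; the $\BB'$-flips are then added as a deterministic $n/(100\omega)$ term. Your threshold decomposition $T\subseteq T_0\cup T_1\cup T_2$ instead separates the heavy $\BB$-entries from the small-$\zeta_q$ event and bounds each piece in isolation — Markov via the $\ell_2$ norm for $T_1$, binomial Chernoff for $T_2$. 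This is clean and modular, and it has the pedagogical virtue of making each source of flips visible. The trade-off is that your choice of $\tau=\eta/(2\omega)$ forces $\eta^2\gtrsim\omega$ for $T_1$ to be controlled, whereas the paper's one-shot expectation bound needs only the weaker $\eta=\Omega(1)$ (so that the Gaussian density is $O(1)$). You were right to flag this hidden regime restriction — the paper never states it — but note that your workaround of flooring $\eta$ at $2\sqrt\omega$ \emph{does} propagate: $\eta$ enters Proposition \ref{prop:lambda} as $\lambda = \Omega(1/\eta)$ and hence the final bound of Theorem \ref{thm:modular} scales linearly in $\eta$, so the stated bound $O(\sqrt\omega\,\|\G\|_F/\sqrt n\,(\cdots))$ would degrade to $O(\sqrt\omega\,(\cdots))$ when $\|\G\|_F=o(\sqrt n)$. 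That degradation affects the paper's own argument equally once you track the $1/\eta$ normalization through, so this is not a flaw specific to your proof, but "affects no later step" is too strong a claim.
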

\begin{proof}
We have $\|\BB_{p,*}\|_1 \leq  \sqrt{n}\|\BB_{p,*}\|_2 \leq  n/\omega $ by the original assumption on $\BB$ in Theorem \ref{thm:modular}. Then $\pr{q \in T}$ is at most the probability $\G_{p,q}'$ is in some interval of size $2|\BB_{p,q}|$, which is at most $2|\BB_{p,q}|$ by the anti-concentration of Gaussians. Thus $\ex{|T|} \leq 2 \|\BB_{p,*}\|_1 \leq 2n/\omega$, and by Chernoff bounds $\pr{|T| >10n/\omega} < e^{-10\sqrt{n}}$ as needed. To handle $\BB'$, we simple recall that $\BB'$ was $\frac{n}{100 \omega}$ sparse, and thus can flip at most $\frac{n}{100\omega} < n/\omega$ signs. 
\end{proof}

\begin{proposition}\label{Prop17}
Let $h,h'$ be defined as above. Let $\hat{w} \in B^r_2$ be the solution to the optimization problem 
\begin{equation}
 \max_{w, \: \|w\|_2} n\: h(w) =  \max_{w, \: \|w\|_2} \sum_{q=1}^n y_q \langle w , \X_{*,q} \rangle 
\end{equation}
Then if with probability $1-\exp(-\sqrt{n})$ we have 
\[\bpr{ \sup_{z \in B} \big|h'(z) - h(z)\big| \leq \frac{3 \log(\omega)}{\omega} } \geq 1 - e^{-\sqrt{n}}\]
\end{proposition}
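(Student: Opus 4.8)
The plan is to reduce the supremum to the Euclidean norm of a (data‑dependent) sum of Gaussian columns and then control that norm by a union bound over all small index sets and sign patterns. Since $y_q' = y_q$ for every $q \notin T$, for any $z$ with $\|z\|_2 \le 1$ we have
\[ h'(z) - h(z) = \frac{1}{n}\sum_{q \in T}(y_q' - y_q)\langle z, \X_{*,q}\rangle, \]
and as $y_q' - y_q \in \{-2,0,2\}$ this gives $\sup_{\|z\|_2 \le 1}|h'(z) - h(z)| = \frac{2}{n}\bigl\|\sum_{q\in T}s_q \X_{*,q}\bigr\|_2$ where $s_q = (y_q'-y_q)/2 \in \{\pm 1\}$ for $q\in T$. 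By Proposition \ref{prop:expbound}, with probability at least $1 - e^{-10\sqrt n}$ we have $|T| \le m := 11n/\omega$, and I condition on that event.

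The difficulty is that $T$ and the signs $\{s_q\}$ depend on $\X$ (and on the auxiliary Gaussian matrix $\G''$), so $\sum_{q\in T}s_q\X_{*,q}$ is not a fixed Gaussian vector. To handle this I would union bound over all pairs $(T',s')$ with $T'\subseteq[n]$, $|T'|\le m$, and $s'\in\{\pm1\}^{T'}$: for each such fixed pair the randomness lies entirely in $\X$, and $\sum_{q\in T'}s_q'\X_{*,q} = \X_{T'}s' \sim \mathcal N(0,|T'|\mathbb I_d)$ because the columns of $\X$ are i.i.d. $\mathcal N(0,\mathbb I_d)$ and $\|s'\|_2^2 = |T'|$. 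Hence $\|\X_{T'}s'\|_2^2/|T'|\sim\chi^2_d$, and the Laurent–Massart bound (\cite{laurent2000}) gives $\|\X_{T'}s'\|_2^2 \le m(d + 2\sqrt{dx} + 2x)$ with probability at least $1 - e^{-x}$. The number of such pairs is $\sum_{j\le m}\binom nj 2^j \le \omega^{m}$ for the relevant parameter range, whose logarithm is $m\log\omega = 11n\log\omega/\omega$. Choosing $x = \Theta(n\log\omega/\omega)$ so that $x$ exceeds twice this logarithm, the union‑bound failure probability is $\exp(-\Omega(n\log\omega/\omega))$; since $\omega = o(\sqrt n)$ we have $n/\omega = \omega(\sqrt n)$, so this is at most $e^{-2\sqrt n}$, which together with the bound on $|T|$ keeps the total failure probability below $e^{-\sqrt n}$.

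On this event, applying the bound to the realized pair $(T,s)$ yields $\bigl\|\sum_{q\in T}s_q\X_{*,q}\bigr\|_2 \le \sqrt{m(d + 2\sqrt{dx} + 2x)}$. Taking $n$ to be a sufficiently large polynomial in $d,\kappa,k,1/\eps$, we may assume $d \le n/\omega$, which also dominates the cross term since $\sqrt{d\cdot n\log\omega/\omega} \le (n/\omega)\sqrt{\log\omega}$; hence $d + 2\sqrt{dx} + 2x = O(n\log\omega/\omega)$ and $\bigl\|\sum_{q\in T}s_q\X_{*,q}\bigr\|_2 = O\!\bigl(\tfrac{n\sqrt{\log\omega}}{\omega}\bigr)$. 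Therefore
\[ \sup_{z\in B}|h'(z) - h(z)| \le \frac 2n\cdot O\!\Bigl(\tfrac{n\sqrt{\log\omega}}{\omega}\Bigr) = O\!\Bigl(\tfrac{\sqrt{\log\omega}}{\omega}\Bigr) \le \frac{3\log\omega}{\omega}, \]
the last step holding because $\sqrt{\log\omega}\le \tfrac32\log\omega$ once $\log\omega$ exceeds an absolute constant, which we may assume.

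The main obstacle is precisely the data‑dependence of $T$: one must verify that the entropy cost of the union bound, $\Theta(n\log\omega/\omega)$, is simultaneously large enough (relative to $\sqrt n$) to be absorbed into the $e^{-\sqrt n}$ failure bound, and small enough that the induced $\chi^2$ deviation $x$ still delivers the target $O(\sqrt{\log\omega}/\omega)$ after the $\sqrt m/n$ rescaling. Both requirements hold exactly because $\omega = o(\sqrt n)$ while $\omega$ is a fixed polynomial in the instance parameters and $n$ may be taken arbitrarily large; the remaining steps are the routine $\chi^2$ tail estimate and the counting bound $\sum_{j\le m}\binom nj 2^j \le \omega^m$.
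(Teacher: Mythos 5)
Your proof is correct, and it takes a genuinely cleaner route than the paper's. The paper first reduces the supremum over the unit ball $B_2^d$ to a $1/n^3$-net $S$ of size $2^{O(d\log n)}$, controls $\sum_{q\in T'}|\langle z,\X_{*,q}\rangle|$ for each fixed $z\in S$ and candidate set $T'$ via Gaussian Lipschitz concentration, and then has to account separately for the net-approximation error (requiring a further conditioning on $\|\X\|_F^2$). You sidestep the net entirely by observing the exact dual identity $\sup_{\|z\|_2\le 1}|h'(z)-h(z)| = \tfrac{2}{n}\|\sum_{q\in T}s_q\X_{*,q}\|_2$, which collapses the supremum into a single Euclidean norm; you then union-bound over the $\sum_{j\le m}\binom{n}{j}2^j$ candidate $(T',s')$ pairs using a $\chi^2_d$ tail for $\|\X_{T'}s'\|_2^2/|T'|$. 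Both proofs hinge on the same entropy calculation, namely that $m\log\omega = O(n\log\omega/\omega)$ is both $\gg\sqrt n$ (so the failure probability can be driven below $e^{-\sqrt n}$) and small enough after the $1/n$ rescaling to yield $O(\log\omega/\omega)$. Yours actually delivers a slightly sharper $O(\sqrt{\log\omega}/\omega)$ because the dual identity is tight where the paper's pointwise bound $|h'(z)-h(z)| \le \tfrac{2}{n}\sum_{q\in T}|\langle z,\X_{*,q}\rangle|$ is lossy, and it avoids the extra conditioning on the net and on $\|\X\|_F^2$. Two small housekeeping remarks: the counting estimate $\sum_{j\le m}\binom{n}{j}2^j\le\omega^m$ should really be $(C\omega)^m$ for an absolute constant $C$ (it is dominated by $(2en/m)^m = (2e\omega/11)^m$), which changes nothing downstream; and, as you note, the final absorption $O(\sqrt{\log\omega}/\omega)\le 3\log\omega/\omega$ implicitly requires $\log\omega$ to exceed an absolute constant — the paper's own final inequality $\tfrac{3\log(e\tau)}{\tau}<\tfrac{3\log\omega}{\omega}$ with $\tau=\omega/11$ has the same flavor of constant-absorption, so you are on equal footing there.
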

\begin{proof}
Let $S \subset \{x \in \R^d \; | \; \|x\|_\infty \leq 1 \}$ be an $\eps$-net for $\eps = 1/n^3$. Standard results demonstrate the existence of $S$ with $|S| < 2^{12d\log(n)}$ (see e.g. \cite{vershynin2010introduction, woodruff2014sketching}). Fix $z \in S$ and observe $|h'(z) - h(z)| = \frac{2}{n}\sum_{q \in T}| \langle z ,\X_{*,q} \rangle|$. Note that we can assume $\|z\|_2 = 1$, since increasing the norm to be on the unit sphere can only make $|h'(z) - h(z)|$ larger. By Proposition \ref{prop:expbound}, we have $|T| \leq  n/\tau$, where $\tau = \frac{\omega}{11}$ with probability $1-e^{-10\sqrt{n}}$, so we can let $\mathcal{F} = \{T' \subset [n] \; | \; |T'| \leq  n/\tau \}$. Note $|\mathcal{F}| \leq n (e\tau)^{n/\tau}$. Fix $T' \in \mathcal{F}$. The sum $\sum_{q \in T'}| \langle z ,\X_{*,q} \rangle|$ is distributed as the $L_1$ of a Gaussian $\mathcal{N}(0,1)$ vector in $|T'|$, dimensions, and is $\sqrt{|T'|}$-Lipschitz  with respect to $L_2$, i.e. $| \|x\|_1 - \|y\|_1 | \leq \|x - y\|_1 \leq \sqrt{|T'|}\|x-y\|_2$. So by Lipschitz concentration (see \cite{vershynin2010introduction} (Proposition 5.34)), we have $\pr{ \frac{1}{n}\sum_{q \in T'}| \langle z ,\X_{*,q} \rangle| >  \frac{\log(e\tau)}{\tau}}  \leq \exp(- \log^2(e\tau) n/\tau)$. We can then union bound over all $T' \in \mathcal{F}$ and $z \in S$ to obtain the result with probability 
\[1- \exp\Big(- \frac{n\log^2(e\tau)}{\tau} + \frac{n \log(e\tau)}{\tau} + \log(n) + 12r\log(n) \Big) > 1 -  \exp\Big(- \log^2(\tau) n/(2\tau) \Big) \]
So let $\mathcal{E}_1$ be the event that $\sum_{q \in T'}| \langle z ,\X_{*,q} \rangle| < \sqrt{\log(\tau)} |T'|$ for all $T' \in \mathcal{F}$ and $z \in S$. 
Now fix $w \in \R^d$ with $\|w\|_2  \leq 1$, and let $y \in S$ be such that $\|y-z\|_2 \leq 1/n^3$. Observing that $h$ and $h'$ are linear functions, we have $|h(z) - h'(z)| \leq |h(y) - h'(y)| + |h(z-y) - h'(z-y)| \leq \frac{\log(e \tau)}{\tau} + |h(z-y) - h'(z-y)|$.  Now condition on the event $\mathcal{E}_2$ that $\|X\|_F^2 \leq 10nd$, where $\pr{\mathcal{E}_2} >1 - \exp(nd)$ by standard concentration results for $\chi^2$ distributions \cite{laurent2000}. Conditioned on $\mathcal{E}_2$ we have $|h(z-y)| + |h'(z-y)| \leq 4\sqrt{10nd}/n^3 \leq 1/\tau$, giving $|h(z) - h'(z)| \leq \frac{3 \log(e\tau)}{\tau} < \frac{3 \log(\omega)}{\omega}$, from which the proposition follows after union bounding over the events $\mathcal{E}_1,\mathcal{E}_2$ and Proposition \ref{prop:expbound}, which hold with probability $ 1 - (\exp(- \log^2(\tau) n/(2\tau)) +\exp(nd) + \exp(-10\sqrt{n})) > 1-\exp(-\sqrt{n})$.

\end{proof}

\begin{lemma}\label{lem:modular}
Let $\hat{w}$ be the solution to the optimization Problem in Equation \ref{eqn:opt} for our input labels $y_q = \text{sign}((f(\V\X) + \G' + \BB+ \BB')_{p,q})$. Then w with probability $1-e^{-n^{1/2}/10} $, we have $\|\hat{w} - \V_{p,*}\|_2^2 = O( \sqrt{\omega} \|\G\|_F/\sqrt{n} )\big(\frac{4\sqrt{d}}{\sqrt{n}} + \frac{1}{n^{1/4}}+ \frac{6 \log(\omega)}{\omega}\big) $ for some constant $c$.
\end{lemma}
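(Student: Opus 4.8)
The plan is to combine the optimality of $\hat w$ for the objective $h(w)=\frac1n\sum_{q=1}^n y_q\langle w,\X_{*,q}\rangle$ with the quadratic lower bound on $\ex{h'(\V_{p,*})-h'(z)}$ from Lemma \ref{propexp}, where $h'$ is the objective built from the ``hypothetical'' labels $y_q'$ that fit the symmetric-noise model of Equation \ref{eqn:assump1}. Since $\|\V_{p,*}\|_2=1$, the vector $\V_{p,*}$ is feasible for the convex program in Equation \ref{eqn:opt}, so $h(\hat w)\ge h(\V_{p,*})$, i.e.\ $h(\V_{p,*})-h(\hat w)\le 0$. The whole argument is then a chain of inequalities that passes from $h$ to $h'$, from $h'$ to its expectation, and finally invokes the curvature bound of Lemma \ref{propexp}; no new probabilistic estimate is needed beyond those already proved.

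Concretely, first I would use the linearity of $h'$ to write $\ex{h'(\V_{p,*})-h'(\hat w)}=\ex{h'(\V_{p,*}-\hat w)}$, noting $\V_{p,*}-\hat w\in\mathcal{B}^d_2$ since both lie in $B^d_2$. Applying Proposition \ref{prop:4.2} with $t=n^{-1/4}$ — a uniform bound over all of $\mathcal{B}^d_2$, hence valid at the data-dependent point $\V_{p,*}-\hat w$ — gives $\ex{h'(\V_{p,*}-\hat w)}\le h'(\V_{p,*}-\hat w)+\frac{4\sqrt d}{\sqrt n}+n^{-1/4}$ with probability $1-4e^{-\sqrt n/8}$. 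Next I would bound $h'(\V_{p,*}-\hat w)=h'(\V_{p,*})-h'(\hat w)\le\big(h(\V_{p,*})-h(\hat w)\big)+2\sup_{z\in B^d_2}|h'(z)-h(z)|\le 0+\frac{6\log\omega}{\omega}$, using $h(\V_{p,*})\le h(\hat w)$ and Proposition \ref{Prop17} (which holds with probability $1-e^{-\sqrt n}$, and which itself rests on the sign-flip count of Proposition \ref{prop:expbound}). Combining, $\ex{h'(\V_{p,*}-\hat w)}\le \frac{4\sqrt d}{\sqrt n}+n^{-1/4}+\frac{6\log\omega}{\omega}$.

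On the other side, Lemma \ref{propexp} gives $\ex{h'(\V_{p,*}-\hat w)}\ge\big(\frac1n\sum_q\frac{\lambda_q}{2}\big)\|\V_{p,*}-\hat w\|_2^2\ge\frac12\big(\min_q\lambda_q\big)\|\V_{p,*}-\hat w\|_2^2$, and Proposition \ref{prop:lambda} gives $\min_q\lambda_q\ge c/\eta$ with $\eta=100\sqrt\omega\,\|\G\|_F/\sqrt n$. Rearranging yields
\[ \|\V_{p,*}-\hat w\|_2^2\le\frac{2\eta}{c}\Big(\frac{4\sqrt d}{\sqrt n}+\frac{1}{n^{1/4}}+\frac{6\log\omega}{\omega}\Big)=O\!\Big(\sqrt\omega\,\frac{\|\G\|_F}{\sqrt n}\Big)\Big(\frac{4\sqrt d}{\sqrt n}+\frac{1}{n^{1/4}}+\frac{6\log\omega}{\omega}\Big), \]
which is exactly the claimed bound. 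The failure probability is the union of the events in Propositions \ref{prop:4.2}, \ref{Prop17}, and \ref{prop:expbound} (together with the $\chi^2$ tail bound on $\|\X\|_F$ used inside Proposition \ref{Prop17}), which is at most $e^{-\sqrt n/10}$ once $n$ exceeds an absolute constant. The only genuinely delicate point is ensuring that every estimate invoked is a \emph{uniform} bound over the feasible set, so that it may be applied at the random optimizer $\hat w$; since Propositions \ref{prop:4.2} and \ref{Prop17} are stated as suprema over $\mathcal{B}^d_2$ and $B^d_2$ respectively, this causes no trouble, and the proof reduces to the bookkeeping sketched above.
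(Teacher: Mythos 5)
Your proof is correct and follows essentially the same route as the paper's: start from $h(\hat w)\ge h(\V_{p,*})$, pass from $h$ to $h'$ via Proposition \ref{Prop17}, pass from $h'(\V_{p,*}-\hat w)$ to its expectation via Proposition \ref{prop:4.2} with $t=n^{-1/4}$ (the paper's ``$t=n^{1/4}$'' is a typo, and you use the intended value), and then apply the curvature bound of Lemma \ref{propexp} together with $\lambda_q=\Omega(1/\eta)$ from Proposition \ref{prop:lambda}. The only difference is purely presentational: the paper packages the chain of inequalities as a single display on $h(\hat w)-h(\V_{p,*})$, while you split it into the two one-sided estimates; the logical content and the failure-probability bookkeeping are the same.
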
 

\begin{proof}
Applying Lemma \ref{propexp}, and a union bound over the probabilities of failure in Proposition \ref{prop:4.2} with $t = n^{1/4}$ and Proposition \ref{Prop17}, we have
\begin{equation*}
    \begin{split}
         0 & \leq h(\hat{w}) - h(\V_{p,*}) \\
         & \leq h'(\hat{w}) - h'(\V_{p,*}) +\frac{6 \log(\omega)}{\omega} \\
         & = h'\big(\hat{w} - \V_{p,*}\big) +\frac{6 \log(\omega)}{\omega}\\
        &  \leq \ex{ h'\big(\hat{w} - \V_{p,*}\big)} + \frac{4\sqrt{d}}{\sqrt{n}} +\frac{1}{n^{1/4}}+  \frac{6 \log(\omega)}{\omega} \\
 & \leq -\frac{\lambda}{2}\|\hat{w} - \V_{p,*}\|_2^2  + \frac{4\sqrt{d}}{\sqrt{n}} + \frac{1}{n^{1/4}}+  \frac{6 \log(\omega)}{\omega}
    \end{split}
\end{equation*} 
Applying Proposition \ref{prop:lambda}, which yields $\frac{1}{\lambda} = O(\eta) = O(\sqrt{\omega} \|\G\|_F/\sqrt{n})$ completes the proof. 
\end{proof}

\begin{proof}[Proof of Theorem \ref{thm:modular}]
The proof of the theorem follows directly from Lemma \ref{lem:modular}. 
\end{proof}

\subsection{Completing the Analysis}
We will now need the following straightforward lemma to complete the proof.

\begin{theorem}\label{thm:vershyninfinal}
Let $\AA = \U^* f(\V\X) + \E$ be the input, where each entry of $\X\in \R^{d \times n}$ is i.i.d. $\mathcal{N}(0,1)$ and $\E$ independent of $\X$. Then the algorithm in Figure \ref{fig:gaussiannoise} outputs $\U \in \R^{m \times k} ,\V \in \R^{k \times d}$ in time $2^{O(k^2 \log((1/\eps) \kappa ))} \poly(n,d)$ such that with probability $1 - \exp(-\sqrt{n})$ we have 
\[  \|\AA - \U f(\V\X)\|_F \leq \|\E \|_F + O\Big( \Big[\sigma_{\min} \eps \sqrt{n m}  \|\E\|_2\Big]^{1/2} \Big)\]
Where $\|\E||_2$ is the spectral norm of $\E$.
\end{theorem}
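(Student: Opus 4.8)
The plan is to assemble the components developed throughout Section~\ref{sec:1bit}: the bulk of the technical work is already done in Theorem~\ref{thm:modular} and Corollary~\ref{cor:1}, so what remains is to dispose of the degenerate regime, control the failure probabilities over all the enumerations and guesses, and convert the row-wise recovery guarantee for $\V^*$ into the stated bound on the regression objective. First I would handle the trivial case: by Proposition~\ref{prop:gaussiancondition}, $\|f(\V^*\X)\|_F \le \|\V^*\X\|_F \le 2\sqrt{nk}$ w.h.p., so $\|\U^* f(\V^*\X)\|_F \le 2\sigma_{\max}(\U^*)\sqrt{nk}$; hence if $\|\E\|_F > \sigma_{\max}(\U^*)\sqrt{nk}/\eps$ then outputting $\U = 0, \V = 0$ already gives $\|\AA\|_F \le (1+2\eps)\|\E\|_F$, which is of the required form in this regime (using $\|\E\|_F^2 \le m\|\E\|_2^2$ together with the lower bound on $\|\E\|_F$ to dominate $\eps\|\E\|_F$ by the additive error term, after a cosmetic rescaling of $\eps$). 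So from now on assume $\|\E\|_F \le \sigma_{\max}(\U^*)\sqrt{nk}/\eps$.

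Next I would set up the conditioning. By Proposition~\ref{prop:subspaceembedding} a single draw of $\S$ is a $(1\pm\delta)$-subspace embedding for $\U^*$ with probability $99/100$; running the whole procedure $O(n)$ independent times and keeping the best final solution ensures at least one run has a good $\S$ except with probability $\exp(-\Omega(n))$, and I condition on this, writing $\sigma_{\min} = \sigma_{\min}(\S\U^*)$, $\sigma_{\max} = \sigma_{\max}(\S\U^*)$, $\kappa = \kappa(\S\U^*)$, all within a $(1\pm 1/2)$ factor of their $\U^*$-counterparts. Then $\kappa, \sigma_{\min}$ and $\|\M\S\E\|_F$ are guessed to within a factor of $2$ using $O(\log^2(n\kappa))$ and $O(\log(n\kappa))$ guesses as described before Proposition~\ref{prop:guesserror}; on the correct guesses, Proposition~\ref{prop:guesserror} produces an index $i^*$ with $\M^{i^*} = (\S\U^*)^{-1} + \mathbf{\Lambda}$, $\|\mathbf{\Lambda}\|_\infty \le \eps^4/(\sigma_{\min}\kappa^4 k^4)$, so $\M^{i^*}\S\AA = f(\V^*\X) + \M^{i^*}\S\E + \ZZ$ with $\ZZ = \mathbf{\Lambda}\S\U^* f(\V^*\X)$. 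Proposition~\ref{prop:fitsinto} bounds $\|\M^{i^*}\S\E\|_F = O(\sqrt m\,\|\E\|_2/\sigma_{\min})$ and $\|\ZZ\|_F \le 2\sqrt n\,\eps^4\kappa^{-4}k^{-2}$, which is exactly the input form of Theorem~\ref{thm:modular}, so Corollary~\ref{cor:1} applies: with probability $1 - \exp(-\sqrt n/20)$ the matrix $\W^{i^*}$ returned by the $k$ convex programs satisfies $\|\V^*_{p,*} - \W^{i^*}_{p,*}\|_2^2 \le \eps\sqrt m\,\|\E\|_2/(\sigma_{\min}\sqrt n)$ for every row $p$, hence $\|\V^* - \W^{i^*}\|_F^2 \le k\eps\sqrt m\,\|\E\|_2/(\sigma_{\min}\sqrt n)$.

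Finally I would close the argument with the regression step. The last step of Algorithm~\ref{fig:gaussiannoise} minimizes $\|\AA - \U f(\V^i\X)\|_F$ over all candidate $\V^i$ and over $\U$, so its value is at most the value at $\V^i = \W^{i^*}$, $\U = \U^*$. Using that $f$ is $1$-Lipschitz coordinatewise, $\|\X\|_2 \le 2\sqrt n$ from Proposition~\ref{prop:gaussiancondition}, and $\|\U^*\|_2 = \sigma_{\max}$,
\[
\|\AA - \U^* f(\W^{i^*}\X)\|_F \le \|\E\|_F + \sigma_{\max}\|f(\V^*\X) - f(\W^{i^*}\X)\|_F \le \|\E\|_F + 2\sigma_{\max}\sqrt n\,\|\V^* - \W^{i^*}\|_F,
\]
and plugging in the bound on $\|\V^* - \W^{i^*}\|_F$ yields error term $O\big(\sqrt k\,\kappa\,[\sigma_{\min}\eps\sqrt{nm}\|\E\|_2]^{1/2}\big)$, using $\sigma_{\max}/\sqrt{\sigma_{\min}} = \kappa\sqrt{\sigma_{\min}}$. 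The extra $\sqrt k\,\kappa$ (and any hidden $\poly(k)$ from Corollary~\ref{cor:1}) is absorbed by replacing $\eps$ with $\eps/\poly(k,\kappa)$; since the algorithm already assumes $k < 1/\eps$, this only multiplies the runtime $(\kappa/\eps)^{O(k^2)}$ by $\poly(k,\kappa)^{O(k^2)} = 2^{O(k^2\log((1/\eps)\kappa))}$, so the total runtime over all $\nu = 2^{O(k^2\log((1/\eps)k\kappa))}$ guesses of $\M$, the $O(\log(n\kappa))$ guesses of $\|\M\S\E\|_F$, the $O(\log^2(n\kappa))$ guesses of $\kappa,\sigma_{\min}$, the $O(n)$ repetitions of $\S$, and the per-instance $\poly(n,d,m)$ convex-program and regression solves, is $2^{O(k^2\log((1/\eps)\kappa))}\poly(n,d)$. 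The probability bound follows from a union bound over the failure of $\S$ ($\exp(-\Omega(n))$), of Corollary~\ref{cor:1} on the correct guess ($\exp(-\sqrt n/20)$), and of the operator-norm and $\chi^2$ concentration events ($\exp(-\Omega(n))$), giving $1 - \exp(-\sqrt n)$.

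I expect the main obstacle to be bookkeeping rather than a genuine mathematical difficulty, since the heavy lifting --- the noisy halfspace analysis of Theorem~\ref{thm:modular} with both the adversarial component $\BB = \ZZ$ and the non-adversarial component $\G = \M^{i^*}\S\E$, including the Gaussian-smoothing trick --- is already in place. The one point that genuinely needs care is making the $\sigma_{\min}$ versus $\sigma_{\max}$ dependencies line up: the recovery error in Corollary~\ref{cor:1} scales with $1/\sigma_{\min}$ while transferring it through $\U^*$ by Lipschitzness costs a $\sigma_{\max}$ factor, so the final error inevitably carries a $\kappa$ that has to be hidden inside a rescaled $\eps$; and one should double-check that in the degenerate case the output bound really is of the additive form claimed and not merely a multiplicative $(1+\eps)$ guarantee.
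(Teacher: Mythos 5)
Your proposal follows the paper's proof essentially verbatim: condition on the subspace embedding (with $O(n)$ repetitions), invoke Proposition~\ref{prop:guesserror}, Proposition~\ref{prop:fitsinto}, and Corollary~\ref{cor:1} to recover $\V^*$ row by row to additive error $\eps\sqrt{m}\|\E\|_2/(\sigma_{\min}\sqrt n)$, transfer this through $\X$ and $f$ via the operator norm bound and $1$-Lipschitzness, pick up the $\sigma_{\max}$ factor by multiplying by $\U^*$, absorb the resulting $\sqrt k\,\kappa$ into a rescaled $\eps$, and finish by the triangle inequality and suboptimality of the regression step; the accounting of the enumeration sizes and failure probabilities is also the same. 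The one caveat you correctly flag yourself --- whether the degenerate regime $\|\E\|_F > \sigma_{\max}\sqrt{nk}/\eps$ genuinely produces a bound of the stated \emph{additive} form --- is a real imprecision, but it is present in the paper too (the paper simply asserts "we obtain our desired competitive approximation" without checking that the resulting $\eps\|\E\|_F$ slack is dominated by $[\sigma_{\min}\eps\sqrt{nm}\|\E\|_2]^{1/2}$, and in fact the chain $\eps\|\E\|_F \le \eps\sqrt m\|\E\|_2$ together with $\|\E\|_F$ large only gives a \emph{lower} bound on $\|\E\|_2$, not the upper bound the comparison needs), so this does not distinguish your argument from the paper's.
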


\begin{proof}
Let $\W^i \in \R^{k \times r}$ be as in Corollary \ref{cor:1}. Then, taking $n = \poly(d,\kappa,\frac{1}{\eps})$ large enough, we have $\W^i = \V + \mathbf{\Gamma}$ where $\|\mathbf{\Gamma}_{p,*}\|_F^2 \leq  \frac{\eps k  \sqrt{m} \|\E\|_2 }{\sigma_{\min} \sqrt{n}}$ for each row $p$ with probability $1-\exp(-r^4)$ by Corollary \ref{cor:1}. Then applying the spectral norm bound on Gaussian matrices from Proposition \ref{prop:gaussiancondition}, we obtain that $\|\V_{p,*}\X - \W^i_{p,*}\X\|_F^2 = O( \sqrt{n} \frac{\eps k  \sqrt{m} \|\E\|_2 }{\sigma_{\min}} )$ with probability at least $1-e^{-9n}$. Since $f$ just takes the maximum of $0$ and the input, it follows that $\|f(\W^i\X) - f(\V\X)\|_F^2  = O( \sqrt{n} \frac{\eps k  \sqrt{m} \|\E\|_2 }{\sigma_{\min}} )$, and therefore $\|\U^* f(\W^i\X) - \U^* f(\V\X)\|_F^2  = O(\sigma_{\max}^2  \sqrt{n} \frac{\eps k  \sqrt{m} \|\E\|_2 }{\sigma_{\min}} )$, which is at most $O(\sigma_{\min} \eps \sqrt{nm}  \|\E\|_2 )$ after rescaling $\eps$ by a $\frac{1}{\kappa^2 k }$ factor.
Now if $\U$ is the minimizer to  the regression problem $\min_{\U} \|\AA - \U f(\W^i\X)\|_F^2 $ in step $5$ of Figure \ref{fig:gaussiannoise}, then note 
\[\|\AA - \U f(\W^i\X)\|_F \leq \|\AA - \U^* f(\W^i\X)\|_F \leq \|\E\|_F +O\Big( \Big[\sigma_{\min} \eps \sqrt{nm}  \|\E\|_2\Big]^{1/2} \Big)\]
as needed.

For the probability of failure, note that Corollary \ref{cor:1} holds with probability $1 - \exp(-\Omega(\sqrt{n}))$. To apply this, we needed only to condition on the fact that $\S$ was a subspace embedding for $\U$, which occurs with probability $99/100$ for a single attempt. Running the algorithm $O(n)$ times, by Hoeffding bounds at least one trial will be successful with probability  $1 - \exp(-\Omega(\sqrt{n}))$ as desired.  To analyze runtime, note that we try at most $\poly(nd)$ guesses of $\S$ and guesses of $\sigma_{\min}$ and $\kappa$. Moreover, there are at most $\nu = (\frac{\kappa}{\eps})^{O(k^2)}$  guesses $\M^i$ carried out in Step $2$ of Figure \ref{fig:gaussiannoise}). For every such guess, we run the optimization program in step $3c$. Since the program has a linear function and a convex unit ball constraint, it is will known that such programs can be solved in polynomial time \cite{boyd2004convex}.  Finally, the regression problem in step $4$ is linear, and thus can be solved in $\poly(n)$ time, which completes the proof. 

\end{proof}

\section{A Polynomial Time Algorithm for Exact Weight Recovery with Sparse Noise}\label{sec:sparsenoise}
In this section, we examine recovery procedures for the weight matrices of a \textit{low-rank} neural network in the presence of 
arbitrarily large sparse noise. Here, by low rank, we mean that $m > k$.  It has frequently been observed in practice that many pre-trained neural-networks exhibit correlation and a low-rank structure \cite{denil2013predicting, denton2014exploiting}. Thus, in practice it is likely that $k$ need not be as large as $m$ to well-approximate the data. 

More formally, we are given $\AA=\U^*f(\V^*\X) + \E$ where $\E$ is some sparse noise matrix with possibly very large entries. We show that under the assumption that $\U^*$ has orthonormal columns and satisfies an incoherence assumptions (which is fairly standard in the numerical linear algebra community) \cite{candes2007sparsity, candes2009exact,keshavan2010matrix,candes2011robust, jain2013low, hardt2014understanding}, 
we can recover the weights $\U^*,\V^*$ exactly, even when the sparsity of the 
matrices is a constant fraction of the number of entries. Our algorithm utilizes 
results on the recovery of low-rank matrices in the presence of a sparse noise. The error matrix $\E \in \R^{m
\times n}$ is a sparse matrix whose non-zero entries are uniformly chosen from the set 
of all coordinates of an arbitrary matrix $\overline{\E}$. Formally, we
define the following noise procedure:

\begin{definition}(Sparse Noise.)
A matrix $\E$ is said to be generated from a $s$-sparse-noise procedure if there is an arbitrary matrix $\overline{\E}$, such that $\E$ is generated by setting all but $s \leq  mn$ entries of $\overline{\E}$ to be $0$ uniformly at random.
\end{definition}

\begin{definition}(Incoherence.)
A rank $k$  matrix $\M \in \R^{m \times n}$ is said to be $\mu$-incoherent if 
$ \texttt{svd}(\MM)= \P \Sigma \Q $ is the singular value decomposition of $\MM$ and
\begin{equation}\label{eqn:inco1}
\begin{split}
    & \max_i \| \P^T e_i \|_2^2 \leq \frac{\mu k}{m} \\
    & \max_i \| \Q e_i \|_2^2 \leq \frac{\mu k}{n} \\
\end{split}
\end{equation}   
and 
\begin{equation}
    \label{eqn:incol2}
    \max_i \| \P \Q \|_\infty \leq\sqrt{ \frac{\mu k}{nm}}
\end{equation}
\end{definition}

\begin{remark}
The values $\|\P e_i\|_2^2$ and $\|\Q e_i\|_2^2$ are known as the (left and right, respectively) \emph{leverage-scores} of $\M$. For an excellent
survey on leverage scores, we refer the reader to \cite{mahoney2011randomized}. We note that the set of leverage scores of $\M$ does not depend on the choice of orthonormal basis $\P$ or $\Q$ \cite{woodruff2014sketching}. Thus, to obtain the bounds given in Equation \ref{eqn:inco1}, it suffices let $\P$ be any matrix with orthonormal columns which spans the columns of $\M$, and similarly it suffices to let $\Q$ be any matrix with orthonormal rows which spans the rows of $\M$.  

\end{remark}

\begin{lemma}
The entire matrix $\U^* f(\V^*\X)$, where $\X$ is i.i.d. Gaussian, $(\U^*)^T,\V^*$ have orthonormal rows, and $\U^*$ is $\mu$-incoherent, meaning $\max_i \| (\U^*)^T e_i\|_2^2 \leq \frac{\mu k}{m}$,  is $\overline{\mu}$-incoherent for 
\[\overline{\mu} = O\big( (\kappa(\V^*))^{2} \sqrt{k \log(n) \mu} + \mu +  (\kappa(\V^*))^4 \log(n) \big)\]
\end{lemma}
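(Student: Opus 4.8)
Write $\W = f(\V^*\X)\in\R^{k\times n}$, so that $\M := \U^* f(\V^*\X) = \U^*\W$. By Lemma~\ref{prop:random} applied to the i.i.d. Gaussian input $\X$, $\W$ has full row rank $k$ with high probability, and since $\U^*$ has orthonormal columns this gives $\mathrm{colspan}(\M)=\mathrm{colspan}(\U^*)$ and $\mathrm{rowspan}(\M)=\mathrm{rowspan}(\W)$. Writing the SVD $\W=\U_\W\mathbf{\Sigma}_\W\V_\W$ (with $\U_\W\in\R^{k\times k}$ orthogonal and $\V_\W\in\R^{k\times n}$ having orthonormal rows), we obtain $\M=(\U^*\U_\W)\mathbf{\Sigma}_\W\V_\W$, which is an SVD of $\M$; hence in the notation $\texttt{svd}(\M)=\P\mathbf{\Sigma}\Q$ we may take $\P=\U^*\U_\W$ and $\Q=\V_\W$, and a direct computation gives $\P\Q=\U^*\U_\W\V_\W=\U^*\RR$, where $\RR:=(\W\W^\top)^{-1/2}\W$ has orthonormal rows and $\|\RR_{*,j}\|_2^2 = \W_{*,j}^\top(\W\W^\top)^{-1}\W_{*,j}=:\tau_j$ is the $j$-th statistical leverage score of $\W$ (equivalently, of $\M$). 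Since leverage scores and the product $\P\Q$ do not depend on the choice of orthonormal basis, it suffices to verify the three incoherence conditions for these particular $\P,\Q$.

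\textbf{Left and right incoherence.} For the left condition, since $\U_\W$ is orthogonal, $\|\P^\top e_i\|_2=\|\U_\W^\top(\U^*)^\top e_i\|_2=\|(\U^*)^\top e_i\|_2\le\sqrt{\mu k/m}$ by hypothesis, so the first part of~\eqref{eqn:inco1} holds verbatim; this is the origin of the additive $\mu$ in $\overline\mu$. For the right condition we must bound $\max_j\tau_j$. Using that $f$ is $1$-Lipschitz with $f(0)=0$, $\tau_j=\|(\W\W^\top)^{-1/2}f(\V^* x_j)\|_2^2\le\sigma_{\min}(\W)^{-2}\|\V^* x_j\|_2^2$. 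For the first factor, apply Lemma~\ref{prop:reallylongprojectionprop} with $r=k$ and the rows $f(\V^*_{1,*}\X),\dots,f(\V^*_{k,*}\X)$ to get, w.h.p., $\sigma_{\min}(\W)^2=\Omega\!\big(n/\poly(\kappa)\big)$ (with the $\poly(\kappa)$ factor at most $\kappa^4$, $\kappa=\kappa(\V^*)$). For the second factor, $\V^* x_j\sim\mathcal{N}(0,\V^*(\V^*)^\top)$ is a centered Gaussian whose covariance has trace $\|\V^*\|_F^2=k$ and operator norm $\sigma_{\max}(\V^*)^2\le\kappa^2$ (the unit-norm rows force $\sigma_{\min}(\V^*)\le1$); by the Laurent--Massart $\chi^2$ tail bound and a union bound over the $n$ columns, $\max_j\|\V^* x_j\|_2^2=O(k+\kappa^2\log n)$ w.h.p. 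Combining, $\max_j\tau_j=O\!\big((\kappa^4 k+\kappa^6\log n)/n\big)$, so the right leverage scores obey $\tau_j\le\overline\mu_R k/n$ with $\overline\mu_R$ of order $\kappa^4\log n$ (up to lower-order terms in the parameter regime of Theorem~\ref{thm:sparse}), yielding the $\kappa^4\log n$ contribution to $\overline\mu$.

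\textbf{Joint incoherence --- the main obstacle.} We need $\|\P\Q\|_\infty=\max_{i,j}|\langle(\U^*)_{i,*},\RR_{*,j}\rangle|\le\sqrt{\overline\mu k/(mn)}$. The immediate Cauchy--Schwarz bound $|\langle(\U^*)_{i,*},\RR_{*,j}\rangle|\le\|(\U^*)_{i,*}\|_2\|\RR_{*,j}\|_2\le\sqrt{\mu k/m}\,\sqrt{\max_j\tau_j}$ already gives a valid (if not tight) value of $\overline\mu$, of order $\mu\kappa^4\log n$, which suffices for a qualitative statement. To obtain the sharper mixed term $\kappa^2\sqrt{k\mu\log n}$ one argues probabilistically: fix $i$ and condition on all columns of $\W$ other than the $j$-th; for $n\gg k$ this determines the whitening matrix $(\W\W^\top)^{-1/2}$ up to a $(1\pm o(1))$ spectral perturbation and hence essentially independently of $f(\V^* x_j)$, so $\langle(\U^*)_{i,*},\RR_{*,j}\rangle$ equals, up to lower-order terms, a fixed linear functional $\langle a_i,f(\V^* x_j)\rangle$ with $\|a_i\|_2=O(\kappa^2\sqrt{\mu k/(mn)})$. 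Since $x\mapsto\langle a_i,f(\V^* x)\rangle$ is Lipschitz with constant $\|a_i\|_2\,\sigma_{\max}(\V^*)\le\kappa\|a_i\|_2$, Gaussian concentration plus a union bound over the $mn$ pairs $(i,j)$ control its fluctuation, while separating off the rank-one mean component of $\ex{f(\V^* x)f(\V^* x)^\top}$ (i.e. $\ex{f(\V^* x)}=\sqrt{k/(2\pi)}\cdot\mathbf{1}/\sqrt{k}$, arising from the degree-zero Hermite coefficient of the ReLU) from the bounded remainder controls the mean; balancing these contributions gives the stated bound. The crux, and the step I expect to be the main obstacle, is exactly this: pushing $\|\P\Q\|_\infty$ below the lossy Cauchy--Schwarz estimate without incurring extra factors of $\sqrt k$ and powers of $\kappa$, which forces the conditioning/sub-Gaussian argument together with a careful Hermite-expansion control of the spectrum of $\ex{f(\V^* x)f(\V^* x)^\top}$. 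Assembling the three parts gives $\overline\mu=O\big(\kappa^2\sqrt{k\log(n)\mu}+\mu+\kappa^4\log(n)\big)$, as claimed.
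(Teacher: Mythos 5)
Your handling of the two leverage-score conditions in Equation~\ref{eqn:inco1} follows the paper's own proof very closely: the left bound is inherited verbatim from the $\mu$-incoherence of $\U^*$ (since $\P = \U^*\U_{\W}$ for an orthogonal $\U_{\W}$), and the right leverage scores are bounded by $\|f(\V^*\X)e_j\|_2^2/\sigma_{\min}^2(f(\V^*\X))$, with the denominator lower-bounded via Lemma~\ref{prop:reallylongprojectionprop} and the numerator controlled by Gaussian tails, exactly as the paper does. Your slightly finer chi-square accounting of the numerator is fine but not needed.

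The substantive divergence is the joint condition (Equation~\ref{eqn:incol2}), and here you have put your finger on a genuine soft spot in the paper's argument. The paper disposes of it in a single clause: it claims Cauchy--Schwarz together with the two row/column bounds yields the term $\kappa^2\sqrt{k\mu\log n}$. But Cauchy--Schwarz gives $\|\P\Q\|_\infty\le\sqrt{\mu k/m}\cdot\sqrt{\overline{\mu}_R\,k/n}$, which corresponds to a joint incoherence parameter $\overline{\mu}_J = \mu\,\overline{\mu}_R\,k = \Theta(k\mu\kappa^4\log n)$; the paper's stated term $\kappa^2\sqrt{k\mu\log n}$ is precisely $\sqrt{\,k\mu\kappa^4\log n\,}$, i.e.\ the square root of the Cauchy--Schwarz bound, so a square root appears to have been applied one time too many. (Your own quoted Cauchy--Schwarz value, $\mu\kappa^4\log n$, drops a factor of $k$, but you are pointing at the right discrepancy.) Your compensating conditioning/Hermite argument for recovering the sharper bound is plausible in spirit but is not a proof as written: the approximate independence of $(\W\W^\top)^{-1/2}$ from a held-out column, the decomposition into a rank-one mean component plus remainder, and the final ``balancing'' are each asserted without the quantitative estimates that would make the argument actually land on $\kappa^2\sqrt{k\mu\log n}$ rather than the Cauchy--Schwarz value. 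In short: for the two leverage-score conditions your proof is essentially the paper's; for the joint condition you have correctly identified that the paper's one-line Cauchy--Schwarz derivation does not produce the stated term, but your replacement argument does not yet close the gap either.
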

\begin{proof} For $t \in \{\max, \min \}$, let $\sigma_t = \sigma_t(\U^* f(\V^*\X))$.  Let $\P \Sigma \Q$ be the SVD of $\U^* f(\V^*\X)$, and let
For any $i$, since $\U^*$ and $\V^*$ are orthonormal we have 
\begin{equation*}
    \begin{split}
        \|\Q^T e_i\|_2^2  \leq \frac{ \| \mathbf{\Sigma} \Q^T e_i\|_2^2}{\sigma^2_{min}}
        & = \frac{ \| \U^* f(\V^*\X) e_i\|_2^2}{\sigma^2_{min}} \\
        & =  \frac{\| f(\V^*\X) e_i\|_2^2}{\sigma^2_{min}} \\
        & \leq \frac{\| \V^*\X e_i\|_2^2}{\sigma^2_{min}} \\
    \end{split}
\end{equation*}
    
Now each entry in of $\V^*\X$ is an i.i.d. Gaussian, and so is at most $10\sqrt{\log(n)}$ with probability $1 - e^{-10n}$, so 
$ \| \V^*\X e_i\|_2^2 \leq 100k \log(n) $ with probability $1 - e^{-9n}$ by a union bound. Since the columns of $\U^*$ are orthonormal, $\sigma^2_{\min} = \sigma^2_{\min}(f(\V^*\X))$, which is at least $\frac{n}{(\kappa(\V^*))^4}$ by Lemma \ref{prop:reallylongprojectionprop}. Thus we have that $ \|\Q^T e_i\|_2^2  = O(k (\kappa(\V^*))^4 \log(n)/n)$. This shows the $O(  (\kappa(\V^*))^4 \log(n))$-incoherehnce for the second part of Equation \ref{eqn:inco1}, and the first part follows from the $\mu$-incoherence assumption on $U$. The incoherence bound of $(\kappa(\V^*))^{2} \sqrt{k \log(n) \mu} $ for Equation \ref{eqn:incol2} follows by applying Cauchy Schwartz to the LHS and using the bounds just obtained for Equation \ref{eqn:inco1}. 
\end{proof}

\begin{theorem}(Extending Theorem 1.1 in \cite{candes2011robust}.)\label{thm:sparse}
If $\AA = \U^* f(\V^*\X) + \E$ where $\E$ is produced by the sparsity procedure outlined above with $s\leq \gamma nm$ for a fixed constant $\gamma > 0$. Then if $\U^*$ has orthonormal columns, is $\mu$-incoherent, $\X$ is Gaussian, and the sample complexity satisfies $n = \poly(d,m,k,\kappa(\V^*))$, then there is a polynomial time algorithm which, given only $\AA$, outputs both matrices $\M=\U^* f(\V^*\X)$ and $\E$, given that $k \leq \frac{m}{\overline{\mu}\log^2(n)}$, where $\overline{\mu} = O\big( (\kappa(\V^*))^{2} \sqrt{k \log(n) \mu} + \mu +  (\kappa(\V^*))^4 \log(n) \big)$.
\end{theorem}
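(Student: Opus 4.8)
The plan is to recognize Theorem \ref{thm:sparse} as an instance of robust principal component analysis and to invoke the guarantees of Candès, Li, Ma, and Wright \cite{candes2011robust}. We may assume WLOG that $\V^*$ has unit-norm rows. First I would observe that $\M = \U^* f(\V^*\X)$ is exactly rank $k$ with high probability: $\U^*$ has orthonormal columns, hence rank $k$, and by Lemma \ref{prop:random} the matrix $f(\V^*\X)$ has rank $k$ w.h.p.\ since $\V^*$ is rank $k$ and $\X$ is i.i.d.\ Gaussian. Thus $\AA = \M + \E$ is precisely the decomposition problem where $\M$ is low rank and $\E$ is supported on a uniformly random set of at most $s = \gamma n m$ coordinates, with arbitrary values on that support.

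Next I would verify that $\M$ satisfies the incoherence hypotheses required by \cite{candes2011robust}. This is exactly the content of the lemma preceding the theorem: given that $\U^*$ is $\mu$-incoherent, the matrix $\U^* f(\V^*\X)$ is $\overline{\mu}$-incoherent with $\overline{\mu} = O\big( (\kappa(\V^*))^{2}\sqrt{k\log(n)\mu} + \mu + (\kappa(\V^*))^4\log(n)\big)$ with high probability over $\X$; its proof controls the right leverage scores through $\|\V^*\X e_i\|_2 = O(\sqrt{k\log n})$ (Gaussian tails together with unit-norm rows of $\V^*$) and the lower bound $\sigma_{\min}(f(\V^*\X)) = \Omega(\sqrt{n}/(\kappa(\V^*))^4)$ from Lemma \ref{prop:reallylongprojectionprop}, and controls the left leverage scores and the joint $\ell_\infty$ condition via the incoherence of $\U^*$ and Cauchy–Schwarz.

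With these in hand I would apply Theorem 1.1 of \cite{candes2011robust}: for a rank-$k$, $\overline{\mu}$-incoherent $\M \in \R^{m\times n}$ with $m \le n$ and an error matrix $\E$ with uniformly random support of cardinality at most a sufficiently small constant fraction of $mn$, the convex program
\[ \min_{\mathbf{L},\mathbf{S}} \; \|\mathbf{L}\|_* + \tfrac{1}{\sqrt{n}}\|\mathbf{S}\|_1 \quad \text{subject to} \quad \mathbf{L}+\mathbf{S} = \AA \]
recovers $(\mathbf{L},\mathbf{S}) = (\M,\E)$ exactly with high probability, provided $k \le \rho_r\, m/(\overline{\mu}\log^2 n)$ for a universal constant $\rho_r$. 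The theorem's hypothesis $k \le m/(\overline{\mu}\log^2 n)$ supplies this after absorbing the constant into $n$ and, if needed, shrinking $\gamma$; the program is semidefinite-representable and solvable in $\poly(n)$ time. This recovers $\M = \U^* f(\V^*\X)$ and $\E$ exactly, and then (this is the content of Corollary \ref{cor:sparse}) running the noiseless exact-recovery algorithm of Theorem \ref{thm:exactfinal} on $(\M,\X)$ — using that $\U^*$ with orthonormal columns is in particular rank $k$ — recovers $(\U^*)^T,\V^*$ exactly up to a permutation and positive scaling in $\poly(n)$ time.

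The main obstacle is not the appeal to robust PCA but the parameter bookkeeping: one must check that the high-probability events — $f(\V^*\X)$ being full rank, the incoherence estimate on $\M$, and the spectral bounds of Lemma \ref{prop:reallylongprojectionprop} — all hold simultaneously once $n = \poly(d,m,k,\kappa(\V^*))$ is taken large enough, and that the $\log^2 n$ and constant factors in the rank condition of \cite{candes2011robust} are consistent with the stated hypothesis (possibly after strengthening $n$ and shrinking $\gamma$). A secondary point is that the guarantee of \cite{candes2011robust} is usually stated for square matrices, so one must use the rectangular version in which the relevant dimension in the rank bound is the smaller one, here $m$ rather than $n$; this is exactly why the hypothesis reads $k \le m/(\overline{\mu}\log^2 n)$.
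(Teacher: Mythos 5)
Your proposal is correct and follows essentially the same route as the paper: verify rank-$k$ and $\overline{\mu}$-incoherence of $\M = \U^* f(\V^*\X)$ via the preceding lemma and Lemma \ref{prop:random}, then invoke Theorem 1.1 of \cite{candes2011robust} to recover $(\M,\E)$ exactly via the nuclear norm plus $\ell_1$ convex program in polynomial time. Your version is in fact slightly more careful than the paper's in two spots — you include the $1/\sqrt{n}$ weight on $\|\mathbf{S}\|_1$ required by \cite{candes2011robust}, and you explicitly note the rectangular-matrix form of the rank condition (smaller dimension $m$ in the numerator, $\log^2$ of the larger dimension $n$) — but these are just cleaner bookkeeping of the same argument, not a different approach.
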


\begin{proof}
The results of \cite{candes2011robust} demonstrate that solving

\begin{equation*}
\setlength\arraycolsep{.5pt}
  \begin{array}{l@{\quad} r c r c r}
    \min_{\Y}   & \| \AA - \Y \|^2_F &    \\
    \mathrm{s.t.} &  \textrm{rank}(\Y) \leq k  & 
  \end{array}
\end{equation*}
\noindent
recovers the optimal low-rank matrix given that conditions of the previous lemma are satisfied. That is, if we do not care about running time, the above optimization problem recovers $\U^* f(\V^*\X)$ exactly. However, the above problem is highly non-convex and instead we optimize over the nuclear norm.

\begin{equation*}
\setlength\arraycolsep{.5pt}
  \begin{array}{l@{\quad} r c r c r}
    \min_{\Y, \E}   & \|\Y \|_{*} + \|\E\|_1 &    \\
    \mathrm{s.t.} &  \Y + \E = \AA  & 
  \end{array}
\end{equation*}

By Theorem 1.1 in \cite{candes2011robust}, we know that the solution to the above problem is unique and equal to $\U^* f(\V^*\X)$.  
It remains to show that the above optimization problem can be solved in polynomial time. Note, the objective function is convex. As mentioned in \cite{lee2015faster}, we can then run an interior point algorithm and it is well known that in order to achieve additive error $\epsilon$, we need to iterate poly$(\log(1/\epsilon))$ times. Observe, for exact recovery we require a dual certificate that can verify optimality. Section 2.3 in \cite{candes2011robust} uses a modified analysis of the golfing scheme introduced by \cite{gross2011recovering} to create a dual certificate for the aforementioned convex program. We observe that this construction of the dual is independent of the kind of factorization we desire and only requires $\Y$ to be rank $k$. Given that $\U^*, \V^*, \X, \E$ have polynomially bounded bit complexity, this immediately implies a polynomial time algorithm to recover $\U^* f(\V^*\X)$ in unfactored form. 
\end{proof}

As an immediate corollary of the above theorem, our exact algorithms of Section \ref{sec:polyexact} can be applied to the matrix $\M$ of Theorem \ref{thm:sparse} to recover $\U^*, \V^*$. Formally, 

\begin{corollary}\label{cor:sparse}
 Let $\U^* \in \R^{m \times k}, \V^* \in \R^{k \times d}$ be rank $k$ matrices, where $\U^*$ has orthonormal columns, $\max_i \| (\U^*)^T e_i\|_2^2 \leq \frac{\mu k}{m}$ for some $\mu$, and $k \leq \frac{m}{\overline{\mu} \log^2(n)}$, where $\overline{\mu} =  O\big( (\kappa(\V^*))^{2} \sqrt{k \log(n) \mu} + \mu +  (\kappa(\V^*))^4 \log(n) \big)$. Here $\kappa(\V^*)$ is the condition number of $\V^*$. Let $\E$ be generated from the $s$-sparsity procedure with $s =  \gamma nm$ for some constant $\gamma > 0$ and let $\AA = \U^* f(\V\X) + \E$. Suppose the sample complexity satisfies $n = \poly(d,m,k,\kappa(\V^*))$ 
Then on i.i.d. Gaussian input $\X$ there is a $\poly(n)$ time algorithm that recovers $\U^*,\V^*$ exactly up to a permutation and positive scaling with high probability.
\end{corollary}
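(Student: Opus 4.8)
The plan is to chain together the two main results already developed: first use the robust low-rank recovery guarantee of Theorem~\ref{thm:sparse} to strip away the sparse noise $\E$ and obtain the clean matrix $\M = \U^* f(\V^*\X)$ \emph{exactly}, and then feed $\M$ together with $\X$ into the polynomial-time exact factorization algorithm of Section~\ref{sec:polyexact} (Algorithm~\ref{alg:overall_exact_poly}, whose correctness is Theorem~\ref{thm:exactfinal}) to recover $(\U^*)^T$ and $\V^*$ themselves, up to a permutation and positive scaling of the rows.

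For the first step, I would check that the hypotheses of Theorem~\ref{thm:sparse} are exactly the hypotheses assumed in the corollary: $\U^*$ has orthonormal columns and is $\mu$-incoherent, $\V^*$ is rank $k$, $\X$ is i.i.d. Gaussian with $n = \poly(d,m,k,\kappa(\V^*))$, and $\E$ is generated by the $s$-sparsity procedure with $s = \gamma n m$. The preceding lemma shows that w.h.p. over the draw of $\X$ the full product $\M = \U^* f(\V^*\X)$ is $\overline{\mu}$-incoherent with $\overline{\mu} = O\big((\kappa(\V^*))^2 \sqrt{k\log(n)\mu} + \mu + (\kappa(\V^*))^4 \log(n)\big)$, and the corollary's hypothesis $k \le m/(\overline{\mu}\log^2(n))$ is precisely the rank bound required by Theorem~\ref{thm:sparse}. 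Conditioning on this incoherence event, Theorem~\ref{thm:sparse} then yields a polynomial-time convex program (nuclear-norm plus $\ell_1$ minimization, solved via an interior point method) that outputs both $\M = \U^* f(\V^*\X)$ and $\E$ exactly, with no residual error.

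For the second step, observe that $\M = \U^* f(\V^*\X)$ is itself an instance of the noiseless Gaussian-input learning problem: $\U^* \in \R^{m\times k}$ is rank $k$ (in fact $\kappa(\U^*) = 1$ since its columns are orthonormal), $\V^* \in \R^{k\times d}$ is rank $k$, and $\X$ has i.i.d. Gaussian columns; as usual we may assume WLOG $\|\V^*_{i,*}\|_2 = 1$. Since $\kappa(\U^*) = 1$, the sample-complexity requirement $n = \Omega(\poly(d,m,\kappa(\U^*),\kappa(\V^*)))$ of Theorem~\ref{thm:exactfinal} reduces to $n = \Omega(\poly(d,m,\kappa(\V^*)))$, which is subsumed by the sample complexity already assumed for the first step (enlarging the polynomial if necessary). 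Therefore Algorithm~\ref{alg:overall_exact_poly} run on the input $(\M,\X)$ recovers $(\U^*)^T,\V^*$ exactly up to a permutation of the rows, w.h.p.\ in $d,m$. A union bound over the incoherence event of step one and the success event of Theorem~\ref{thm:exactfinal} completes the argument, and the total running time is $\poly(n)$ since both subroutines are polynomial.

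The only genuinely new content is the bookkeeping that the two results compose cleanly — in particular that the output of the robust PCA step is \emph{exactly} (not approximately) $\U^* f(\V^*\X)$, so that the exact recovery algorithm, which crucially requires its input to be precisely of the form $\U^* f(\V^*\X)$ with no perturbation, can be invoked as a black box. I do not anticipate a serious obstacle here; the two points to be careful about are (i) that the high-probability event of the incoherence lemma (over $\X$) and the high-probability success event of Theorem~\ref{thm:exactfinal} (also over $\X$) may be taken to hold simultaneously, which is immediate by a union bound since each fails with probability $1/\poly(n)$, and (ii) that, as discussed in the proof of Theorem~\ref{thm:sparse}, the interior-point solver returns the exact optimum in the real-RAM / polynomially-bounded-bit-complexity model rather than an $\eps$-approximation, which is what lets the downstream exact algorithm run without error.
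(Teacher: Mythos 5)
Your proposal is correct and follows the same route the paper takes: apply Theorem~\ref{thm:sparse} to recover $\M = \U^* f(\V^*\X)$ exactly (hypotheses matching via the incoherence lemma), then feed $(\M,\X)$ to the noiseless exact-recovery algorithm of Theorem~\ref{thm:exactfinal}. The paper states this in one sentence; your fuller version spells out the hypothesis-checking, the $\kappa(\U^*)=1$ observation, and the union bound, none of which introduces a new idea.
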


\section*{Acknowledgements}
The authors would like to thank Anima Anandkumar, Mark Bun, Rong Ge, Sam Hopkins and Rina Panigrahy for useful discussions.

\bibliography{cluster}

\newcommand{\etalchar}[1]{$^{#1}$}
\begin{thebibliography}{CLMW11}

\bibitem[ABGM14]{arora2014provable}
Sanjeev Arora, Aditya Bhaskara, Rong Ge, and Tengyu Ma.
\newblock Provable bounds for learning some deep representations.
\newblock In {\em International Conference on Machine Learning}, pages
  584--592, 2014.

\bibitem[ABMM16]{arora2016understanding}
Raman Arora, Amitabh Basu, Poorya Mianjy, and Anirbit Mukherjee.
\newblock Understanding deep neural networks with rectified linear units.
\newblock {\em arXiv preprint arXiv:1611.01491}, 2016.

\bibitem[AGH{\etalchar{+}}14]{anandkumar2014tensor}
Animashree Anandkumar, Rong Ge, Daniel Hsu, Sham~M Kakade, and Matus Telgarsky.
\newblock Tensor decompositions for learning latent variable models.
\newblock {\em The Journal of Machine Learning Research}, 15(1):2773--2832,
  2014.

\bibitem[AGMR17]{arora2017provable}
Sanjeev Arora, Rong Ge, Tengyu Ma, and Andrej Risteski.
\newblock Provable learning of noisy-or networks.
\newblock In {\em Proceedings of the 49th Annual ACM SIGACT Symposium on Theory
  of Computing}, pages 1057--1066. ACM, 2017.

\bibitem[AGMS12]{arora2012provable}
Sanjeev Arora, Rong Ge, Ankur Moitra, and Sushant Sachdeva.
\newblock Provable ica with unknown gaussian noise, with implications for
  gaussian mixtures and autoencoders.
\newblock In {\em Advances in Neural Information Processing Systems}, pages
  2375--2383, 2012.

\bibitem[Ash]{UIUCLecture}
Robert~B. Ash.
\newblock Lecture notes 21-25 in statistics, finding the density.
\newblock \url{https://faculty.math.illinois.edu/~r-ash/Stat/StatLec21-25.pdf}.

\bibitem[BCMV14]{bhaskara2014smoothed}
Aditya Bhaskara, Moses Charikar, Ankur Moitra, and Aravindan Vijayaraghavan.
\newblock Smoothed analysis of tensor decompositions.
\newblock In {\em Proceedings of the forty-sixth annual ACM symposium on Theory
  of computing}, pages 594--603. ACM, 2014.

\bibitem[BDL18]{boob2018complexity}
Digvijay Boob, Santanu~S Dey, and Guanghui Lan.
\newblock Complexity of training relu neural network.
\newblock {\em arXiv preprint arXiv:1809.10787}, 2018.

\bibitem[BG17]{brutzkus2017globally}
Alon Brutzkus and Amir Globerson.
\newblock Globally optimal gradient descent for a convnet with gaussian inputs.
\newblock {\em arXiv preprint arXiv:1702.07966}, 2017.

\bibitem[BKS15]{barak2015dictionary}
Boaz Barak, Jonathan~A Kelner, and David Steurer.
\newblock Dictionary learning and tensor decomposition via the sum-of-squares
  method.
\newblock In {\em Proceedings of the forty-seventh annual ACM symposium on
  Theory of computing}, pages 143--151. ACM, 2015.

\bibitem[BM16]{barak2016noisy}
Boaz Barak and Ankur Moitra.
\newblock Noisy tensor completion via the sum-of-squares hierarchy.
\newblock In {\em Conference on Learning Theory}, pages 417--445, 2016.

\bibitem[BR92]{br92}
Avrim Blum and Ronald~L. Rivest.
\newblock Training a 3-node neural network is np-complete.
\newblock {\em Neural Networks}, 5(1):117--127, 1992.

\bibitem[Bry12]{bryc2012normal}
Wlodzimierz Bryc.
\newblock {\em The normal distribution: characterizations with applications},
  volume 100.
\newblock Springer Science \& Business Media, 2012.

\bibitem[BV04]{boyd2004convex}
Stephen Boyd and Lieven Vandenberghe.
\newblock {\em Convex optimization}.
\newblock Cambridge university press, 2004.

\bibitem[CLMW11]{candes2011robust}
Emmanuel~J Cand{\`e}s, Xiaodong Li, Yi~Ma, and John Wright.
\newblock Robust principal component analysis?
\newblock {\em Journal of the ACM (JACM)}, 58(3):11, 2011.

\bibitem[CMTV17]{cohen2017matrix}
Michael~B Cohen, Aleksander Madry, Dimitris Tsipras, and Adrian Vladu.
\newblock Matrix scaling and balancing via box constrained newton's method and
  interior point methods.
\newblock In {\em Foundations of Computer Science (FOCS), 2017 IEEE 58th Annual
  Symposium on}, pages 902--913. IEEE, 2017.

\bibitem[Com94]{comon1994independent}
Pierre Comon.
\newblock Independent component analysis, a new concept?
\newblock {\em Signal processing}, 36(3):287--314, 1994.

\bibitem[Con]{ConradNotes}
Keith Conrad.
\newblock Expository papers: Universal identities.
\newblock
  \url{http://www.math.uconn.edu/~kconrad/blurbs/linmultialg/univid.pdf}.

\bibitem[CR07]{candes2007sparsity}
Emmanuel Candes and Justin Romberg.
\newblock Sparsity and incoherence in compressive sampling.
\newblock {\em Inverse problems}, 23(3):969, 2007.

\bibitem[CR09]{candes2009exact}
Emmanuel~J Cand{\`e}s and Benjamin Recht.
\newblock Exact matrix completion via convex optimization.
\newblock {\em Foundations of Computational mathematics}, 9(6):717, 2009.

\bibitem[Das08]{dasgupta2008asymptotic}
A.~DasGupta.
\newblock {\em Asymptotic Theory of Statistics and Probability}.
\newblock Springer Texts in Statistics. Springer New York, 2008.

\bibitem[DG18]{du2018improved}
Simon~S Du and Surbhi Goel.
\newblock Improved learning of one-hidden-layer convolutional neural networks
  with overlaps.
\newblock {\em arXiv preprint arXiv:1805.07798}, 2018.

\bibitem[DSD{\etalchar{+}}13]{denil2013predicting}
Misha Denil, Babak Shakibi, Laurent Dinh, Nando De~Freitas, et~al.
\newblock Predicting parameters in deep learning.
\newblock In {\em Advances in neural information processing systems}, pages
  2148--2156, 2013.

\bibitem[DZB{\etalchar{+}}14]{denton2014exploiting}
Emily~L Denton, Wojciech Zaremba, Joan Bruna, Yann LeCun, and Rob Fergus.
\newblock Exploiting linear structure within convolutional networks for
  efficient evaluation.
\newblock In {\em Advances in neural information processing systems}, pages
  1269--1277, 2014.

\bibitem[FJK96]{frieze1996learning}
Alan Frieze, Mark Jerrum, and Ravi Kannan.
\newblock Learning linear transformations.
\newblock In {\em Foundations of Computer Science, 1996. Proceedings., 37th
  Annual Symposium on}, pages 359--368. IEEE, 1996.

\bibitem[FKV04]{frieze2004fast}
Alan Frieze, Ravi Kannan, and Santosh Vempala.
\newblock Fast monte-carlo algorithms for finding low-rank approximations.
\newblock {\em Journal of the ACM (JACM)}, 51(6):1025--1041, 2004.

\bibitem[Ge18]{rongpersonal}
Rong Ge.
\newblock Personal communication.
\newblock October, 2018.

\bibitem[GK17]{goel2017learning}
Surbhi Goel and Adam Klivans.
\newblock Learning depth-three neural networks in polynomial time.
\newblock {\em arXiv preprint arXiv:1709.06010}, 2017.

\bibitem[GKKT16]{goel2016reliably}
Surbhi Goel, Varun Kanade, Adam Klivans, and Justin Thaler.
\newblock Reliably learning the relu in polynomial time.
\newblock {\em arXiv preprint arXiv:1611.10258}, 2016.

\bibitem[GKLW18]{ge2018learning}
Rong Ge, Rohith Kuditipudi, Zhize Li, and Xiang Wang.
\newblock Learning two-layer neural networks with symmetric inputs.
\newblock {\em arXiv preprint arXiv:1810.06793}, 2018.

\bibitem[GKM18]{goel2018learning}
Surbhi Goel, Adam Klivans, and Raghu Meka.
\newblock Learning one convolutional layer with overlapping patches.
\newblock {\em arXiv preprint arXiv:1802.02547}, 2018.

\bibitem[GLM17]{ge2017learning}
Rong Ge, Jason~D Lee, and Tengyu Ma.
\newblock Learning one-hidden-layer neural networks with landscape design.
\newblock {\em arXiv preprint arXiv:1711.00501}, 2017.

\bibitem[GM15]{ge2015decomposing}
Rong Ge and Tengyu Ma.
\newblock Decomposing overcomplete 3rd order tensors using sum-of-squares
  algorithms.
\newblock {\em arXiv preprint arXiv:1504.05287}, 2015.

\bibitem[Gro11]{gross2011recovering}
David Gross.
\newblock Recovering low-rank matrices from few coefficients in any basis.
\newblock {\em IEEE Transactions on Information Theory}, 57(3):1548--1566,
  2011.

\bibitem[Gut09]{gut2009intermediate}
Allan Gut.
\newblock An intermediate course in probability.
\newblock chapter~5. Springer Publishing Company, Incorporated, 2009.

\bibitem[GVX14]{goyal2014fourier}
Navin Goyal, Santosh Vempala, and Ying Xiao.
\newblock Fourier pca and robust tensor decomposition.
\newblock In {\em Proceedings of the forty-sixth annual ACM symposium on Theory
  of computing}, pages 584--593. ACM, 2014.

\bibitem[Har14]{hardt2014understanding}
Moritz Hardt.
\newblock Understanding alternating minimization for matrix completion.
\newblock In {\em Foundations of Computer Science (FOCS), 2014 IEEE 55th Annual
  Symposium on}, pages 651--660. IEEE, 2014.

\bibitem[HK13]{hsu2013learning}
Daniel Hsu and Sham~M Kakade.
\newblock Learning mixtures of spherical gaussians: moment methods and spectral
  decompositions.
\newblock In {\em Proceedings of the 4th conference on Innovations in
  Theoretical Computer Science}, pages 11--20. ACM, 2013.

\bibitem[HL13]{hillar2013most}
Christopher~J Hillar and Lek-Heng Lim.
\newblock Most tensor problems are np-hard.
\newblock {\em Journal of the ACM (JACM)}, 60(6):45, 2013.

\bibitem[HO00]{hyvarinen2000independent}
Aapo Hyv{\"a}rinen and Erkki Oja.
\newblock Independent component analysis: algorithms and applications.
\newblock {\em Neural networks}, 13(4-5):411--430, 2000.

\bibitem[Hyv99]{hyvarinen1999fast}
Aapo Hyvarinen.
\newblock Fast and robust fixed-point algorithms for independent component
  analysis.
\newblock {\em IEEE transactions on Neural Networks}, 10(3):626--634, 1999.

\bibitem[Ind06]{indyk2006stable}
Piotr Indyk.
\newblock Stable distributions, pseudorandom generators, embeddings, and data
  stream computation.
\newblock {\em Journal of the ACM (JACM)}, 53(3):307--323, 2006.

\bibitem[JNS13]{jain2013low}
Prateek Jain, Praneeth Netrapalli, and Sujay Sanghavi.
\newblock Low-rank matrix completion using alternating minimization.
\newblock In {\em Proceedings of the forty-fifth annual ACM symposium on Theory
  of computing}, pages 665--674. ACM, 2013.

\bibitem[JSA14]{janzamin2014score}
Majid Janzamin, Hanie Sedghi, and Anima Anandkumar.
\newblock Score function features for discriminative learning: Matrix and
  tensor framework.
\newblock {\em arXiv preprint arXiv:1412.2863}, 2014.

\bibitem[JSA15]{janzamin2015beating}
Majid Janzamin, Hanie Sedghi, and Anima Anandkumar.
\newblock Beating the perils of non-convexity: Guaranteed training of neural
  networks using tensor methods.
\newblock {\em arXiv preprint arXiv:1506.08473}, 2015.

\bibitem[Jud88]{judd1988neural}
J~Stephen Judd.
\newblock Neural network design and the complexity of learning.
\newblock Technical report, CALIFORNIA INST OF TECH PASADENA DEPT OF COMPUTER
  SCIENCE, 1988.

\bibitem[KKSK11]{kakade2011efficient}
Sham~M Kakade, Varun Kanade, Ohad Shamir, and Adam Kalai.
\newblock Efficient learning of generalized linear and single index models with
  isotonic regression.
\newblock In {\em Advances in Neural Information Processing Systems}, pages
  927--935, 2011.

\bibitem[KMO10]{keshavan2010matrix}
Raghunandan~H Keshavan, Andrea Montanari, and Sewoong Oh.
\newblock Matrix completion from a few entries.
\newblock {\em IEEE transactions on information theory}, 56(6):2980--2998,
  2010.

\bibitem[LAF{\etalchar{+}}12]{liu2012two}
Yi-Kai Liu, Animashree Anandkumar, Dean~P Foster, Daniel Hsu, and Sham~M
  Kakade.
\newblock Two svds su ffice: Spectral decompositions for probabilistic topic
  modeling and latent dirichlet allocation.
\newblock In {\em Neural Information Processing Systems (NIPS)}, 2012.

\bibitem[LM00]{laurent2000}
B.~Laurent and P.~Massart.
\newblock Adaptive estimation of a quadratic functional by model selection.
\newblock {\em Ann. Statist.}, 28(5):1302--1338, 10 2000.

\bibitem[LSSS14]{livni2014computational}
Roi Livni, Shai Shalev-Shwartz, and Ohad Shamir.
\newblock On the computational efficiency of training neural networks.
\newblock In {\em Advances in Neural Information Processing Systems}, pages
  855--863, 2014.

\bibitem[LSW15]{lee2015faster}
Yin~Tat Lee, Aaron Sidford, and Sam Chiu-wai Wong.
\newblock A faster cutting plane method and its implications for combinatorial
  and convex optimization.
\newblock In {\em Foundations of Computer Science (FOCS), 2015 IEEE 56th Annual
  Symposium on}, pages 1049--1065. IEEE, 2015.

\bibitem[LY17a]{li2017convergence}
Yuanzhi Li and Yang Yuan.
\newblock Convergence analysis of two-layer neural networks with relu
  activation.
\newblock In {\em Advances in Neural Information Processing Systems}, pages
  597--607, 2017.

\bibitem[LY17b]{NIPS2017_6662}
Yuanzhi Li and Yang Yuan.
\newblock Convergence analysis of two-layer neural networks with relu
  activation.
\newblock In I.~Guyon, U.~V. Luxburg, S.~Bengio, H.~Wallach, R.~Fergus,
  S.~Vishwanathan, and R.~Garnett, editors, {\em Advances in Neural Information
  Processing Systems 30}, pages 597--607. Curran Associates, Inc., 2017.

\bibitem[M{\etalchar{+}}11]{mahoney2011randomized}
Michael~W Mahoney et~al.
\newblock Randomized algorithms for matrices and data.
\newblock {\em Foundations and Trends in Machine Learning}, 3(2):123--224,
  2011.

\bibitem[Meg88]{megiddo1988complexity}
Nimrod Megiddo.
\newblock On the complexity of polyhedral separability.
\newblock {\em Discrete \& Computational Geometry}, 3(4):325--337, 1988.

\bibitem[MM18]{mondelli2018connection}
Marco Mondelli and Andrea Montanari.
\newblock On the connection between learning two-layers neural networks and
  tensor decomposition.
\newblock {\em arXiv preprint arXiv:1802.07301}, 2018.

\bibitem[MR18]{Manurangsi2018complexity}
Pasin Manurangsi and Daniel Reichman.
\newblock The computational complexity of training relu(s).
\newblock {\em arXiv preprint arXiv:1810.04207}, 2018.

\bibitem[MZ10]{MENG2010956}
Lingsheng Meng and Bing Zheng.
\newblock The optimal perturbation bounds of the moore–penrose inverse under
  the frobenius norm.
\newblock {\em Linear Algebra and its Applications}, 432(4):956 -- 963, 2010.

\bibitem[PV13]{plan2013robust}
Yaniv Plan and Roman Vershynin.
\newblock Robust 1-bit compressed sensing and sparse logistic regression: A
  convex programming approach.
\newblock {\em IEEE Transactions on Information Theory}, 59(1):482--494, 2013.

\bibitem[RV10]{rudelson2010non}
Mark Rudelson and Roman Vershynin.
\newblock Non-asymptotic theory of random matrices: extreme singular values.
\newblock In {\em Proceedings of the International Congress of Mathematicians
  2010 (ICM 2010) (In 4 Volumes) Vol. I: Plenary Lectures and Ceremonies Vols.
  II--IV: Invited Lectures}, pages 1576--1602. World Scientific, 2010.

\bibitem[SA14]{sedghi2014provable}
Hanie Sedghi and Anima Anandkumar.
\newblock Provable methods for training neural networks with sparse
  connectivity.
\newblock {\em arXiv preprint arXiv:1412.2693}, 2014.

\bibitem[Sar06]{sarlos2006improved}
Tamas Sarlos.
\newblock Improved approximation algorithms for large matrices via random
  projections.
\newblock In {\em Foundations of Computer Science, 2006. FOCS'06. 47th Annual
  IEEE Symposium on}, pages 143--152. IEEE, 2006.

\bibitem[SC16]{soudry2016no}
Daniel Soudry and Yair Carmon.
\newblock No bad local minima: Data independent training error guarantees for
  multilayer neural networks.
\newblock {\em arXiv preprint arXiv:1605.08361}, 2016.

\bibitem[SJA16]{sedghi2016provable}
Hanie Sedghi, Majid Janzamin, and Anima Anandkumar.
\newblock Provable tensor methods for learning mixtures of generalized linear
  models.
\newblock In {\em Artificial Intelligence and Statistics}, pages 1223--1231,
  2016.

\bibitem[Sol17]{soltanolkotabi2017learning}
Mahdi Soltanolkotabi.
\newblock Learning relus via gradient descent.
\newblock In {\em Advances in Neural Information Processing Systems}, pages
  2007--2017, 2017.

\bibitem[SWW12]{spielman2012exact}
Daniel~A Spielman, Huan Wang, and John Wright.
\newblock Exact recovery of sparsely-used dictionaries.
\newblock In {\em Conference on Learning Theory}, pages 37--1, 2012.

\bibitem[SWZ16]{song2016sublinear}
Zhao Song, David Woodruff, and Huan Zhang.
\newblock Sublinear time orthogonal tensor decomposition.
\newblock In {\em Advances in Neural Information Processing Systems}, pages
  793--801, 2016.

\bibitem[Tia17a]{tian2017analytical}
Yuandong Tian.
\newblock An analytical formula of population gradient for two-layered relu
  network and its applications in convergence and critical point analysis.
\newblock {\em arXiv preprint arXiv:1703.00560}, 2017.

\bibitem[Tia17b]{tian2017symmetry}
Yuandong Tian.
\newblock Symmetry-breaking convergence analysis of certain two-layered neural
  networks with relu nonlinearity.
\newblock 2017.

\bibitem[Val84]{valiant1984theory}
Leslie~G Valiant.
\newblock A theory of the learnable.
\newblock {\em Communications of the ACM}, 27(11):1134--1142, 1984.

\bibitem[Ver10]{vershynin2010introduction}
Roman Vershynin.
\newblock Introduction to the non-asymptotic analysis of random matrices.
\newblock {\em arXiv preprint arXiv:1011.3027}, 2010.

\bibitem[Ver18]{vershynin2018high}
Roman Vershynin.
\newblock {\em High-dimensional probability: An introduction with applications
  in data science}, volume~47.
\newblock Cambridge University Press, 2018.

\bibitem[W{\etalchar{+}}14]{woodruff2014sketching}
David~P Woodruff et~al.
\newblock Sketching as a tool for numerical linear algebra.
\newblock {\em Foundations and Trends in Theoretical Computer Science},
  10(1--2):1--157, 2014.

\bibitem[WA16]{wang2016online}
Yining Wang and Anima Anandkumar.
\newblock Online and differentially-private tensor decomposition.
\newblock In {\em Advances in Neural Information Processing Systems}, pages
  3531--3539, 2016.

\bibitem[Wai19]{wainwright2019high}
M.J. Wainwright.
\newblock {\em High-Dimensional Statistics: A Non-Asymptotic Viewpoint}.
\newblock Cambridge Series in Statistical and Probabilistic Mathematics.
  Cambridge University Press, 2019.

\bibitem[ZLJ16]{zhang2016l1}
Yuchen Zhang, Jason~D Lee, and Michael~I Jordan.
\newblock l1-regularized neural networks are improperly learnable in polynomial
  time.
\newblock In {\em International Conference on Machine Learning}, pages
  993--1001, 2016.

\bibitem[ZSJ{\etalchar{+}}17]{zhong2017recovery}
Kai Zhong, Zhao Song, Prateek Jain, Peter~L Bartlett, and Inderjit~S Dhillon.
\newblock Recovery guarantees for one-hidden-layer neural networks.
\newblock {\em arXiv preprint arXiv:1706.03175}, 2017.

\end{thebibliography}
\newpage

\section{Appendix}

\subsection{Generalizing to nonlinear activation functions $f$}\label{sec:generalf}
As remarked in the introduction, for the setting where $\X$ has Gaussian marginals, with several slight changes made to the proofs of our main exact and noisy algorithms, we can generalize our results to any activation function $f$ of the form:
\[ f(x) = \begin{cases} 0 & \text{ if } x \leq 0 \\
\phi(x) & \text{ otherwise} \end{cases}\]
where $\phi(x): [0,\infty] \to [0,\infty]$ is a smooth (on $(0,\infty)$), injective function. We now describe the modifications to our proofs required to generalize to such a function. Note that the primary property of the ReLU used is it's sparsity patterns, i.e. $f(\V\X)$ is zero where-ever $\V\X$ is negative. This is the only property (along with the invertibility of $f(x)$ for $x$ positive) which allows for exact recovery of $\U^*,\V^*$ in Section \ref{sec:polyexact}. Thus the Lemmas which guarantee the uniqueness of the sign patterns of the rows of $f(\V^*\X)$ in their rowspan, namely Lemmas \ref{prop:uniquesign} and \ref{lem:uniquesign}, hold without any modification. Note that continuity of $f$ is required for Lemma \ref{lem:uniquesign}. 

Given this, the linear systems in  Algorithm \ref{alg:overall_exact_fpt}, and similarly in the more general exact Algorithm \ref{alg:recoversigns}, are unchanged. Note that these systems require the injectivity, and thus invertibility, of $f$ for $x > 0$. For instance, in step $8$ of Algorithm \ref{alg:overall_exact_fpt}, instead of solving the linear system $(\W_{i,*})_{S_i} = \V_{i,*} \overline{\X}_{S_i}$ for $\V$, we would need to solve $f^{-1}((\W_{i,*})_{S_i}) = \V_{i,*} \overline{\X}_{S_i}$, where $f^{-1}$ is the inverse of $f$ for $x>0$. Moreover, note that as long as $\phi(x)$ admits a polynomial approximation (with a polynomial of degree at most $\poly(n)$ in the range $[0,\poly(n)]$, the bounds given by the Tensor decomposition algorithms in Section \ref{sec:polyexact} will still hold. 

One detail to note is that if $\phi$ is not multiplicative (i.e. $\phi(xy) = \phi(x)\phi(y)$), then $f$ will no longer commute with positive scalings, so we can no longer pull positive diagonal matrices out of $f(\cdot)$. Observe that $\phi(x) = x^c$ for any $c > 0$ is multiplicative, and this is a non-issue for such $\phi$. On the other hand, note that this does not effect the fact that we recover the true sign pattern needed. Thus in the exact algorithms of Sections \ref{sec:polyexact} and \ref{sec:FPT}, once we multiply by our guessed left inverse of $\U^*$ and we obtain $\D f(\V^*\X) + \ZZ$ where $\ZZ$ is some small, negligible error matrix, As long as the entries of $\D$ are not too extreme (that is, exponentially small, see Lipschitz discussion below, as for any of the example functions that follow this will be the case), we will still recover the true sign pattern of $f(\V^*\X)$ after rounding. Thus we can always recover $\V^*$ from this sign pattern without using any properties of $f$. Note that our noisy algorithm of Section \ref{sec:noisycase} does not need to run any such linear system to find $\V^*$, and so this is a non-issue here. 

Finally, perhaps the important modification which must be made involves a Lipschitz property of $\phi$. Namely, we frequently use the fact that for the ReLU $f$, if $\|\V\X - \V^*\X\|_F^2 < \eps$, then $\|f(\V\X) - f(\V^*\X)\|_F^2 < \eps$.  Here $\V,\V^*$ have unit norm rows, $\X \in \R^{d \times \ell}$ is i.i.d. Gaussian, $\V$ refers to our approximation of $\X$ returned by tensor decomposition, $\eps < 1/\poly(\ell)$, and $\ell > \poly(d,m,k,\kappa)$. Note also that we will have $n = \poly(\ell)$, since to our error $\eps$ depends on the sample complexity $n$. So once we obtain our estimate $\V$ using $n$ samples, we thereafter restrict our attention to a smaller subset of $\ell$ samples.
Now observe that if $\phi(x)$ grows faster than $x$, this bound will no longer hold as is. However, using the fact that the entries of $\V\X$ and $\V^*\X$ are Gaussian, and thus have sub-Gaussian tails, we will be able to bound the blow up. First note that with high probability, we have $\|\V\X\|_\infty + \|\V^*\X\|_\infty < O(\sqrt{\log(\ell)})$. Thus, for a fixed $\phi$, define the $B$-bounded Lipschitz constant $L_B(\phi)$ by
\[ L_B(\phi) = \sup_{x \neq y, |x| < B , |y| <B }\frac{|f(x) - f(y)|}{|x-y|} \]

Note if $\phi(x) = x^c$ for some constant $c \in \mathbb{N}$, we
have $L_B(\phi) < cB^{c-1}$. Given this, it follows that if $\|\V\X - \V^*\X\|_F^2 < \eps$, then $\|f(\V\X) - f(\V^*\X)\|_F^2 \leq (L_B(\phi)k\ell)^2 \eps$ (here we can use bounds between $L_1^2$ and $L_2^2$ of $k \ell$ on these matrices to explicitly relate this difference in Frobenius norm to the Lipschitz constant). 
Now observe that in all cases where we have $\|\V\X - \V^*\X\|_F^2 < \eps$, and need a bound on  $\|f(\V\X) - f(\V^*\X)\|_F^2$, we can handle a $\poly(\ell)$ blow-up, as $\eps<1/\poly(\ell)$ can be made arbitrarily large by increasing the sample complexity on which the algorithm which originally recovered $\V$ was run on. Thus, we claim that we can handle any function $\phi(x)$ such that $L_{\Theta(\sqrt{\log(\ell)})}(\phi) < \poly(\ell)$. Note that this includes all polynomials $\phi(x)=x^c$ of constant degree, as well as even \textit{exponential} functions $\phi(x) = e^x-1$ or $\phi(x) = e^{x^2}-1$.  

However, importantly, in addition to the $L_{\Theta(\sqrt{\log(\ell)})}(\phi)$ blow-up in runtime, for the specific case of our noisy algorithm of Theorem \ref{thm:noisyfinal}, our our runtime also blows up by a $\phi(\kappa)^2$ factor. This is because the projection bounds in Corollary \ref{cor:projectnegV} will become \[ \| f(\V_{i,*}^* \X) \P_{S_{i,-}} \|_2 = \|f(\V_{i,*}^* \X)\|_2 \Big( 1 - \Omega(\frac{1}{\phi(\kappa(\V^*))^2 \poly(k)})\Big)\]
instead of 
\[ \| f(\V_{i,*}^* \X) \P_{S_{i,-}} \|_2 = \|f(\V_{i,*}^* \X)\|_2 \Big( 1 - \Omega(\frac{1}{\kappa(\V^*)^2 \poly(k)})\Big)\]
Thus we must make $\ell,1/\eps >> \phi(\kappa(\V^*))^2 $ in order to recover the correct signs in our projection based algorithm (Algorithm  \ref{alg:recoversigns2}).

To summarize , the primary change that occurs is blow-up the runtime by the bounded Lipschitz constant $L_{\Theta(\sqrt{\log(n)}}(\phi)$ of $\phi$ in the runtime of our exact recovery algorithms for the noiseless case, which is polynomial as long as $\phi(x) = O(e^{x^2})$. This also holds for the case of our fixed parameter tractable noiseless algorithm of Section \ref{sec:FPT}, and the fixed parameter tractable noisy algorithm of Section \ref{sec:1bit}. For the noisy case of Theorem \ref{thm:noisyfinal}, which is our polynomial time algorithm for sub-Gaussian noise, we also get a blowup of $\phi(\kappa(\V^*))^2$ in the runtime, which is still polynomial as long as $\phi(x)$ is bounded by some constant degree polynomial.

\end{document}